\pdfoutput=1
\newif\ifdraft\draftfalse 
\newif\iffull\fulltrue 

\newcommand{\docroot}{.}


\documentclass[preprint, nocopyrightspace]{sigplanconf}

\usepackage{local}

\usepackage[firstpage]{draftwatermark}
\SetWatermarkText{
    \hspace*{7in}\raisebox{6.4in}{\includegraphics[scale=0.1]{}}}
\SetWatermarkAngle{0}

\begin{document}
\toappear{}

\title{Cantor Meets Scott: Semantic \\[.15em] Foundations for Probabilistic Networks}

\authorinfo{Steffen Smolka}{Cornell University, USA}{}
\authorinfo{Praveen Kumar}{Cornell University, USA}{}
\authorinfo{Nate Foster}{Cornell University, USA}{}
\authorinfo{Dexter Kozen}{Cornell University, USA}{}
\authorinfo{Alexandra Silva}{University College London, UK}{}

\maketitle

\begin{abstract}
ProbNetKAT is a probabilistic extension of NetKAT with a denotational
semantics based on Markov kernels. The language is expressive enough
to generate continuous distributions, which raises the question of how
to compute effectively in the language. This paper gives an new
characterization of ProbNetKAT's semantics using domain theory, which
provides the foundation needed to build a practical implementation. We
show how to use the semantics to approximate the behavior of arbitrary
ProbNetKAT programs using distributions with finite support. We
develop a prototype implementation and show how to use it to solve a
variety of problems including characterizing the expected congestion
induced by different routing schemes and reasoning probabilistically
about reachability in a network.
\end{abstract}

 \category{D.3.1}
  {Programming Languages}
  {Formal Definitions and Theory}	
  [Semantics]

  \keywords
  Software-defined networking,
  Probabilistic semantics,      
  Kleene algebra with tests,
  Domain theory,
  NetKAT.


\section{Introduction}
\label{sec:intro}

The recent emergence of software-defined networking (SDN) has led to
the development of a number of domain-specific programming
languages~\cite{frenetic-icfp11,composing-nsdi13,maple,nelson:flowlog}
and reasoning tools~\cite{hsa,veriflow,\nkpapers} for networks. But
there is still a large gap between the models provided by these
languages and the realities of modern networks. In particular, most
existing SDN languages have semantics based on deterministic
packet-processing functions, which makes it impossible to encode
probabilistic behaviors. This is unfortunate because in the real
world, network operators often use randomized protocols and
probabilistic reasoning to achieve good performance.

Previous work on ProbNetKAT~\cite{\pnkpaper} proposed an extension to
the \NK\ language~\cite{\nkpapers} with a random choice operator that
can be used to express a variety of probabilistic behaviors.
ProbNetKAT has a compositional semantics based on Markov kernels that
conservatively extends the deterministic \NK\ semantics and has been
used to reason about various aspects of network performance including
congestion, fault tolerance, and latency. However, although the
language enjoys a number of attractive theoretical properties, there
are some major impediments to building a practical implementation: (i)
the semantics of iteration is formulated as an infinite process rather
than a fixpoint in a suitable order, and (ii) some programs generate
continuous distributions. These factors make it difficult to determine
when a computation has converged to its final value, and there are
also challenges related to representing and analyzing distributions
with infinite support.

This paper introduces a new semantics for \PNK, following the approach
pioneered by Saheb-Djahromi, Jones, and
Plotkin~\cite{Saheb-Djahromi80,SahebDjahromi78,Jones89,Plotkin82,JonesPlotkin89}.
Whereas the original semantics of ProbNetKAT was somewhat imperative
in nature, being based on stochastic processes, the semantics
introduced in this paper is purely functional. Nevertheless, the two
semantics are closely related---we give a precise, technical
characterization of the relationship between them. The new semantics
provides a suitable foundation for building a practical
implementation, it provides new insights into the nature of
probabilistic behavior in networks, and it opens up several
interesting theoretical questions for future work.

Our new semantics follows the order-theoretic tradition established in
previous work on Scott-style domain
theory~\cite{Scott72,Abramsky94domaintheory}. In particular,
Scott-continuous maps on algebraic and continuous DCPOs both play a
key role in our development. However, there is an interesting
twist: \NK\ and \PNK\ are not \emph{state-based} as with most other
probabilistic systems, but are rather \emph{throughput-based}. A \PNK\
program can be thought of as a filter that takes an input set of
packet histories and generates an output randomly distributed on the
measurable space $\pH$ of sets of packet histories. The closest thing
to a ``state'' is a set of packet histories, and the structure of
these sets (e.g., the lengths of the histories they contain and the
standard subset relation) are important considerations. Hence, the
fundamental domains are not flat domains as in traditional domain
theory, but are instead the DCPO of sets of packet histories ordered
by the subset relation. Another point of departure from prior work is
that the structures used in the semantics are not subprobability
distributions, but genuine probability distributions: with probability
$1$, some set of packets is output, although it may be the empty set.

It is not obvious that such an order-theoretic semantics should exist
at all. Traditional probability theory does not take order and
compositionality as fundamental structuring principles, but prefers to
work in monolithic sample spaces with strong topological properties
such as Polish spaces. Prototypical examples of such spaces are the
real line, Cantor space, and Baire space. The space of sets of packet
histories $\pH$ is homeomorphic to the Cantor space, and this was the
guiding principle in the development of the original ProbNetKAT
semantics. Although the Cantor topology enjoys a number of attractive
properties (compactness, metrizability, strong separation) that are
lost when shifting to the Scott topology, the sacrifice is compensated
by a more compelling least-fixpoint characterization of iteration that
aligns better with the traditional domain-theoretic
treatment. Intuitively, the key insight that underpins our development
is the observation that ProbNetKAT programs are monotone: if a larger
set of packet histories is provided as input, then the likelihood of
seeing any particular set of packets as a subset of the output set can
only increase. From this germ of an idea, we formulate an
order-theoretic semantics for ProbNetKAT.

In addition to the strong theoretical motivation for this work, our
new semantics also provides a source of practical useful reasoning
techniques, notably in the treatment of iteration and
approximation. The original paper on ProbNetKAT showed that the Kleene
star operator satisfies the usual fixpoint equation $P\star
= \skp \pcomp P\cmp P\star$, and that its finite approximants $\pp n$
converge weakly (but not pointwise) to it. However, it was not
characterized as a least fixpoint in any order or as a canonical
solution in any sense. This was a bit unsettling and raised questions
as to whether it was the ``right'' definition---questions for which
there was no obvious answer. This paper characterizes $P\star$ as the
least fixpoint of the Scott-continuous map $X\mapsto\skp \pcomp P\cmp
X$ on a continuous DCPO of Scott-continuous Markov kernels. This not
only corroborates the original definition as the ``right'' one, but
provides a powerful tool for monotone approximation. Indeed, this
result implies the correctness of our prototype implementation, which
we have used to build and evaluate several applications inspired by
common real-world scenarios.

\paragraph*{Contributions.}
This main contributions of this paper are as follows: (i) we develop a
domain-theoretic foundation for probabilistic network programming,
(ii) using this semantics, we build a prototype implementation of the
ProbNetKAT language, and (iii) we evaluate the applicability of the
language on several case studies.

\paragraph*{Outline.}
The paper is structured as follows. In \S\ref{sec:overview} we give a
high-level overview of our technical development using a simple
running example. In \S\ref{sec:primer} we review basic definitions from
domain theory and measure theory. In \S\ref{sec:syntax} we formalize the
syntax and semantics of \PNK\ abstractly in terms of a
monad. In \S\ref{sec:cantor} we prove a general theorem relating the Scott
and Cantor topologies on $\pH$. Although the Scott topology is much
weaker, the two topologies generate the same Borel sets, so the
probability measures are the same in both. We also show that the bases
of the two topologies are related by a countably infinite-dimensional
triangular linear system, which can be viewed as an infinite analog of
the inclusion-exclusion principle. The cornerstone of this result is
an extension theorem (Theorem \ref{thm:extension}) that determines
when a function on the basic Scott-open sets extends to a
measure. In \S\ref{sec:DCPO} we give the new domain-theoretic semantics
for \PNK\ in which programs are characterized as Markov kernels that
are Scott-continuous in their first argument. We show that this class
of kernels forms a continuous DCPO, the basis elements being those
kernels that drop all but fixed finite sets of input and output
packets. In \S\ref{sec:continuity} we show that ProbNetKAT's primitives are
(Scott-)continuous and its program operators preserve
continuity. Other operations such as product and Lebesgue integration
are also treated in this framework. In proving these results, we
attempt to reuse general results from domain theory whenever possible,
relying on the specific properties of $\pH$ only when necessary. We
supply complete proofs for folklore results and in cases where we
could not find an appropriate original source. We also show that the
two definitions of the Kleene star operator---one in terms of an
infinite stochastic process and one as the least fixpoint of a
Scott-continuous map---coincide. In \S\ref{sec:approx} we apply the
continuity results from \S\ref{sec:continuity} to derive monotone
convergence theorems.
In \S\ref{sec:case-study} we describe a prototype implementation based on \S\ref{sec:approx}
and several applications. In \S\ref{sec:rel-work} we review related
work. We conclude in \S\ref{sec:concl} by discussing open problems and
future directions.


\section{Overview}
\label{sec:overview}

This section provides motivation for the ProbNetKAT language and
summarizes our main results using a simple example.

\paragraph*{Example.}
Consider the topology shown in Figure~\ref{fig:4cycle} and suppose we
are asked to implement a routing application that forwards all traffic
to its destination while minimizing congestion, gracefully adapting to
shifts in load, and also handling unexpected failures. This problem is
known as traffic engineering in the networking literature and has been
extensively
studied~\cite{fortz02,he2008toward,b4,applegate2003making,racke2008optimal}.
Note that standard shortest-path routing (SPF) does not solve the
problem as stated---in general, it can lead to bottlenecks and also
makes the network vulnerable to failures. For example, consider
sending a large amount of traffic from host $h_1$ to host $h_3$: there
are two paths in the topology, one via switch $S_2$ and one via switch
$S_4$, but if we only use a single path we sacrifice half of the
available capacity. The most widely-deployed approaches to traffic
engineering today are based on using multiple paths and
randomization. For example, Equal Cost Multipath Routing (ECMP), which
is widely supported on commodity routers, selects a least-cost path
for each traffic flow uniformly at random. The intention is to spread
the offered load across a large set of paths, thereby reducing
congestion without increasing latency.

\paragraph*{ProbNetKAT Language.}
Using ProbNetKAT, it is straightforward to write a program that
captures the essential behavior of ECMP. We first construct programs
that model the routing tables and topology, and build a program that
models the behavior of the entire network.

\smallskip
\noindent\textit{Routing:} We model the routing tables for the 
switches using simple ProbNetKAT programs that match on destination
addresses and forward packets on the next hop toward their
destination. To randomly map packets to least-cost paths, we use the
choice operator ($\oplus$). For example, the program for switch
$\textsf{S1}$ in Figure~\ref{fig:4cycle} is as follows:
\[
\begin{array}{@{~}r@{\,}l@{~}}
p_1 & \defeq (\pseq{\match{\dst}{h_1}}{\modify{\pt}{1}})\\
  &\pcomp~ (\pseq{\match{\dst}{h_2}}{\modify{\pt}{2}})\\
  &\pcomp~(\pseq{\match{\dst}{h_3}}{(\modify{\pt}{2} \oplus \modify{\pt}{4})})\\
  &\pcomp~(\pseq{\match{\dst}{h_4}}{\modify{\pt}{4}})
\end{array}
\]
The programs for other switches are similar. To a first approximation,
this program can be read as a routing table, whose entries are
separated by the parallel composition operator ($\pcomp$). The first
entry states that packets whose destination is $h_1$ should be
forwarded out on port $1$ (which is directly connected to
$h_1$). Likewise, the second entry states that packets whose
destination is host $h_2$ should be forwarded out on port $2$, which
is the next hop on the unique shortest path to $h_2$. The third entry,
however, is different: it states that packets whose destination is
$h_3$ should be forwarded out on ports $2$ and $4$ with equal
probability. This divides traffic going to $h_3$ among the clockwise
path via $S_2$ and the counter-clockwise path via $S_4$. The final
entry states that packets whose destination is $h_4$ should be
forwarded out on port $4$, which is again the next hop on the unique
shortest path to $h_4$. The routing program for the network is the
parallel composition of the programs for each switch:
\[
p \defeq (\pseq{\match{\sw}{S_1}}{p_1}) \pcomp
(\pseq{\match{\sw}{S_2}}{p_2}) \pcomp
(\pseq{\match{\sw}{S_3}}{p_3}) \pcomp
(\pseq{\match{\sw}{S_4}}{p_4})
\]

\smallskip
\noindent\textit{Topology:} We model a directed link as a program
that matches on the switch and port at one end of the link and
modifies the switch and port to the other end of the link. We model an
undirected link $l$ as a parallel composition of directed links in
each direction. For example, the link between switches $S_1$ and $S_2$
is modeled as follows:
\[
\begin{aligned}
l_{1,2} &\triangleq
  (\pseq{{\pseq{\pseq{\match{\sw}{S_1}}{\match{\pt}{2}}}{\pdup}}} {\pseq{\pseq{\modify{\sw}{S_2}}{\modify{\pt}{1}}}{\pdup}}) \\
&\pcomp
  (\pseq{{\pseq{\pseq{\match{\sw}{S_2}}{\match{\pt}{1}}}{\pdup}}} {\pseq{\pseq{\modify{\sw}{S_1}}{\modify{\pt}{2}}}{\pdup}})
   \\
\end{aligned}
\]
Note that at each hop we use ProbNetKAT's $\pdup$ operator to store
the headers in the packet's history, which records the trajectory of
the packet as it goes through the network. Histories are useful for
tasks such as measuring path length and analyzing link congestion. We
model the topology as a parallel composition of individual links:
\[
  t ~\triangleq ~l_{1,2} ~\pcomp ~l_{2,3} ~\pcomp ~l_{3,4} ~\pcomp ~l_{1,4}
\]
To delimit the network edge, we define ingress and egress predicates:
\[
\def\arraycolsep{2pt}
\begin{array}{rcl}
\mathit{in} & \defeq & (\pseq{\match{\sw}{1}}{\match{\pt}{1}}) \pcomp
  (\pseq{\match{\sw}{2}}{\match{\pt}{2}}) \pcomp \dots\\
\mathit{out} & \defeq & (\pseq{\match{\sw}{1}}{\match{\pt}{1}}) \pcomp
  (\pseq{\match{\sw}{2}}{\match{\pt}{2}}) \pcomp \dots\\
\end{array}
\]
Here, since every ingress is an egress, the predicates are identical.

\smallskip
\noindent\textit{Network:}
We model the end-to-end behavior of the entire network by combining
$p$, $t$, $\mathit{in}$ and $\mathit{out}$ into a single program:
\[
  \mathit{net} \defeq \pseq{\mathit{in}}{\pseq{\pseq{\pstar{(\pseq{p}{t})}}{p}}}{\mathit{out}}
\]
This program models processing each input from ingress to egress
across a series of switches and links. Formally it denotes a Markov
kernel that, when supplied with an input distribution on packet
histories $\mu$ produces an output distribution $\nu$.

\smallskip
\noindent\textit{Queries:}
Having constructed a probabilistic model of the network, we can use
standard tools from measure theory to reason about performance. For
example, to compute the expected congestion on a given link $l$, we
would introduce a function $Q$ from sets of packets to
$\R \cup \{\infty\}$ (formally a random variable):
\[
Q(a) \defeq \sum_{h \in a}~\#_l(h)
\]
where $\#_l(h)$ is the function on packet histories that returns the
number of times that link $l$ occurs in $h$, and then compute the
expected value of $Q$ using integration:
\[
\ex Q \nu = \int Q\,d \nu
\]
We can compute queries that capture other aspects of network
performance such as latency, reliability, etc. in similar fashion.

\paragraph*{Limitations.}
Unfortunately there are several issues with the approach just
described:
\begin{itemize}
\item One problem is that computing the results of a query can
require complicated measure theory since a ProbNetKAT program may
generate a continuous distribution in general
(Lemma~\ref{thm:pnk-may-be-continuous}). Formally, instead of summing
over the support of the distribution, we have to use Lebesgue
integration in an appropriate measurable space. Of course, there are also
challenges in representing infinite distributions in an implementation.
\item Another issue is that the semantics of iteration is modeled
in terms of an infinite stochastic process rather than a standard
fixpoint. The original ProbNetKAT paper showed that it is possible to
approximate a program using a series of star-free programs that weakly
converge to the correct result, but the approximations need not
converge monotonically, which makes this result difficult to apply in
practice.
\item Even worse, many of the queries that we would like to answer are
not actually continuous in the Cantor topology, meaning that the weak
convergence result does not even apply! The notion of distance on sets
of packet histories is $d(a,b) = 2^{-n}$ where $n$ is the length of
the smallest history in $a$ but not in $b$, or vice versa. It is easy
to construct a sequence of histories $h_n$ of length $n$ such that
$\lim_{n\to\infty} d(\{h_n\}, \{\}) = 0$ but $\lim_{n\to\infty}
Q(\{h_n\}) = \infty$ which is not equal to $Q(\{\}) = 0$.
\end{itemize}
Together, these issues are significant impediments that make it
difficult to apply ProbNetKAT in many scenarios.

\begin{figure}[t!]
\tikzset{
  switch/.style={
    circle,
    fill=black!25,
    draw=black!50,
    very thick,
    minimum size=20pt,
    inner sep=0pt},
  host/.style={
    rectangle,
    fill=black!25,
    draw=black!50,
    very thick,
    minimum size=14pt,
    inner sep=0pt},
}
\smallskip
\noindent
\centerline{\begin{tikzpicture}
    \node[switch] (S1) at (0,0) {\textsf{S1}};
    \node[switch] (S2) at (1.5,0)  {\textsf{S2}};
    \node[switch] (S3) at (1.5,-1.5) {\textsf{S3}};
    \node[switch] (S4) at (0,-1.5) {\textsf{S4}};
    \node[host] (H1) at (-1,.25) {\textsf{h1}};
    \node[host] (H2) at (2.5,.25) {\textsf{h2}};
    \node[host] (H3) at (2.5,-1.75) {\textsf{h3}};
    \node[host] (H4) at (-1,-1.75) {\textsf{h4}};
    \path[draw]
    (S1) edge node [above,pos=0.1] {\small \sf 1} (H1)
    (S2) edge node [above,pos=0.1] {\small \sf 2} (H2)
    (S3) edge node [below,pos=0.1] {\small \sf 3} (H3)
    (S4) edge node [below,pos=0.1] {\small \sf 4} (H4)
    (S1) edge node [above,pos=0.1] {\small \sf 2} node [above,pos=0.9] {\small \sf 1} (S2)
    (S2) edge node [right,pos=0.1] {\small \sf 3} node [right,pos=0.9] {\small \sf 2} (S3)
    (S3) edge node [below,pos=0.1] {\small \sf 4} node [below,pos=0.9] {\small \sf 3} (S4)
    (S4) edge node [left ,pos=0.1] {\small \sf 1} node [left ,pos=0.9] {\small \sf 4} (S1);
\end{tikzpicture}}
\centerline{\small (a)}
\\[1ex]
\begin{minipage}{.53\columnwidth}
  \includegraphics[width=\columnwidth,clip,trim={0 22px 0 0}]{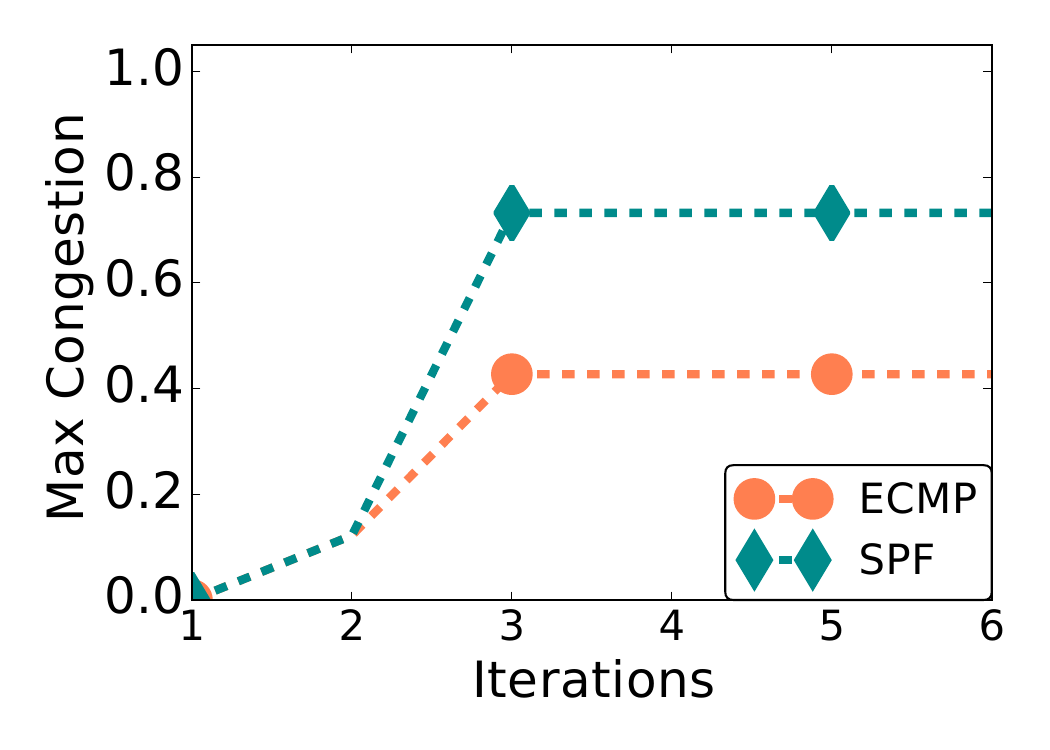}
  \centerline{~~\quad\small(b)}
\end{minipage}\hfill
\begin{minipage}{.43\columnwidth}
  \includegraphics[width=\columnwidth]{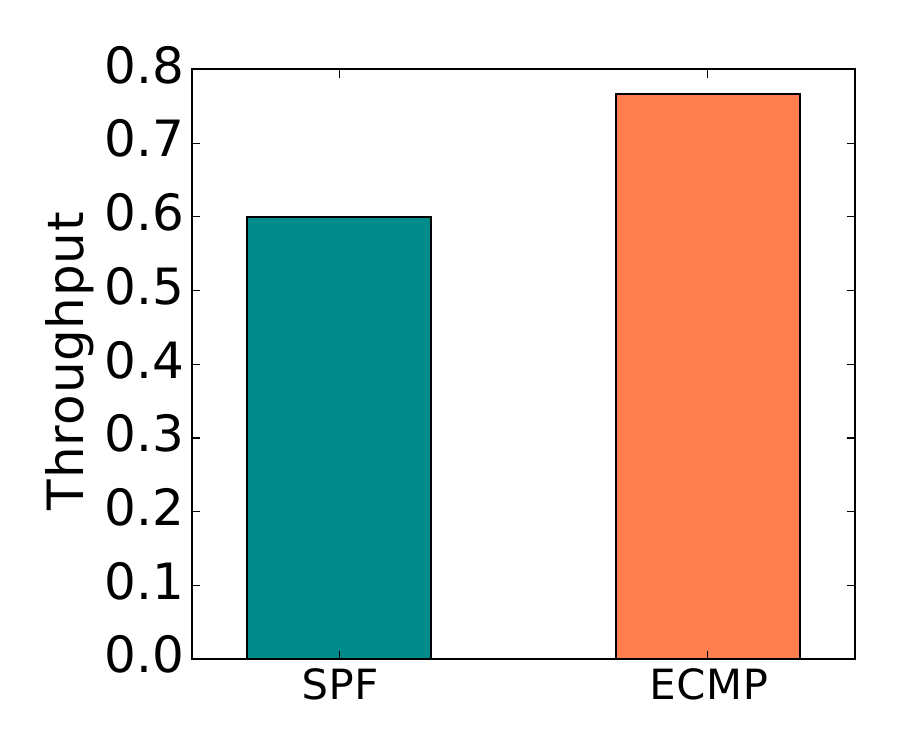}
  \centerline{\qquad\small(c)}
\end{minipage}
\caption{(a) topology, (b) congestion, (c) failure throughput.}
\label{fig:4cycle}
\end{figure}

\paragraph*{Domain-Theoretic Semantics.}
This paper develops a new semantics for ProbNetKAT that overcomes
these problems and provides the key building blocks needed to engineer
a practical implementation. The main insight is that we can formulate
the semantics in terms of the Scott topology rather than the Cantor
topology. It turns out that these two topologies generate the same
Borel sets, and the relationship between them can be characterized
using an extension theorem that captures when functions on the basic
Scott-open sets extend to a measure. We show how to construct a DCPO
equipped with a natural partial order that also lifts to a partial
order on Markov kernels. We prove that standard program operators are
continuous, which allows us to formulate the semantics of the
language---in particular Kleene star---using standard tools from
domain theory, such as least fixpoints. Finally, we formalize a
notion of approximation and prove a monotone convergence theorem.

The problems with the original ProbNetKAT semantics identified above
are all solved using the new semantics. Because the new semantics
models iteration as a least fixpoint, we can work with finite
distributions and star-free approximations that are guaranteed to monotonically
converge to the analytical solution (Corollary~\ref{thm:expectation-approx}).
Moreover, whereas our query $Q$ was
not Cantor continuous, it is straightforward to show that it is Scott
continuous. Let $A$ be an increasing chain $a_0 \subseteq
a_1 \subseteq a_2 \subseteq \dots$ ordered by inclusion. Scott
continuity requires
\(
\bigsqcup_{a \in A} Q(a) = Q(\bigsqcup A\big)
\)
which is easy to prove. Hence, the convergence theorem applies and we
can compute a monotonically increasing chain of approximations that
converge to $\ex Q \nu$.

\paragraph*{Implementation and Applications.}
We developed the first implementation of ProbNetKAT using the new
semantics. We built an interpreter for the language and implemented a
variety of traffic engineering schemes including ECMP, $k$-shortest
path routing, and oblivious routing~\cite{racke2008optimal}. We
analyzed the performance of each scheme in terms of congestion and
latency on real-world demands drawn from Internet2's Abilene backbone,
and in the presence of link failures. We showed how to use the
language to reason probabilistically about reachability properties
such as loops and black holes. Figures~\ref{fig:4cycle}~(b-c) depict
the expected throughput and maximum congestion when using shortest
paths (SPF) and ECMP on the 4-node topology as computed by our
ProbNetKAT implementation. We set the demand from $h_1$ to $h_3$ to be
$\frac{1}{2}$ units of traffic, and the demand between all other pairs
of hosts to be $\frac{1}{8}$ units. The first graph depicts the
maximum congestion induced under successive approximations of the
Kleene star, and shows that ECMP achieves much better congestion than
SPF. With SPF, the most congested link (from $S_1$ to $S_2$) carries
traffic from $h_1$ to $h_2$, from $h_4$ to $h_2$, and from $h_1$ to
$h_3$, resulting in $\frac{3}{4}$ total traffic. With ECMP, the same
link carries traffic from $h_1$ to $h_2$, half of the traffic from
$h_2$ to $h_4$, half of the traffic from $h_1$ to $h_3$, resulting in
$\frac{7}{16}$ total traffic. The second graph depicts the loss of
throughput when the same link fails. The total aggregate demand is
$1 \frac{7}{8}$. With SPF, $\frac{3}{4}$ units of traffic are dropped
leaving $1 \frac{1}{8}$ units, which is 60\% of the demand, whereas
with ECMP only $\frac{7}{16}$ units of traffic are dropped leaving
$1 \frac{7}{16}$ units, which is 77\% of the demand.

\section{Preliminaries}
\label{sec:primer}
\newcommand{\salg}{\FF}

This section briefly reviews basic concepts from topology, measure
theory, and domain theory, and defines \emph{Markov kernels}, the
objects on which \probnetkat's semantics is based. For a more detailed
account, the reader is invited to consult standard
texts~\cite{durrett2010probability,Abramsky94domaintheory}.

\paragraph*{Topology.}%
A \emph{topology} $\SO \subseteq \pset X$ on a set $X$ is a collection
of subsets including $X$ and $\emptyset$ that is closed under finite
intersection and arbitrary union. A pair $(X,\SO)$ is called a
\emph{topological space} and the sets $U,V \in \SO$ are called
the \emph{open sets} of $(X,\SO)$. A function $f:X\to Y$ between
topological spaces $(X,\SO_X)$ and $(Y,\SO_Y)$ is \emph{continuous} if
the preimage of any open set in $Y$ is open in $X$, \ie if
\[
\inv f (U) = \set{x\in X}{f(x) \in U} \in \SO_X
\]
for any $U \in \SO_Y$.

\paragraph*{Measure Theory.}
A \emph{$\sigma$-algebra} $\salg \subseteq \pset X$ on a set $X$ is a
collection of subsets including $X$ that is closed under complement,
countable union, and countable intersection. A \emph{measurable space}
is a pair $(X,\salg)$. A \emph{probability measure} $\mu$ over such a
space is a function $\mu : \salg \to [0,1]$ that assigns probabilities
$\mu(A) \in [0,1]$ to the
\emph{measurable sets} $A \in \salg$, and satisfies the following conditions:
\begin{itemize}
\item $\mu(X) = 1$
\item $\mu(\bigcup_{i \in I} A_i) = \sum_{i \in I} \mu(A_i)$ whenever $\{A_i\}_{i \in I}$
is a countable\\[.5ex] collection of disjoint measurable sets. 
\end{itemize}
Note that these conditions already imply that $\mu(\emptyset)=0$. Elements
$a,b \in X$ are called \emph{points} or \emph{outcomes}, and measurable sets
$A,B \in \salg$ are also called \emph{events}.
The $\sigma$-algebra $\sigma(U)$ generated by a set $U\subseteq X$ is the smallest
$\sigma$-algebra containing $U$:
\[
\sigma(U) \defeq \bigcap \set{\salg \subseteq \pset X}{\salg \text{ is a $\sigma$-algebra and } U \subseteq \salg}.
\]
Note that it is well-defined because the intersection is not empty
($\pset X$ is trivially a $\sigma$-algebra containing $U$) and intersections
of $\sigma$-algebras are again $\sigma$-algebras. If
$\mathcal{O} \subs \pset X$ are the open sets of $X$, then the
smallest $\sigma$-algebra containing the open sets $\BB
= \sigma(\mathcal{O})$ is the \emph{Borel algebra}, and the measurable
sets $A,B \in \BB$ are the \emph{Borel sets} of $X$.

Let $P_\mu \defeq \set{ a \in X}{ \mu(\sset{a}) > 0 }$ denote the
points (not events!) with non-zero probability. It can be shown that
$P_\mu$ is countable. A probability measure is called \emph{discrete}
if $\mu(P_\mu) = 1$. Such a measure can simply be represented by a
function $Pr : X \to [0,1]$ with $Pr(a) = \mu(\sset{a})$. If $|P_\mu|
< \infty$, the measure is called \emph{finite} and can be represented
by a finite map $Pr : P_\mu \to [0,1]$.  In contrast, measures for
which $\mu(P_\mu) = 0$ are called \emph{continuous}, and measures for
which $0 < \mu(P_\mu) < 1$ are called \emph{mixed}.  The \emph{Dirac
measure} or \emph{point mass} puts all probability on a single point
$a \in X$: $\dirac a (A) = 1$ if $a \in A$ and $0$ otherwise.
The uniform distribution on $[0,1]$ is a continuous measure.

A function $f : X \to Y$ between measurable spaces $(X, \salg_X)$ and
$(Y, \salg_Y)$ is called \emph{measurable} if the preimage of any
measurable set in $Y$ is measurable in $X$, \ie if \[ \inv f
(A) \defeq \set{x\in X}{f(x) \in A} \in \salg_X
\]
for all $A \in \salg_Y$. If $Y=\R \cup \sset{-\infty,+\infty}$, then $f$ is
called a \emph{random variable} and its \emph{expected value} with respect to a
measure $\mu$ on $X$ is given by the Lebesgue integral
\begin{align*}
  \ex{f}{\mu} &\defeq \int f d\mu = \int_{x \in X} f(x) \cdot \mu(dx)
\end{align*}
If $\mu$ is discrete, the integral simplifies to the sum
\begin{align*}
  \ex{f}{\mu} &= \sum_{x \in X} f(x) \cdot \mu(\sset{x})
  = \sum_{x \in P_\mu} f(x) \cdot Pr(x)
\end{align*}

\paragraph*{Markov Kernels.}%
Consider a probabilistic transition system with states $X$ that makes a
random transition between states at each step. If $X$ is finite, the
system can be captured by a transition matrix $T \in [0,1]^{X\times
X}$, where the matrix entry $T_{xy}$ gives the probability that the
system transitions from state $x$ to state $y$. Each row $T_x$
describes the transition function of a state $x$ and must sum to
$1$. Suppose that the start state is initially distributed according
to the row vector $V \in [0,1]^X$, \ie the system starts in state
$x\in X$ with probability $V_x$.  Then, the state distribution is
given by the matrix product $VT \in [0,1]^{X}$ after one step and by
$VT^n$ after $n$ steps.

Markov kernels generalize this idea to infinite state systems. Given
measurable spaces $(X,\salg_X)$ and $(Y,\salg_Y)$, a Markov kernel
with source $X$ and target $Y$ is a function $P : X \times \salg_Y \to
[0,1]$ (or equivalently, $X \to \salg_Y \to [0,1]$) that maps each
source state $x\in X$ to a distribution over target states $P(x,-)
: \salg_Y \to [0,1]$. If the initial distribution is given by a
measure $\nu$ on $X$, then the target distribution $\mu$ after one
step is given by Lebesgue integration:
\begin{align}
\label{eq:kernl-dist-comp}
  \mu(A) &\defeq \int_{x \in X} P(x,A) \cdot \nu(dx) && (A \in \salg_Y)
\end{align}
If $\nu$ and $P(x,-)$ are discrete, the integral simplifies to the sum
\begin{align*}
  \mu(\sset{y}) &= \sum_{x \in X} P(x,\sset{y}) \cdot \nu({\sset{x}})
  &&(y \in Y)
\end{align*}
which is just the familiar vector-matrix-product $VT$. Similarly, two
kernels $P,Q$ from $X$ to $Y$ and from $Y$ to $Z$, respectively, can
be sequentially composed to a kernel $P \cmp Q$ from $X$ to $Z$:
\begin{align}
\label{eq:kernl-kernel-comp}
  (P \cmp Q)(x,A) &\defeq \int_{y \in Y} P(x, dy) \cdot Q(y, A)
\end{align}
This is the continuous analog of the matrix product $TT$. A Markov
kernel $P$ must satisfy two conditions:
\begin{enumerate}[(i)]
\item For each source state $x \in X$, the map $A \mapsto P(x,A)$
  must be a probability measure on the target space.
\item \label{kernel-measurable} For each event $A \in \salg_Y$
  in the target space, the map $x \mapsto P(x,A)$ must be a measurable
  function.
\end{enumerate}
Condition (\ref{kernel-measurable}) is required to ensure that
integration is well-defined.
A kernel $P$ is called \emph{deterministic} if $P(a,-)$ is a dirac measure
for each $a$.

\paragraph*{Domain Theory.}
A \emph{partial order} (PO) is a pair $(D,\sqleq)$ where $D$ is a set
and $\sqleq$ is a reflexive, transitive, and antisymmetric relation on
$D$.  For two elements $x,y \in D$ we let $x \sqcup y$ denote their
$\sqleq$-least upper bound (\ie, their supremum), provided it
exists. Analogously, the least upper bound of a subset $C\subs D$ is
denoted $\bigsqcup C$, provided it exists.  A non-empty subset $C\subs D$
is \emph{directed} if for any two $x,y\in C$ there exists
\emph{some} upper bound $x,y \sqleq z$ in $C$.
A \emph{directed complete partial order} (DCPO) is a PO for which any
directed subset $C\subseteq D$ has a supremum $\bigsqcup C$ in $D$.
If a PO has a least element it is denoted by $\bot$, and if it has a
greatest element it is denoted by $\top$. For example, the nonnegative
real numbers with infinity $\R_+ \defeq [0,
\infty]$ form a DCPO under the natural order $\leq$ with suprema $\bigsqcup C =
\sup C$, least element $\bot = 0$, and greatest element $\top = \infty$.
The unit interval is a DCPO under the same order, but with $\top =
1$. Any powerset $\pset X$ is a DCPO under the subset order, with
suprema given by union.

A function $f$ from $D$ to $E$ is
called \emph{{\upshape (}Scott-{\upshape)}continuous} if
\begin{enumerate}[(i)]
  \item it is monotone, \ie $x \sqleq y$ implies $f(x) \sqleq f(y)$,
  and
\item it preserves suprema, \ie $f(\bigsqcup C)
  = \bigsqcup_{x\in C} f(x)$ for any directed set $C$ in $D$.
\label{itm:preserve-suprema}
\end{enumerate}
Equivalently, $f$ is continuous with respect to the \emph {Scott
topologies} on $D$ and $E$ \cite[Proposition
2.3.4]{Abramsky94domaintheory}, which we define next.
(Note how condition \eqref{itm:preserve-suprema} looks like the classical
definition of continuity of a function $f$, but with suprema taking the role of
limits).
The set of all
continuous functions $f : D \to E$ is denoted $[D \to E]$.

A subset $A\subs D$ is called \emph{up-closed} (or an \emph{upper
set}) if $a\in A$ and $a\sqleq b$ implies $b \in A$. The smallest
up-closed superset of $A$ is called its \emph{up-closure} and is
denoted $\up A$.  $A$ is called (\emph{Scott-})\emph{open} if it is
up-closed and intersects every directed subset $C\subs D$ that
satisfies $\bigsqcup C\in A$.  For example, the Scott-open sets of
$\R_+$ are the upper semi-infinite intervals $(r,\infty]$,
$r\in\R_+$. The Scott-open sets form a topology on $D$ called
the \emph{Scott topology}.

DCPOs enjoy many useful closure properties:
\begin{enumerate}[(i)]
\item The cartesian product of any collection of DCPOs is a DCPO with
  componentwise order and suprema.
\item If $E$ is a DCPO and $D$ any set, the function space $D \to E$ is a
  DCPO with pointwise order and suprema.
\item The continuous functions $[D \to E]$ between DCPOs $D$ and $E$ form a
  DCPO with pointwise order and suprema.
\end{enumerate}

If $D$ is a DCPO with least element $\bot$, then any Scott-continuous
self-map $f \in [D \to D]$ has a $\sqleq$-least fixpoint, and it is
given by the supremum of the chain $\bot \sqleq f(\bot) \sqleq
f(f(\bot)) \sqleq \dots$:
\[
\lfp(f) = \bigsqcup_{n \geq 0} f^n(\bot)
\]
Moreover, the least fixpoint operator, $\lfp \in [[D \to D] \to D]$
is itself continuous, that is:
\(
\lfp(\bigsqcup C) = \bigsqcup_{f \in C} \lfp(f)
\),
for any directed set of functions $C \subseteq [D \to D]$.

An element $a$ of a DCPO is called \emph{finite} (\citet{Abramsky94domaintheory}
use the term \emph{compact}~) if for any
directed set $A$, if $a\sqleq\tbigsqcup A$, then there exists $b\in A$
such that $a\sqleq b$. Equivalently, $a$ is finite if its up-closure
$\up{\{a\}}$ is Scott-open. A DCPO is called \emph{algebraic} if for
every element $b$, the finite elements $\sqleq$-below $b$ form a
directed set and $b$ is the supremum of this set.  An element $a$ of a
DCPO \emph{approximates} another element $b$, written $a\ll b$, if for
any directed set $A$, $a\sqleq c$ for some $c\in A$ whenever
$b\sqleq\bigsqcup A$.  A DCPO is called \emph{continuous} if for every
element $b$, the elements $\ll$-below $b$ form a directed set and $b$
is the supremum of this set.  Every algebraic DCPO is continuous.  A
set in a topological space is \emph{compact-open} if it is compact
(every open cover has a finite subcover) and open.

Here we recall some basic facts about DCPOs. These are all well-known,
but we state them as a lemma for future reference.
\begin{lemma}[DCPO Basic Facts]
\label{lem:basic}\ 
\begin{enumerate}[{\upshape(i)}]

\item
Let $E$ be a DCPO and $D_1,D_2$ sets. There is a homeomorphism
(bicontinuous bijection) $\curry$
between the DCPOs $D_1\times D_2\to E$ and $D_1\to D_2\to E$, where
the function spaces are ordered pointwise. The inverse of $\curry$ is
$\uncurry$.

\item\label{lem:basic:finitebase}
In an algebraic DCPO, the open sets $\up{\{a\}}$ for finite $a$ form a
base for the Scott topology.

\item A subset of an algebraic DCPO is compact-open iff it is a finite
union of basic open sets $\up{\{a\}}$.
\end{enumerate}
\end{lemma}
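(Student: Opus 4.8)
The plan is to handle the three parts in order, since part (ii) feeds directly into part (iii).

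For part (i), I would first exhibit the bijection explicitly, setting $\curry(f)(x)(y) \defeq f(x,y)$ and $\uncurry(g)(x,y) \defeq g(x)(y)$, and observe by a one-line computation that these are mutually inverse. Both function spaces carry the pointwise order, so $f \sqleq f'$ in $D_1\times D_2\to E$ holds iff $f(x,y)\sqleq f'(x,y)$ for all $x,y$, which is exactly the condition $\curry(f)\sqleq\curry(f')$ in $D_1\to D_2\to E$; hence $\curry$ is an order isomorphism. It then remains to upgrade this to a homeomorphism of the Scott topologies. The clean route is to note that, by the function-space closure property for DCPOs, directed suprema on both sides are computed pointwise; consequently the order isomorphism $\curry$ and its inverse both preserve directed suprema, so both are Scott-continuous and $\curry$ is a bicontinuous bijection. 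This continuity-of-the-bijection step is the only mildly non-obvious point, and I expect it to be the crux of (i); everything else is definition-chasing.

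For part (ii), I would prove that $\{\up{\{a\}} : a \text{ finite}\}$ is a base by verifying, for every Scott-open $U$ and every $b\in U$, that there exists a finite $a$ with $b\in\up{\{a\}}\subseteq U$. Recall that each $\up{\{a\}}$ is itself Scott-open precisely because $a$ is finite. Since $D$ is algebraic, the finite elements below $b$ form a directed set whose supremum is $b$; since $U$ is Scott-open and $b = \bigsqcup(\text{this directed set}) \in U$, the defining property of Scott-open sets forces $U$ to meet the set, yielding a finite $a\sqleq b$ with $a\in U$. Then $a\sqleq b$ gives $b\in\up{\{a\}}$, and up-closedness of $U$ (Scott-open implies up-closed) gives $\up{\{a\}}\subseteq U$. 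The two key ingredients are thus algebraicity, which supplies the directed approximating set, and the precise definition of Scott-open, which lets us extract a finite element lying inside $U$.

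For part (iii), both directions reduce to part (ii) together with a single observation: each basic open $\up{\{a\}}$ is compact. Indeed, any open cover of $\up{\{a\}}$ contains a member $U$ with $a\in U$, and up-closedness of $U$ then gives $\up{\{a\}}\subseteq U$, so one set of the cover already suffices. For the ``if'' direction, a finite union $\bigcup_i\up{\{a_i\}}$ is open (a finite union of opens) and compact (a finite union of compacts), hence compact-open. For the ``only if'' direction, a compact-open $K$ is, by part (ii), the union of the basic opens $\up{\{a\}}$ over all finite $a\in K$: this union equals $K$ because each such $\up{\{a\}}\subseteq K$ by up-closedness, while the part (ii) argument shows every $b\in K$ lies in some $\up{\{a\}}$. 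Compactness then extracts a finite subcover, exhibiting $K$ as a finite union of basic opens. I do not anticipate any genuine obstacle here beyond carefully invoking up-closedness of open sets at each step.
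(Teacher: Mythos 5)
Your proof is correct. Note that the paper itself offers no proof of this lemma---it is stated explicitly as a collection of well-known facts recorded for later reference---so there is no official argument to compare against; your write-up supplies the standard ones. All three parts check out: in (i) the order-isomorphism plus pointwise computation of directed suprema (which the paper's own closure properties for function-space DCPOs guarantee) does yield Scott-continuity in both directions; in (ii) you correctly combine algebraicity with the paper's definition of Scott-open (up-closed and meeting every directed set whose supremum lies in the set) and with the stated equivalence that $a$ is finite iff $\up{\{a\}}$ is open; and in (iii) the observation that $\up{\{a\}}$ is compact via up-closedness of any covering open set, together with part (ii) and extraction of a finite subcover, is exactly the right argument.
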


\begin{figure*}[t!]
\begin{minipage}{.475\textwidth}
\textbf{Syntax}
\[
\begin{array}{r@{~~~}r@{~}c@{~}l@{\qquad}l}
\textrm{Naturals} & n & ::=  & \mathrlap{0 \mid 1 \mid 2 \mid \ldots}\\
\textrm{Fields} & \field & ::=  & \mathrlap{\field_1 \mid \ldots \mid \field_k} \\
\textrm{Packets} & \Pk \ni \pk & ::= & \mathrlap{\sset{\field_1=n_1, \dots , \field_k = n_k}} \\
\textrm{Histories} & \Hist \ni \h & ::= & \mathrlap{\hcons{\pk}{\hbar}}\\
                   & \hbar & ::= & \mathrlap{\hempty \mid \hcons{\pk}{\hbar}} \\
\textrm{Probabilities} & [0,1] \ni r\\
\textrm{Predicates} & \preds &
   ::= & \pfalse                       & \textit{False/Drop} \\
    & & \mid & \ptrue                  & \textit{True/Skip} \\
    & & \mid & \match{\field}{n}       & \textit{Test} \\
    & & \mid & \punion{\preda}{\predb} & \textit{Disjunction} \\
    & & \mid & \pseq{\preda}{\predb}   & \textit{Conjunction} \\
    & & \mid & \pnot{\preda}           & \textit{Negation} \\
\textrm{Programs} & \pols &
  ::= & \preda                         & \textit{Filter} \\
    & & \mid & \modify{\field}{n}      & \textit{Modification} \\
    & & \mid & \pdup                   & \textit{Duplication} \\
    & & \mid & \punion{\polp}{\polq}   & \textit{Parallel Composition} \\
    & & \mid & \pseq{\polp}{\polq}     & \textit{Sequential Composition} \\
    & & \mid & \polp \opr \polq        & \textit{Choice} \\
    & & \mid & \pstar{\polp}           & \textit{Iteration}
\end{array}\]
\end{minipage}\hfill\vrule\hfill\begin{minipage}{.475\textwidth}
\textbf{Semantics}\quad\(\den{\polp} \in \pset{\Hist} \to \Mon(\pset{\Hist})\)
\[
\def\arraystretch{1.2}
\begin{array}{r@{~~}c@{~~}l}
\den{\pfalse}(a) & \defeq &
  \unit{\emptyset}\\
\den{\ptrue}(a) & \defeq &
  \unit{a}\\
\den{\match{\field}{n}}(a) & \defeq &
  \unit{\set{\hcons{\pk}{\hbar} \in a}{\pk.f = n}} \\
\den{\pnot{\preda}}(a) & \defeq &
  \den{\preda}(a) \bind \lambda b.
  \unit{a-b}\\
\den{\modify{\field}{n}}(a) & \defeq & 
  \unit{\set{\hcons{\upd{\pk}{\field}{n}}{\hbar}}{\hcons{\pk}{\hbar} \in a }} \\
\den{\pdup}(a) & \defeq & 
  \unit{\set{\hcons{\pk}{\hcons{\pk}{\hbar}}}{\hcons{\pk}{\hbar} \in a }} \\
\den{\punion{\polp}{\polq}}(a) & \defeq & 
  \den{\polp}(a) \bind \lambda b_1.
  \den{\polq}(a) \bind \lambda b_2.
  \unit{b_1 \cup b_2}\\
\den{\pseq{\polp}{\polq}}(a) & \defeq &
  \den{\polp}(a) \bind \den{\polq}\\
\den{\polp \opr \polq}(a) & \defeq &
  r \cdot \den{\polp}(a) + (1-r) \cdot \den{\polq}(a)\\
\den{\pstar\polp}(a) & \defeq & \displaystyle\bigsqcup_{n \in \N} \den{p^{(n)}}(a)\\
\multicolumn{3}{l}{\text{where } ~ p^{(0)} \defeq \ptrue ~ \text{ and } ~
  p^{(n+1)} \defeq \punion{\ptrue}{\pseq{p}{p^{(n)}}}}
\end{array}
\]
%
\hrule\vspace{1ex}
\textbf{Probability Monad}
\[
\def\arraystretch{1.25}
\begin{array}{l}
\Mon(X) \defeq \set{\mu : \BB \to [0,1]}{\mu \text{ is a probability measure}}
\\
\unit{a} \defeq \dirac a \qquad
\displaystyle \mu \bind P \defeq \lambda A. \int_{a \in X} P(a)(A) \cdot \mu(da) 
\end{array}
\]
\end{minipage}
\caption{\probnetkat: syntax and semantics.}
\label{fig:probnetkat}
\end{figure*}

\section{ProbNetKAT}\label{sec:syntax}
This section defines the syntax and semantics of \PNK\ formally (see
Figure~\ref{fig:probnetkat}) and establishes some basic
properties. \probnetkat is a core calculus designed to capture the
essential forwarding behavior of probabilistic network programs. In
particular, the language includes primitives that model fundamental
constructs such as parallel and sequential composition, iteration, and
random choice. It does not model features such as mutable state,
asynchrony, and dynamic updates, although extensions to \netkat-like
languages with several of these features have been studied in previous
work~\cite{reitblatt12,event-driven-pldi16}.

\paragraph*{Syntax.}%
A packet $\pk$ is a record mapping a finite set of fields
$\field_1, \field_2, \dots, \field_k$ to bounded integers $n$. Fields
include standard header fields such as the source (\src) and
destination (\dst) of the packet, and two logical fields (\sw for
switch and \pt for port) that record the current location of the
packet in the network. The logical fields are not present in a
physical network packet, but it is convenient to model them as proper
header fields. We write $\pk.\field$ to denote the value of field
$\field$ of $\pi$ and $\upd{\pk}{\field}{n}$ for the packet obtained
from $\pi$ by updating field $\field$ to $n$. We let $\Pk$ denote the
(finite) set of all packets.
 
A history $\h=\hcons{\pk}{\hbar}$ is a non-empty list of packets with
\emph{head packet} $\pi$ and (possibly empty) \emph{tail} $\hbar$.
The head packet models the packet's current state and the tail
contains its prior states, which capture the trajectory of the packet
through the network. Operationally, only the head packet exists, but
it is useful to discriminate between identical packets with different
histories. We write $\Hist$ to denote the (countable) set of all
histories.

We differentiate between \emph{predicates} ($\preda,\predb$)
and \emph{programs} ($\polp,\polq$).  The predicates form a Boolean
algebra and include the primitives
\emph{false} ($\pfalse$), \emph{true} ($\ptrue$), and \emph{tests} ($\match{\field}{n}$),
as well as the standard Boolean operators disjunction
($\punion{\preda}{\predb}$), conjunction ($\pseq{\preda}{\predb}$),
and negation ($\pnot{\preda}$). Programs include \emph{predicates}
($\preda$) and \emph{modifications} ($\modify{\field}{n}$) as
primitives, and the operators \emph{parallel composition}
($\punion{\polp}{\polq}$), \emph{sequential composition}
($\pseq{\polp}{\polq}$), and \emph{iteration} ($\pstar{\polp}$). The
primitive $\pdup$ records the current state of the packet by extending
the tail with the head packet. Intuitively, we may think of a history as a
log of a packet's activity, and of $\pdup$ as the logging command.
Finally, \emph{choice}
$\polp \opr \polq$ executes $\polp$ with probability $r$ or $\polq$
with probability $1-r$. We write $p \oplus q$ when $r=0.5$.

Predicate conjunction and sequential composition use the same syntax
($\pseq{\preda}{\predb}$) as their semantics coincide (as we will see
shortly). The same is true for disjunction of predicates and parallel
composition ($\punion{\preda}{\predb}$).  The distinction
between \emph{predicates} and \emph{programs} is merely to restrict
negation to predicates and rule out programs like
$\pnot{(\pstar\polp)}$.

\paragraph{Syntactic Sugar.}%
The language as presented in Figure~\ref{fig:probnetkat} is reduced to its core
primitives. It is worth noting that many useful constructs can be derived from
this core. In particular, it is straightforward to encode conditionals and while
loops:
\begin{align*}
\ite{\preda}{\polp}{\polq}&~\defeq~
  \punion{\pseq{\preda}{\polp}}{\pseq{\pnot \preda}{\polq}} \\
\while{\preda}{\polp} &~\defeq~
  \pseq{(\pseq{\preda}{\polp})\star}{\pnot \preda}
\end{align*}
These encodings are well-known from \kat \cite{K97c}. While loops are useful for implementing
higher level abstractions such as network virtualization in \netkat \cite{compilekat}.

\paragraph*{Example.}%
Consider the programs
\begin{align*}
  p_1 &\defeq \match{\pt}{1} \cmp (\modify{\pt}{2} \pcomp \modify{\pt}{3}) \\
  p_2 &\defeq (\match{\pt}{2} \pcomp \match{\pt}{3}) 
    \cmp \modify{\dst}{10.0.0.1}
    \cmp \modify{\pt}{1}
\end{align*}
The first program forwards packets entering at port $1$ out of ports
$2$ and $3$---a simple form of multicast---and drops all other
packets. The second program matches on packets coming in on ports
$2$ \emph{or} $3$,
%
%
modifies their destination to the IP address $10.0.0.1$, and sends
them out through port $1$.  The program $\punion{p_1}{p_2}$ acts like
$p_1$ for packets entering at port $1$, and like $p_2$ for packets
entering at ports $2$ or $3$.

\paragraph*{Monads.}%
We define the semantics of \netkat programs parametrically over a
monad $\Mon$.  This allows us to give two concrete semantics at once:
the classical deterministic semantics (using the identity monad), and
the new probabilistic semantics (using the probability monad).  For
simplicity, we refrain from giving a categorical treatment and simply
model a monad in terms of three components:
\begin{itemize}
\item a constructor $\Mon$ that lifts $X$ to a domain $\Mon(X)$;
\item an operator $\unitop : X \to \Mon(X)$ that lifts objects into the domain $\Mon(X)$; and
\item an infix operator \[
  \bind : \Mon(X) \to (X \to \Mon(X)) \to \Mon(X)
\]
that lifts a function $f : X \to \Mon(X)$ to a function
\[(- \bind f) : \Mon(X) \to \Mon(X)\]
\end{itemize}
These components must satisfy three axioms:
\begin{align}
  \unit a \bind f &~=~ f(a) \tag{\bf M1}\label{eq:monad-left-id}\\
  m \bind \unitop &~=~ m    \tag{\bf M2}\label{eq:monad-right-id}\\
  (m \bind f) \bind g &~=~ m \bind (\lambda x. f(x) \bind g)
    \tag{\bf M3}\label{eq:monad-assoc}
\end{align}
The semantics of deterministic programs (not containing probabilistic
choices $p \opr q$) uses as underlying objects the set of packet
histories $\pH$ and the identity monad $\Mon(X)=X$: $\unitop$ is the
identify function and $x \bind f$ is simply function application
$f(x)$.  The identity monad trivially satisfies the three axioms.

The semantics of probabilistic programs uses the probability (or Giry)
monad \cite{giry1982categorical,JonesPlotkin89,ramsey2002stochastic}
that maps a measurable space to the domain of probability measures
over that space. The operator $\unitop $ maps $a$ to the point mass
(or Dirac measure) $\dirac a$ on $a$. Composition $\mu \bind (\lambda
a. \nu_a)$ can be thought of as a two-stage probabilistic experiment
where the second experiment $\nu_a$ depends on the outcome $a$ of the
first experiment $\mu$. Operationally, we first sample from $\mu$ to
obtain a random outcome $a$; then, we sample from $\nu_a$ to obtain
the final outcome $b$. What is the distribution over final outcomes?
It can be obtained by observing that $\lambda a. \nu_a$ is a Markov
kernel (\S\ref{sec:primer}), and so composition with $\mu$ is given by
the familiar integral \[ \mu \bind (\lambda a.\nu_a) = \lambda
A. \int_{a \in X} \nu_a(A) \cdot \mu(da)
\]
introduced in \eqref{eq:kernl-dist-comp}.  It is well known that these
definitions satisfy the monad axioms
\cite{K81c,giry1982categorical,JonesPlotkin89}.
\eqref{eq:monad-left-id} and \eqref{eq:monad-right-id} are trivial properties of
the Lebesgue Integral.
\eqref{eq:monad-assoc} is essentially Fubini's
theorem, which permits changing the order of integration in a double
integral.

\paragraph*{Deterministic Semantics.}
In deterministic \netkat (without $\polp \opr \polq$), a program $p$
denotes a function $\den{p} \in \pH \to \pH$ mapping a set of input
histories $a \in \pH$ to a set of output histories $\den{p}(a)$. Note
that the input and output sets do \emph{not} encode non-determinism
but represent sets of ``in-flight'' packets in the network. Histories
record the processing done to each packet as it traverses the
network. In particular, histories enable reasoning about path
properties and determining which outputs were generated from common
inputs.

Formally, a predicate $t$ maps the input set $a$ to the subset
$b \subseteq a$ of histories satisfying the predicate. In particular,
the false primitive $\pfalse$ denotes the function mapping any input
to the empty set; the true primitive $\ptrue$ is the identity
function; the test $\match{\field}{n}$ retains those histories with
field $f$ of the head packet equal to $n$; and negation $\pnot \preda$
returns only those histories not satisfying $\preda$.  Modification
$\modify{\field}{n}$ sets the \field-field of all head-packets to the
value $n$. Duplication $\pdup$ extends the tails of all input
histories with their head packets, thus permanently recording the
current state of the packets.

Parallel composition $\punion{\polp}{\polq}$ feeds the input to both
$\polp$ and $\polq$ and takes the union of their outputs. If $\polp$
and $\polq$ are predicates, a history is thus in the output iff it
satisfies at least one of $\polp$ or $\polq$, so that union acts like
logical disjunction on predicates.  Sequential composition
$\pseq{\polp}{\polq}$ feeds the input to $\polp$ and then feeds
$\polp$'s output to $\polq$ to produce the final result.  If $\polp$
and $\polq$ are predicates, a history is thus in the output iff it
satisfies both $\polp$ and $\polq$, acting like logical
conjunction. Iteration $\pstar\polp$ behaves like the parallel
composition of $\polp$ sequentially composed with itself zero or more
times (because $\bigsqcup$ is union in $\pH$).

\paragraph{Probabilistic Semantics.}%
The semantics of \probnetkat is given using the probability monad
applied to the set of history sets $\pH$ (seen as a measurable space). A
program $\polp$ denotes a function
\[
\den{p} \in \pH \to \set{\mu : \BB \to [0,1]}{\mu \text{ is a probability measure}}
\]
mapping a set of input histories $a$ to a \emph{distribution} over
output sets $\den{\polp}(a)$. Here, $\BB$ denotes the Borel sets of
$\pH$ (\S\ref{sec:cantor}).  Equivalently, $\den{p}$ is a Markov
kernel with source and destination $(\pH, \BB)$.  The semantics of all
primitive programs is identical to the deterministic case, except that
they now return point masses on output sets (rather than just output
sets). In fact, it follows from \eqref{eq:monad-left-id} that all
programs without choices and iteration are point masses.

Parallel composition $\polp \pcomp \polq$ feeds the input $a$ to
$\polp$ and $\polq$, samples $b_1$ and $b_2$ from the output
distributions $\den{\polp}(a)$ and $\den{\polq}(a)$, and returns the
union of the samples $b_1 \cup b_2$.  Probabilistic choice
$\polp \opr \polq$ feeds the input to both $p$ and $q$ and returns a
convex combination of the output distributions according to $r$.
Sequential composition $p \cmp q$ is just sequential composition of
Markov kernels. Operationally, it feeds the input to $p$, obtains a
sample $b$ from $p$'s output distribution, and feeds the sample to $q$
to obtain the final distribution. Iteration $\pstar\polp$ is defined
as the least fixpoint of the map on Markov kernels $X \mapsto
1 \pcomp \den{p}; X$, which is continuous in a DCPO that we will
develop in the following sections. We will show that this definition,
which is simple and is based on standard techniques from domain
theory, coincides with the semantics proposed in previous
work~\cite{\pnkpaper}.

\paragraph{Basic Properties.}%
To clarify the nature of predicates and other primitives, we establish
two intuitive properties:
\begin{lemma}\label{lem:pred-charact}
Any predicate $t$ satisfies $\den{t}(a) = \unit{a \cap b_t}$,
where $b_t \defeq \den{t}(\Hist)$ in the identity monad.
\end{lemma}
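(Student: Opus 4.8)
The plan is to proceed by structural induction on the predicate $t$, reading off $\den{t}$ from the semantic equations in Figure~\ref{fig:probnetkat}. The key mechanism is the monad left-identity axiom~\eqref{eq:monad-left-id}: whenever a subexpression evaluates to a point mass $\unit{c}$, a subsequent bind $\unit{c} \bind f$ collapses to ordinary application $f(c)$, so point masses propagate through the entire predicate and the final result is again a point mass. It is convenient to strengthen the induction hypothesis to cover the identity monad simultaneously, proving in parallel that the deterministic semantics satisfies $\den{t}(a) = a \cap b_t$ for \emph{all} input sets $a$; this deterministic filtering fact is what is actually needed in the conjunction case below, and it also pins down how $b_t = \den{t}(\Hist)$ recurses on subterms, namely $b_{\pnot \preda} = \Hist - b_\preda$, $b_{\punion{\preda}{\predb}} = b_\preda \cup b_\predb$, and $b_{\pseq{\preda}{\predb}} = b_\preda \cap b_\predb$.

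For the base cases $\pfalse$, $\ptrue$, and $\match{\field}{n}$, I would simply check that the definition equals $\unit{a \cap b_t}$: for $\pfalse$ we get $\unit{\emptyset}$ with $b_{\pfalse} = \emptyset$; for $\ptrue$ we get $\unit{a}$ with $b_{\ptrue} = \Hist$; and for the test we get $\unit{\set{\hcons{\pk}{\hbar} \in a}{\pk.\field = n}}$, which is exactly $\unit{a \cap b_{\match{\field}{n}}}$. For the inductive cases I apply the induction hypothesis to the immediate subpredicates to replace their denotations by point masses, and then use~\eqref{eq:monad-left-id} to discharge the binds. The disjunction $\punion{\preda}{\predb}$ yields $\unit{(a \cap b_\preda) \cup (a \cap b_\predb)}$, which equals $\unit{a \cap (b_\preda \cup b_\predb)}$ by distributivity; the conjunction $\pseq{\preda}{\predb}$ reduces to $\den{\predb}(a \cap b_\preda)$, which by the deterministic induction hypothesis equals $\unit{(a \cap b_\preda) \cap b_\predb}$; and negation $\pnot{\preda}$ yields $\unit{a - (a \cap b_\preda)}$, which equals $\unit{a \cap (\Hist - b_\preda)}$ since deleting the elements of $b_\preda$ from $a$ is the same as intersecting $a$ with the complement of $b_\preda$.

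The only real subtlety---and the reason for carrying the deterministic statement through the induction---is the conjunction case: there the inner bind feeds the set $a \cap b_\preda$ into $\den{\predb}$, and to re-expose a point mass I must know that $\den{\predb}$ acts as intersection with $b_\predb$ on an \emph{arbitrary} input, not merely that $\den{\predb}(a)$ is a point mass for the specific $a$ at hand. Once the combined hypothesis supplies $\den{\predb}(c) = \unit{c \cap b_\predb}$ in the probability monad together with $\den{\predb}(c) = c \cap b_\predb$ in the identity monad for every $c$, each case reduces to elementary Boolean set algebra in $\pset{\Hist}$, and the recursion equations relating $b_t$ to its subterms fall out of the very same induction read off in the identity monad.
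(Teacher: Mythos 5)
Your proposal is correct and takes essentially the same approach as the paper: the paper's entire proof reads ``by induction on $t$, using \eqref{eq:monad-left-id} in the induction step,'' which is precisely the argument you carry out in detail. The extra bookkeeping you make explicit---running the identity-monad instance of the claim in parallel so that the conjunction case can apply the hypothesis at the shifted input $a \cap b_\preda$ and so that the recursion equations for $b_t$ are available---is exactly what the paper leaves implicit.
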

\begin{proof*}
By induction on $t$, using \eqref{eq:monad-left-id} in the induction
step.
\end{proof*}
\begin{lemma}\label{lem:nk-prim-charact}
All atomic programs $p$ (i.e., predicates, $\pdup$, and modifications) satisfy
\[
\den{p}(a) = \unit{\set{f_p(h)}{h \in a}}
\]
for some partial function $f_p : \Hist \pfun \Hist$.
\end{lemma}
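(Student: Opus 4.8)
The plan is a straightforward case analysis on the three kinds of atomic program, exhibiting in each case the witnessing partial function $f_p$ explicitly and noting that it is attached to $p$ alone and does not depend on the input set $a$. The only real content of the claim is that each such program denotes a point mass---which is immediate since $\unitop = \dirac$---and that the set carrying this point mass is the image $\set{f_p(h)}{h \in a}$ of $a$ under a fixed partial function on histories, where as usual the histories on which $f_p$ is undefined are simply dropped.

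For predicates I would not treat $\pfalse$, $\ptrue$, $\match{\field}{n}$ and the Boolean connectives one by one, but instead invoke Lemma~\ref{lem:pred-charact}, which already covers arbitrary (possibly compound) predicates: for any predicate $t$ we have $\den{t}(a) = \unit{a \cap b_t}$ with $b_t \defeq \den{t}(\Hist)$ independent of $a$. It then suffices to take $f_t$ to be the partial identity restricted to $b_t$, i.e.\ $f_t(h) = h$ when $h \in b_t$ and undefined otherwise, so that $\set{f_t(h)}{h \in a} = a \cap b_t$. Delegating to Lemma~\ref{lem:pred-charact} here is precisely what sidesteps the one mildly awkward construct, negation $\pnot{\preda}$, whose clause is phrased through a $\bind$ rather than directly as a point mass.

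For modification and duplication the witnesses can be read straight off the clauses in Figure~\ref{fig:probnetkat}: set $f_{\modify{\field}{n}}(\hcons{\pk}{\hbar}) \defeq \hcons{\upd{\pk}{\field}{n}}{\hbar}$ and $f_{\pdup}(\hcons{\pk}{\hbar}) \defeq \hcons{\pk}{\hcons{\pk}{\hbar}}$; both are in fact total, since every history is a non-empty list $\hcons{\pk}{\hbar}$, and pushing the defining comprehension through the set-builder notation yields the claim directly. I do not anticipate a serious obstacle: the lemma is essentially a repackaging of the primitive semantic clauses, and the only step needing care is the predicate case, where the right move is to reuse Lemma~\ref{lem:pred-charact} rather than to run a fresh structural induction that would have to contend with the monadic formulation of negation.
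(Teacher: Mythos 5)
Your proof is correct and follows essentially the same route as the paper, whose proof simply states that the result is immediate from Figure~\ref{fig:probnetkat} and Lemma~\ref{lem:pred-charact}; you have merely spelled out the witnesses (the partial identity on $b_t$ for predicates, and the evident total maps for $\modify{\field}{n}$ and $\pdup$) that the paper leaves implicit. Your observation that delegating the predicate case to Lemma~\ref{lem:pred-charact} avoids dealing with the monadic negation clause is exactly the role that citation plays in the paper's own proof.
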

\begin{proof*}
Immediate from Figure~\ref{fig:probnetkat} and Lemma~\ref{lem:pred-charact}.
\end{proof*}
Lemma~\ref{lem:pred-charact} captures the intuition that predicates
act like packet \emph{filters}. Lemma~\ref{lem:nk-prim-charact}
establishes that the behavior of atomic programs is captured by their
behavior on individual histories.

Note however that \probnetkat's semantic domain is rich enough to
model interactions between packets. For example, it would be
straightforward to extend the language with new primitives whose
behavior depends on properties of the input set of packet
histories---e.g., a rate-limiting construct $\mathord{@}n$ that
selects at most $n$ packets uniformly at random from the input and
drops all other packets. Our results continue to hold when the
language is extended with arbitrary continuous Markov kernels of
appropriate type, or continuous operations on such kernels.

Another important observation is that although \probnetkat does not
include continuous distributions as primitives, there are programs
that generate continuous distributions by combining choice and
iteration:
\begin{lemma}[Theorem 3 in \citet{FKMRS15a}]
\label{thm:pnk-may-be-continuous}
Let $\pi_0, \pi_1$ denote distinct packets. Let $\polp$ denote the
program that changes the head packet of all inputs to either $\pi_0$
or $\pi_1$ with equal
probability. Then
\[ \den{\pseq{\polp}{(\pseq{\pdup}{\polp})\star}}(\sset{\pi},-) \]
is a continuous distribution.
\end{lemma}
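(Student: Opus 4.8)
The plan is to show that the output distribution $\mu \defeq \den{\pseq{\polp}{\pstar{(\pseq{\pdup}{\polp})}}}(\sset{\pi},-)$ has no atoms, i.e. $\mu(\sset{a}) = 0$ for every $a \in \pH$. By the definitions in \S\ref{sec:primer} this makes $\mu$ continuous, since then $P_\mu = \emptyset$ and so $\mu(P_\mu) = \mu(\emptyset) = 0$. The strategy is to exhibit $\mu$ as the image (pushforward) of the fair-coin product measure on $\{0,1\}^{\N}$ under an \emph{injective} map $\omega \mapsto S(\omega)$, and then use the fact that a product of independent fair coin flips assigns probability zero to every single infinite sequence.

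First I would identify the random set generated on the singleton input $\sset{\pi}$. The crucial structural observation is that every intermediate set stays a singleton: $\pdup$ sends a singleton set to a singleton set, and $\polp$ (which merely overwrites the single head packet with $\pi_0$ or $\pi_1$) sends a singleton set to a distribution supported on singleton sets. Hence the whole process is driven by a sequence of independent fair coin flips $b_0, b_1, b_2, \dots$: the leading $\polp$ produces the length-one history $h_0$ with head $\pi_{b_0}$. Writing $q \defeq \pseq{\pdup}{\polp}$ and unfolding the approximants $q^{(0)} = \ptrue$, $q^{(n+1)} = \punion{\ptrue}{\pseq{q}{q^{(n)}}}$, one checks by induction that the $n$-th iterate appends a duplicated head and overwrites it, producing the length-$(n{+}1)$ history $h_n$ whose packets, from head to tail, are $\pi_{b_n}, \pi_{b_{n-1}}, \dots, \pi_{b_0}$, and that $\den{q^{(n)}}(\sset{h_0})$ is the point mass on $\bigcup_{m \le n}\sset{h_m}$. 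Since $\den{\pstar{q}}(\sset{h_0}) = \bigsqcup_n \den{q^{(n)}}(\sset{h_0})$ and $\bigsqcup$ is union in $\pH$, the sampled output set is exactly the accumulation $S(\omega) = \sset{h_m : m \ge 0}$, completely determined by $\omega = (b_m)_m$.

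Next I would argue that $\omega \mapsto S(\omega)$ is injective. The set $S(\omega)$ contains exactly one history of each length $n{+}1$, namely $h_n$, and the head packet of that history is $\pi_{b_n}$; hence the entire sequence $\omega$ can be read back off from $S(\omega)$. Consequently, for any fixed $a \in \pH$ the event $\{\omega : S(\omega) = a\}$ is either empty or a single sequence $\omega_a$. Because the flips are independent and fair, the product measure of $\{\omega_a\}$ is $\lim_{n}2^{-n} = 0$, so $\mu(\sset{a}) = 0$. As singletons $\sset{a}$ are Borel (points are closed in the Cantor topology on $\pH$), this shows $\mu$ has no atoms and is therefore continuous.

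The main obstacle is the first step: justifying rigorously that the semantics of $\pstar{}$, defined as the supremum $\bigsqcup_n \den{q^{(n)}}$ of its finite approximants, collapses on the singleton input to the clean accumulation $S(\omega)$ driven by a \emph{single} underlying coin-flip sequence. This requires tracking the coupling across iterates---the same flips must feed every approximant, so the $q^{(n)}$ cannot be treated as independent experiments---and confirming that the monotone union structure built into $q^{(n+1)} = \punion{\ptrue}{\pseq{q}{q^{(n)}}}$ indeed yields $\bigcup_{m \le n}\sset{h_m}$, whose supremum as $n \to \infty$ is $S(\omega)$. Once the distribution of $S$ is identified as the pushforward of the Bernoulli product measure, atomlessness (and hence continuity) is immediate.
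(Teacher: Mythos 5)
Your proof is correct, but it takes a genuinely different route from the paper, because the paper does not prove this lemma at all: the statement is imported wholesale from prior work (the bracketed attribution to Theorem~3 of \citet{FKMRS15a} \emph{is} the justification), where it was established for the original stochastic-process semantics, and its validity for the fixpoint semantics used here is only secured later by Corollary~\ref{thm:star}, which shows $P\oldstar = \bigsqcup_n \pp n = P\star$. You instead argue directly inside the new semantics: couple everything on the single sample space $\{0,1\}^{\N}$ of fair bits, show by induction that $\den{q^{(n)}}(\sset{h_0})$ is the law of $S_n(\omega)=\sset{h_0,\dots,h_n}$ (note it is a point mass only \emph{conditionally} on the flips; unconditionally it is uniform over $2^n$ possible sets), identify $\bigsqcup_n \den{q^{(n)}}(\sset{h_0})$ as the law of $S(\omega)=\bigcup_n S_n(\omega)$, and finish by injectivity of $S$ plus atomlessness of the Bernoulli product measure. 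The step you flag as the main obstacle closes exactly as you suggest, and can be made precise with the machinery of \S\ref{sec:cantor} and \S\ref{sec:DCPO}: for any Scott-open $B$ the events $\set{\omega}{S_n(\omega)\in B}$ increase to $\set{\omega}{S(\omega)\in B}$, since $B$ is up-closed and each basic open $B_a$ has a \emph{finite} witness $a$, so $a\subseteq S(\omega)$ is already witnessed at some finite stage $S_n(\omega)$; hence the law of $S$ and $\sup_n \den{q^{(n)}}(\sset{h_0})(-)$ agree on all of $\SO$, and by uniqueness of measures determined on the sets $B_b$ (Lemma~\ref{lem:inclexcl}) the law of $S$ equals the supremum measure---which is precisely how $\den{\pstar{q}}$ is defined. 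The trade-off: your argument is self-contained and never leaves the domain-theoretic semantics, at the cost of redoing a coupling construction (and checking measurability of $S$, which holds since preimages of the subbasic sets $B_\h$ are cylinder events); the paper's citation is free, but leaves the lemma resting on the equivalence $P\star = P\oldstar$, which is only proved two sections after the lemma is stated.
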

Hence, \probnetkat programs cannot be modeled by functions of
type $\pH \to (\pH \to [0,1])$ in general. We need to define a measure
space over $\pH$ and consider general probability measures.

\section{Cantor Meets Scott}
\label{sec:cantor}

To define continuous probability measures on an infinite set $X$, one
first needs to endow $X$ with a topology---some additional structure
that, intuitively, captures which elements of $X$ are close to each
other or approximate each other. Although the choice of topology is
arbitrary in principle, different topologies induce different notions
of continuity and limits, thus profoundly impacting the concepts
derived from these primitives. Which topology is the ``right'' one for
$\pH$? A fundamental contribution of this paper is to show that there
are (at least) two answers to this question:
\begin{itemize}
\item The initial work on ProbNetKAT \cite{FKMRS15a} uses the
\emph{\textbf{Cantor topology}}. This makes $\pH$ a \emph{standard Borel space},
which is well-studied and known to enjoy many desirable properties.

\item This paper is based on the \emph{\textbf{Scott topology}}, the
standard choice of domain theorists. Although this topology is weaker
in the sense that it lacks much of the useful structure and properties
of a standard Borel space, it leads to a simpler and more
computational account of ProbNetKAT's semantics.
\end{itemize}
Despite this, one view is not better than the other. The main 
advantage of the Cantor topology is that it allows us to reason in terms of a metric.
With the Scott topology, we sacrifice this metric, but in return we are able
to interpret all program operators and programs as continuous functions.
The two views yield different convergence theorem, both of which are useful.
Remarkably,
we can have the best of both worlds: it turns out that the two
topologies generate the same Borel sets, so the probability measures
are the same regardless. We will prove (Theorem \ref{thm:star}) that
the semantics in Figure~\ref{fig:probnetkat} coincides with the
original semantics \cite{FKMRS15a}, recovering all the results from
previous work.  This allows us to freely switch between the two views
as convenient. The rest of this section illustrates the difference
between the two topologies intuitively, defines the topologies
formally and endows $\pH$ with Borel sets, and proves a
general theorem relating the two.

\paragraph*{Cantor and Scott, Intuitively.}%
The Cantor topology is best understood in terms of a distance $d(a,b)$
of history sets $a,b$, formally known as a \emph{metric}.  Define this
metric as $d(a,b) = 2^{-n}$, where $n$ is the length of the shortest
packet history in the symmetric difference of $a$ and $b$ if $a \neq
b$, or $d(a,b)=0$ if $a = b$.  Intuitively, history sets are close if
they differ only in very long histories.  This gives the following
notions of limit and continuity:
\begin{itemize}
  \item $a$ is the limit of a sequence $a_1, a_2, \dots$ iff
  the distance $d(a,a_n)$ approaches $0$ as $n \to \infty$.
  \item a function $f : \pH \to [0,\infty]$ is continuous at point $a$ iff
  $f(a_n)$ approaches $f(a)$ whenever $a_n$ approaches $a$.
\end{itemize}

The Scott topology cannot be described in terms of a metric. It is captured by
a complete partial order $(\pH,\sqleq)$ on history sets.
If we choose the subset order (with suprema given by union) we obtain the following
notions:
\begin{itemize}
  \item $a$ is the limit of a sequence $a_1 \subseteq a_2 \subseteq \dots$ iff
  $a = \bigcup_{n \in \N} a_n$.
  \item a function $f : \pH \to [0,\infty]$ is continuous at point $a$ iff
  $f(a) = \sup_{n \in \N} f(a_n)$ whenever $a$ is the limit of $a_1 \subseteq
  a_2 \subseteq \dots$.
\end{itemize}


\paragraph*{Example.}%
To illustrate the difference between Cantor-continuity and
Scott-continuity, consider the function $f(a) \defeq \len a$ that maps
a history set to its (possibly infinite) cardinality. The function is
not Cantor-continuous.  To see this, let $h_n$ denote a history of
length $n$ and consider the sequence of singleton sets
$a_n \defeq \sset{h_n}$. Then $d(a_n,\emptyset) = 2^{-n}$,
\ie the sequence approaches the empty set as $n$ approaches infinity. But the
cardinality $|a_n|=1$ does \emph{not} approach $|\emptyset|=0$. In contrast, the
function is easily seen to be Scott-continuous.

As a second example, consider the function $f(a) \defeq 2^{-k}$, where $k$
is the length of the smallest history \emph{not} in $a$. This function is
Cantor-continuous: if $d(a_n,a)=2^{-n}$, then \[
|f(a_n) - f(a)| \leq 2^{-(n-1)} - 2^{-n} \leq 2^{-n}
\]
Therefore $f(a_n)$ approaches $f(a)$ as the distance $d(a_n,a)$ approaches $0$.
However, the function is not Scott-continuous\footnote{with respect to the orders
$\subseteq$ on $\pH$ and $\leq$ on $\R$}, as all Scott-continuous functions
are monotone.


\paragraph*{Approximation.}
The computational importance of limits and continuity comes from the
following idea. Assume $a$ is some complicated (say infinite)
mathematical object. If $a_1,a_2, \dots$ is a sequence of simple (say
finite) objects with limit $a$, then it may be possible to approximate
$a$ using the sequence $(a_n)$. This gives us a computational way of
working with infinite objects, even though the available resources may
be fundamentally finite.  Continuity captures precisely when this is
possible: we can perform a computation $f$ on $a$ if $f$ is
continuous in $a$, for then we can compute the sequence $f(a_1),
f(a_2), \dots$ which (by continuity) converges to $f(a)$.

We will show later that any measure $\mu$ can be approximated by a
sequence of finite measures $\mu_1, \mu_2, \dots$, and that the
expected value $\ex{f}{\mu}$ of a Scott-continuous random variable $f$
is continuous with respect to the measure. Our implementation exploits
this to compute a monotonically improving sequence of approximations
for performance metrics such as latency and congestion
(\S\ref{sec:case-study}).

\paragraph*{Notation.}
We use lower case letters
$a,b,c \subseteq \Hist$ to denote history sets, uppercase letters
$A,B,C \subseteq \pH$ to denote measurable sets (\ie, sets of
history sets), and calligraphic letters
$\BB, \SO, \dots \subseteq \pset{\pH}\vphantom{2^{2^\Hist}}$ 
to denote sets of measurable
sets. For a set $X$, we let $\pfin X \defeq \set{Y \subseteq
X}{\len{Y} < \infty}$ denote the finite subsets of $X$ and $\chrf X$
the characteristic function of X. For a statement $\phi$, such as $a \subseteq b$,
we let $[\phi]$ denote $1$ if $\phi$ is true and $0$ otherwise.

\paragraph{Cantor and Scott, Formally.}%
For $\h\in \Hist$ and $b\in\pH$, define
\begin{align}
B_\h &\defeq \set c{\h\in c} & B_b &\defeq \bigcap_{\h\in b} B_\h = \set c{b\subs c}.\label{eq:Bb}
\end{align}
The Cantor space topology, denoted $\CC$, can be generated by closing
$\set{B_\h, \setcompl B_\h}{\h \in \Hist}$ under finite intersection
and arbitrary union.  The Scott topology of the DCPO
$(\pH, \subseteq)$, denoted $\SO$, can be generated by closing
$\set{B_\h}{\h \in \Hist}$ under the same operations and adding the empty set.
The Borel algebra $\BB$ is the smallest
$\sigma$-algebra containing the Cantor-open sets, \ie
$\BB \defeq \sigma(\CC)$.  We write $\BB_b$ for the Boolean subalgebra
of $\BB$ generated by $\set{B_\h}{\h\in b}$.

\begin{lemma}
\ \\[-1em]
\begin{enumerate}[{\upshape(i)}]
\item
$b\subs c \Iff B_c \subs B_b$
\item
$B_b\cap B_c = B_{b\cup c}$
\item
$B_\emptyset = \pH$
\item
$\BBo=\bigcup_{b\in\subpfin H}\BB_b$.
\end{enumerate}
\end{lemma}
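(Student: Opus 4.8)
The plan is to dispatch parts (i)--(iii) by directly unfolding the defining equation $B_b = \set{c}{b\subs c}$ from \eqref{eq:Bb}, treating part (iv) as the only substantive claim. For (i), I observe that $c \in B_c$ always holds (since $c \subs c$), so if $B_c \subs B_b$ then $c \in B_b$, which is exactly $b \subs c$; conversely, if $b \subs c$ then any $d$ with $c \subs d$ satisfies $b \subs d$ by transitivity, so $B_c \subs B_b$. For (ii) I chase the chain $d \in B_b \cap B_c \Iff (b\subs d \text{ and } c \subs d) \Iff (b\cup c)\subs d \Iff d \in B_{b\cup c}$. Part (iii) is immediate, since $\emptyset \subs d$ for every $d \in \pH$, whence $B_\emptyset = \pH$.

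For part (iv), recalling that $\BBo$ is the Boolean algebra generated by $\set{B_\h}{\h \in \Hist}$, the first step is to note that the family $\set{\BB_b}{b \in \pfin\Hist}$ is \emph{directed} under inclusion. Indeed, whenever $b \subs c$ the generators of $\BB_b$ are among the generators of $\BB_c$, so $b \subs c \Rightarrow \BB_b \subs \BB_c$; hence for any two finite $b,c$ both $\BB_b$ and $\BB_c$ sit inside $\BB_{b\cup c}$, and $b\cup c$ is again finite. The inclusion $\bigcup_b \BB_b \subs \BBo$ is then immediate, since each $\BB_b$ is a subalgebra of $\BBo$. For the reverse inclusion I show that $U \defeq \bigcup_{b\in\pfin\Hist}\BB_b$ is itself a Boolean algebra containing every $B_\h$; since $\BBo$ is by definition the \emph{smallest} such algebra, this forces $\BBo \subs U$. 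Closure under complement is local: if $X \in \BB_b$ then $\setcompl X \in \BB_b \subs U$. Closure under finite union and intersection is exactly where directedness is used: given $X \in \BB_b$ and $Y \in \BB_c$, I lift both into the common algebra $\BB_{b\cup c}$, where the Boolean operations are available, so $X\cup Y, X\cap Y \in \BB_{b\cup c} \subs U$. Finally $B_\h \in \BB_{\{\h\}} \subs U$ for each $\h$, completing the argument.

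The steps are all routine; the only point requiring care---and the ``main obstacle,'' such as it is---is recognizing that a Boolean algebra generated by an infinite set of generators is the directed union of the algebras generated by its finite subsets. This is a general fact about algebraic closure operators: each element of the generated algebra is built from only finitely many generators using finitely many operations, so it already lives in some $\BB_b$ with $b$ finite. The essential ingredients are the directedness of the indexing poset $(\pfin\Hist, \subs)$ and the monotonicity $b \subs c \Rightarrow \BB_b \subs \BB_c$ established above.
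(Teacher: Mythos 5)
Your proof is correct. The paper states this lemma without any proof, treating all four parts as routine consequences of the definition $B_b = \set{c}{b\subs c}$, and your write-up supplies exactly the details the authors left implicit: the unfolding arguments for (i)--(iii), and for (iv) the standard observation that $\bigcup_{b\in\pfin\Hist}\BB_b$ is a directed union of Boolean algebras (directed because $b\subs c$ implies $\BB_b\subs\BB_c$ and $\pfin\Hist$ is closed under binary union), hence itself a Boolean algebra containing every $B_\h$, which together with the trivial inclusion $\BB_b\subs\BBo$ yields the equality.
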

Note that if $b$ is finite, then so is $\BB_b$. Moreover, the atoms of
$\BB_b$ are in one-to-one correspondence with the subsets $a\subs
b$. The subsets $a$ determine which of the $B_\h$ occur positively in the
construction of the atom,
\begin{align}
\begin{split}
\atm ab &\defeq \bigcap_{\h\in a}B_\h\cap\bigcap_{\h\in b-a}\setcompl B_\h\\
&= B_a - \bigcup_{a\subset c\subs b} B_c
= \set{c\in\pH}{c\cap b=a},\label{eq:atoms}
\end{split}
\end{align}
where $\subset$ denotes proper subset. The atoms $\atm ab$ are the basic open sets of the Cantor space.
The notation $A_{ab}$ is reserved for such sets.

\begin{lemma}[Figure~\ref{fig:venn}]
\label{lem:sumofatoms}
For $b$ finite and $a\subs b$, $B_a = \bigcup_{a\subs c\subs b} \atm cb$.
\end{lemma}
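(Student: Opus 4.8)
The plan is to prove the set identity directly, by characterizing exactly which points $d \in \pH$ lie in the right-hand union. The key tool is the third description of the atoms from \eqref{eq:atoms}, namely $\atm{c}{b} = \set{d \in \pH}{d \cap b = c}$, which exhibits each atom as the fiber of the map $d \mapsto d \cap b$ over the value $c$. From this vantage the lemma becomes a statement about which fibers together constitute $B_a$.

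First I would observe that, as $c$ ranges over all subsets of the finite set $b$, the atoms $\atm{c}{b}$ partition $\pH$: every $d \in \pH$ lies in precisely one of them, the one indexed by $c = d \cap b$, and note that $d \cap b \subseteq b$ holds automatically. Consequently a point $d$ belongs to $\bigcup_{a \subseteq c \subseteq b} \atm{c}{b}$ if and only if its index $d \cap b$ falls in the range $a \subseteq d \cap b \subseteq b$. The upper inclusion $d \cap b \subseteq b$ is always true, so membership reduces to the single condition $a \subseteq d \cap b$. Finally, invoking the hypothesis $a \subseteq b$, I would simplify $a \subseteq d \cap b$ to $a \subseteq d$, which is exactly the statement $d \in B_a$ by the defining property of $B_a$ in \eqref{eq:Bb}. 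This chain of equivalences proves both inclusions simultaneously.

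There is no serious obstacle here; the argument is elementary once the atoms are viewed as the fibers of $d \mapsto d \cap b$. The only step that genuinely uses the hypotheses is the last one: the reduction of $a \subseteq d \cap b$ to $a \subseteq d$ relies on $a \subseteq b$, and the identity would fail without it. As an alternative route, one could argue purely set-theoretically from the difference form $\atm{c}{b} = B_c - \bigcup_{c \subset c' \subseteq b} B_{c'}$ in \eqref{eq:atoms} together with the intersection law $B_b \cap B_c = B_{b \cup c}$, telescoping the disjoint union of atoms back into $B_a$; but the fiber description keeps the bookkeeping considerably cleaner, and it also makes the accompanying Venn-diagram picture (Figure~\ref{fig:venn}) immediate.
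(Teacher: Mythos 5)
Your proof is correct and follows essentially the same route as the paper's: both use the fiber characterization $\atm{c}{b} = \set{d\in\pH}{d\cap b = c}$ from \eqref{eq:atoms} and observe that the union over $a \subseteq c \subseteq b$ collapses to $\set{d\in\pH}{a \subseteq d} = B_a$. The paper merely compresses into one displayed computation the equivalences you spell out (membership in the union reduces to $a \subseteq d\cap b$, which equals $a \subseteq d$ since $a \subseteq b$), so there is nothing to add.
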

\begin{proof*}
By \eqref{eq:atoms},
\begin{align*}
\bigcup_{a\subs c\subs b} \atm cb
&= \bigcup_{a\subs c\subs b} \set{d\in\pH}{d\cap b=c}\\
&= \set{d\in\pH}{a\subs d} = B_a.\qedhere
\end{align*}
\end{proof*}

\paragraph*{Scott Topology Properties.}
Let $\SO$ denote the family of Scott-open sets of $(\pH, \subseteq)$. Following are
some facts about this topology.
\begin{itemize}
\item
The DCPO $(\pH, \subseteq)$ is algebraic. The finite elements of $\pH$ are the finite subsets 
$a \in \pfin\Hist$, and their up-closures are $\up{\{a\}} = B_a$.
\item
By Lemma \ref{lem:basic}\eqref{lem:basic:finitebase}, the up-closures $\up{\{a\}}=B_a$
form a base for the Scott topology.
The sets $B_\h$ for $\h\in \Hist$ are therefore a subbase.
\item
Thus, a subset $B\subs\pH$ is Scott-open iff there exists $F\subs\pfin \Hist$ such that $B=\bigcup_{a\in F} B_a$.
\item
The Scott topology is weaker than the Cantor space topology, \eg, $\setcompl B_\h$ is Cantor-open but not Scott-open. However, the Borel sets of the topologies are the same, as $\setcompl B_\h$ is a $\Pi^0_1$ Borel set.\footnote{References to the Borel hierarchy $\Sigma^0_n$ and $\Pi^0_n$ refer to the Scott topology. The Cantor and Scott topologies have different Borel hierarchies.}
\item Although any Scott-open set in $\pH$ is also Cantor-open, a Scott-continuous
function $f : \pH \to \R_+$ is not necessarily Cantor-continuous. This is
because for Scott-continuity we consider $\R_+$ (ordered by $\leq$) with the Scott topology,
but for Cantor-continuity we consider $\R_+$ with the standard Euclidean
topology.
\item Any Scott-continuous function $f : \pH \to \R_+$ is measurable,
because the Scott-open sets of $(\R_+, \leq)$ (\ie, the upper semi-infinite intervals
$(r,\infty] = \up{\sset{r}}$ for $r\geq0$) generate the Borell sets on $\R_+$.
\item
The open sets $\SO$ ordered by the subset relation forms an $\omega$-complete lattice with bottom $\emptyset$ and top $B_\emptyset=\pH$.
\item
The finite sets $a\in\pfin \Hist$ are dense and countable, thus the space is separable.
\item
The Scott topology is not Hausdorff, metrizable, or compact. It is not Hausdorff, as any nonempty open set contains $\Hist$, but it satisfies the weaker $T_0$ separation property: for any pair of points $a,b$ with $a\not\subs b$, $a\in B_a$ but $b\not\in B_a$.
\item
There is an up-closed $\Pi^0_2$ Borel set with an uncountable set of minimal elements.
\item
There are up-closed Borel sets with no minimal elements; for example, the family of cofinite subsets of $H$, a $\Sigma^0_3$ Borel set.
\item
The compact-open sets are those of the form $\up F$, where $F$ is a finite set of finite sets. There are plenty of open sets that are not compact-open, e.g.\ $B_\emptyset - \{\emptyset\} = \bigcup_{\h\in \Hist} B_\h$.
\end{itemize}

\begin{lemma}[{see \citet[Theorem III.13.A]{Halmos50}}]
\label{lem:inclexcl}
Any probability measure is uniquely determined by its values on $B_b$ for $b$ finite.
\end{lemma}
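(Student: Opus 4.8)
The plan is to prove that a probability measure on $(\pH, \BB)$ is uniquely determined by its values on the basic Scott-open sets $B_b$ for finite $b$. The strategy is a standard application of the Dynkin $\pi$-$\lambda$ (monotone class) theorem from measure theory. First I would observe that the collection $\set{B_b}{b \in \pfin\Hist}$ is a $\pi$-system, i.e., closed under finite intersection: this is immediate from the earlier lemma, since $B_b \cap B_c = B_{b \cup c}$ and the union of two finite sets is finite. Crucially, $\BB = \sigma(\CC)$ is the Borel algebra generated by the Cantor topology, but since the two topologies generate the same Borel sets (as argued in the Scott-properties discussion, because each $\setcompl B_\h$ is a $\Pi^0_1$ Borel set), we have $\BB = \sigma(\set{B_\h}{\h \in \Hist})$, and hence $\BB$ is generated by this $\pi$-system.

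\smallskip

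The key step is then to invoke the uniqueness theorem for measures: if two probability measures $\mu$ and $\nu$ agree on a $\pi$-system $\mathcal{P}$ that generates the $\sigma$-algebra, and if $X$ itself is a countable union of sets in $\mathcal{P}$ (or simply $X \in \mathcal{P}$), then $\mu = \nu$ on all of $\sigma(\mathcal{P})$. Here the ambient space $\pH = B_\emptyset$ is itself of the form $B_b$ with $b = \emptyset$, so the whole space lies in the $\pi$-system and the $\sigma$-finiteness precondition is trivially met. Concretely, I would let $\mathcal{L} \defeq \set{A \in \BB}{\mu(A) = \nu(A)}$ and verify that $\mathcal{L}$ is a $\lambda$-system (Dynkin system): it contains $\pH$, is closed under proper differences (using that both measures are finite, in fact probability measures, so $\mu(A \setminus A') = \mu(A) - \mu(A')$ for $A' \subseteq A$), and is closed under increasing countable unions (by countable additivity / continuity from below). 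Since $\mathcal{L}$ contains the generating $\pi$-system $\set{B_b}{b \in \pfin\Hist}$ by hypothesis, Dynkin's lemma gives $\sigma(\set{B_b}{b}) = \BB \subseteq \mathcal{L}$, so the two measures coincide everywhere.

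\smallskip

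The main obstacle, and the only place where care is genuinely needed, is verifying that the $\pi$-system of basic Scott-open sets actually generates the full Borel $\sigma$-algebra $\BB$. One might worry that the Scott-open sets $B_b$ form a strictly weaker topology than the Cantor topology, so that their generated $\sigma$-algebra could in principle be smaller. The resolution is precisely the observation recorded earlier in the excerpt: although $\setcompl B_\h$ is not Scott-open, it is a Borel set, since $\setcompl B_\h = \pH \setminus B_\h$ is the complement of a set in the $\pi$-system and hence lies in $\sigma(\set{B_\h}{\h})$. Therefore the $\sigma$-algebra generated by the $B_\h$ already contains all the Cantor subbasic sets and their complements, so it equals $\sigma(\CC) = \BB$. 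Since every finite $b$ has $B_b = \bigcap_{\h \in b} B_\h$ a finite intersection, the $B_b$ generate the same $\sigma$-algebra, completing the argument.
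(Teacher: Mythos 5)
Your proof is correct, but it takes a genuinely different route from the paper's. The paper first uses the inclusion-exclusion identity $\mu(\atm ab) = \sum_{a\subs c\subs b}(-1)^{\len{c-a}}\mu(B_c)$ to pin down $\mu$ on the atoms of each finite Boolean algebra $\BB_b$, hence on the algebra $\BBo=\bigcup_{b}\BB_b$, and then applies the monotone class theorem for algebras. You instead apply Dynkin's $\pi$-$\lambda$ theorem directly to the family $\set{B_b}{b\in\pfin\Hist}$, which is a $\pi$-system since $B_b\cap B_c = B_{b\cup c}$ and which contains the whole space $\pH=B_\emptyset$; this bypasses the inclusion-exclusion computation entirely, because a $\pi$-system, unlike the starting point of the monotone class theorem, need not be an algebra. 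Both arguments rest on the same generation fact $\sigma(\set{B_b}{b\in\pfin\Hist}) = \BB$; your justification of it (``the generated $\sigma$-algebra contains the subbasic sets and their complements, hence equals $\sigma(\CC)$'') silently uses that the Cantor topology has a \emph{countable} basis of sets $\atm ab$, each a finite Boolean combination of the $B_\h$, so that arbitrary unions of basic opens reduce to countable ones --- worth stating explicitly, although the paper leans on the same fact when it asserts that the two topologies have the same Borel sets. As for what each approach buys: yours is the leaner path to uniqueness per se, and is the textbook-standard uniqueness argument; the paper's explicit formula \eqref{eq:inclexcl} is not incidental, however, since it is reused as the computational engine of the extension theorem (Theorem~\ref{thm:extension}) and of the Cantor--Scott correspondence (Theorem~\ref{thm:CantorScott}), so the paper's proof establishes uniqueness and the key formula in a single pass.
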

\begin{proof*}
For $b$ finite, the atoms of $\BB_b$ are of the form \eqref{eq:atoms}. By the inclusion-exclusion principle (see Figure~\ref{fig:venn}),
\begin{align}
\mu(\atm ab)
&= \mu(B_a-\bigcup_{a\subset c\subs b} B_c)
= \sum_{a\subs c\subs b}(-1)^{\len{c-a}}\mu(B_c).\label{eq:inclexcl}
\end{align}
Thus $\mu$ is uniquely determined on the atoms of $\BB_b$ and
therefore on $\BB_b$. As $\BBo$ is the union of the $\BB_b$ for finite
$b$, $\mu$ is uniquely determined on $\BBo$. By the monotone class
theorem, the Borel sets $\BB$ are the smallest monotone class
containing $\BBo$, and since $\mu(\bigcup_n A_n)=\sup_n\mu(A_n)$ and
$\mu(\bigcap_n A_n)=\inf_n\mu(A_n)$, we have that $\mu$ is determined
on all Borel sets.
\end{proof*}

\begin{figure}[t]
\begin{center}
\begin{tikzpicture}
\small
\fill[fill=cornellred,fill opacity=0.1] (90:6mm) circle (10mm);
\fill[fill=green,fill opacity=0.1] (210:6mm) circle (10mm);
\fill[fill=blue,fill opacity=0.1] (-30:6mm) circle (10mm);
\draw (90:6mm) circle (10mm);
\draw (210:6mm) circle (10mm);
\draw(-30:6mm) circle (10mm);
\node at (0,0) {$A_{\pi\sigma\tau}$};
\node at (90:11mm) {$A_{\pi}$};
\node at (210:11mm) {$A_{\sigma}$};
\node at (-30:11mm) {$A_{\tau}$};
\node at (270:6.5mm) {$A_{\sigma\tau}$};
\node at (30:7mm) {$A_{\tau\pi}$};
\node at (-210:7mm) {$A_{\pi\sigma}$};
\node at (30:20mm) {$A_{\emptyset}$};
\node at (90:19mm) {\color{cornellred}$B_\pi$};
\node at (210:19mm) {\color{dartmouthgreen}$B_\sigma$};
\node at (-30:19mm) {\color{blue}$B_\tau$};
\end{tikzpicture}
\end{center}
\caption{Relationship of the basic Scott-open sets $B_a$ to the basic Cantor-open sets $A_{ab}$ for $b=\{\pi,\sigma,\tau\}$ and $a\subs b$. The regions labeled $A_\emptyset$, $A_\pi$, $A_{\pi\sigma}$, etc.\ represent the basic Cantor-open sets $A_{\emptyset,b}$, $A_{\{\pi\},b}$, $A_{\{\pi,\sigma\},b}$, etc. These are the atoms of the Boolean algebra $\BB_b$. Several basic Scott-open sets are not shown, e.g.\ $B_{\{\pi,\sigma\}} = B_\pi\cap B_\sigma = A_{\{\pi,\sigma\},b}\cup A_{\{\pi,\sigma,\tau\},b}$.}
\label{fig:venn}
\end{figure}

\paragraph*{Extension Theorem.}
\label{sec:extension}
We now prove a useful extension theorem (Theorem \ref{thm:extension})
that identifies necessary and sufficient conditions for extending a
function $\SO\to[0,1]$ defined on the Scott-open sets of $\pH$ to a
measure $\BB\to[0,1]$. The theorem yields a remarkable linear
correspondence between the Cantor and Scott topologies
(Theorem \ref{thm:CantorScott}). We prove it for $\pH$ only, but
generalizations may be possible.

\begin{theorem}
\label{thm:extension}
A function $\mu:\set{B_b}{\text{$b$ finite}}\to[0,1]$
extends to a measure $\mu:\BB\to[0,1]$ if and only if
for all finite $b$ and all $a\subs b$,
\begin{align*}
\sum_{a\subs c\subs b} (-1)^{\len{c-a}} \mu(B_c) &\geq 0.
\end{align*}
Moreover, the extension to $\BB$ is unique.
\end{theorem}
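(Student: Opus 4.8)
The plan is to read the displayed inequality as the statement that the signed inclusion–exclusion sums are exactly the would-be measures of the Cantor atoms. Indeed, by \eqref{eq:inclexcl} in Lemma~\ref{lem:inclexcl}, any measure extending $\mu$ must satisfy $\mu(\atm ab)=\sum_{a\subs c\subs b}(-1)^{\len{c-a}}\mu(B_c)$, so the hypothesis is precisely the requirement $\mu(\atm ab)\ge 0$ for every finite $b$ and every $a\subs b$. This makes the necessity direction immediate: if $\mu$ extends to a measure then each atom has nonnegative measure, which is the stated sum. For sufficiency I would build the extension by hand on the generating algebra $\BBo=\bigcup_{b}\BB_b$ and then invoke the Carath\'eodory extension theorem, the one genuinely analytic input being $\sigma$-additivity on $\BBo$.

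First I would define a set function $\nu$ on the atoms by $\nu(\atm ab)\defeq\sum_{a\subs c\subs b}(-1)^{\len{c-a}}\mu(B_c)$, which is nonnegative by hypothesis, and extend it additively to all of $\BBo$ (every element of $\BBo$ lies in some finite $\BB_b$ and is a finite disjoint union of its atoms). The only thing to check here is well-definedness: a given set may be cut into atoms of $\BB_b$ or of $\BB_{b'}$, and since any two such refinements have the common refinement $\BB_{b\cup b'}$, it suffices to verify additivity under adjoining a single history, i.e.\ that $\nu(\atm ab)=\nu(\atm a{b\cup\sset{\h}})+\nu(\atm{a\cup\sset{\h}}{b\cup\sset{\h}})$ for $\h\notin b$. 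Splitting each inclusion–exclusion sum according to whether $\h$ occurs in the index, the two terms containing $\h$ cancel in sign and the remainder is exactly $\nu(\atm ab)$; this is routine M\"obius-inversion bookkeeping on the Boolean lattice. The same invertibility, together with Lemma~\ref{lem:sumofatoms} ($B_a=\bigcup_{a\subs c\subs b}\atm cb$), shows $\nu(B_a)=\sum_{a\subs c\subs b}\nu(\atm cb)=\mu(B_a)$, so $\nu$ really does recover the prescribed values and is a nonnegative, finitely additive set function on $\BBo$ with $\nu(\pH)=\mu(B_\emptyset)\in[0,1]$.

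The crux is upgrading finite additivity to countable additivity on $\BBo$, and this is where I would exploit the Cantor topology rather than the Scott topology. Each atom $\atm ab$ is Cantor-clopen, so every element of $\BBo$ is a finite union of clopen sets and hence compact in $\pH$ (which is homeomorphic to Cantor space, and so compact). Therefore, if $A_0\supseteq A_1\supseteq\cdots$ is a decreasing sequence in $\BBo$ with empty intersection, the complements $\setcompl A_n$ form an open cover of the compact space $\pH$, forcing $A_N=\emptyset$ for some $N$; hence $\nu$ is continuous from above at $\emptyset$, which for a finitely additive nonnegative function on an algebra is equivalent to $\sigma$-additivity. Thus $\nu$ is a premeasure, and since $\sigma(\BBo)=\sigma(\CC)=\BB$, the Carath\'eodory extension theorem yields a measure $\mu:\BB\to[0,1]$ extending it; uniqueness is immediate from Lemma~\ref{lem:inclexcl}, which says a measure is already pinned down by its values on the $B_b$. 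I expect the genuine obstacle to be precisely this $\sigma$-additivity step—finite additivity and value-recovery are bookkeeping—so the key insight to flag is that the algebra $\BBo$ sits inside the compact Cantor space, which hands us the premeasure property essentially for free, something the non-compact Scott topology could not provide.
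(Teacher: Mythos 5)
Your proposal is correct and follows essentially the same route as the paper's proof: necessity via the inclusion--exclusion identity, definition of the extension on atoms with the same one-history-at-a-time well-definedness check, and an appeal to Carath\'eodory where the $\sigma$-additivity on $\BBo$ is obtained from compactness of clopen sets in the Cantor topology. The only cosmetic difference is that you phrase the compactness step as continuity from above at $\emptyset$, whereas the paper argues directly that no infinite disjoint family of nonempty sets in $\BBo$ can have union in $\BBo$; these are equivalent formulations of the same argument.
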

\begin{proof*}
The condition is clearly necessary by \eqref{eq:inclexcl}. For
sufficiency and uniqueness, we use the Carath{\'e}odory extension
theorem.  For each atom $\atm ab$ of $\BB_b$, $\mu(\atm ab)$ is
already determined uniquely by \eqref{eq:inclexcl} and nonnegative by
assumption. For each $B\in\BB_b$, write $B$ uniquely as a union of
atoms and define $\mu(B)$ to be the sum of the $\mu(\atm ab)$ for all
atoms $\atm ab$ of $\BB_b$ contained in $B$. We must show that
$\mu(B)$ is well-defined. Note that the definition is given in terms
of $b$, and we must show that the definition is independent of the
choice of $b$. It suffices to show that the calculation using atoms of
$b'=b\cup\{\h\}$, $\h\not\in b$, gives the same result. Each atom of
$\BB_b$ is the disjoint union of two atoms of $\BB_{b'}$:
\begin{align*}
\atm ab &= \atm{a\cup\{\h\}}{,b\cup\{\h\}} \cup \atm a{,b\cup\{\h\}}
\end{align*}
It suffices to show the sum of their measures is the measure of
$\atm ab$:
\begin{align*}
\mu(\atm a{,b\cup\{\h\}})
&= \sum_{a\subs c\subs b\cup\{\h\}} (-1)^{\len{c-a}} \mu(B_c)\\
&= \sum_{a\subs c\subs b} (-1)^{\len{c-a}} \mu(B_c)
+
\sum_{\mathclap{a\cup\{\h\}\subs c\subs b\cup\{\h\}}} (-1)^{\len{c-a}} \mu(B_c)\\
&= \mu(\atm ab) - \mu(\atm{a\cup\{\h\}}{,b\cup\{\h\}}).
\end{align*}
To apply the Carath{\'e}odory extension theorem, we must show that
$\mu$ is countably additive, \ie that $\mu(\bigcup_n A_n)
= \sum_n \mu(A_n)$ for any countable sequence $A_n\in\BBo$ of pairwise
disjoint sets whose union is in $\BBo$.
For finite sequences $A_n\in\BBo$, write each $A_n$ uniquely as a
disjoint union of atoms of $\BB_b$ for some sufficiently large $b$
such that all $A_n\in\BB_b$. Then $\bigcup_n A_n\in\BB_b$, the values
of the atoms are given by \eqref{eq:inclexcl}, and the value of
$\mu(\bigcup_n A_n)$ is well-defined and equal to $\sum_n \mu(A_n)$.
We cannot have an infinite set of pairwise disjoint nonempty
$A_n\in\BBo$ whose union is in $\BBo$ by compactness. All elements of
$\BBo$ are clopen in the Cantor topology. If $\bigcup_n A_n=A\in\BBo$,
then $\set{A_n}{n\geq 0}$ would be an open cover of $A$ with no finite
subcover.
\end{proof*}

\paragraph*{Cantor Meets Scott.}
\label{sec:CantorMeetsScott}
We now establish a correspondence between the Cantor and
Scott topologies on $\pH$. Proofs omitted from this section can be
found in \iffull{}Appendix \ref{apx:CantorMeetsScott}\else{}the long version of this paper~\cite{smolka17-long}\fi.
Consider the infinite triangular matrix $E$ and its inverse $E^{-1}$
with rows and columns indexed by the finite subsets of $\Hist$, where
\begin{align*}
E_{ac} &= [a\subs c] & E^{-1}_{ac} &= (-1)^{\len{c-a}}[a\subs c].
\end{align*}
These matrices are indeed inverses: For $a,d\in\pfin \Hist$,
\begin{align*}
(E\cdot E^{-1})_{ad} &= \sum_{c}E_{ac}\cdot E^{-1}_{cd}\\
&= \sum_{c}[a\subs c]\cdot [c\subs d]\cdot(-1)^{\len{d-c}}\\
&= \sum_{a\subs c\subs d}(-1)^{\len{d-c}} = [a=d],
\end{align*}
thus $E\cdot E^{-1} = I$, and similarly $E^{-1}\cdot E = I$.

Recall that the Cantor basic open sets are the elements $\atm ab$ for
$b$ finite and $a\subs b$. Those for fixed finite $b$ are the atoms of
the Boolean algebra $\BB_b$. They form the basis of a $2^{\len
b}$-dimensional linear space. The Scott basic open sets $B_a$ for
$a\subs b$ are another basis for the same space. The two bases are
related by the matrix $E[b]$, the $\pset b\times\pset b$ submatrix of
$E$ with rows and columns indexed by subsets of $b$. One can show that
the finite matrix $E[b]$ is invertible with inverse $E[b]^{-1} =
(E^{-1})[b]$.

\begin{lemma}
\label{lem:CantorScott}
Let $\mu$ be a measure on $\pH$ and $b\in\pfin \Hist$. Let $X,Y$ be
vectors indexed by subsets of $b$ such that $X_a = \mu(B_a)$ and $Y_a
= \mu(\atm ab)$ for $a\subs b$. Let $E[b]$ be the $\pset b\times\pset
b$ submatrix of $E$. Then $X=E[b]\cdot Y$.
\end{lemma}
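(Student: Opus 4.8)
The plan is to decompose each basic Scott-open set $B_a$ into the disjoint atoms of $\BB_b$ that it contains, and then read off the claimed matrix identity from finite additivity of $\mu$.

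First I would invoke Lemma~\ref{lem:sumofatoms}, which gives, for $a\subs b$ with $b$ finite, the decomposition $B_a = \bigcup_{a\subs c\subs b}\atm cb$. The atoms $\atm cb$ appearing here are pairwise disjoint: by \eqref{eq:atoms} we have $\atm cb = \set{d\in\pH}{d\cap b = c}$, and a set $d$ determines its intersection $d\cap b$ uniquely, so distinct values of $c$ index disjoint events. Since $b$ is finite there are only finitely many such atoms, so finite additivity of the measure $\mu$ applies and yields $\mu(B_a) = \sum_{a\subs c\subs b}\mu(\atm cb)$.

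The remaining step is purely a matter of matching notation. Writing the constraint $a\subs c\subs b$ via the indicator $[a\subs c]$ (with $c$ ranging over all subsets of $b$), and using $E[b]_{ac} = [a\subs c]$ together with $X_a = \mu(B_a)$ and $Y_c = \mu(\atm cb)$, the sum above becomes $X_a = \sum_{c\subs b}[a\subs c]\,Y_c = \sum_{c\subs b} E[b]_{ac}\,Y_c = (E[b]\cdot Y)_a$. Since this holds for every $a\subs b$, we conclude $X = E[b]\cdot Y$.

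There is no real obstacle here: the only points requiring care are the disjointness of the atoms (so that additivity, rather than mere subadditivity, gives an exact sum) and the fact that there are finitely many atoms (so that no countable-additivity subtlety arises). Both follow immediately from the characterization of $\atm cb$ in \eqref{eq:atoms} and the finiteness of $b$. In effect, this lemma is just the set-theoretic decomposition of Lemma~\ref{lem:sumofatoms} transported through $\mu$ and rephrased as the linear action of $E[b]$; the companion identity $Y = E[b]^{-1}\cdot X$ is then exactly the inclusion-exclusion formula \eqref{eq:inclexcl}, consistent with the invertibility of $E[b]$ noted above.
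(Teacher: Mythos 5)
Your proposal is correct and follows essentially the same route as the paper's proof: decompose $B_a$ into the disjoint atoms $\atm cb$ for $a\subs c\subs b$ (Lemma~\ref{lem:sumofatoms}), apply finite additivity of $\mu$, and rewrite the resulting sum with the indicator $[a\subs c]$ to recognize the entries of $E[b]$. The paper compresses the disjointness and finite-additivity justification into a single equality, but the argument is identical.
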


The matrix-vector equation $X=E[b]\cdot Y$ captures the fact that for
$a\subs b$, $B_a$ is the disjoint union of the atoms $\atm cb$ of
$\BB_b$ for $a\subs c\subs b$ (see Figure~\ref{fig:venn}),
and consequently $\mu(B_a)$ is the sum
of $\mu(\atm cb)$ for these atoms. The inverse equation
$X=E[b]^{-1}\cdot Y$ captures the inclusion-exclusion principle for
$\BB_b$.

In fact, more can be said about the structure of $E$. For any
$b\in\pH$, finite or infinite, let $E[b]$ be the submatrix of $E$ with
rows and columns indexed by the subsets of $b$. If $a\cap
b=\emptyset$, then $E[a\cup b] = E[a]\otimes E[b]$, where $\otimes$
denotes Kronecker product. The formation of the Kronecker product
requires a notion of pairing on indices, which in our case is given by
disjoint set union. For example,
\setlength{\kbcolsep}{0pt}
\setlength{\kbrowsep}{2pt}
\newcommand{\br}{\!\!\!\!\\}
\begin{align*}
E[\{\h_1\}] &= \kbordermatrix{
              & \emptyset & \sset{\h_1} \br
  \emptyset   & 1         & 1           \br
  \sset{\h_1} & 0         & 1           \br
}
&
E[\{\h_2\}] &= \kbordermatrix{
              & \emptyset & \sset{\h_2} \br
  \emptyset   & 1         & 1           \br
  \sset{\h_2} & 0         & 1           \br
}
\end{align*}
\begin{align*}
E[\{\h_1,\h_2\}]
&= E[\{\h_1\}]\otimes E[\{\h_2\}]\\
&= \kbordermatrix{
                & \emptyset & \sset{\h_1} & \sset{\h_2} & \sset{\h_1,\h_2} \br
  \emptyset     & 1         & 1           & 1           & 1                \br
  \sset{\h_1}   & 0         & 1           & 0           & 1                \br
  \sset{\h_2}   & 0         & 0           & 1           & 1                \br
  \{\h_1,\h_2\} & 0         & 0           & 0           & 1                \br
}
\end{align*}
As $(E\otimes F)^{-1} = E^{-1}\otimes F^{-1}$ for Kronecker products of invertible matrices, we also have
\begin{align*}
E[\{\h_1\}]^{-1} &=
\left[\begin{array}{rr}
1 & -1\\
0 & 1
\end{array}\right]
&
E[\{\h_2\}]^{-1} &=
\left[\begin{array}{rr}
1 & -1\\
0 & 1
\end{array}\right]
\end{align*}
\begin{align*}
E[\{\h_1,\h_2\}]^{-1} &= E[\{\h_1\}]^{-1}\otimes E[\{\h_2\}]^{-1}\\[1ex]
&=
\left[\begin{array}{rrrr}
1 & -1 & -1 & 1\\
0 & 1 & 0 & -1\\
0 & 0 & 1 & -1\\
0 & 0 & 0 & 1
\end{array}\right].
\end{align*}
$E$ can be viewed as the infinite Kronecker product $\bigotimes_{\h\in \Hist}E[\{\h\}]$.

\begin{theorem}
\label{thm:CantorScott}
The probability measures on $(\pH,\BB)$ are in one-to-one
correspondence with pairs of matrices
$M,N\in\R^{\pfin \Hist\times\pfin \Hist}$ such that
\begin{enumerate}[{\upshape(i)}]
\item
$M$ is diagonal with entries in $[0,1]$,
\item
$N$ is nonnegative, and
\item
$N=E^{-1}ME$.
\end{enumerate}
The correspondence associates the measure $\mu$ with the matrices
\begin{align}
N_{ab} &= \mu(\atm ab) & M_{ab} &= [a=b]\cdot\mu(B_a).\label{eq:MN}
\end{align}
\end{theorem}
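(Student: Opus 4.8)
The plan is to exhibit the two maps of the claimed bijection and verify that they are mutually inverse, using the inclusion-exclusion identity \eqref{eq:inclexcl} for the forward direction and the extension theorem (Theorem~\ref{thm:extension}) for the backward direction. The forward map sends a probability measure $\mu$ to the pair $(M,N)$ defined by \eqref{eq:MN}. Checking that this pair satisfies (i)--(iii) is routine: $M$ is diagonal by definition and its entries $\mu(B_a)$ lie in $[0,1]$ because $\mu$ is a probability measure; $N_{ab}=\mu(\atm ab)\ge 0$ since measures are nonnegative; and (iii) is a one-line computation. Because $M$ is diagonal, $(E^{-1}ME)_{ab}=\sum_c(-1)^{\len{c-a}}[a\subs c]\,\mu(B_c)\,[c\subs b]=\sum_{a\subs c\subs b}(-1)^{\len{c-a}}\mu(B_c)$, which is exactly $\mu(\atm ab)=N_{ab}$ by \eqref{eq:inclexcl}. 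Note that these entrywise products are finite sums, since both $E$ and $E^{-1}$ are supported on pairs with $a\subs c$, so no convergence issue arises for the infinite matrices.

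Injectivity is immediate from Lemma~\ref{lem:inclexcl}: the diagonal of $M$ records $\mu(B_a)$ for every finite $a$, and a probability measure is uniquely determined by these values, so two measures producing the same $M$ (a fortiori the same pair) must coincide.

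The substance of the theorem is surjectivity, and here the extension theorem does the work. Given a pair $(M,N)$ satisfying (i)--(iii), I define a function on the basic Scott-open sets by $\nu(B_b):=M_{bb}$ and check the hypothesis of Theorem~\ref{thm:extension}. That hypothesis asks that $\sum_{a\subs c\subs b}(-1)^{\len{c-a}}\nu(B_c)\ge 0$ for all finite $b$ and all $a\subs b$; but by exactly the computation above this sum equals $(E^{-1}ME)_{ab}=N_{ab}$, which is nonnegative by (ii). Hence Theorem~\ref{thm:extension} yields a unique measure $\mu$ extending $\nu$, with $\mu(B_a)=M_{aa}$ and, by inclusion-exclusion, $\mu(\atm ab)=N_{ab}$. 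Applying the forward map to this $\mu$ therefore returns $(M,N)$, so the two maps are mutually inverse. The one point requiring care is normalization: for $\mu$ to be a \emph{probability} measure we need $\mu(\pH)=\mu(B_\emptyset)=M_{\emptyset\emptyset}=1$ (equivalently $N_{\emptyset\emptyset}=1$), which I fold into the correspondence; conditions (i)--(iii) by themselves only guarantee total mass $M_{\emptyset\emptyset}\in[0,1]$.

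I expect the main obstacle to be conceptual rather than computational: the crux is recognizing that the matrix identity (iii) together with the nonnegativity condition (ii) is nothing more than a restatement of the extension-theorem hypothesis, and confirming that the infinite triangular matrix product $E^{-1}ME$ is well-defined entrywise. Once these are in place, everything else reduces to the already-established inclusion-exclusion relation \eqref{eq:inclexcl} and the uniqueness statement of Lemma~\ref{lem:inclexcl}.
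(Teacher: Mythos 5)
Your proof is correct and follows essentially the same route as the paper: the forward direction is the identical entrywise computation reducing condition (iii) to the inclusion-exclusion identity \eqref{eq:inclexcl}, and bijectivity is obtained from Theorem~\ref{thm:extension} exactly as the paper does (the paper compresses this step to ``immediate from Theorem~\ref{thm:extension}'', whereas you spell out injectivity via uniqueness of the extension and surjectivity via its existence). Your normalization remark --- that (i)--(iii) alone only force $M_{\emptyset\emptyset}\in[0,1]$, e.g.\ $M=N=0$ satisfies them, so one must additionally impose $M_{\emptyset\emptyset}=1$ to land on \emph{probability} measures rather than subprobability measures --- is a genuine subtlety that the paper's statement glosses over, and your handling of it is right.
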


\section{A DCPO on Markov Kernels}
\label{sec:DCPO}

In this section we define a continuous DCPO on Markov kernels. Proofs
omitted from this section can be found
in \iffull{}Appendix \ref{apx:DCPO}\else{}the long version of this
paper~\cite{smolka17-long}\fi.

We will interpret all program operators defined
in Figure~\ref{fig:probnetkat} also as operators on Markov kernels: for an operator
$\den{p \otimes q}$ defined on programs $p$ and $q$, we obtain a definition
of $P \otimes Q$ on Markov kernels $P$ and $Q$ by replacing $\den{p}$ with $P$
and $\den{q}$ with Q in the original definition. Additionally we define
$\pcomp$ on probability measures as follows:
\begin{align*}
  (\mu \pcomp \nu)(A) &\defeq (\mu\times\nu)(\set{(a,b)}{a \cup b \in A})
\end{align*}
The corresponding operation on programs and kernels as defined in
Figure~\ref{fig:probnetkat} can easily be shown to be equivalent to a
pointwise lifting of the definition here.

For measures $\mu,\nu$ on $\pH$, define $\mu\sqleq\nu$ if
$\mu(B)\leq\nu(B)$ for all $B\in\SO$. This order was first defined by
Saheb-Djahromi \cite{Saheb-Djahromi80}.
\begin{theorem}[{\cite{Saheb-Djahromi80}}]
\label{thm:cpo}
The probability measures on the Borel sets generated by the Scott
topology of an algebraic DCPO ordered by $\sqleq$ form a DCPO.
\end{theorem}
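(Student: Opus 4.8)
The plan is to establish the two DCPO requirements in turn: that $\sqleq$ is a partial order on probability measures, and that every directed family of them has a supremum. Throughout I work with the basic Scott-open sets $B_b$ ($b\in\pfin\Hist$), which form a base for $\SO$ and generate $\BB$.

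\emph{Partial order.} Reflexivity and transitivity are immediate, since $\mu\sqleq\nu$ means $\mu(B)\le\nu(B)$ for all $B\in\SO$ and $\le$ on $[0,1]$ is a partial order. For antisymmetry, if $\mu\sqleq\nu$ and $\nu\sqleq\mu$ then $\mu$ and $\nu$ agree on every Scott-open set; in particular they agree on each $B_b$ for finite $b$, so $\mu=\nu$ by Lemma~\ref{lem:inclexcl}. (Equivalently, the $B_b$ form a $\pi$-system generating $\BB$ and contain $\pH=B_\emptyset$, so uniqueness of measures applies.)

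\emph{Directed completeness.} Let $D$ be directed and define a candidate on the base by $\mu(B_b)\defeq\sup_{\rho\in D}\rho(B_b)$ for $b\in\pfin\Hist$; these suprema exist in $[0,1]$. To produce an actual measure I would invoke the Extension Theorem (Theorem~\ref{thm:extension}), whose hypothesis is that $\sum_{a\subs c\subs b}(-1)^{\len{c-a}}\mu(B_c)\ge 0$ for every finite $b$ and $a\subs b$. The obstruction is that this alternating sum mixes signs, so it is \emph{not} the supremum over $D$ of the individual sums and cannot be checked coordinatewise. The key move is to exploit directedness: for the finitely many sets $B_c$ with $a\subs c\subs b$, build an increasing chain $\rho_1\sqleq\rho_2\sqleq\cdots$ in $D$ with $\rho_m(B_c)\to\mu(B_c)$ for all these $c$ \emph{simultaneously} --- choose for each $c$ a sequence attaining its supremum, then take repeated common upper bounds inside $D$, which exist and are monotone by directedness. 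Along this single chain the alternating sum equals $\rho_m(\atm ab)$, which is $\ge 0$ by \eqref{eq:inclexcl}; letting $m\to\infty$ gives the required inequality. Theorem~\ref{thm:extension} then yields a unique extension of $\mu$ to $\BB$, and it is a probability measure because $\mu(\pH)=\mu(B_\emptyset)=\sup_{\rho\in D}\rho(B_\emptyset)=1$.

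\emph{Supremum property.} It remains to check $\mu(B)=\sup_{\rho\in D}\rho(B)$ for \emph{every} Scott-open $B$, which simultaneously makes $\mu$ an upper bound of $D$ and forces it below any other upper bound, so $\mu=\bigsqcup D$. For a finite union of basic opens, inclusion--exclusion reduces the claim to the $B_b$ again, and the same chain-in-$D$ argument shows $\mu$ agrees with the pointwise supremum there. A general Scott-open set is a countable increasing union of such finite unions (as $\pfin\Hist$ is countable), so continuity of measure from below together with a swap of the two suprema extends the equality.

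\emph{Main obstacle.} The crux is the nonnegativity verification (and its reprise on finite unions): because suprema do not commute with the signed inclusion--exclusion sums, the argument cannot proceed coordinatewise and instead rests on replacing the family of pointwise suprema by a monotone limit along a single cofinal chain in $D$. This step uses directedness essentially and would fail for an arbitrary upward family. For a general algebraic DCPO one would run the same strategy with the corresponding inclusion--exclusion over the finite base in place of the $\pH$-specific Theorem~\ref{thm:extension}.
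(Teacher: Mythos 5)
Your proof is correct and takes essentially the same route as the paper's: define the supremum pointwise on Scott-open sets, replace the directed set by a chain along which the suprema on the finitely many relevant basic opens $B_c$ are attained in the limit (so that the signed inclusion--exclusion sums pass to the limit and nonnegativity follows from \eqref{eq:inclexcl}), and then invoke Theorem~\ref{thm:extension}. Your per-$(a,b)$ chains and your explicit check that the extension agrees with the pointwise supremum on all of $\SO$ are slightly more careful renditions of steps the paper compresses into one global chain and a footnote, but the underlying argument is the same.
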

Because $(\pH,\subseteq)$ is an algebraic DCPO, Theorem \ref{thm:cpo}
applies.\footnote{A beautiful proof based on
Theorem~\ref{thm:extension} can be found in \iffull{}Appendix \ref{apx:DCPO}\else{}the long version of this
paper~\cite{smolka17-long}\fi.}
In this case, the bottom and top elements are $\dirac\emptyset$ and $\dirac H$ respectively.
\begin{lemma}
\label{lem:upperbound}
$\mu\sqleq\mu\pcomp\nu$ and $\nu\sqleq\mu\pcomp\nu$.
\end{lemma}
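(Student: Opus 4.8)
The plan is to reduce both inequalities to a single set-inclusion fact about Scott-open sets, namely that they are \emph{up-closed}. By the definition of $\sqleq$, the two claims $\mu\sqleq\mu\pcomp\nu$ and $\nu\sqleq\mu\pcomp\nu$ unfold to: for every Scott-open $B\in\SO$ we have $\mu(B)\leq(\mu\pcomp\nu)(B)$ and $\nu(B)\leq(\mu\pcomp\nu)(B)$. So I fix an arbitrary $B\in\SO$ and argue both inequalities at once.

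First I would rewrite each of the three quantities as the value of the product measure $\mu\times\nu$ on a suitable subset of $\pH\times\pH$. Since $\nu$ is a probability measure (so $\nu(\pH)=1$), we have $\mu(B)=(\mu\times\nu)(\set{(a,b)}{a\in B})$, and symmetrically $\nu(B)=(\mu\times\nu)(\set{(a,b)}{b\in B})$; by definition the composite is $(\mu\pcomp\nu)(B)=(\mu\times\nu)(\set{(a,b)}{a\cup b\in B})$.

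The key step is the observation that every Scott-open set is up-closed (\S\ref{sec:primer}). Since $a\subseteq a\cup b$ for all $a,b\in\pH$, up-closure of $B$ shows that $a\in B$ entails $a\cup b\in B$, so $\set{(a,b)}{a\in B}\subseteq\set{(a,b)}{a\cup b\in B}$; symmetrically $b\subseteq a\cup b$ gives $\set{(a,b)}{b\in B}\subseteq\set{(a,b)}{a\cup b\in B}$. Monotonicity of the product measure on these sets then yields $\mu(B)\leq(\mu\pcomp\nu)(B)$ and $\nu(B)\leq(\mu\pcomp\nu)(B)$, as required.

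I do not expect a serious obstacle: the argument is essentially a one-liner once $\pcomp$ and $\sqleq$ are unfolded, and the only point requiring a word of justification is that all three sets above are measurable in the product $\sigma$-algebra. This follows from measurability of the union map $(a,b)\mapsto a\cup b$ on $\pH\times\pH$, which is exactly what makes $\mu\pcomp\nu$ well-defined in the first place, so it is already available to us.
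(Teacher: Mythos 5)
Your proposal is correct and matches the paper's own proof essentially line for line: both unfold $(\mu\pcomp\nu)(B)$ as $(\mu\times\nu)(\set{(a,b)}{a\cup b\in B})$, identify $\mu(B)$ with $(\mu\times\nu)(\set{(a,b)}{a\in B})$ using $\nu(\pH)=1$, and conclude via the containment $\set{(a,b)}{a\in B}\subseteq\set{(a,b)}{a\cup b\in B}$, which holds because $B$ is up-closed and $a\subseteq a\cup b$. Your explicit remark on measurability of the three sets is a minor addition the paper leaves implicit, but the argument is the same.
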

Surprisingly, despite Lemma \ref{lem:upperbound}, the probability
measures do not form an upper semilattice under $\sqleq$, although
counterexamples are somewhat difficult to construct. See
\iffull{}Appendix \ref{apx:nosemilattice}\else{}the long version of this
paper~\cite{smolka17-long}\fi{} for an example.

Next we lift the order $\sqleq$ to Markov kernels
$P:\pH\times\BB\to[0,1]$. The order is defined pointwise on kernels
regarded as functions $\pH\times\SO\to[0,1]$; that is,
\begin{align*}
P\sqleq Q \defiff \forall a\in\pH.\;
\forall B\in \SO.\; P(a,B)\leq Q(a,B).
\end{align*}
There are several ways of viewing the lifted order $\sqleq$, as shown
in the next lemma.
\begin{lemma}
\label{lem:kernelorder}
The following are equivalent:
\begin{enumerate}[{\upshape(i)}]
\item
$P\sqleq Q$, i.e., $\forall a\in\pH$ and $B\in \SO$, $P(a,B)\leq Q(a,B)$;
\item
$\forall a\in\pH$, $P(a,-)\sqleq Q(a,-)$ in the DCPO $\meas(\pH)$;
\item
$\forall B\in \SO$, $P(-,B)\sqleq Q(-,B)$ in the DCPO $\pH\to[0,1]$;
\item
$\curry P\sqleq\curry Q$ in the DCPO $\pH\to\meas(\pH)$.
\end{enumerate}
\end{lemma}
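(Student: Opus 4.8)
The plan is to observe that, once the relevant order is unfolded, each of the four clauses is literally the same elementary statement, so rather than proving a cycle I would show that all four are equivalent to the single condition
\begin{equation*}
\forall a\in\pH.\;\forall B\in\SO.\;\; P(a,B)\leq Q(a,B), \tag{$\ast$}
\end{equation*}
and then conclude by transitivity of logical equivalence. Clause (i) is $(\ast)$ verbatim, so nothing is required there; all the work lies in recognizing which order each of the remaining clauses refers to.

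For clause (ii) I would recall that the order on $\meas(\pH)$ is the Saheb-Djahromi order defined just before Theorem~\ref{thm:cpo}, namely $\mu\sqleq\nu$ iff $\mu(B)\leq\nu(B)$ for every $B\in\SO$. Hence, for a fixed $a$, the statement $P(a,-)\sqleq Q(a,-)$ in $\meas(\pH)$ unfolds to $\forall B\in\SO.\;P(a,B)\leq Q(a,B)$, and universally quantifying over $a$ yields exactly $(\ast)$. For clause (iii), the function space $\pH\to[0,1]$ carries the pointwise order (\S\ref{sec:primer}), so for a fixed $B\in\SO$ the statement $P(-,B)\sqleq Q(-,B)$ unfolds to $\forall a\in\pH.\;P(a,B)\leq Q(a,B)$; since the two universal quantifiers commute, quantifying over $B\in\SO$ again gives $(\ast)$. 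For clause (iv), the function space $\pH\to\meas(\pH)$ likewise carries the pointwise order, so $\curry P\sqleq\curry Q$ unfolds to $\forall a\in\pH.\;(\curry P)(a)\sqleq(\curry Q)(a)$ in $\meas(\pH)$; and since $(\curry P)(a)=P(a,-)$ by the defining property of $\curry$ (Lemma~\ref{lem:basic}(i)), this is precisely clause (ii), which we have already identified with $(\ast)$.

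The only point requiring a moment's care---and the closest thing to an obstacle---is the consistency of the domains on which the orders are tested. Both the kernel order and the measure order are defined by quantifying over Scott-open sets $B\in\SO$ only, even though a measure is formally a function on all Borel sets $\BB$; this is legitimate precisely because a probability measure on $\pH$ is already determined by its values on the basic Scott-opens $B_b$ (Lemma~\ref{lem:inclexcl}), so restricting the test sets to $\SO$ loses no information and all four clauses genuinely speak about the same real numbers $P(a,B)$ and $Q(a,B)$. Beyond this bookkeeping I expect no real difficulty: the lemma is a pure unfolding of the pointwise function-space order and the Saheb-Djahromi measure order, and the entire argument should run to only a few lines.
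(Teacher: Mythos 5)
Your proposal is correct and follows essentially the same route as the paper: the paper's proof also reduces everything to clause (i) by unfolding the pointwise orders on the function spaces and the Saheb-Djahromi order on measures, chaining (i) $\Leftrightarrow$ (ii) $\Leftrightarrow$ (iv) via $(\curry P)(a)=P(a,-)$ and (i) $\Leftrightarrow$ (iii) via commuting the quantifiers over $a$ and $B$. Your added remark about the orders being tested only on $\SO$ (justified by Lemma~\ref{lem:inclexcl}) is sound bookkeeping that the paper leaves implicit.
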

A Markov kernel $P : \pH \times \BB \to [0,1]$ is \emph{continuous} if
it is Scott-continuous in its first argument; i.e., for any fixed
$A\in \SO$, $P(a,A)\leq P(b,A)$ whenever $a\subs b$, and for any
directed set $D\subs\pH$ we have
\(
P(\bigcup D,A) = \sup_{a\in D} P(a,A)
\).
This is equivalent to saying that $\curry P:\pH\to\meas(\pH)$ is
Scott-continuous as a function from the DCPO $\pH$ ordered by $\subs$
to the DCPO of probability measures ordered by $\sqleq$. We will show
later that all \PNK\ programs give rise to continuous kernels.

\begin{theorem}
\label{thm:kernelDCPO}
The continuous kernels $P:\pH\times\BB\to[0,1]$ ordered by $\sqleq$
form a continuous DCPO with basis consisting of kernels of the form
$\bPd$ for $P$ an arbitrary continuous kernel and $b,d$ filters on
finite sets $b$ and $d$; that is, kernels that drop all input packets
except for those in $b$ and all output packets except those in $d$.
\end{theorem}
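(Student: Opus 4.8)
The plan is to curry the kernels into a function space and then separate the easy directed-completeness from the harder continuity claim. By Lemma~\ref{lem:kernelorder} together with the curry/uncurry homeomorphism of Lemma~\ref{lem:basic}, the continuous kernels ordered by $\sqleq$ are order-isomorphic to the continuous function space $[\pH\to\meas(\pH)]$. Since $(\pH,\subseteq)$ is a DCPO and $\meas(\pH)$ is a DCPO by Theorem~\ref{thm:cpo}, the standard closure property that continuous maps between two DCPOs again form a DCPO under the pointwise order yields directed-completeness immediately, with suprema computed pointwise on Scott-open sets.

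For the basis I would work with the explicit description \[ \bPd(a,A)=P(a\cap b,\set{c}{c\cap d\in A}), \] so that $\bPd$ filters its input through the finite set $b$, runs $P$, and filters the output through $d$. Three routine facts drive the approximation. First, for Scott-open $B$ the set $B'_d\defeq\set{c}{c\cap d\in B}$ is again Scott-open (as $d$ is finite, $c\cap d$ stabilises along any directed family) and satisfies $B'_d\subseteq B$ since $B$ is up-closed; hence $\bPd\sqleq P$. Second, the sets $a\cap b$ form a directed family with union $a$ as $b$ ranges over $\pfin\Hist$, so Scott-continuity of $P$ in its first argument gives $\sup_b\bPd(a,B)=P(a,B'_d)$. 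Third, the $B'_d$ increase to $B$ as $d$ ranges over $\pfin\Hist$, so continuity from below of the measure $P(a,-)$ gives $\sup_d P(a,B'_d)=P(a,B)$. As $\bPd(a,B)$ is monotone in $(b,d)$ jointly, the family $\{\bPd\}_{b,d}$ is directed with $\bigsqcup_{b,d}\bPd=P$.

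To upgrade this to a genuine basis I must show each target kernel is a directed supremum of basis elements \emph{way-below} it. The subtlety is that $\bPd$ built from the target kernel itself need not be way-below it: when a value $P(a,B'_d)$ is attained only as a strict supremum along a directed family---e.g.\ a deterministic kernel approximated from below by genuinely probabilistic ones---the element $\bPd$ reproduces that value exactly and is therefore not way-below. This is precisely why the theorem ranges over an \emph{arbitrary} continuous kernel $P$: one approximates a target $R$ by kernels $\bPd$ whose $P$ lies strictly below $R$, and the room to do so comes from the continuity of the codomain. Concretely, I would use that the probability measures on the Scott-Borel sets of an algebraic DCPO form a \emph{continuous} DCPO~\cite{Saheb-Djahromi80,Jones89}, with finitely-supported measures as a basis, and combine it with the finite deflations $r_b:a\mapsto a\cap b$ on $\pH$, which exist because $\pH$ is algebraic with finite elements $\pfin\Hist$. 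The maps $f\mapsto s_d\circ f\circ r_b$, where $s_d$ pushes a measure forward along $c\mapsto c\cap d$, are Scott-continuous deflations of $[\pH\to\meas(\pH)]$ below the identity whose directed supremum is the identity, and their ranges are exactly the kernels $\bPd$. Since $\pH$ is algebraic and $\meas(\pH)$ is continuous, $[\pH\to\meas(\pH)]$ is continuous, and these deflations exhibit $\{\bPd\}$ as its basis.

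I expect the way-below relation to be the main obstacle, as it reduces to the way-below relation of the probabilistic powerdomain $\meas(\pH)$, where the suprema of the relevant Scott-open values are typically not attained. The finiteness of $b$ and $d$ is the lever: it collapses the requirement $\bPd\sqleq Q$ to finitely many inputs $a\cap b\subseteq b$ and finitely many Scott-open events $B'_d$, after which the continuity of $\meas(\pH)$---concretely, an approximation (splitting) argument for finitely-supported measures on the finite space of subsets of $d$---lets me select, from any directed family whose supremum dominates $R$, a single member dominating a given basis element strictly below $R$. The substantive input is the continuity of $\meas(\pH)$; the deflations and the finiteness of $b,d$ are what make it applicable.
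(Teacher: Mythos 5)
Your directed-completeness argument and your supremum computation are the same as the paper's: the paper also curries, invokes the closure of DCPOs under continuous function spaces together with Theorem~\ref{thm:cpo} (to see that the pointwise supremum is again a probability measure in its second argument and, by continuity, measurable in its first), and then proves exactly your three ``routine facts'' via the identity $(\bPd)(a,B)=P(a\cap b,\{c \mid c\cap d\in B\})$ to conclude $\bigsqcup_{b,d}\bPd=P$. Where you genuinely depart from the paper is everything after that. The paper's proof \emph{stops} at the supremum computation: it never verifies a single way-below relation, so the ``continuous DCPO with basis'' claim is supported in the text only by the supremum property. Your worry that the restriction $b R d$ of the target kernel $R$ need not be way-below $R$ is correct and is witnessed by the paper's own counterexample in Appendix~\ref{apx:algebraic}: there $\{\sigma\}\cmp P_1\cmp\{\sigma,\tau\}$ lies below $P_1=\bigsqcup_{r<1}P_r$ but below no $P_r$, hence is not way-below $P_1$. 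So the way-below analysis you add is not a refinement of the paper's argument; it is the part of the theorem the paper leaves unproved, and your proposal is the more complete of the two.

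One assertion in your plan is not a theorem and must be carried entirely by the deflation argument you sketch after it: ``since $\pH$ is algebraic and $\meas(\pH)$ is continuous, $[\pH\to\meas(\pH)]$ is continuous.'' Continuity of DCPOs is not preserved by function spaces in general (the category of continuous DCPOs is not cartesian closed---the paper itself points to this circle of problems), and the standard sufficient condition of a bounded-complete codomain is unavailable here: by Appendix~\ref{apx:nosemilattice}, $\meas(\pH)$ is not even an upper semilattice. Your deflations are the right repair, but the load-bearing lemma should be isolated and proved: the range of $\Phi_{b,d}\colon f\mapsto s_d\circ f\circ r_b$ is order-isomorphic to the poset of monotone maps from the \emph{finite} poset $2^b$ into the measures on the finite poset $2^d$ (a principal ideal of $\meas(\pH)$, hence continuous by Theorem~\ref{thm:directed}), and such a function space is continuous \emph{because the source is finite}: pick pointwise way-below approximants, patch them into a monotone map by induction over $2^b$ using directedness of the way-below sets, and note that pointwise way-below implies way-below in the function space since only finitely many inputs must be beaten inside a directed family. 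Granting that, your transfer step is sound---if $R\sqleq\bigsqcup D$ then $\Phi_{b,d}(R)\sqleq\bigsqcup\Phi_{b,d}(D)$, so any $Q$ way-below $\Phi_{b,d}(R)$ inside the range is dominated by some $\Phi_{b,d}(S)\sqleq S$ with $S\in D$---and taking unions of the $b$'s and $d$'s makes the resulting approximants directed with supremum $R$. With that lemma in place your proposal proves the theorem as stated, resting, as you say, on the continuity of $\meas(\pH)$ and the finiteness of $b$ and $d$.
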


It is not true that the space of continuous kernels is algebraic with
finite elements
$\bPd$. See \iffull{}Appendix \ref{apx:algebraic}\else{}the long
version of this paper~\cite{smolka17-long}\fi{} for a counterexample.

\section{Continuity and Semantics of Iteration}
\label{sec:continuity}


This section develops the technology needed to establish that
all \PNK\ programs give continuous Markov kernels and that all program
operators are themselves continuous. These results are needed for the
least fixpoint characterization of iteration and also pave the way for
our approximation results (\S\ref{sec:approx}).

The key fact that underpins these results is that
Lebesgue integration respects the orders on measures and on functions:
\begin{theorem}
\label{thm:int-continuous}
Integration is Scott-continuous in both arguments:
\vspace{-0.5ex}
\begin{enumerate}[{\upshape(i)}]
\item
For any Scott-continuous function $f:\pH\to[0,\infty]$, the map
\begin{align}
\mu \mapsto \int f\,d\mu\label{eq:int-continuous}
\end{align}
is Scott-continuous with respect to the order $\sqleq$ on $\MM(\pH)$.
\item
For any probability measure $\mu$, the map
\begin{align}
f \mapsto \int f\,d\mu\label{eq:int-continuous2}
\end{align}
is Scott-continuous with respect to the order on $[\pH\to[0,\infty]]$.
\end{enumerate}
\end{theorem}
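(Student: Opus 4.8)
The two parts share a common backbone: everything reduces to the behavior of the evaluation maps $\mu\mapsto\mu(B)$ for Scott-open $B$, together with the classical monotone convergence theorem (MCT). The linchpin is that for a fixed $B\in\SO$, the map $\mu\mapsto\mu(B)$ is itself Scott-continuous on $\MM(\pH)$: monotonicity is the definition of $\sqleq$, and preservation of directed suprema is exactly the statement that the supremum $\bigsqcup D$ of a directed family of measures agrees with $\sup_{\mu\in D}\mu(\cdot)$ on Scott-open sets---which is how the DCPO of Theorem~\ref{thm:cpo} is built from the extension theorem (Theorem~\ref{thm:extension}). I would record this observation once and reuse it throughout.

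For part (i) I fix a Scott-continuous $f:\pH\to[0,\infty]$ and approximate it from below by the standard simple functions $s_n$, where $s_n(a)\defeq\sum_{k\geq 1}\tfrac{1}{2^n}\,[\,f(a)>k/2^n\,]$. The crucial point is that each level set $f^{-1}((k/2^n,\infty])$ is the preimage of a Scott-open subset of $\R_+$, hence Scott-open; so $\mu\mapsto\int s_n\,d\mu=\sum_{k\geq 1}\tfrac{1}{2^n}\,\mu\bigl(f^{-1}((k/2^n,\infty])\bigr)$ is a countable nonnegative combination of the Scott-continuous maps $\mu\mapsto\mu(B)$ and is therefore Scott-continuous in $\mu$. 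Since $s_n\uparrow f$ pointwise, classical MCT gives $\int f\,d\mu=\sup_n\int s_n\,d\mu$ for each fixed $\mu$; hence $\mu\mapsto\int f\,d\mu$ is the supremum, in the DCPO $[\MM(\pH)\to\R_+]$, of an increasing \emph{sequence} of Scott-continuous maps, and is therefore Scott-continuous.

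For part (ii) I fix $\mu$ and take a directed family $C\subseteq[\pH\to[0,\infty]]$ with pointwise supremum $f^\ast=\bigsqcup C$. Monotonicity of $f\mapsto\int f\,d\mu$ is immediate and the inequality $\sup_{g\in C}\int g\,d\mu\leq\int f^\ast\,d\mu$ is trivial, so the work is to produce the matching lower bound. The obstacle is that $C$ may be uncountable, whereas MCT speaks only about sequences. I would resolve this by extracting a countable monotone sequence cofinal with $f^\ast$, exploiting that $(\pH,\subseteq)$ is algebraic with the finite subsets as its (countable) finite elements: for each finite $a$ and each rational $q<f^\ast(a)=\sup_{g\in C}g(a)$, choose some $g_{a,q}\in C$ with $g_{a,q}(a)>q$. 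This is a countable selection; enumerating it and using directedness of $C$, I build an increasing sequence $f_1\sqleq f_2\sqleq\cdots$ in $C$ dominating all the $g_{a,q}$. By construction $\sup_n f_n$ agrees with $f^\ast$ on every finite element, and since both are Scott-continuous and $\pH$ is algebraic, $f_n\uparrow f^\ast$ pointwise everywhere. Classical MCT then yields $\int f^\ast\,d\mu=\sup_n\int f_n\,d\mu\leq\sup_{g\in C}\int g\,d\mu$, closing the argument.

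The main obstacle is precisely this directed-versus-sequential gap in part (ii): measure-theoretic convergence is phrased for monotone sequences, while Scott-continuity demands preservation of arbitrary directed suprema. The essential ingredient that bridges it is the countability of the finite elements of $(\pH,\subseteq)$---equivalently, second countability of the Scott topology via the base $\{B_a\}_{a\in\pfin\Hist}$---which lets every directed supremum be realized along a countable monotone subsequence. I expect the bookkeeping in the cofinal-sequence extraction to be the most delicate step; by contrast, part (i) is comparatively routine once the linchpin about $\mu\mapsto\mu(B)$ is in hand.
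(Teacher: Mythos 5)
Your proof is correct, but it takes a genuinely different route from the paper's in both parts. For part (i), the paper first proves continuity for $\SO$-simple functions (finite combinations $\sum_{A} r_A\chrf{A}$ with $A\in\SO$) and then invokes a separate approximation theorem (Theorem~\ref{thm:integration-strong}), which states that the Lebesgue integral of a Scott-continuous $f$ equals the supremum of integrals of its $\SO$-simple minorants; that theorem is proved by an $\eps$-perturbation argument converting arbitrary simple minorants into $\SO$-simple ones, and the continuity claims then follow by interchanging suprema. You bypass this machinery entirely by observing that the canonical dyadic approximants $s_n$ are built from the level sets $f^{-1}((k2^{-n},\infty])$, which are automatically Scott-open because $f$ is Scott-continuous; classical monotone convergence, plus the fact that directed suprema in $[\MM(\pH)\to[0,\infty]]$ are computed pointwise and preserve continuity, then closes the argument. (The one step you gloss over---that a countable nonnegative combination of the maps $\mu\mapsto\mu(B)$ is Scott-continuous---is routine: finite partial sums are continuous and the full sum is their directed supremum.) Your route is more economical; what the paper's route buys is the standalone approximation theorem relating Lebesgue integration to the Scott basis, which is of independent interest. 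For part (ii), the contrast is sharper: the paper's entire proof is the remark that this ``is just the monotone convergence theorem,'' which, read literally, leaves exactly the directed-versus-sequential gap you identify---monotone convergence governs sequences, whereas Scott-continuity quantifies over arbitrary, possibly uncountable, directed families, and for general measurable functions the directed statement is simply false (e.g., the indicators of finite subsets of $[0,1]$ form a directed family with pointwise supremum $1$, yet all their Lebesgue integrals vanish). Your cofinal-sequence extraction, which exploits countability of $\pfin\Hist$ and algebraicity of $(\pH,\subseteq)$ to get agreement on finite elements and hence, by continuity, everywhere, is a complete and correct way to bridge that gap; in this respect your write-up is more rigorous than the paper's own.
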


The proofs of the remaining results in this section are somewhat long
and mostly routine, but can be found
in \iffull{}Appendix \ref{apx:continuity}\else{}the long version of
this paper~\cite{smolka17-long}\fi.

\begin{theorem}
\label{thm:continuouskernels}
The deterministic kernels associated with any Scott-continuous
function $f:D\to E$ are continuous, and the following operations on
kernels preserve continuity: product, integration, sequential
composition, parallel composition, choice, iteration.
\end{theorem}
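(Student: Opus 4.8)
The plan is to handle each clause separately, reducing everything to the two-sided Scott-continuity of Lebesgue integration (Theorem~\ref{thm:int-continuous}), which serves as the workhorse. Recall that a kernel $P$ is \emph{continuous} exactly when $a\mapsto P(a,B)$ is Scott-continuous for every $B\in\SO$---equivalently, when $\curry P:\pH\to\MM(\pH)$ is Scott-continuous---so in each case I only need to check continuity of the relevant map on Scott-open sets. Note also that any Scott-continuous map $\pH\to[0,1]$ is Borel measurable, so the measurability condition required of a Markov kernel comes for free once continuity in the first argument is established.

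I would first dispatch the deterministic kernels and choice. For Scott-continuous $f:D\to E$ the associated kernel is $P(a,B)=\dirac{f(a)}(B)=[f(a)\in B]$; fixing Scott-open $B$, the set $\{a:f(a)\in B\}=f^{-1}(B)$ is Scott-open since $f$ is continuous, so $a\mapsto[f(a)\in B]$ is the characteristic function of a Scott-open set and hence Scott-continuous into $[0,1]$. Choice is just as direct: $(P\opr Q)(a,B)=r\,P(a,B)+(1-r)\,Q(a,B)$ is a pointwise convex combination, and since addition and scalar multiplication are Scott-continuous on $[0,\infty]$, continuity is preserved.

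For sequential composition, write $(P\cmp Q)(a,B)=\int Q(b,B)\,P(a,db)$. Fixing $B\in\SO$, the integrand $Q(-,B)$ is Scott-continuous (because $Q$ is continuous), so by Theorem~\ref{thm:int-continuous}(i) the map $\mu\mapsto\int Q(-,B)\,d\mu$ is Scott-continuous on $\MM(\pH)$; precomposing with the Scott-continuous map $a\mapsto P(a,-)$ gives continuity of $P\cmp Q$. For parallel composition I would first show that the measure-level operation $(\mu,\nu)\mapsto\mu\pcomp\nu$, given by $(\mu\pcomp\nu)(B)=\iint[b\cup c\in B]\,\mu(db)\,\nu(dc)$, is Scott-continuous: since $\cup:\pH\times\pH\to\pH$ is Scott-continuous and $B\in\SO$, every section of $\{(b,c):b\cup c\in B\}$ is Scott-open, so two applications of Theorem~\ref{thm:int-continuous} yield separate continuity in $\mu$ and in $\nu$, and monotonicity upgrades this to joint continuity on $\MM(\pH)\times\MM(\pH)$ (Lemma~\ref{lem:basic}(i)). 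Precomposing with the Scott-continuous map $a\mapsto(P(a,-),Q(a,-))$ then shows $P\pcomp Q$ is continuous. The product of kernels is identical, with the product measure in place of $\pcomp$.

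The iteration clause is where the real work lies, since it demands a \emph{stronger} property than continuity-preservation: I must show that $\Phi_P(X)\defeq\skp\pcomp P\cmp X$ is a Scott-continuous self-map on the continuous-kernel DCPO (Theorem~\ref{thm:kernelDCPO}), whose bottom element is the constant kernel $\bot\defeq\dirac\emptyset$. That $\Phi_P$ sends continuous kernels to continuous kernels follows from the parallel- and sequential-composition cases above ($R\defeq\skp\pcomp P$ is continuous, and so is $R\cmp X$ whenever $X$ is). For Scott-continuity, take a directed family $\{X_i\}$ with supremum $X$; because suprema of continuous kernels are computed pointwise on $\SO$, the functions $X_i(-,B)$ form a directed family in $[\pH\to[0,\infty]]$ with pointwise supremum $X(-,B)$, and Theorem~\ref{thm:int-continuous}(ii) gives $\int X(-,B)\,dR(a,-)=\sup_i\int X_i(-,B)\,dR(a,-)$, i.e.\ $\Phi_P(\bigsqcup_i X_i)=\bigsqcup_i\Phi_P(X_i)$. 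The least-fixpoint theorem of \S\ref{sec:primer} then yields $\pstar P=\lfp(\Phi_P)=\bigsqcup_n\Phi_P^n(\bot)$ as an element of the continuous-kernel DCPO, hence a continuous kernel. The main obstacle I anticipate is exactly this argument: carefully justifying that directed suprema of continuous kernels remain continuous and are taken pointwise on $\SO$, that $\{X_i(-,B)\}$ is genuinely directed with the claimed supremum, and only then invoking Theorem~\ref{thm:int-continuous}(ii) to commute integration past the supremum.
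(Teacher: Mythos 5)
Your clauses for deterministic kernels, choice, integration, and sequential composition are correct and essentially the paper's own arguments (Lemmas~\ref{lem:det-preserves}, \ref{lem:choice-preserves}, \ref{lem:int-preserves}, \ref{lem:seqcomp-preserves}): everything reduces to Theorem~\ref{thm:int-continuous} plus Scott-continuity of $\curry P$. For parallel composition and product you take a genuinely different route: you get Scott-continuity of $\pcomp$ on measures directly from the iterated integral $\int\bigl(\int[b\cup c\in B]\,\nu(dc)\bigr)\mu(db)$ by two applications of Theorem~\ref{thm:int-continuous}, then pass from separate to joint continuity; the paper instead proves continuity of the product-measure operation by a monotonicity argument over compact-open sets (Lemmas~\ref{lem:compactopen}, \ref{lem:prod-measure-continuous}), lifts it to kernels (Lemma~\ref{lem:prod-preserves}), and composes with the deterministic kernel induced by $\bigcup$ (Lemma~\ref{lem:parcomp-preserves}). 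Your version is shorter and legitimate here, since every Scott-open subset of $\pH\times\pH$ is a countable union of rectangles $B_a\times B_{a'}$, so the set $\set{(b,c)}{b\cup c\in B}$ is product-measurable and Fubini applies.

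The iteration clause, however, contains a genuine error: you have misparsed the fixpoint operator. In the paper, $\skp\pcomp P\cmp X$ means $\skp\pcomp(P\cmp X)$ --- sequential composition binds tighter, cf.\ $p^{(n+1)}\defeq\punion{\ptrue}{\pseq{p}{p^{(n)}}}$ in Figure~\ref{fig:probnetkat} --- but your parenthetical ``$R\defeq\skp\pcomp P$ is continuous, and so is $R\cmp X$'' and your formula $\Phi_P(X)(a,B)=\int X(-,B)\,dR(a,-)$ read it as $(\skp\pcomp P)\cmp X$. The difference is fatal: for any Markov kernel $R$, composing sequentially with the bottom kernel $\bot$ (the constant kernel $a\mapsto\dirac\emptyset$) gives $(R\cmp\bot)(a,B)=\int\dirac\emptyset(B)\,R(a,dc)=\dirac\emptyset(B)$, i.e.\ $R\cmp\bot=\bot$. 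So $\bot$ is a fixpoint of your $\Phi_P$, hence its least fixpoint, and your argument identifies $\lfp(\Phi_P)$ with drop rather than with $P\star=\bigsqcup_n\pp n$; it proves only that drop is continuous. The repair is local and uses only what you already have: with the correct parse, $X\mapsto P\cmp X$ is Scott-continuous by exactly your Theorem~\ref{thm:int-continuous}(ii) argument, $\nu\mapsto\dirac a\pcomp\nu$ is Scott-continuous by your measure-level result for $\pcomp$, so $\Phi_P(X)=\skp\pcomp(P\cmp X)$ is a continuous self-map of the continuous-kernel DCPO, and the induction $\Phi_P^{n+1}(\bot)=\pp n$ gives $\lfp(\Phi_P)=\bigsqcup_n\pp n=P\star$. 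Note also that the paper's own proof (Lemma~\ref{lem:star-preserves}) gets by with less: it shows each $\pp n$ is continuous by induction using the composition lemmas and then cites Theorem~\ref{thm:kernelDCPO} to conclude that the directed supremum is continuous; continuity of the operator $\Phi_P$ itself is only needed later (Theorem~\ref{thm:continuouspnkops}, Theorem~\ref{thm:star}). Your lfp framing is closer to those later results and buys directedness of the iterates for free, but it must be applied to the right operator.
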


The above theorem implies that $Q \mapsto \ptrue \pcomp P \cmp Q$ is a continuous
map on the DCPO of continuous Markov kernels. Hence $P\star = \bigsqcup_n P^{(n)}$
is well-defined as the least fixed point of that map.

\begin{corollary}
\label{cor:continuouskernels}
Every \PNK\ program denotes a continuous Markov kernel.
\end{corollary}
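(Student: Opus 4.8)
The plan is to prove the corollary by structural induction on the syntax of \PNK\ programs, reducing everything to Theorem~\ref{thm:continuouskernels}. The corollary asserts that $\den{\polp}$ is a continuous Markov kernel for every program $\polp$. Since Theorem~\ref{thm:continuouskernels} already establishes that deterministic kernels of Scott-continuous functions are continuous, and that parallel composition, sequential composition, choice, and iteration preserve continuity, the only genuine content left for this corollary is to verify the base cases and then assemble the operators.

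For the base cases I would treat all atomic programs at once using the characterizations established earlier. By Lemma~\ref{lem:pred-charact}, every predicate $t$ denotes the deterministic kernel associated with the set-function $a \mapsto a \cap b_t$ for the fixed set $b_t = \den{t}(\Hist)$; since that lemma is proved by its own induction on predicate structure, this subsumes \emph{all} predicates uniformly, so negation needs no separate treatment. By Lemma~\ref{lem:nk-prim-charact}, every modification and $\pdup$ denotes the deterministic kernel associated with $a \mapsto \set{f_p(h)}{h \in a}$ for some partial function $f_p : \Hist \pfun \Hist$. It therefore suffices to check that both intersection with a fixed set and direct image under a partial function are Scott-continuous on $(\pH,\subseteq)$. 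Both are manifestly monotone, and both commute with directed unions—image because $\set{f_p(h)}{h \in \bigcup D} = \bigcup_{a \in D} \set{f_p(h)}{h \in a}$, and fixed intersection because $(\bigcup D) \cap b_t = \bigcup_{a \in D}(a \cap b_t)$—so each underlying function is Scott-continuous, and the first clause of Theorem~\ref{thm:continuouskernels} yields that the associated kernels are continuous.

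For the inductive step, assume the immediate subprograms of a composite program denote continuous kernels. For $\punion{\polp}{\polq}$, $\pseq{\polp}{\polq}$, and $\polp \opr \polq$ I would invoke the parallel-composition, sequential-composition, and choice clauses of Theorem~\ref{thm:continuouskernels} directly. For iteration $\pstar\polp$, recall that its denotation is the least fixpoint of $X \mapsto \ptrue \pcomp \den{p} \cmp X$; by the remark following Theorem~\ref{thm:continuouskernels} this map is continuous on the DCPO of continuous kernels (Theorem~\ref{thm:kernelDCPO}), so its least fixpoint exists \emph{in that DCPO} and is itself a continuous kernel. In each case the composite inherits continuity, closing the induction.

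I expect essentially all of the difficulty to reside in Theorem~\ref{thm:continuouskernels}, which I am assuming; within this corollary the only step demanding care is the base case, and even there the work amounts to the elementary observation that direct images and fixed intersections preserve directed unions. The one conceptual point worth flagging is that iteration is handled not by an appeal to a closed-form kernel but by the least-fixpoint argument, which is exactly why the continuous-DCPO structure of Theorem~\ref{thm:kernelDCPO} is needed to guarantee that the fixpoint lands back among the continuous kernels.
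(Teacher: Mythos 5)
Your proposal is correct and takes essentially the same route as the paper: the base cases reduce atomic programs to deterministic kernels of Scott-continuous functions (your computation is exactly the paper's Lemma~\ref{lem:atomic-programs-cont}, via Lemma~\ref{lem:nk-prim-charact} and Lemma~\ref{lem:det-preserves}), and the composite cases follow from the preservation clauses of Theorem~\ref{thm:continuouskernels}. Your handling of iteration by the least-fixpoint argument rather than by citing the iteration clause of Theorem~\ref{thm:continuouskernels} directly is only a cosmetic difference, since both rest on the same fact (Theorem~\ref{thm:kernelDCPO}) that the continuous kernels form a DCPO in which the supremum $\bigsqcup_n \pp{n}$ again lies.
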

The next theorem is the key result that enables a practical implementation:
\begin{theorem}
\label{thm:continuouspnkops}
The following semantic operations are continuous functions of the DCPO of continuous kernels: product, parallel composition, $\curry$, sequential composition, choice, iteration. (Figure~\ref{fig:program-continuity}.)
\end{theorem}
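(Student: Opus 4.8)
The goal is to show that each listed operation, regarded as a map between the DCPOs of continuous kernels (and their finite products), is Scott-continuous. The plan rests on two observations. First, for maps of several arguments it suffices to prove monotonicity together with \emph{separate} continuity in each argument: in any product of DCPOs a monotone, separately-continuous map is automatically jointly continuous. Indeed, given a directed set $\{(P_i,Q_i)\}$, apply separate continuity in each coordinate to compute $f(\bigsqcup_i P_i,\bigsqcup_j Q_j)=\bigsqcup_{i,j}f(P_i,Q_j)$, then use directedness to dominate each off-diagonal pair $(P_i,Q_j)$ by a diagonal element $(P_k,Q_k)$. Second, every operation is assembled from the monad operations $\unitop$ and $\bind$, and $\bind$ is defined by integration; hence the engine driving every case is Theorem~\ref{thm:int-continuous}, stating that integration is Scott-continuous in both the measure and the integrand. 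Throughout I use the structural fact---implicit in Theorem~\ref{thm:cpo} and Lemma~\ref{lem:kernelorder}---that the supremum of a directed family of kernels, evaluated on a \emph{Scott-open} set, is the pointwise supremum of the values; this is exactly what licenses the monotone-convergence steps below. Theorem~\ref{thm:continuouskernels} guarantees each operation lands back in the DCPO of continuous kernels, so the operations are well typed.

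I would first dispatch the easy cases. The operator $\curry$ is a homeomorphism by Lemma~\ref{lem:basic}(i), hence continuous. Choice $(P,Q)\mapsto r\cdot P+(1-r)\cdot Q$ evaluates on a Scott-open $B$ to the affine combination $r\,P(a,B)+(1-r)\,Q(a,B)$; since addition and scaling are Scott-continuous on $\R_+$ and the order on kernels is pointwise on Scott-opens, continuity in each argument is immediate. For product $(\mu,\nu)\mapsto\mu\times\nu$, I evaluate on basic Scott-open rectangles, where the value factors as $\mu(B_a)\cdot\nu(B_{a'})$, and invoke Theorem~\ref{thm:int-continuous} (equivalently Fubini) in each argument; continuity then propagates to all Scott-open sets of the product DCPO $\pH\times\pH$. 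Parallel composition is the composite of product with pushforward along the union map $\cup:\pH\times\pH\to\pH$; since $\cup$ is Scott-continuous, $\cup^{-1}$ sends Scott-opens to Scott-opens, so pushforward merely re-indexes the evaluation to a fixed Scott-open preimage and is therefore continuous, and the composite is continuous.

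The substantive case is sequential composition, $(P\cmp Q)(x,A)=\int_y P(x,dy)\,Q(y,A)$. Fixing $Q$ and varying $P$: because $Q$ is a continuous kernel, $y\mapsto Q(y,A)$ is Scott-continuous for Scott-open $A$, so integrating it against a directed supremum of measures $P(x,-)$ preserves suprema by Theorem~\ref{thm:int-continuous}(i). Fixing $P$ and varying $Q$: a directed family $Q_i$ satisfies $Q_i(y,A)\to(\bigsqcup_i Q_i)(y,A)$ pointwise and increasingly in $y$, and Theorem~\ref{thm:int-continuous}(ii) (monotone convergence) gives $\int_y P(x,dy)\,Q_i(y,A)\to\int_y P(x,dy)\,(\bigsqcup_i Q_i)(y,A)$. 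Separate continuity plus monotonicity then yields joint continuity as above.

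Finally, iteration is obtained by composition. Writing $P\star=\lfp(F_P)$ with $F_P\defeq\lambda X.\ \ptrue\pcomp P\cmp X$, the map $(P,X)\mapsto F_P(X)$ is jointly continuous, being built from the now-continuous parallel and sequential composition and the constant $\ptrue$. Currying---continuous by the $\curry$ case---turns this into a continuous map $P\mapsto F_P$ into $[D\to D]$, and composing with the continuous least-fixpoint operator $\lfp\in[[D\to D]\to D]$ from \S\ref{sec:primer} shows that $P\mapsto P\star$ is continuous. I expect the only real obstacle to be the structural fact that directed suprema of kernels are computed pointwise on Scott-open sets; once that is pinned down (it is the content underlying Theorem~\ref{thm:int-continuous}), every case reduces to composing continuous maps.
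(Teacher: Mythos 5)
Your overall skeleton---separate continuity in each argument plus monotonicity, assembled operation by operation on top of Theorem~\ref{thm:int-continuous}---is the same as the paper's, and your treatments of sequential composition, choice, $\curry$, and pushforward/parallel composition track the paper's lemmas essentially step for step. Your iteration argument is a genuinely different and cleaner route: the paper first proves $P\sqleq Q$ implies $P\star\sqleq Q\star$ via the approximants $\pp n$ and then swaps the two suprema $\tbigsqcup_n\tbigsqcup_{P\in\DD}\ksn Pn$ by hand, whereas you curry the jointly continuous map $(P,X)\mapsto\skp\pcomp P\cmp X$ and compose with the continuous least-fixpoint operator $\lfp\in[[D\to D]\to D]$ stated in \S\ref{sec:primer}; both are correct, and yours avoids the explicit induction on approximants.

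However, there is a genuine gap in the product case, and it sits exactly where the paper warns that ``the difficult part of the argument is monotonicity.'' You verify the behavior of $(\mu,\nu)\mapsto\mu\times\nu$ only on basic rectangles $B\times C$, where the value factors as $\mu(B)\cdot\nu(C)$, and then assert that continuity ``propagates to all Scott-open sets of the product.'' For the supremum equation that propagation is legitimate only \emph{after} you know the family $\set{\mu\times\nu}{\mu\in D}$ is directed---i.e., after monotonicity---and then only because two measures agreeing on the generating system of rectangles agree everywhere. For monotonicity itself nothing propagates: the order $\sqleq$ on measures over $\pH\times\pH$ quantifies over \emph{all} Scott-opens of the product, which are arbitrary unions of rectangles, and the inequality $(\mu_1\times\nu)(U)\leq(\mu_2\times\nu)(U)$ for such a union does not follow from the inequalities on individual rectangles, because the rectangles overlap and $\mu_1\sqleq\mu_2$ gives an inequality only on \emph{open} sets, so one cannot disjointify freely. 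The paper closes this hole with real work: Lemma~\ref{lem:compactopen} reduces to finite unions of rectangles via compact-open approximation, and then $\bigcup_n B_n\times C_n$ is rewritten as a disjoint union over the atoms $A$ of the Boolean algebra generated by the $C_n$---the non-open atoms appear only in the second coordinate, where the \emph{same} measure $\nu$ sits on both sides, while the first coordinate keeps open sets $\bigcup_{n\in N(A)}B_n$, where $\mu_1\leq\mu_2$ applies. Your parenthetical appeal to ``Fubini'' hints at an alternative repair, namely $(\mu\times\nu)(U)=\int\nu(U_x)\,\mu(dx)$ followed by Theorem~\ref{thm:int-continuous}(i), but making that precise requires proving that the section map $x\mapsto\nu(U_x)$ is Scott-continuous and that the Fubini identity holds on Scott-opens of the product (the product-$\sigma$-algebra versus product-topology issue the paper flags via Jones), none of which is in your proposal. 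Note also that the gap propagates: your parallel-composition case, and hence the joint continuity of $(P,X)\mapsto\skp\pcomp P\cmp X$ on which your iteration case rests, both depend on the product.
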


\begin{figure}[t]
\[
\begin{aligned}
\punion{\Big( \bigsqcup_{n\geq0} P_n \Big)}{Q} &~=~
  \bigsqcup_{n\geq0} \Big(\punion{P_n}{Q}\Big)\\
%
%
\Big( \bigsqcup_{n\geq0} P_n \Big) \opr Q &~=~
  \bigsqcup_{n\geq0} \Big(P_n \opr Q\Big)\\
%
%
\pseq{\Big( \bigsqcup_{n\geq0} P_n \Big)}{Q} &~=~
  \bigsqcup_{n\geq0} \Big(\pseq{P_n}{Q}\Big)\\
\pseq{Q}{\Big( \bigsqcup_{n\geq0} P_n \Big)} &~=~
  \bigsqcup_{n\geq0} \Big(\pseq{Q}{P_n}\Big)\\
\pstar{\Big( \bigsqcup_{n\geq0} P_n \Big)} &~=~
  \bigsqcup_{n\geq0} \Big( \pstar{P_n} \Big)\\
\end{aligned}
\]
\caption{Scott-Continuity of program operators (Theorem \ref{thm:continuouspnkops}).}
\label{fig:program-continuity}
\end{figure}

The semantics of iteration presented in \cite{\pnkpaper}, defined in
terms of an infinite process, coincides with the least fixpoint
semantics presented here. The key observation is the relationship
between weak convergence in the Cantor topology and fixpoint convergence 
in the Scott topology:
\begin{theorem}
\label{thm:approx}
Let $A$ be a directed set of probability measures with respect to
$\sqleq$ and let $f:\pH\to[0,1]$ be a Cantor-continuous function. Then
\begin{align*}
\lim_{\mu\in A}\int_{c\in\pH} f(c)\cdot d\mu &= \int_{c\in\pH} f(c)\cdot d(\tbigsqcup A).
\end{align*}
\end{theorem}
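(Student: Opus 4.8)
The plan is to read the statement as a weak–convergence (Portmanteau-style) result for the Cantor topology and reduce it to \emph{monotone} Scott-convergence on the basic Scott-open sets $B_c$, where Theorem~\ref{thm:int-continuous} already gives us continuity of integration. Write $\nu\defeq\tbigsqcup A$.

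First I would pin down the behavior of $\nu$ on Scott-open sets. For a Scott-open $B$, the indicator $\chrf B$ is a Scott-continuous map $\pH\to[0,\infty]$: it is monotone because $B$ is up-closed, and it preserves directed suprema because $B$ is Scott-open (if $\tbigsqcup C\in B$ then some $c\in C$ already lies in $B$). Hence Theorem~\ref{thm:int-continuous}(i) applies to $f=\chrf B$ and yields
\[
\nu(B) = \int \chrf B\, d\nu = \sup_{\mu\in A}\int\chrf B\,d\mu = \sup_{\mu\in A}\mu(B).
\]
In particular, since each $B_c=\up{\{c\}}$ is Scott-open for finite $c$, the net $\mu\mapsto\mu(B_c)$ converges monotonically to $\nu(B_c)$.

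Second I would propagate this convergence from the Scott-open basic sets to all basic Cantor-open sets and the whole clopen algebra $\BBo$. By inclusion–exclusion (Lemma~\ref{lem:inclexcl}, Eq.~\eqref{eq:inclexcl}), every measure $\rho$ satisfies $\rho(\atm ab)=\sum_{a\subs c\subs b}(-1)^{|c-a|}\rho(B_c)$, a fixed finite signed combination. A finite linear combination of convergent nets converges to the same combination of the limits (using directedness of $A$ to select a common tail), so $\mu(\atm ab)\to\nu(\atm ab)$, and therefore $\mu(C)\to\nu(C)$ for every $C\in\BBo$. Finally I would upgrade convergence on clopen sets to weak convergence against an arbitrary Cantor-continuous $f$. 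Since $(\pH,\CC)$ is a compact metric space, the locally constant functions—finite combinations $\sum_{a\subs b}\lambda_a\,\chrf{\atm ab}$ for a fixed finite $b$—form a point-separating subalgebra of $C(\pH)$ containing the constants, hence are uniformly dense by Stone–Weierstrass. Given $\epsilon>0$, I pick such a $g$ with $\|f-g\|_\infty<\epsilon/3$, so that $|\int f\,d\rho-\int g\,d\rho|\le\epsilon/3$ for every probability measure $\rho$, while $\int g\,d\mu\to\int g\,d\nu$ by the previous step. A three-$\epsilon$ estimate then gives $|\int f\,d\mu-\int f\,d\nu|<\epsilon$ for all $\mu$ past some $\mu_0\in A$, which is exactly $\lim_{\mu\in A}\int f\,d\mu=\int f\,d\nu$.

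The main obstacle is precisely the bridge between the two topologies: the order $\sqleq$ and its suprema are controlled only on Scott-open sets (where integration is continuous), whereas $f$ is merely Cantor-continuous and the natural test sets, the atoms $\atm ab$, are not Scott-open. The crux is that each Cantor basic clopen set is a \emph{finite} signed combination of the Scott-open sets $B_c$ (inclusion–exclusion) and each Cantor-continuous function is a \emph{uniform} limit of clopen-simple functions (compactness plus Stone–Weierstrass); these two finiteness facts are what let monotone Scott-convergence on the $B_c$ propagate to genuine weak convergence. Throughout I would be careful that every limiting step is reduced to a finite combination, so that directedness of $A$ supplies a common tail and the argument is valid for nets rather than just sequences.
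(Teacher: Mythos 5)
Your proof is correct and takes essentially the same route as the paper's: both transfer the monotone net convergence $\mu(B_c)\to(\tbigsqcup A)(B_c)$ on the Scott-basic open sets to the Cantor atoms $\atm ab$ via the finite inclusion--exclusion identity \eqref{eq:inclexcl}, and then conclude with a three-$\epsilon$ estimate against functions constant on the atoms of $\BB_b$. The only cosmetic difference is that you invoke Stone--Weierstrass to obtain the uniform approximation of $f$ by locally constant functions, whereas the paper obtains the same approximation directly from uniform continuity of $f$ on the compact Cantor space.
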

This theorem implies that $P^{(n)}$ weakly converges to $P\star$ in the
Cantor topology.
\cite{\pnkpaper} showed that $P^{(n)}$ also weakly converges to $P\oldstar$ in
the Cantor topology, where we let $P\oldstar$ denote the iterate of $P$ as defined in \cite{\pnkpaper}.
But since $(\pH,\CC)$ is a Polish space, this implies that $P\star = P\oldstar$.
\begin{lemma}
\label{lem:separation}
In a Polish space $D$, the values of
\begin{align*}
\int_{a\in D} f(a)\cdot\mu(da)
\end{align*}
for continuous $f:D\to[0,1]$ determine $\mu$ uniquely.
\end{lemma}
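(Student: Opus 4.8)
The plan is to show that any two Borel probability measures $\mu,\nu$ on $D$ that agree on all integrals of continuous functions $f:D\to[0,1]$ must coincide on the Borel $\sigma$-algebra. Since the closed subsets of $D$ form a $\pi$-system (they are stable under finite intersection) and generate the Borel sets, it suffices by the $\pi$--$\lambda$ theorem to prove that $\mu(C)=\nu(C)$ for every closed $C$; the hypothesis $\mu(D)=\nu(D)=1$ then seeds the $\lambda$-system argument, since $D$ is itself closed.

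First I would exploit the fact that a Polish space is metrizable: fix a compatible metric and, for a closed set $C$, approximate its indicator $\chrf{C}$ from above by the continuous functions $f_n \defeq \max(0,\,1-n\cdot d(\cdot,C))$, where $d(x,C)$ denotes the distance from $x$ to $C$. Each $f_n$ takes values in $[0,1]$ and is continuous because $x\mapsto d(x,C)$ is $1$-Lipschitz and $t\mapsto\max(0,1-nt)$ is continuous. Because $C$ is closed we have $d(x,C)=0$ iff $x\in C$, so the sequence $f_n$ decreases pointwise to $\chrf{C}$.

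Next I would apply the dominated convergence theorem (the constant $1$ dominates and is integrable, as $\mu$ and $\nu$ are probability measures) to conclude $\int f_n\,d\mu\to\mu(C)$ and $\int f_n\,d\nu\to\nu(C)$. By hypothesis $\int f_n\,d\mu=\int f_n\,d\nu$ for every $n$, so passing to the limit yields $\mu(C)=\nu(C)$. Dynkin's theorem then upgrades agreement on the generating $\pi$-system of closed sets to agreement on all of $\BB$, i.e.\ $\mu=\nu$.

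The only place where the Polish (indeed, merely metrizable) structure is essential is the construction of the approximants $f_n$: continuity of $x\mapsto d(x,C)$ together with the identity $\{d(\cdot,C)=0\}=C$ is exactly what lets continuous functions separate a closed set from its complement in the limit. The remaining steps are standard measure theory, and I expect the only subtleties to be the bookkeeping needed to invoke Dynkin's theorem—namely verifying that the closed sets form a $\pi$-system generating $\BB$ and that the total-mass condition $\mu(D)=\nu(D)=1$ is available—both of which hold here automatically.
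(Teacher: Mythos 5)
Your proof is correct, but it takes a genuinely different route from the paper's. The paper argues directly on an arbitrary Borel set $A$: by regularity of Borel measures on a Polish space, $\mu(A)$ is squeezed between $\mu(C)$ for compact $C\subseteq A$ and $\mu(U)$ for open $U\supseteq A$, and Urysohn's lemma supplies a continuous $f$ with $\chrf{C}\leq f\leq\chrf{U}$, giving $\left|\mu(A)-\int f\,d\mu\right|\leq\mu(U)-\mu(C)$, which can be made arbitrarily small; thus the integrals pin down $\mu(A)$ for every Borel set at once. You instead work only on the generating $\pi$-system of closed sets, approximate $\chrf{C}$ from above by the explicit Lipschitz functions $\max(0,1-n\,d(\cdot,C))$, and then invoke dominated convergence and Dynkin's $\pi$--$\lambda$ theorem. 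Your route is more elementary and more general: it needs neither inner regularity by compact sets (Ulam's tightness theorem, which is precisely where the paper uses Polishness) nor Urysohn's lemma, and it works verbatim on any metrizable space. It also sidesteps a subtlety that the paper's write-up glosses over: there the witnesses $C$, $U$, and hence $f$, depend on the measure, so to compare two measures $\mu,\nu$ one must first combine the respective regularity witnesses (e.g.\ replace $C$ by the union and $U$ by the intersection of the witnesses for $\mu$ and for $\nu$) before applying Urysohn; your approximants $f_n$ depend only on the closed set $C$, so equality of integrals transfers to equality on closed sets with no extra bookkeeping. What the paper's approach buys in exchange is a direct, quantitative recovery of $\mu(A)$ for an arbitrary Borel set $A$ from the integrals, without passing through a generating class and the $\pi$--$\lambda$ machinery.
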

\begin{corollary}
\label{thm:star}
$P\oldstar = \bigsqcup_n\pp n = P\star$.
\end{corollary}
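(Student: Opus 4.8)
The plan is to deduce the corollary from the two weak-convergence facts already assembled, exploiting that a Polish space is separated by integrals against bounded continuous functions. The right-hand equality $\bigsqcup_n \pp n = P\star$ is the very definition of $P\star$: by Theorem~\ref{thm:continuouskernels} the map $X \mapsto \skp \pcomp P \cmp X$ is Scott-continuous on the DCPO of continuous kernels, so its least fixpoint is the supremum of its approximants $\pp n$. Hence only the left-hand equality $P\oldstar = P\star$ requires argument, and I would prove it fiberwise, fixing an input $a \in \pH$ and comparing the output measures $P\oldstar(a,-)$ and $P\star(a,-)$.

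First I would check that $\{\pp n(a,-)\}_{n}$ is a $\sqleq$-directed family with supremum $P\star(a,-)$, so that Theorem~\ref{thm:approx} applies. The chain is increasing: $\pp 0 = \skp \sqleq \skp \pcomp P \cmp \skp = \pp 1$ by Lemma~\ref{lem:upperbound} (applied fiberwise), and the inductive step $\pp n \sqleq \pp{n+1}$ follows from monotonicity of the Scott-continuous iteration map; evaluating at $a$ preserves this order by Lemma~\ref{lem:kernelorder}. Applying Theorem~\ref{thm:approx} to this directed set yields, for every Cantor-continuous $f : \pH \to [0,1]$,
\[
\lim_n \int_{c \in \pH} f(c)\cdot d\pp n(a,-) = \int_{c \in \pH} f(c)\cdot d P\star(a,-),
\]
which is exactly weak convergence of $\pp n(a,-)$ to $P\star(a,-)$ in the Cantor topology.

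Next I would invoke the original ProbNetKAT paper~\cite{\pnkpaper}, which shows that the same sequence $\pp n(a,-)$ converges weakly to $P\oldstar(a,-)$; that is, $\int f\,d\pp n(a,-) \to \int f\,d P\oldstar(a,-)$ for every Cantor-continuous $f$. Thus $P\star(a,-)$ and $P\oldstar(a,-)$ are both weak limits of one and the same sequence. Since $(\pH,\CC)$ is a Polish space, Lemma~\ref{lem:separation} tells us that the values $\int f\,d\mu$ for continuous $f : \pH \to [0,1]$ determine $\mu$ uniquely, so the two limits must coincide: $P\star(a,-) = P\oldstar(a,-)$. As $a$ was arbitrary, the kernels are equal.

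The substance of the argument lives entirely in Theorem~\ref{thm:approx}, which bridges the two topologies by identifying the $\sqleq$-supremum (a Scott-theoretic limit) with the Cantor weak limit; granting it, the corollary is a short uniqueness-of-weak-limit argument. Consequently the only points needing care are verifying the hypotheses of Theorem~\ref{thm:approx} (that $\{\pp n(a,-)\}_n$ is genuinely $\sqleq$-directed with supremum $P\star(a,-)$) and matching the class of test functions across Theorem~\ref{thm:approx}, Lemma~\ref{lem:separation}, and the cited convergence result from~\cite{\pnkpaper} --- all three being the $[0,1]$-valued Cantor-continuous functions, which on the compact space $(\pH,\CC)$ agree up to affine rescaling with the bounded continuous functions used in the classical notion of weak convergence.
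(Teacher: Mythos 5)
Your proposal is correct and follows essentially the same route as the paper's own proof: the right-hand equality is dispatched as the Kleene least-fixpoint characterization of the Scott-continuous map $Q \mapsto \skp \pcomp P \cmp Q$, and the left-hand equality is proved fiberwise by combining Theorem~\ref{thm:approx} (Scott suprema agree with Cantor weak limits) with the weak convergence of $\pp n$ to $P\oldstar$ from the original paper and the separation property of Polish spaces (Lemma~\ref{lem:separation}). Your additional care about directedness of $\{\pp n(a,-)\}_n$ and the class of test functions only makes explicit what the paper leaves implicit.
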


\section{Approximation}
\label{sec:approx}

\newcommand{\aprx}[2]{[#1]_{#2}}

We now formalize a notion of approximation for ProbNetKAT
programs. Given a program $p$, we define the $n$-th approximant
$\aprx{p}{n}$ inductively as
\begin{align*}
  \aprx{p}{n} &\defeq p
    \quad(\text{for $p$ primitive})\\
  \aprx{q \opr r}{n} &\defeq \aprx{q}{n} \opr \aprx{r}{n}\\
  \aprx{q \pcomp r}{n} &\defeq \aprx{q}{n} \pcomp \aprx{r}{n}\\
  \aprx{q \cmp r}{n} &\defeq \aprx{q}{n} \cmp \aprx{r}{n}\\
  \aprx{q^*}{n} &\defeq \ksn{(\aprx{q}{n})}{n}
\end{align*}
Intuitively, $\aprx{p}{n}$ is just $p$ where iteration $-^*$ is
replaced by bounded iteration $\ksn{-}{n}$. Let $\den{p}_n$ denote the
Markov kernel obtained from the $n$-th approximant:
$\den{\aprx{p}{n}}$.

\begin{theorem}
\label{thm:finite-approx}
The approximants of a program $p$ form a $\sqleq$-increasing chain with supremum
$p$, that is \[
\den{p}_1 \sqleq \den{p}_2 \sqleq \dots \qquad\text{and}\qquad
\bigsqcup_{n \geq 0} \den{p}_n = \den{p}
\]
\end{theorem}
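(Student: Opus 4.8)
The plan is to prove both claims—monotonicity of the chain and the identity $\bigsqcup_{n\geq0}\den{p}_n=\den{p}$—simultaneously by structural induction on $p$. The base case ($p$ primitive) is immediate, since $\aprx{p}{n}=p$ for all $n$ and hence $\den{p}_n=\den{p}$ is a constant (hence $\sqleq$-increasing) chain with supremum $\den{p}$. Throughout the inductive step the guiding principle is that every program operator is Scott-continuous on the DCPO of continuous kernels (Theorem~\ref{thm:continuouspnkops}) and every subterm denotes a continuous kernel (Corollary~\ref{cor:continuouskernels}); thus all the suprema below live in that DCPO and all operators commute with directed suprema.

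The reusable device I would isolate first is a \emph{diagonal lemma} for the kernel DCPO: if $f$ is a binary operator that is continuous (equivalently, continuous in each argument separately) and $(X_n)_n$, $(Y_n)_n$ are $\sqleq$-increasing chains, then $(f(X_n,Y_n))_n$ is $\sqleq$-increasing and $\bigsqcup_n f(X_n,Y_n)=f(\bigsqcup_n X_n,\bigsqcup_n Y_n)$. Monotonicity of the diagonal follows from separate monotonicity, and the supremum identity follows because the diagonal is cofinal in the directed family $\{f(X_m,Y_n)\}_{m,n}$ (given $m,n$, take $k=\max(m,n)$, so $f(X_m,Y_n)\sqleq f(X_k,Y_k)$), whence $\bigsqcup_n f(X_n,Y_n)=\bigsqcup_{m,n} f(X_m,Y_n)=f(\bigsqcup_m X_m,\bigsqcup_n Y_n)$ by two applications of separate continuity. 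With this lemma the cases $p=q\opr r$, $p=q\pcomp r$, and $p=q\cmp r$ are uniform: the induction hypothesis supplies increasing chains with $\bigsqcup_n\den{q}_n=\den{q}$ and $\bigsqcup_n\den{r}_n=\den{r}$, we have $\den{p}_n=f(\den{q}_n,\den{r}_n)$ for the corresponding continuous $f$, and the diagonal lemma gives $\bigsqcup_n\den{p}_n=f(\den{q},\den{r})=\den{p}$.

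The one genuinely interesting case is iteration, $p=q^*$, where the approximant $\aprx{q^*}{n}=\ksn{(\aprx{q}{n})}{n}$ advances two approximations along a single diagonal index, so that $\den{p}_n=(\den{q}_n)^{(n)}$ with $(-)^{(m)}$ the $m$-th finite iterate. Here I would first record the \emph{joint monotonicity} of the iterates: $P^{(m)}$ is $\sqleq$-increasing in $m$, since the $P^{(m)}=F^{m+1}(\bot)$ are precisely the ascending Kleene chain of the continuous map $F(X)=\skp\pcomp P\cmp X$; and $P\mapsto P^{(m)}$ is monotone because $P^{(m)}$ is assembled from $P$ using the monotone operators $\pcomp$ and $\cmp$. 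Writing $P_n\defeq\den{q}_n$, this makes $(P_n)^{(m)}$ monotone in the pair $(n,m)$, so the diagonal $(P_n)^{(n)}$ is increasing and cofinal in $\{(P_n)^{(m)}\}_{n,m}$. Consequently $\bigsqcup_n\den{p}_n=\bigsqcup_n(P_n)^{(n)}=\bigsqcup_{n,m}(P_n)^{(m)}=\bigsqcup_n(P_n)\star$, and invoking continuity of the star operator (Theorem~\ref{thm:continuouspnkops}) with the induction hypothesis $\bigsqcup_n P_n=\den{q}$ yields $\bigsqcup_n(P_n)\star=(\bigsqcup_n P_n)\star=\den{q}\star=\den{q^*}$.

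I expect the only real obstacle to be the bookkeeping in the star case—verifying that the single diagonal index in $(\aprx{q}{n})^{(n)}$ actually reaches the full double supremum $\bigsqcup_{n,m}(P_n)^{(m)}$, which rests entirely on the joint monotonicity of $(P,m)\mapsto P^{(m)}$ and the resulting cofinality of the diagonal. Once the diagonal lemma and this joint monotonicity are in place, every case is a direct appeal to the continuity results already established, so no new analytic content is required.
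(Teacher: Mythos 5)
Your proposal is correct and takes essentially the same route as the paper: the paper's entire proof is the one-liner ``By induction on $p$ and continuity of the operators,'' and your structural induction---with the diagonal lemma for the binary operators and the joint-monotonicity/cofinality argument handling the doubly-indexed approximant $(\aprx{q}{n})^{(n)}$ in the star case---is precisely the elaboration of that line, resting on Theorem~\ref{thm:continuouspnkops} and Corollary~\ref{cor:continuouskernels} exactly as the paper intends.
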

\begin{proof*}
By induction on $p$ and continuity of the operators.\qedhere
\end{proof*}
This means that any program can be approximated by a sequence of star-free
programs, which---in contrast to general programs (Lemma~\ref{thm:pnk-may-be-continuous})---can only produce finite distributions.
These finite distributions are sufficient to compute the expected values of
Scott-continuous random variables:
\begin{corollary}
\label{thm:expectation-approx}
Let $\mu \in \Mon(\pH)$ be an input distribution, $p$ be a program, and
$Q : \pH \to [0,\infty]$ be a Scott-continuous random variable.
Let
\begin{align*}
  \nu \defeq \mu \bind \den{p} \qquad\text{and}\qquad
  \nu_n \defeq \mu \bind \den{p}_n
\end{align*}
denote the output distribution and its approximations. Then
\begin{align*}
  \ex Q {~\nu_0} \leq \ex Q {~\nu_1} \leq \dots \qquad\text{and}\qquad
  \sup_{n \in \N} \ex{Q}{~\nu_n} = \ex Q {\nu}
\end{align*}
\end{corollary}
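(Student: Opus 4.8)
The plan is to derive both conclusions from Scott-continuity by factoring the map $\den{p} \mapsto \ex{Q}{\nu}$ through two continuous stages. The first stage sends a continuous kernel $P$ to the output measure $\mu \bind P \in \meas(\pH)$; the second sends a measure $\rho$ to its expectation $\int Q\,d\rho$. By Theorem~\ref{thm:finite-approx} the approximants form a $\sqleq$-increasing chain $\den{p}_1 \sqleq \den{p}_2 \sqleq \dots$ with $\bigsqcup_n \den{p}_n = \den{p}$. Once both stages are shown to be Scott-continuous (hence monotone and preserving of directed suprema), applying them to this chain immediately yields the increasing chain $\ex{Q}{\nu_0} \leq \ex{Q}{\nu_1} \leq \dots$ from monotonicity and the identity $\sup_n \ex{Q}{\nu_n} = \ex{Q}{\nu}$ from preservation of the supremum.

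For the first stage I would show that $P \mapsto \mu \bind P$ is Scott-continuous from the DCPO of continuous kernels to $\meas(\pH)$. Monotonicity is direct: if $P \sqleq P'$, then $P(a,B) \leq P'(a,B)$ for every $a$ and every $B \in \SO$, so integrating against $\mu$ preserves the inequality on Scott-open sets and gives $\mu \bind P \sqleq \mu \bind P'$. For preservation of suprema, recall that directed suprema in both the DCPO of kernels and in $\meas(\pH)$ are computed pointwise on Scott-open sets; thus for a directed chain with $P = \bigsqcup_n P_n$ and any $B \in \SO$ we have $P(a,B) = \sup_n P_n(a,B)$, and since $a \mapsto P_n(a,B)$ is measurable and increasing in $n$, the Monotone Convergence Theorem gives $(\mu \bind P)(B) = \int \sup_n P_n(a,B)\,\mu(da) = \sup_n (\mu \bind P_n)(B)$. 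As this holds on every Scott-open set, $\mu \bind P = \bigsqcup_n (\mu \bind P_n)$ in $\meas(\pH)$. (Alternatively, this continuity is an instance of the continuity of sequential composition established in Theorem~\ref{thm:continuouspnkops}, viewing $\mu \bind -$ as composition with the fixed input distribution $\mu$.)

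The second stage is exactly Theorem~\ref{thm:int-continuous}(i): because $Q : \pH \to [0,\infty]$ is Scott-continuous, the map $\rho \mapsto \int Q\,d\rho$ is Scott-continuous with respect to $\sqleq$ on $\meas(\pH)$. Composing the two continuous stages and applying the result to the chain $\{\den{p}_n\}$ gives $\ex{Q}{\nu} = \int Q\,d(\bigsqcup_n \nu_n) = \sup_n \int Q\,d\nu_n = \sup_n \ex{Q}{\nu_n}$, where $\nu_n = \mu \bind \den{p}_n$ and $\nu = \mu \bind \den{p} = \bigsqcup_n \nu_n$; monotonicity of the same composite yields the increasing chain. (The off-by-one between the $n \geq 0$ indexing here and the $n \geq 1$ indexing of Theorem~\ref{thm:finite-approx} is immaterial to the supremum.)

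I expect the main obstacle to be the first stage, specifically justifying that $\bigsqcup_n \nu_n = \mu \bind \bigsqcup_n \den{p}_n$. This requires knowing that directed suprema in both DCPOs are given by pointwise suprema on the Scott-open sets, so that the Saheb-Djahromi order is respected and the resulting set function is genuinely the DCPO supremum rather than merely a pointwise sup that might fail to be a measure, and then invoking the Monotone Convergence Theorem at each Scott-open set. Everything downstream of this identity is a routine application of the already-established continuity of integration.
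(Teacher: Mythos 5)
Your proposal is correct and follows essentially the same route as the paper: the paper's own proof is a one-line appeal to Theorems~\ref{thm:finite-approx} and~\ref{thm:int-continuous}, and your two-stage factorization (continuity of $P \mapsto \mu \bind P$ via monotone convergence on Scott-open sets, then Theorem~\ref{thm:int-continuous}(i) for $\rho \mapsto \int Q\,d\rho$) is precisely the argument that one-liner leaves implicit. Your worry about the supremum $\bigsqcup_n \nu_n$ being a genuine measure rather than a mere pointwise sup is legitimate but already discharged by the paper's Theorem~\ref{thm:cpo}, which identifies directed suprema in $\meas(\pH)$ with pointwise suprema on $\SO$.
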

\begin{proof*}
Follows directly from Theorems~\ref{thm:finite-approx} and \ref{thm:int-continuous}.
\end{proof*}
Note that the approximations $\nu_n$ of the output distribution $\nu$ are
always finite, provided the input distribution $\mu$ is finite. Computing an
expected value with respect to $\nu$ thus simply amounts to computing a 
sequence of finite sums $\ex Q {\nu_0}, \ex Q {\nu_1}, \dots$, which is guranteed to converge monotonically to the analytical solution $\ex Q {\nu}$. The approximate semantics $\den{-}_n$ can be thought of
as an executable version of the denotational semantics $\den{-}$. We implement
it in the next section and use it to approximate network metrics based on
the above result.
The rest of this section gives more general approximation results for
measures and kernels on $\pH$,
and shows that we can in fact handle continuous input distributions as well.

A measure is a \emph{finite discrete measure} if it is of the form
$\sum_{a\in F} r_a\dirac a$, where $F\in\pfin{\pfin H}$ is a finite
set of finite subsets of packet histories $H$, $r_a\geq 0$ for all
$a\in F$, $\sum_{a\in F} r_a = 1$. Without loss of generality, we can
write any such measure in the form $\sum_{a\subs b} r_a\dirac a$ for
any $b\in\pfin H$ such that $\tbigcup F\subs b$ by taking $r_a=0$ for
$a\in 2^b - F$.

Saheb-Djahromi \cite[Theorem 3]{Saheb-Djahromi80} shows that every
measure is a supremum of a directed set of finite discrete
measures. This implies that the measures form a continuous DCPO with
basis consisting of the finite discrete measures.  In our model, the
finite discrete measures have a particularly nice characterization:

For $\mu$ a measure and $b\in\pfin H$, define the \emph{restriction of
$\mu$ to $b$} to be the finite discrete measure
\begin{align*}
\rest\mu b &\defeq \sum_{a\subs b} \mu(\atm ab)\dirac a.
\end{align*}

\begin{theorem}\label{thm:directed}
The set $\set{\rest\mu b}{b\in\pfin H}$ is a directed set with
supremum $\mu$. Moreover, the DCPO of measures is continuous with
basis consisting of the finite discrete measures.
\end{theorem}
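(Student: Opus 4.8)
The plan is to recognize $\rest\mu b$ as a pushforward measure and reduce everything to elementary facts about preimages of up-closed sets. Consider the map $\rho_b:\pH\to\pH$ given by $\rho_b(c)=c\cap b$. It is Scott-continuous: it is clearly monotone and preserves directed unions, since $(\bigcup D)\cap b=\bigcup_{c\in D}(c\cap b)$, so it is in particular measurable. In fact $\rho_b^{-1}(B_a)=B_a$ when $a\subseteq b$ and $\rho_b^{-1}(B_a)=\emptyset$ otherwise, so $\rho_b$ pulls Scott-open sets back to Scott-open sets. Its image lies in the finite lattice of subsets of $b$, and the pushforward places mass $\mu(\rho_b^{-1}(\{a\}))=\mu(\atm ab)$ on each $a\subseteq b$; thus $\rest\mu b$ is exactly the pushforward of $\mu$ along $\rho_b$, and $\rest\mu b(U)=\mu(\rho_b^{-1}(U))$ for every $U\in\SO$.

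Directedness and the fact that $\mu$ is an upper bound then follow from a single monotonicity observation. Since any $U\in\SO$ is up-closed and $c\cap b\subseteq c\cap b'\subseteq c$ whenever $b\subseteq b'$, we get $\rho_b^{-1}(U)\subseteq\rho_{b'}^{-1}(U)\subseteq U$. Applying $\mu$ and using the previous paragraph gives $\rest\mu b(U)\le\rest\mu{b'}(U)\le\mu(U)$ for all $U\in\SO$, i.e. $\rest\mu b\sqleq\rest\mu{b'}\sqleq\mu$. Hence the family is directed ($\rest\mu{b_1\cup b_2}$ dominates both $\rest\mu{b_1}$ and $\rest\mu{b_2}$) and is bounded above by $\mu$.

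To see that $\mu$ is the \emph{least} upper bound, I would show $\bigcup_b\rho_b^{-1}(U)=U$ for every $U\in\SO$: if $c\in U$ then $c\in B_a\subseteq U$ for some finite $a\subseteq c$, and for any $b\supseteq a$ we have $a\subseteq c\cap b$, so $c\in\rho_b^{-1}(U)$. Fixing a chain $b_0\subseteq b_1\subseteq\cdots$ with $\bigcup_n b_n=\Hist$ (possible since $\Hist$ is countable), the sets $\rho_{b_n}^{-1}(U)$ increase to $U$, so continuity of measure gives $\sup_n\rest\mu{b_n}(U)=\mu(U)$ and therefore $\sup_b\rest\mu b(U)=\mu(U)$. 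Consequently any upper bound $\nu$ satisfies $\nu(U)\ge\rest\mu b(U)$ for all $b$, whence $\nu(U)\ge\mu(U)$ for every $U\in\SO$, i.e. $\mu\sqleq\nu$. This proves $\bigsqcup_b\rest\mu b=\mu$, establishing the first assertion.

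For the ``Moreover'' part I would invoke Saheb-Djahromi \cite[Theorem 3]{Saheb-Djahromi80}, which establishes precisely that the measures on $\pH$ form a continuous DCPO with the finite discrete measures as a basis; the family $\{\rest\mu b\}$ just constructed supplies the concrete directed approximants. I expect this to be the genuine obstacle, because continuity is a statement about the way-below relation $\ll$, which is strictly finer than the directed-supremum property proved above. In particular $\rest\mu b$ need not itself be way-below $\mu$: for instance $\rest{\dirac\Hist}{b}=\dirac b$ is not way-below the top element $\dirac\Hist$ for $b\neq\emptyset$, since a chain approaching $\dirac\Hist$ can have $\nu(B_b)$ strictly below $1$ throughout. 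Thus the basis/continuity claim does not drop out formally from the first part, and one really needs the approximation (splitting/transport) argument underlying Saheb-Djahromi's theorem; the role of the explicit family $\rest\mu b$ is to give the directed approximants that make the statement concrete in our setting.
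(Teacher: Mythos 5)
Your proof of the first sentence is correct and follows essentially the same route as the paper: your pushforward identity $(\rest\mu b)(U)=\mu(\rho_b^{-1}(U))$ is Lemma~\ref{lem:zz}(i), the monotonicity $\rest\mu b\sqleq\rest\mu{b'}\sqleq\mu$ for $b\subs b'$ is Lemma~\ref{lem:zz}(iii), and the supremum computation via $\bigcup_{b}\set{c}{c\cap b\in U}=U$ is exactly the paper's. Your side remark that restriction does not produce way-below approximants is also correct: $\rest{\dirac H}{b}=\dirac b$, and for $b\neq\emptyset$ the chain $\nu_n=(1-\tfrac1n)\dirac H+\tfrac1n\dirac\emptyset$ has supremum $\dirac H$ while $\nu_n(B_b)=1-\tfrac1n<1$ for every $n$, so $\dirac b\not\ll\dirac H$.

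That obstacle you identified is, however, left standing by your own proposal, and it is precisely where the paper does its real work; this is a genuine gap. Saheb-Djahromi's Theorem~3, as the paper states and uses it, asserts only that every measure is a supremum of a directed set of finite discrete measures---the same kind of statement you proved in the first part---so invoking it does not deliver the basis claim: continuity of the DCPO requires that each $\mu$ be the supremum of a directed set of finite discrete measures \emph{way-below} $\mu$, and as your counterexample shows, the family $\set{\rest\mu b}{b\in\pfin H}$ is not such a set. The paper closes this gap with an explicit construction: it takes $L=\set{\sum_{a\subs b}r_a\delta_a}{b\in\pfin H,\ r_a<\mu(\atm ab)\text{ for all }a\neq\emptyset}$ (if $L$ is empty then $\mu=\dirac\emptyset$ and there is nothing to prove), checks that $L$ is directed with supremum $\mu$, and proves $\nu\ll\mu$ for each $\nu=\sum_{a\subs b}r_a\delta_a\in L$ as follows. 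If $\mu\sqleq\bigsqcup D$ with $D$ directed, then Lemma~\ref{lem:zz}(iii) and (v) give $\rest\mu b\sqleq\rest{(\bigsqcup D)}{b}=\bigsqcup_{\rho\in D}\rest{\rho}{b}$; for every Scott-open $B\neq B_\emptyset$ one has $(\rest\nu b)(B)\leq(\rest\mu b)(B)\leq\sup_{\rho\in D}(\rest\rho b)(B)$, with the first inequality strict (thanks to $r_a<\mu(\atm ab)$) unless both sides vanish, so some $\rho\in D$ satisfies $(\rest\nu b)(B)\leq(\rest\rho b)(B)$; and since these quantities depend on $B$ only through the finite set $\set{a\subs b}{a\in B}$, directedness of $D$ yields a single $\rho\in D$ working for all such $B$ simultaneously, whence $\nu=\rest\nu b\sqleq\rest\rho b\sqleq\rho$. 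The strict downward perturbation $r_a<\mu(\atm ab)$ is exactly the device that repairs the failure of way-belowness you diagnosed; without it (or an equivalent argument) the second sentence of the theorem remains unproven, so your proposal establishes only the first.
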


We can lift the result to continuous kernels, which implies that {\em
every program is approximated arbitrarily closely by programs whose
outputs are finite discrete measures}.

\begin{lemma}
\label{cor:guarded}
Let $b\in\pfin H$. Then $(P\cmp b)(a,-) = \rest{P(a,-)}b$.
\end{lemma}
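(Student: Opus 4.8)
The plan is to unfold both sides on an arbitrary Borel set and check that they agree. First I would observe that the filter $b$ denotes the \emph{deterministic} kernel $c\mapsto\unit{c\cap b}$: by Lemma~\ref{lem:pred-charact}, the filter that retains exactly the histories lying in $b$ satisfies $\den{b}(c)=\unit{c\cap b}=\dirac{c\cap b}$. Hence, by the definition of kernel composition~\eqref{eq:kernl-kernel-comp}, for any $A\in\BB$ we have
\[
(P\cmp b)(a,A)=\int_{c\in\pH}\dirac{c\cap b}(A)\,P(a,dc)=\int_{c\in\pH}[c\cap b\in A]\,P(a,dc).
\]

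Next I would decompose this integral over the atoms $\atm{a'}{b}$ for $a'\subs b$. By~\eqref{eq:atoms}, $\atm{a'}{b}=\set{c\in\pH}{c\cap b=a'}$; since $b$ is finite, these are finitely many Borel sets that partition $\pH$ (every $c$ meets $b$ in exactly one subset $a'\subs b$), so they are precisely the fibers of the map $c\mapsto c\cap b$. On $\atm{a'}{b}$ the integrand $[c\cap b\in A]=[a'\in A]$ is constant, so additivity of the integral over this finite measurable partition yields
\[
(P\cmp b)(a,A)=\sum_{a'\subs b}[a'\in A]\,P(a,\atm{a'}{b})=\sum_{a'\subs b}P(a,\atm{a'}{b})\,\dirac{a'}(A).
\]
The right-hand side is exactly $\rest{P(a,-)}{b}(A)$ by the definition of the restriction. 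As $A$ was arbitrary, the two measures coincide, which proves the claim. (By Lemma~\ref{lem:inclexcl} it would in fact suffice to verify agreement only on the generating sets $B_c$ for finite $c$, but the computation above already handles all Borel sets uniformly.)

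I do not expect a genuine obstacle here: the entire content is the single observation that the atoms $\atm{a'}{b}$ are the fibers of $c\mapsto c\cap b$, so that collapsing $P(a,-)$ along these fibers is exactly what both the composition with the filter $b$ and the restriction $\rest{\cdot}{b}$ compute. The only points needing any care are confirming that the filter acts as the deterministic kernel $\dirac{c\cap b}$ (immediate from Lemma~\ref{lem:pred-charact}) and justifying the split of the integral, which is just additivity over a finite measurable partition together with the integrand being constant on each piece.
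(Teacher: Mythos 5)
Your proof is correct and takes essentially the same route as the paper's: the paper writes $(P\cmp b)(a,B)=P(a,f^{-1}(B))$ for $f(c)=c\cap b$ and then invokes Lemma~\ref{lem:zz}(i), whose proof is exactly your decomposition of the integral over the atoms $\atm{a'}{b}$, i.e.\ the fibers of $f$. The only slight misattribution is your appeal to Lemma~\ref{lem:pred-charact}: the filter $b$ here is a finite set of \emph{histories}, not a syntactic predicate, and it is by definition (cf.\ Theorem~\ref{thm:kernelDCPO}) the deterministic kernel induced by $c\mapsto c\cap b$---but this changes nothing in substance.
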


Now suppose the input distribution $\mu$ in Corollary~\ref{thm:expectation-approx}
is continuous. By Theorem~\ref{thm:directed}, $\mu$ is the supremum of an
increasing chain of finite discrete measures $\mu_1 \sqleq \mu_2 \sqleq \dots$.
If we redefine $\nu_n \defeq \mu_n \bind \den{p}_n$ then by Theorem~\ref{thm:int-continuous} the $\nu_n$ still approximate the output distribution $\nu$
and Corollary~\ref{thm:expectation-approx} continues to hold. Even though the input distribution is now continuous, the output distribution can still be approximated by a chain of finite distributions and hence the expected value
can still be approximated by a chain of finite sums.
\begin{figure}[t]
\centering
\begin{minipage}[b]{.485\columnwidth}
  \includegraphics[width=.95\columnwidth]{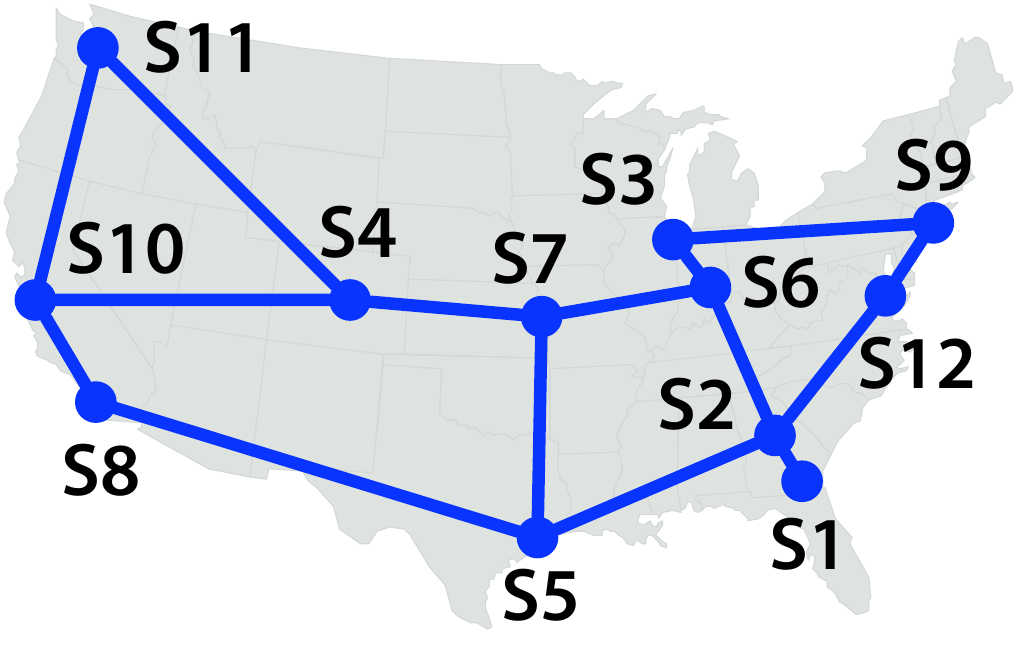}
  \centerline{\small(a) Topology}
\end{minipage}\hfill
\begin{minipage}[b]{.485\columnwidth}
  \includegraphics[width=.95\columnwidth,clip,trim={0 10px 5px 0}]{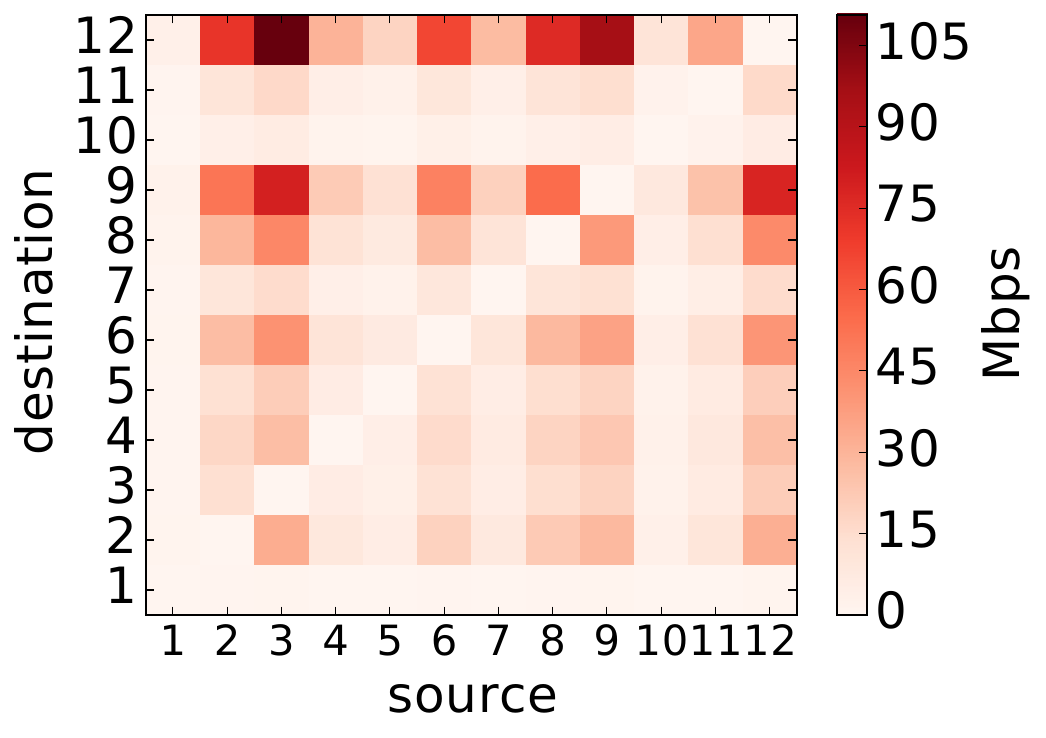}
  \centerline{\small(b) Traffic matrix}
\end{minipage}
\begin{minipage}[b]{.485\columnwidth}
  \includegraphics[width=\columnwidth,clip,trim={0 22px 0 -22px}]{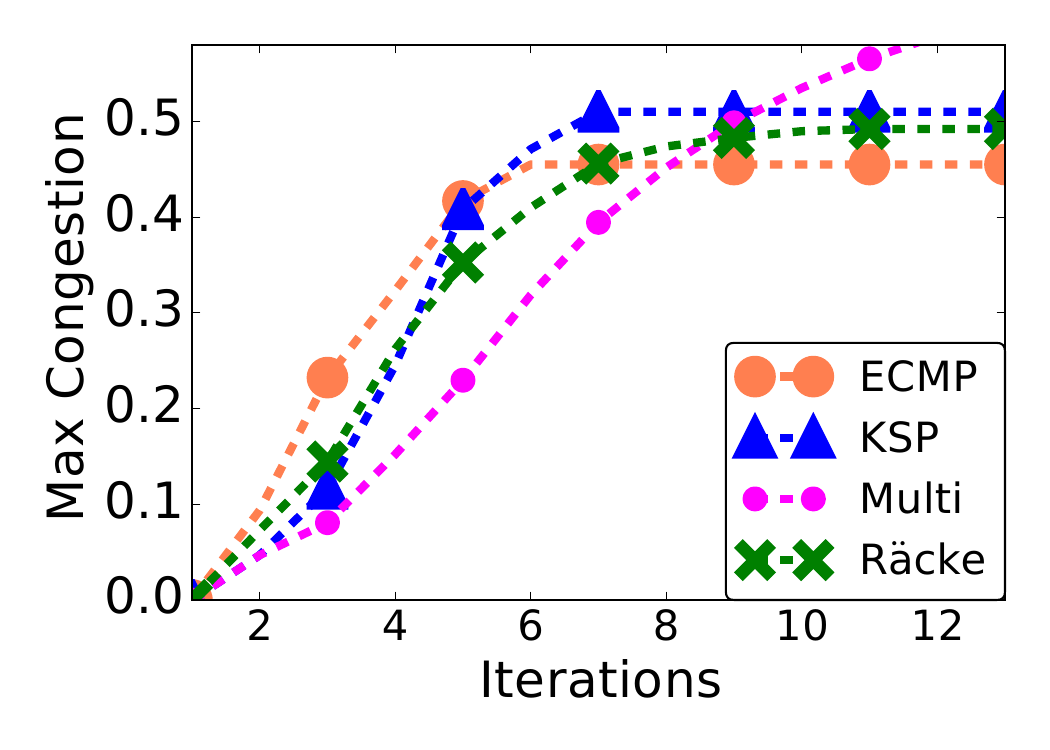}
  \centerline{\small(c) Max congestion}
\end{minipage}\hfill
\begin{minipage}[b]{.485\columnwidth}
  \includegraphics[width=\columnwidth,clip,trim={0 22px 0 -22px}]{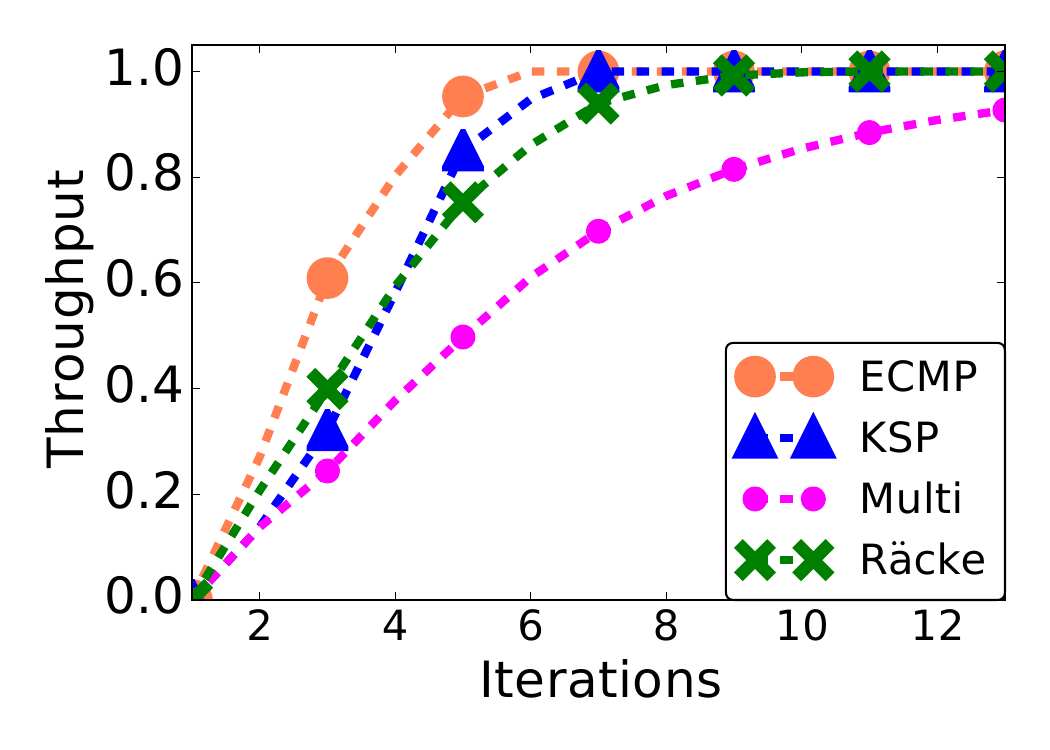}
  \centerline{\small(d) Throughput}
\end{minipage}
\begin{minipage}[b]{.485\columnwidth}
  \includegraphics[width=\columnwidth,clip,trim={0 22px 0 -22px}]{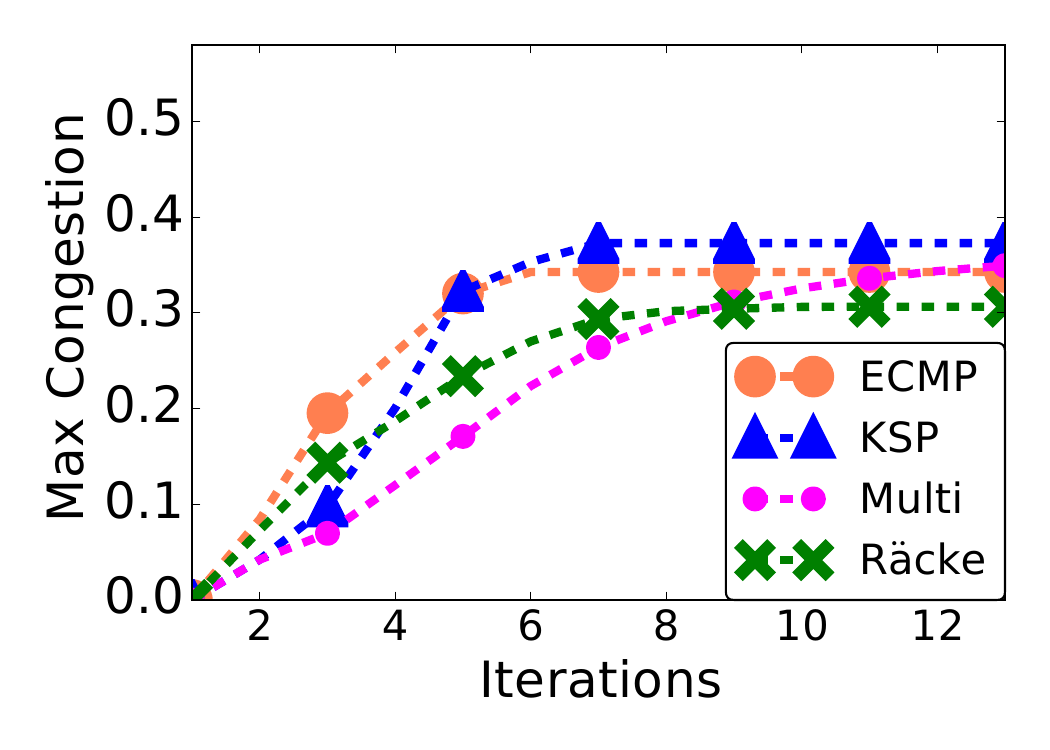}
  \centerline{\small(e) Max congestion}
\end{minipage}\hfill
\begin{minipage}[b]{.485\columnwidth}
  \includegraphics[width=\columnwidth,clip,trim={0 22px 0 -22px}]{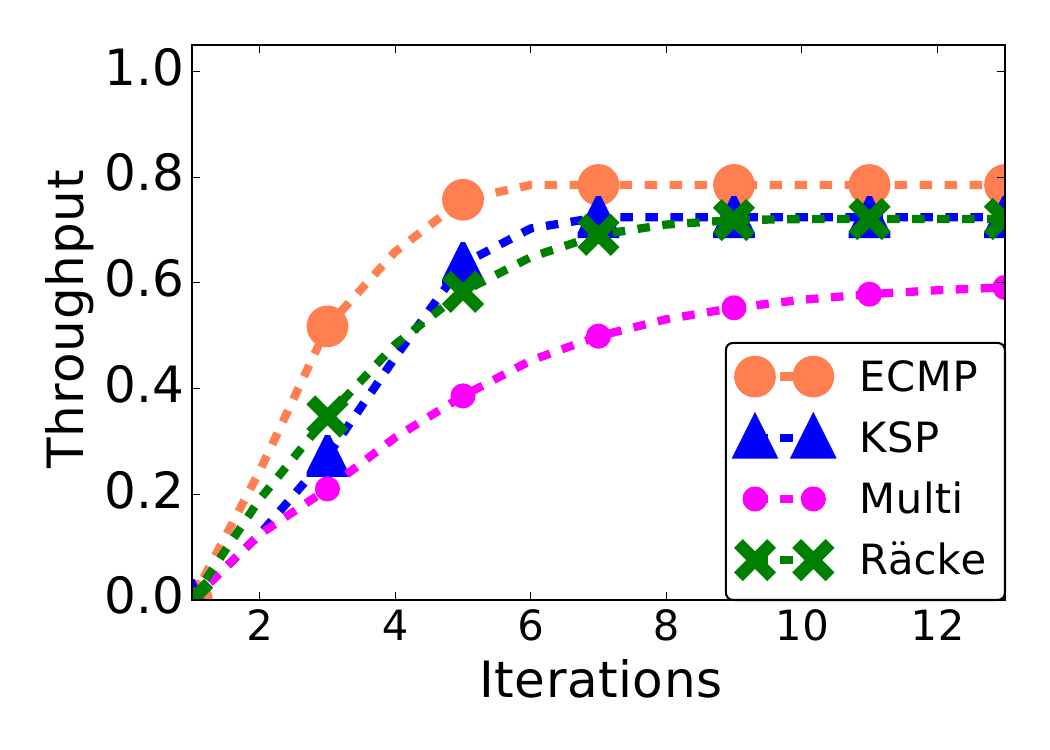}
  \centerline{\small(f) Throughput}
\end{minipage}
\begin{minipage}[b]{.485\columnwidth}
  \includegraphics[width=\columnwidth,clip,trim={0 22px 0 -22px}]{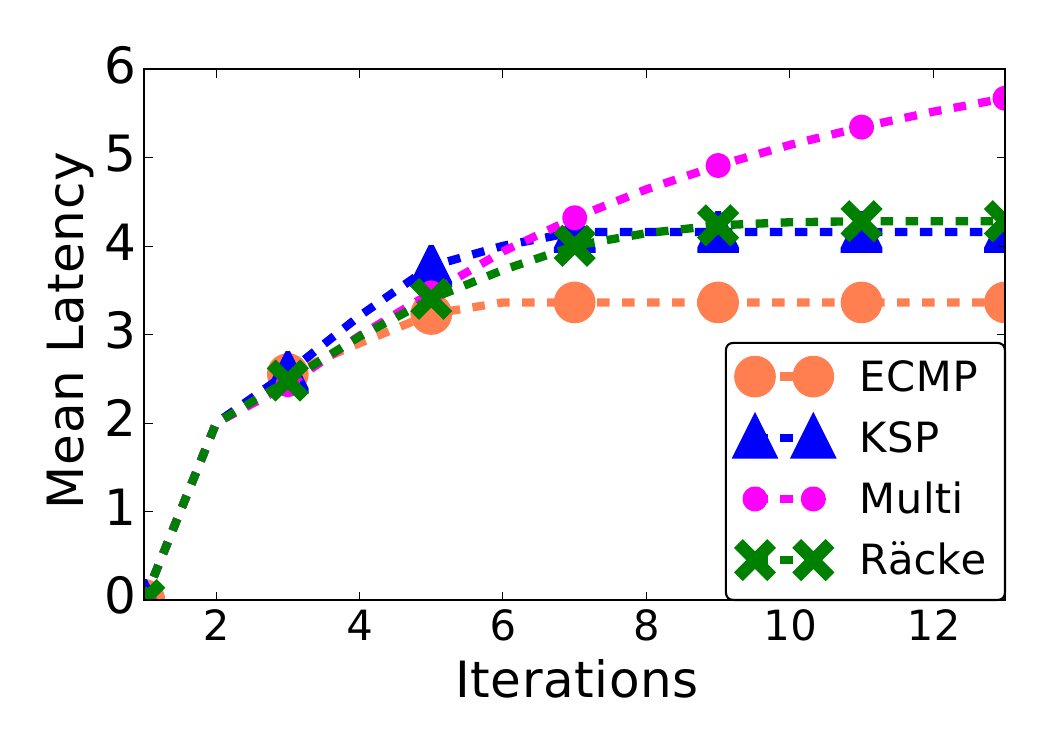}
  \centerline{\small(g) Path length}
\end{minipage}\hfill
\begin{minipage}[b]{.485\columnwidth}
  \includegraphics[width=\columnwidth,clip,trim={0 22px 0 -22px}]{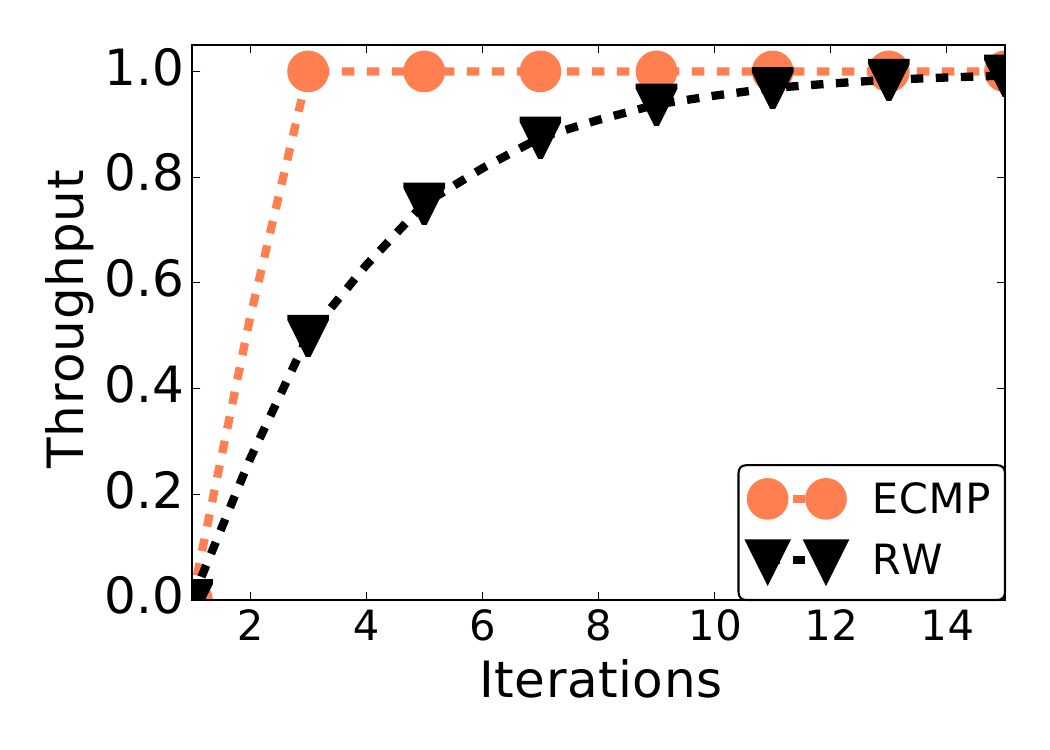}
  \centerline{\small(h) Random walk}
\end{minipage}
\caption{Case study with Abilene: (c, d) without loss. (e, f) with faulty links.
(h) random walk in 4-cycle: all packets are eventually delivered.}
\label{fig:abilene}
\end{figure}

\section{Implementation and Case Studies}
\label{sec:case-study}

We built a simple interpreter for ProbNetKAT in OCaml that implements
the denotational semantics as presented in
Figure~\ref{fig:probnetkat}. Given a query, the interpreter
approximates the answer through a monotonically increasing sequence of
values (Theorems~\ref{thm:finite-approx}
and \ref{thm:expectation-approx}). Although preliminary in
nature---more work on data structures and algroithms for manipulating
distributions would be needed to obtain an efficient
implementation---we were able to use our implementation to conduct
several case studies involving probabilistic reasoning about
properties of a real-world network: Internet2's Abilene backbone. 

\paragraph{Routing.}
In the networking literature, a large number of traffic engineering
(TE) approaches have been explored. We built ProbNetKAT
implementations of each of the following routing schemes:
\begin{itemize}
\item{\textbf{Equal Cost Multipath Routing (ECMP):}} The network
  uses all least-cost paths between each source-destination pair, and
  maps incoming traffic flows onto those paths randomly. ECMP can
  reduce congestion and increase throughput, but can also perform
  poorly when multiple paths traverse the same bottleneck link.
\item{\textbf{$k$-Shortest Paths (KSP):}} The network uses the
  top $k$-shortest paths between each pair of hosts, and again maps
  incoming traffic flows onto those paths randomly. This approach
  inherits the benefits of ECMP and provides improved fault-tolerance
  properties since it always spreads traffic across $k$ distinct
  paths.
\item{\textbf{Multipath Routing (Multi):}} This is similar to KSP, except that it
  makes an independent choice from among the $k$-shortest paths at
  each hop rather than just once at ingress. This approach dynamically
  routes around bottlenecks and failures but can use extremely long
  paths---even ones containing loops.
\item{\textbf{Oblivious Routing (R\"{a}cke):}} The network
  forwards traffic using a pre-computed probability distribution on
  carefully constructed overlays. The distribution is constructed in
  such a way that guarantees worst-case congestion within a
  polylogarithmic factor of the optimal scheme, regardless of the
  demands for traffic.
\end{itemize}
Note that all of these schemes rely on some form of randomization and
hence are probabilistic in nature.


\paragraph{Traffic Model.}
Network operators often use traffic models constructed from historical
data to predict future performance. We built a small OCaml tool that
translates traffic models into ProbNetKAT programs using a simple
encoding. Assume that we are given a traffic matrix (TM) that relates
pairs of hosts $(u,v)$ to the amount of traffic that will be sent from
$u$ to $v$. By normalizing each TM entry using the aggregate demand
$\sum_{(u,v)}TM(u,v)$, we get a probability distribution $d$ over
pairs of hosts. For a pair of source and destination $(u,v)$, the
associated probability $d(u,v)$ denotes the amount of traffic from $u$
to $v$ relative to the total traffic. Assuming uniform packet sizes,
this is also the probability that a random packet generated in the
network has source $u$ and destination $v$. So, we can encode a TM as
a program that generates packets according to $d$:
\[
  \begin{aligned}
    inp &\triangleq ~\scalebox{2}{$\oplus$}_{d(u,v)} \pi_{(u,v)}! \\
    \mbox{where, }\pi_{(u,v)}! &\triangleq \pseq{\pseq{\modify{\src}{u}}{\modify{\dst}{v}}}{\modify{\sw}{u}}
  \end{aligned}
\]
$\pi_{(u,v)}!$ generates a packet at $u$ with source $u$ and
destination $v$. For any (non-empty) input, $inp$ generates a
distribution $\mu$ on packet histories which can be fed to the network
program. For instance, consider a uniform traffic distribution for our
4-switch example (see Figure~\ref{fig:4cycle}) where each node sends equal traffic to every other
node. There are twelve $(u,v)$ pairs with $u \neq v$. So,
$d(u,v)_{u \neq v} = \frac{1}{12}$ and $d(u,u)=0$. We also store the
aggregate demand as it is needed to model queries such as expected
link congestion, throughput etc.

\paragraph{Queries.}%
Our implementation can be used to answer probabilistic queries about a
variety of network performance properties. \S\ref{sec:overview} showed
an example of using a query to compute expected congestion. We can
also measure expected mean latency in terms of path length:

\begin{lstlisting}[language=ML]
let path_length (h:Hist.t) : Real.t =
  Real.of_int ((Hist.length h)/2 + 1)
let lift_query_avg
  (q:Hist.t -> Real.t) : (HSet.t -> Real.t) =
  fun hset ->
    let n = HSet.length hset in
    if n = 0 then Real.zero else
    let sum = HSet.fold hset ~init:Real.zero
      ~f:(fun acc h -> Real.(acc + q h)) in
    Real.(sum / of_int n)
\end{lstlisting}

\noindent The latency function (\texttt{path\_length}) counts the number of
switches in a history. We lift this function to sets and compute the
expectation (\texttt{lift\_query\_avg}) by computing the average over
all histories in the set (after discarding empty sets).

\paragraph*{Case Study: Abilene.}
To demonstrate the applicability of ProbNetKAT for reasoning about a
real network, we performed a case study based on the topology and
traffic demands from Internet2's Abilene backbone network as shown in
Figure~\ref{fig:abilene}~(a). We evaluate the traffic engineering
approaches discussed above by modeling traffic matrices based on
NetFlow traces gathered from the production network. A sample TM is
depicted in Figure~\ref{fig:abilene}~(b).

Figures~\ref{fig:abilene}~(c,d,g) show the expected maximum
congestion, throughput and mean latency. Because we model a network
using the Kleene star operator, we see that the values converge
monotonically as the number of iterations used to approximate Kleene
star increases, as guaranteed by Corollary~\ref{thm:expectation-approx}.

\paragraph{Failures.}%
Network failures such as a faulty router or a link going down are
common in large networks~\cite{gill11}. Hence, it is important to be
able to understand the behavior and performance of a network in the
presence of failures. We can model failures by assigning
empirically measured probabilities to various components---e.g., we
can modify our encoding of the topology so that every link in the
network drops packets with probability $\frac{1}{10}$:
\[
\def\arraycolsep{2pt}
\begin{array}{rcl}
\ell_{1,2} &\defeq&
\pseq{{\pseq{\pseq{\match{\sw}{S_1}}{\match{\pt}{2}}}{\pdup}}}
(({\pseq{\pseq{\modify{\sw}{S_2}}{\modify{\pt}{1}}}{\pdup}}) \oplus_{0.9}
\drop)\\
&\pcomp& \pseq{{\pseq{\pseq{\match{\sw}{S_2}}{\match{\pt}{1}}}{\pdup}}} (({\pseq{\pseq{\modify{\sw}{S_1}}{\modify{\pt}{2}}}{\pdup}}) \oplus_{0.9} \drop)
\end{array}
\]
Figures~\ref{fig:abilene}~(e-f) show the network performance for
Abilene under this failure model. As expected, congestion and
throughput decrease as more packets are dropped. As every link drops
packets probabilistically, the relative fraction of packets delivered
using longer links decreases---hence, there is a decrease in mean
latency. 

\paragraph{Loop detection.}%
Forwarding loops in a network are extremely undesirable as they
increase congestion and can even lead to black holes. With
probabilistic routing, not all loops will necessarily result in a
black hole---if there is a non-zero probability of exiting a loop,
every packet entering it will eventually exit. Consider the example of
random walk routing in the four-node topology from
Figure~\ref{fig:4cycle}. In a random walk, a switch either forwards
traffic directly to its destination or to a random neighbor. As
packets are never duplicated and only exit the network when they reach
their destination, the total throughput is equivalent to the fraction
of packets that exit the network. Figure~\ref{fig:abilene}~(h) shows
that the fraction of packets exiting increases monotonically with
number of iterations and converges to $1$. Moreover, histories can be
queried to test if it encountered a topological loop by checking for
duplicate locations. Hence, given a model that computes all possible
history prefixes that appear in the network, we can query it for
presence of loops. We do this by removing $\mathit{out}$ from our
standard network model and using
$\pseq{\mathit{in}}{\pseq{\pstar{(\pseq{p}{t})}}{p}}$ instead. This
program generates the required distribution on history prefixes.
Moreover, if we generalize packets with wildcard fields, similar to
HSA~\cite{hsa}, we can check for loops symbolically. We have extended
our implementation in this way, and used it to check whether the
network exhibits loops on a number of routing schemes based on
probabilistic forwarding.

\section{Related Work}
\label{sec:rel-work}

This paper builds on previous work on \NK\ \cite{AFGJKSW13a,FKMST15a}
and \PNK\ \cite{FKMRS15a}, but develops a semantics based on ordered
domains as well as new applications to traffic engineering.

\paragraph*{Domain Theory.}
The domain-theoretic treatment of probability measures goes back to
the seminal work of Saheb-Djahromi \cite{Saheb-Djahromi80}, who was
the first to identify and study the CPO of probability measures.
Jones and Plotkin \cite{JonesPlotkin89,Jones89} generalized and
extended this work by giving a category-theoretical treatment and
proving that the probabilistic powerdomain is a monad. It is an open
problem if there exists a cartesian-closed category of continuous
DCPOs that is closed under the probabilistic powerdomain;
see \cite{JungTix98} for a discussion. This is an issue for
higher-order probabilistic languages, but not for \probnetkat, which
is strictly first-order.
Edalat \cite{edalat1994domain,edalat1996scott,edalat1998computational}
gives a computational account of measure theory and integration for
general metric spaces based on domain theory.  More recent papers on
probabilistic powerdomains are \cite{JungTix98,Heckmann94,Graham88}.
All this work is ultimately based on Scott's pioneering
work \cite{Scott72}.

\paragraph*{Probabilistic Logic and Semantics.}%
Computational models and logics for probabilistic programming have
been extensively studied. Denotational and operational semantics for
probabilistic while programs were first studied by
Kozen \cite{K81c}. Early logical systems for reasoning about
probabilistic programs were proposed
in \cite{K85a,Ramshaw79,SahebDjahromi78}. There are also numerous
recent
efforts \cite{GHNR14,DBLP:journals/corr/GretzJKKMO15,KMP13a,LarsenMardarePanangaden12,MorganMcIverSeidel96}.
Sankaranarayanan et al. \cite{sankaranarayanan2013static} propose
static analysis to bound the the value of probability queries.
Probabilistic programming in the context of artificial intelligence
has also been extensively studied in recent
years \cite{Gordon11,Roy11}.  Probabilistic automata in several forms
have been a popular model going back to the early work of
Paz \cite{Paz71}, as well as more recent
efforts \cite{McIverCohenMorgan08,Segala06,Segala95}. Denotational
models combining probability and nondeterminism have been proposed by
several
authors \cite{McIverMorgan04,TixKeimelPlotkin09,VaraccaWinskel06}, and
general models for labeled Markov processes, primarily based on Markov
kernels, have been studied
extensively \cite{Doberkat07,Panangaden98probabilisticrelations,Panangaden09}.
  
Our semantics is also related to the work on event
structures~\cite{NielsenPW79,VaraccaVW06}. A (Prob)NetKAT program
denotes a simple (probabilistic) event structure: packet histories are
events with causal dependency given by extension and with all finite
subsets consistent. We have to yet explore whether the event structure
perspective on our semantics could lead to further applications and
connections to e.g. (concurrent) games.

\paragraph{Networking.}
Network calculus is a general framework for analyzing network behavior
using tools from queuing theory \cite{cruz91}. It has been used to
reason about quantitative properties such as latency, bandwidth, and
congestion. The stochastic branch of network calculus provides tools
for reasoning about the probabilistic behavior, especially in the
presence of statistical multiplexing, but is often considered
difficult to use. In contrast, \PNK\ is a self-contained framework
based on a precise denotational semantics.

Traffic engineering has been extensively studied and a wide variety of
approaches have been proposed for data-center
networks~\cite{al2010hedera,jeyakumar2013eyeq,perry14,zhangVlb05,shieh2010seawall}
and wide-area
networks~\cite{swan,b4,fortz02,applegate2003making,racke2008optimal,kandula2005walking,suchara2011network,he2008toward}.
These approaches optimize for metrics such as congestion, throughput,
latency, fault tolerance, fairness etc. Optimal techniques typically
have high overheads~\cite{danna2012practical}, but
oblivious~\cite{kodialam2009oblivious,applegate2003making} and hybrid
approaches with near-optimal performance~\cite{swan,b4} have recently
been adopted.

\section{Conclusion}
\label{sec:concl}
This paper presents a new order-theoretic semantics for ProbNetKAT in
the style of classical domain theory. The semantics allows a standard
least-fixpoint treatment of iteration, and enables new modes of
reasoning about the probabilistic network behavior. We have used these
theoretical tools to analyze several randomized routing protocols on
real-world data.

The main technical insight is that all programs and the operators defined on
them are continuous, provided we consider the right notion of continuity: that
induced by the Scott topology. Continuity enables precise approximation,
and we exploited this to build an implementation. But continuity is also a powerful
tool for reasoning that we expect to be very helpful in the future development
of \probnetkat's meta theory.
To establish continuity we had to switch from
the Cantor to the Scott topology, and give up reasoning in terms of a metric.
Luckily we were able to show a strong correspondence between the two topologies
and that the Cantor-perspective and the Scott-perspective lead to equivalent
definitions of the semantics.
This allows us to choose whichever perspective is best-suited for the task at hand.

\paragraph*{Future Work.}%
The results of this paper are general enough to accommodate arbitrary extensions
of \probnetkat with continuous Markov kernels or continuous operators on such
kernels. An obvious next step is therefore to investigate extension of the language
that would enable richer network models.
Previous work on deterministic NetKAT included a decision procedure and a sound and
complete axiomatization. In the presence of probabilities we expect a
decision procedure will be hard to devise, as witnessed by several
undecidability results on probabilistic automata. We intend to explore
decision procedures for restricted fragments of the language.  Another
interesting direction is to compile ProbNetKAT programs into suitable
automata that can then be analyzed by a probabilistic model checker
such as PRISM~\cite{KwiatkowskaNP11}. A sound and complete
axiomatization remains subject of further investigation, we can draw
inspiration from recent work \cite{KMP13a,MardarePP16}. Another
opportunity is to investigate a weighted version
of NetKAT, where instead of probabilities we consider weights from an
arbitrary semiring, opening up several other applications---e.g. in
cost analysis. Finally, we would like to explore efficient
implementation techniques including compilation, as well as approaches
based on sampling, following several other probabilistic
languages~\cite{park08,Gordon11}.


\acks

The authors wish to thank Arthur Azevedo de Amorim, David Kahn,
Anirudh Sivaraman, Hongseok Yang, the Cornell PLDG, and the
Barbados Crew for insightful discussions and helpful comments. Our
work is supported by the National Security Agency; the National
Science Foundation under grants CNS-1111698, CNS-1413972, CCF-1422046,
CCF-1253165, and CCF-1535952; the Office of Naval Research under grant
N00014-15-1-2177; the European Research Council under starting grant ProFoundNet (679127); a Leverhulme Prize (PLP-2016-129); and gifts from Cisco, Facebook,
Google, and Fujitsu.

\balance
\bibliographystyle{abbrvnat}
\bibliography{\docroot/bib/prob,\docroot/bib/dk,\docroot/bib/main}

\iffull

\newpage

\appendix

\section{$(\MM,\sqleq)$ is not a Semilattice}
\label{apx:nosemilattice}

Despite the fact that $(\MM,\sqleq)$ is a directed set (Lemma \ref{lem:upperbound}), it is not a semilattice. Here is a counterexample.

Let $b=\{\pi,\sigma,\tau\}$, where $\pi$, $\sigma$, $\tau$ are distinct packets. Let
\begin{gather*}
\mu_1 = \tfrac 12\delta_{\{\pi\}} + \tfrac 12\delta_{\{\sigma\}} \qquad
\mu_2 = \tfrac 12\delta_{\{\sigma\}} + \tfrac 12\delta_{\{\tau\}}\\
\mu_3 = \tfrac 12\delta_{\{\tau\}} + \tfrac 12\delta_{\{\pi\}}.
\end{gather*}
The measures $\mu_1$, $\mu_2$, $\mu_3$ would be the output measures of the programs
$\pi! \oplus \sigma!$, $\sigma! \oplus \tau!$, $\tau! \oplus \pi!$, respectively.

We claim that $\mu_1 \sqcup \mu_2$ does not exist. To see this, define
\begin{gather*}
\nu_1 = \tfrac 12\delta_{\{\tau\}} + \tfrac 12\delta_{\{\pi,\sigma\}} \qquad
\nu_2 = \tfrac 12\delta_{\{\pi\}} + \tfrac 12\delta_{\{\sigma,\tau\}}\\
\nu_3 = \tfrac 12\delta_{\{\sigma\}} + \tfrac 12\delta_{\{\tau,\pi\}}.
\end{gather*}
All $\nu_i$ are $\sqleq$-upper bounds for all $\mu_j$.
(In fact, any convex combination $r\nu_1+s\nu_2+t\nu_3$
for $0\leq r,s,t$ and $r+s+t=1$ is an upper bound for any convex combination
$u\mu_1+v\mu_2+w\mu_3$ for $0\leq u,v,w$ and $u+v+w=1$.)
But we show by contradiction that there cannot exist a measure that is both $\sqleq$-above $\mu_1$ and $\mu_2$
and $\sqleq$-below $\nu_1$ and $\nu_2$. Suppose $\rho$ was such a measure. Since
$\rho\sqleq\nu_1$ and $\rho\sqleq\nu_2$, we have
\begin{gather*}
\rho(B_{\sigma\tau}) \leq \nu_1(B_{\sigma\tau}) = 0 \qquad
\rho(B_{\tau\pi}) \leq \nu_1(B_{\tau\pi}) = 0\\
\rho(B_{\pi\sigma}) \leq \nu_2(B_{\pi\sigma}) = 0.
\end{gather*}
Since $\mu_1\sqleq\rho$ and $\mu_2\sqleq\rho$, we have
\begin{gather*}
\rho(B_\pi) \geq \mu_1(B_\pi) = \tfrac 12 \qquad
\rho(B_\sigma) \geq \mu_1(B_\sigma) = \tfrac 12\\
\rho(B_\tau) \geq \mu_2(B_\tau) = \tfrac 12.
\end{gather*}
But then
\begin{align*}
\rho(\atm\pi b) &= \rho(B_\pi)-\rho(B_{\pi\sigma}\cup B_{\tau\pi}) \geq \tfrac 12\\
\rho(\atm\sigma b) &= \rho(B_\sigma)-\rho(B_{\sigma\tau}\cup B_{\pi\sigma}) \geq \tfrac 12\\
\rho(\atm\tau b) &= \rho(B_\tau)-\rho(B_{\tau\pi}\cup B_{\sigma\tau}) \geq \tfrac 12,
\end{align*}
which is impossible, because $\rho$ would have total weight at least $\tfrac 32$.

\section{Non-Algebraicity}
\label{apx:algebraic}

Here is a counterexample to the conjecture that the elements continuous DCPO of continuous kernels is algebraic with finite elements $\bPd$. Let $\sigma,\tau$ be packets and let $\sigma!$ and $\tau!$ be the programs that set the current packet to $\sigma$ or $\tau$, respectively. For $r\in [\frac 12,1]$, let $P_r = (\sigma!\opr\tau!)\pcomp(\tau!\opr\sigma!)$. On any nonempty input, $P_r$ produces $\{\sigma\}$ with probability $r(1-r)$, $\{\tau\}$ with probability $r(1-r)$, and $\{\sigma,\tau\}$ with probability $r^2+(1-r)^2$. In particular, $P_1$ produces $\{\sigma,\tau\}$ with probability $1$. The kernels $P_r$ for $1/2 \leq r < 1$ form a directed set whose supremum is $P_1$, yet $\{\sigma\}\cmp P_1\cmp\{\sigma,\tau\}$ is not $\sqleq$-bounded by any $P_r$ for $r<1$, therefore the up-closure of $\{\sigma\}\cmp P_1\cmp\{\sigma,\tau\}$ is not an open set.

\section{Cantor Meets Scott}
\label{apx:CantorMeetsScott}

This appendix contains proofs omitted from \S\ref{sec:CantorMeetsScott}.

\begin{proof}[Proof of Lemma \ref{lem:CantorScott}]
For any $a\subs b$,
\begin{align*}
X_a &= \mu(B_a) = \sum_{a\subs c\subs b}\mu(\atm cb)\\
&= \sum_{c}[a\subs c]\cdot[c\subs b]\cdot\mu(\atm cb)\\
&= \sum_{c}E[b]_{ac}\cdot Y_c
= (E[b]\cdot Y)_a.
\end{align*}
\end{proof}

\begin{proof}[Proof of Theorem \ref{thm:CantorScott}]
Given a probability measure $\mu$, certainly (i) and (ii) hold of the matrices $M$ and $N$ formed from $\mu$ by the rule \eqref{eq:MN}. For (iii), we calculate:
\begin{align*}
\lefteqn{(E^{-1}ME)_{ab} = \sum_{c,d} E^{-1}_{ac}M_{cd}E_{db} = \sum_{c,d} E^{-1}_{ac}M_{cd}E_{db}}\qquad\\
&= \sum_{c,d} [a\subs c]\cdot(-1)^{\len{c-a}}\cdot[c=d]\cdot\mu(M_{cd})\cdot[d\subs b]\\
&= \sum_{a\subs c\subs b} (-1)^{\len{c-a}}\cdot\mu(B_c)
= \mu(\atm ab) = N_{ab}.
\end{align*}

That the correspondence is one-to-one is immediate from Theorem \ref{thm:extension}.
\end{proof}

\section{A DCPO on Markov Kernels}
\label{apx:DCPO}

This appendix contains proofs omitted from \S\ref{sec:DCPO}.

\begin{proof}[Proof of Theorem \ref{thm:cpo}]
We prove the theorem for our concrete instance $(\pH,\BB)$.
The relation $\sqleq$ is a partial order. Reflexivity and transitivity are clear, and antisymmetry follows from Lemma \ref{lem:inclexcl}.

To show that suprema of directed sets exist, let $\DD$ be a directed set of measures, and define
\begin{align*}
(\bigsqcup\DD)(B) &\defeq \sup_{\mu\in\DD} \mu(B),\ B\in \SO.
\end{align*}
This is clearly the supremum of $\DD$, provided it defines a valid
measure.\footnote{This is actually quite subtle. One might be tempted to define
\[ (\bigsqcup\DD)(B) \defeq \sup_{\mu\in\DD} \mu(B),\ B\in \BB\]
However, this definition would not give a valid probability measure in
general. In particular, an increasing chain
of measures does not generally converge to its supremum pointwise. However, it
\emph{does} converge pointwise on $\SO$.}
To show this, choose a countable chain $\mu_0 \sqleq \mu_1 \sqleq \cdots$ in $\DD$ such that $\mu_m\sqleq\mu_n$ for all $m<n$ and $(\bigsqcup\DD)(B_c) - \mu_n(B_c) \leq 1/n$ for all $c$ such that $\len c\leq n$. Then for all finite $c\in\pH$, $(\bigsqcup\DD)(B_c)=\sup_n\mu_n(B_c)$. 

Then $\bigsqcup\DD$ is a measure by Theorem \ref{thm:extension} because for all finite $b$ and $a\subs b$,
\begin{align*}
\sum_{a\subs c\subs b} (-1)^{\len{c-a}}(\bigsqcup\DD)(B_c)
&= \sum_{a\subs c\subs b} (-1)^{\len{c-a}}\sup_n\mu_n(B_c)\\
&= \lim_n\sum_{a\subs c\subs b} (-1)^{\len{c-a}}\mu_n(B_c)\\
&\geq 0.
\end{align*}

To show that $\dirac\emptyset$ is $\sqleq$-minimum, observe that for all $B\in \SO$,
\begin{align*}
\dirac\emptyset(B) &= [\emptyset\in B] = [B=B_\emptyset=\pH]
\end{align*}
as $B_\emptyset=\pH$ is the only up-closed set containing $\emptyset$. Thus for all measures $\mu$, $\dirac\emptyset(\pH) = 1 = \mu(\pH)$, and for all $B\in \SO$, $B\neq\pH$, $\dirac\emptyset(B) = 0 \leq \mu(B)$.

Finally, to show that $\dirac H$ is $\sqleq$-maximum, observe that every nonempty $B \in \SO$
contains $H$ because it is up-closed.
Therefore, $\dirac H$ is the constant function $1$ on $\SO-\{\emptyset\}$, making it
$\sqleq$-maximum.
\end{proof}

\begin{proof}[Proof of Lemma \ref{lem:upperbound}]
For any up-closed measurable set $B$,
\begin{align*}
\mu(B) &= \mu(B)\cdot\nu(\pH)
= (\mu\times\nu)(B\times \pH)\\
&= (\mu\times\nu)(\set{(b,c)}{b\in B})\\
&\leq (\mu\times\nu)(\set{(b,c)}{b\cup c\in B})
= (\mu\pcomp\nu)(B).
\end{align*}
and similarly for $\nu$.
\end{proof}

\begin{proof}[Proof of Lemma \ref{lem:kernelorder}]
To show that (i), (ii), and (iv) are equivalent,
\begin{align*}
\lefteqn{\forall a\in\pH\ \forall B\in\SO\ P(a,B)\leq Q(a,B)}\qquad\\
&\Iff \forall a\in\pH\ (\forall B\in\SO\ P(a,B)\leq Q(a,B))\\
&\Iff \forall a\in\pH\ P(a,-)\sqleq Q(a,-)\\
&\Iff \forall a\in\pH\ (\curry P)(a)\sqleq(\curry Q)(a)\\
&\Iff \curry P\sqleq\curry Q.
\end{align*}
To show that (i) and (iii) are equivalent,
\begin{align*}
\lefteqn{\forall a\in\pH\ \forall B\in\SO\ P(a,B)\leq Q(a,B)}\qquad\\
&\Iff \forall B\in\SO\ (\forall a\in\pH\ P(a,B)\leq Q(a,B))\\
&\Iff \forall B\in\SO\ P(-,B)\sqleq Q(-,B).
\end{align*}
\end{proof}

\begin{proof}[Proof of Theorem \ref{thm:kernelDCPO}]
We must show that the supremum of any directed set of continuous Markov kernels is a continuous Markov kernel. In general, the supremum of a directed set of continuous functions between DCPOs is continuous. Given a directed set $\DD$ of continuous kernels, we apply this to the directed set $\set{\curry P:\pH\to\MM(\pH)}{P\in\DD}$ to derive that $\tbigsqcup_{P\in\DD}\curry P$ is continuous, then use the fact that $\curry$ is continuous to infer that $\tbigsqcup_{P\in\DD}\curry P=\curry\tbigsqcup\DD$, therefore $\curry\tbigsqcup\DD$ is continuous. This says that the function $P:\pH\times\BB\to[0,1]$ is continuous in its first argument.

We must still argue that the supremum $\tbigsqcup\DD$ is a Markov kernel, that is, a measurable function in its first argument and a probability measure in its second argument. The first statement follows from the fact that any continuous function is measurable with respect to the Borel sets generated by the topologies of the two spaces. For the second statement, we appeal to Theorem \ref{thm:cpo} and the continuity of $\curry$:
\begin{align*}
(\curry\tbigsqcup\DD)(a)
&= (\tbigsqcup_{P\in\DD}\curry P)(a)
= \tbigsqcup_{P\in\DD}(\curry P)(a),
\end{align*}
which is a supremum of a directed set of probability measures, therefore by Theorem \ref{thm:cpo} is itself a probability measure.

To show that it is a continuous DCPO with basis of the indicated form, we note that for any $a\in\pH$ and $B\in\SO$,
\begin{align}
(\bPd)(a,B) = P(a\cap b,\set c{c\cap d\in B}).\label{eq:kernelDCPO}
\end{align}
Every element of the space is the supremum of a directed set of such elements. Given a continuous kernel $P$, consider the directed set $\DD$ of all elements $\bPd$ for $b,d$ finite. Then for any $a\in\pH$ and $B\in\SO$,
\begin{align}
(\tbigsqcup\DD)(a,B) &= \sup_{b,d\in\pfin \Hist}P(a\cap b,\set c{c\cap d\in B})\label{eq:kernelDCPO0}\\
&= \sup_{d\in\pfin \Hist}P(a,\set c{c\cap d\in B})\label{eq:kernelDCPO1}\\
&= P(a,B),\label{eq:kernelDCPO2}
\end{align}
the inference \eqref{eq:kernelDCPO0} from \eqref{eq:kernelDCPO},
the inference \eqref{eq:kernelDCPO1} from the fact that $P$ is continuous in its first argument, and
the inference \eqref{eq:kernelDCPO1} from the fact that the sets $\set c{c\cap d\in B}$ for $d\in\pfin \Hist$ form a directed set of Scott-open sets whose union is $B$ and that $P$ is a measure in its second argument.
\end{proof}

\section{Continuity of Kernels and Program Operators and a Least-Fixpoint Characterization of Iteration}
\label{apx:continuity}

This appendix contains lemmas and proofs omitted from \S\ref{sec:continuity}.

\subsection{Products and Integration}
\label{apx:productsintegration}

This section develops some properties of products and integration needed for
from the point of view of Scott topology.

As pointed out by Jones \cite[\S3.6]{Jones89}, the product $\sigma$-algebra
of the Borel sets of two topological spaces $X,Y$ is in general not the same as
the Borel sets of the topological product $X\times Y$, although this property does hold for
the Cantor space, as its basic open sets are clopen. More importantly, as also observed in \cite[\S3.6]{Jones89},
the Scott topology on the product of DCPOs with the componentwise order is not necessarily
the same as the product topology. However, in our case, the two topologies coincide.

\begin{theorem}
\label{thm:product}
Let $D_\alpha$, $\alpha<\kappa$, be a collection of algebraic DCPOs with
$F_\alpha$ the finite elements of $D_\alpha$. Then the product $\prod_{\alpha<\kappa} D_\alpha$ with the componentwise order is an algebraic DCPO with finite elements
\begin{align*}
F &= \set{c\in\tprod_\alpha F_\alpha}{\text{$\pi_\alpha(c)=\bot$ for all but finitely many $\alpha$}}.
\end{align*}
\end{theorem}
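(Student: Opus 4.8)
The plan is to take the DCPO structure for granted and reduce everything to a careful analysis of the finite elements, since that is where the cofinite-support condition does all the work. That $\tprod_\alpha D_\alpha$ is a DCPO under the componentwise order, with directed suprema formed coordinatewise, is exactly the first DCPO closure property recorded in \S\ref{sec:primer}, so I would cite it and move on. (I read the statement as assuming each $D_\alpha$ has a least element $\bot$, as is implicit in the description of $F$; for infinite $\kappa$ this is what makes the ``all but finitely many'' clause meaningful.)

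First I would check that each $c\in F$ is finite. Writing $S=\set{\alpha}{\pi_\alpha(c)\neq\bot}$, which is finite, suppose $c\sqleq\bigsqcup A$ for a directed $A$. Coordinatewise suprema give $\pi_\alpha(c)\sqleq\bigsqcup_{a\in A}\pi_\alpha(a)$ for all $\alpha$, and since $\pi_\alpha(c)\in F_\alpha$ is finite in $D_\alpha$, for each of the finitely many $\alpha\in S$ there is some $a^{(\alpha)}\in A$ with $\pi_\alpha(c)\sqleq\pi_\alpha(a^{(\alpha)})$. Directedness of $A$ then produces a single $a^\ast\in A$ above all $a^{(\alpha)}$ with $\alpha\in S$, and since the coordinates of $c$ outside $S$ are $\bot$ we get $c\sqleq a^\ast$. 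For the converse, that every finite $c$ lies in $F$, I would argue in two steps. Approximating $c$ by its truncations $c^{T}$ (agreeing with $c$ on a finite $T\subs\kappa$, and $\bot$ elsewhere) gives a directed family with supremum $c$, so finiteness forces $c\sqleq c^{T_0}$ for some finite $T_0$; as $c^{T_0}\sqleq c$ always, $c=c^{T_0}$ has finite support. For each fixed $\beta$, embedding a directed $A_\beta\subs D_\beta$ with $\pi_\beta(c)\sqleq\bigsqcup A_\beta$ as the tuples agreeing with $c$ off $\beta$ yields a directed family whose supremum dominates $c$, so finiteness of $c$ delivers some $a\in A_\beta$ with $\pi_\beta(c)\sqleq a$; hence $\pi_\beta(c)\in F_\beta$. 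Thus the finite elements are exactly $F$.

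For algebraicity I would fix $b$ and set $G=\set{c\in F}{c\sqleq b}$. This set is directed: given $c,c'\in G$, algebraicity of each $D_\alpha$ provides a finite $e_\alpha$ with $\pi_\alpha(c),\pi_\alpha(c')\sqleq e_\alpha\sqleq\pi_\alpha(b)$, and choosing $e_\alpha=\bot$ at every coordinate where $c$ and $c'$ already vanish keeps the support finite, so the tuple $e$ lies in $G$ and bounds both. Its supremum is $b$: coordinatewise, the $\alpha$-projections of $G$ include every finite element of $D_\alpha$ below $\pi_\alpha(b)$ (take single-coordinate tuples), so by algebraicity of $D_\alpha$ the $\alpha$-component of $\bigsqcup G$ is $\pi_\alpha(b)$. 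Hence $b=\bigsqcup G$ is the directed supremum of the finite elements below it, and the product is algebraic with finite elements $F$.

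The one genuinely delicate point — everything else is bookkeeping — is the interplay between the two conditions defining $F$, namely that compactness in the product forces finite support. Concretely, the witnesses $a^{(\alpha)}$ must be collected coordinate by coordinate and only then merged, via directedness, into a single $a^\ast\in A$; this merging step is possible precisely because $S$ is finite, and it is exactly here that an unrestricted infinite product (without the cofinite-$\bot$ restriction) would break down. The truncation argument in the converse is the dual manifestation of the same phenomenon, showing that finite support is not merely sufficient but necessary for compactness.
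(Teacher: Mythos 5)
Your proof is correct and takes essentially the same route as the paper's: finiteness of the elements of $F$ is shown by collecting one witness per coordinate of the finite support and merging them with directedness, and algebraicity is shown by exhibiting each $b$ as the directed supremum of the members of $F$ below it, using single-coordinate tuples and algebraicity of each $D_\alpha$. The only differences are minor: you argue finiteness directly from the definition where the paper uses the equivalent characterization via Scott-open up-closures, and you explicitly prove the converse inclusion (every finite element of the product lies in $F$), which the paper leaves implicit.
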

\begin{proof}
The projections $\pi_\beta:\prod_\alpha D_\alpha\to D_\beta$ are
easily shown to be continuous with respect to the componentwise order. For any $d\in\prod_{\alpha<\kappa} D_\alpha$, the set $\down{\{d\}}\cap F$ is directed, and $d = \bigsqcup (\down{\{d\}}\cap F)$: for any $\alpha$, the set $\pi_\alpha(\down{\{d\}}\cap F) = \down{\{\pi_\alpha(d)\}}\cap F_\alpha$ is directed, thus 
\begin{align*}
\pi_\alpha(d) &= \bigsqcup(\down{\{\pi_\alpha(d)\}}\cap F_\alpha) = \bigsqcup(\pi_\alpha(\down{\{d\}}\cap F))\\
&= \pi_\alpha(\bigsqcup(\down{\{d\}}\cap F)), 
\end{align*}
and as $\alpha$ was arbitrary, $d = \bigsqcup (\down{\{d\}}\cap F)$.

It remains to show that $\up{\{c\}} = \prod_{\alpha<\kappa} \up{\{\pi_\alpha(c)\}}$ is open for $c\in F$. Let $A$ be a directed set with $\bigsqcup A\in\up{\{c\}}$. For each $\alpha$, $\set{\pi_\alpha(a)}{a\in A}$ is directed, and
\begin{align*}
\bigsqcup_{a\in A}\pi_\alpha(a) &= \pi_\alpha(\bigsqcup A) \in \pi_\alpha(\up{\{c\}}) = \up{\{\pi_\alpha(c)\}},
\end{align*}
so there exists $a_\alpha\in A$ such that $\pi_\alpha(a_\alpha)\in\up{\{\pi_\alpha(c)\}}$. Since $A$ is directed, there is a single $a\in A$ that majorizes the finitely many $a_\alpha$ such that $\pi_\alpha(c)\neq\bot$. Then $\pi_\alpha(a)\in\up{\{\pi_\alpha(c)\}}$ for all $\alpha$, thus $a\in\up{\{c\}}$.
\end{proof}

\begin{corollary}
\label{cor:product}
The Scott topology on a product of algebraic DCPOs with respect to the componentwise order coincides with the product topology induced by the Scott topology on each component.
\end{corollary}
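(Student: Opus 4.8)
The plan is to derive the corollary from Theorem~\ref{thm:product} together with the base description in Lemma~\ref{lem:basic}\eqref{lem:basic:finitebase}, proving the two topologies coincide by mutual inclusion. Write $D = \prod_{\alpha<\kappa} D_\alpha$, let $\SO_{\mathrm{Sc}}$ denote its Scott topology with respect to the componentwise order, and let $\SO_{\mathrm{pr}}$ denote the product topology induced by the Scott topology on each factor.

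First I would establish the easy inclusion $\SO_{\mathrm{pr}} \subseteq \SO_{\mathrm{Sc}}$. In the proof of Theorem~\ref{thm:product} it was already observed that each projection $\pi_\beta : D \to D_\beta$ is Scott-continuous with respect to the componentwise order; hence each $\pi_\beta$ is continuous as a map from $(D,\SO_{\mathrm{Sc}})$ to $D_\beta$ with its Scott topology. Since $\SO_{\mathrm{pr}}$ is by definition the coarsest topology on $D$ making all projections continuous, it follows immediately that $\SO_{\mathrm{pr}} \subseteq \SO_{\mathrm{Sc}}$.

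The reverse inclusion $\SO_{\mathrm{Sc}} \subseteq \SO_{\mathrm{pr}}$ is where the content lies. By Theorem~\ref{thm:product} the product $D$ is algebraic, so by Lemma~\ref{lem:basic}\eqref{lem:basic:finitebase} the sets $\up{\{c\}}$ for finite $c \in F$ form a base for $\SO_{\mathrm{Sc}}$; it therefore suffices to show each such $\up{\{c\}}$ lies in $\SO_{\mathrm{pr}}$. From the proof of Theorem~\ref{thm:product} we have $\up{\{c\}} = \prod_\alpha \up{\{\pi_\alpha(c)\}}$, and each factor $\up{\{\pi_\alpha(c)\}}$ is Scott-open in $D_\alpha$ because $\pi_\alpha(c)$ is a finite element of $D_\alpha$. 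The decisive point is that, by the characterization of $F$ in Theorem~\ref{thm:product}, $\pi_\alpha(c) = \bot$ for all but finitely many $\alpha$, and for those $\alpha$ we have $\up{\{\pi_\alpha(c)\}} = \up{\{\bot\}} = D_\alpha$. Thus $\up{\{c\}}$ is a product of Scott-open sets whose factors differ from the whole space in only finitely many coordinates, which is precisely a basic open set of $\SO_{\mathrm{pr}}$. Hence every basic Scott-open set is product-open, and $\SO_{\mathrm{Sc}} \subseteq \SO_{\mathrm{pr}}$.

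I do not expect a serious obstacle here, since Theorem~\ref{thm:product} has already done the real work; the corollary is essentially a repackaging of that theorem through the finite-element base. The one subtlety worth flagging explicitly is the ``all but finitely many $\alpha$'' condition on the finite elements: this is exactly the ingredient that forces each basic Scott-open set to be a finite-codimension product and hence product-open. Without it—i.e.\ for general DCPOs—the two topologies genuinely diverge, as noted earlier in the discussion citing Jones, so the proof is really pinpointing where algebraicity plus this finiteness condition rescues the coincidence.
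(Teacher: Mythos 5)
Your proof is correct and follows essentially the same route as the paper: the coarseness of the product topology plus Scott-continuity of the projections gives one inclusion, and the reverse inclusion uses algebraicity of the product (Theorem~\ref{thm:product}) with Lemma~\ref{lem:basic}\eqref{lem:basic:finitebase} to reduce to the basic opens $\up{\{c\}}$, which are product-open because $\pi_\alpha(c)=\bot$ in all but finitely many coordinates. Your explicit remark on the role of the finiteness condition is a point the paper leaves implicit (it phrases the same fact as ``finite intersections of sets $\pi_\alpha^{-1}(\up{\{\pi_\alpha(c)\}})$''), but the argument is identical.
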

\begin{proof}
Let $\prod_{\alpha<\kappa} D_\alpha$ be a product of algebraic DCPOs with $\SO_0$ the product topology and $\SO_1$ the Scott topology. As noted in the proof of Theorem \ref{thm:product}, the projections $\pi_\beta:\prod_\alpha D_\alpha\to D_\beta$ are continuous with respect to $\SO_1$. By definition, $\SO_0$ is the weakest topology on the product such that the projections are continuous, so $\SO_0\subs \SO_1$.

For the reverse inclusion, we use the observation that the sets $\up{\{c\}}$ for finite elements $c\in F$ as defined in Theorem \ref{thm:product} form a base for the topology $\SO_1$. These sets are also open in $\SO_0$, since they are finite intersections of sets of the form $\pi_\alpha^{-1}(\up{\{\pi_\alpha(c)\}})$, and $\up{\{\pi_\alpha(c)\}}$ is open in $D_\alpha$ since $\pi_\alpha(c)\in F_\alpha$. As $\SO_1$ is the smallest topology containing its basic open sets, $\SO_1\subs \SO_0$.
\end{proof}

A function $g:\pH\to\R_+$ is \emph{$\SO$-simple} if it is a finite linear combination of the form $\sum_{A\in F} r_A\chrf A$, where $F$ is a finite subset of $\SO$. Let $S_\SO$ denote the set of $\SO$-simple functions.

\begin{theorem}
\label{thm:integration}
Let $f$ be a bounded Scott-continuous function $f:\pH\to\R_+$. Then
\begin{align*}
\sup_{\substack{g\in S_\SO\\g\leq f}} \int g\,d\mu &= \int f\,d\mu = \inf_{\substack{g\in S_\SO\\f\leq g}} \int g\,d\mu
\end{align*}
under Lebesgue integration.
\end{theorem}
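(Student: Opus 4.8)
The plan is to exploit the fact that the \emph{superlevel} sets of a Scott-continuous function are precisely the kind of Scott-open sets that $\SO$-simple functions are built from. Since $f:\pH\to\R_+$ is Scott-continuous and the Scott-open sets of $\R_+$ are the intervals $(t,\infty]$, the preimage $U_t\defeq\set{a\in\pH}{f(a)>t}$ is Scott-open for every $t\ge 0$; in particular $f$ is measurable, so $\int f\,d\mu$ is well-defined, and $U_t=\emptyset$ for $t\ge M$ whenever $M$ bounds $f$. The whole point of the theorem is that the textbook Lebesgue approximation uses horizontal ``bands'' $\set{a}{s\le f(a)<s'}$, which are differences of Scott-open sets and hence \emph{not} Scott-open; I would instead approximate using only the open superlevel sets $U_t$, via a telescoping ``layer'' representation of $f$.

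First I would fix $n$, set $\delta\defeq M/n$ and $t_i\defeq i\delta$, and define the two $\SO$-simple functions
\[
g_n \defeq \delta\sum_{i=1}^{n}\chrf{U_{t_i}},
\qquad
h_n \defeq \delta\sum_{i=1}^{n}\chrf{U_{t_{i-1}}}.
\]
Both lie in $S_\SO$ because each $U_{t_i}$ is Scott-open. A direct pointwise check then shows $g_n\le f\le h_n$: writing $s\defeq f(a)$, the value $g_n(a)$ is the largest multiple of $\delta$ strictly below $s$ (hence $\le s$), while $h_n(a)$ is the least multiple of $\delta$ that is $\ge s$ (hence $\ge s$). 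By monotonicity of the integral this yields $\int g_n\,d\mu\le\int f\,d\mu\le\int h_n\,d\mu$ for every $n$.

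The key estimate is that the gap between the two approximants collapses. Because $U_{t_0}=U_0$ and $U_{t_n}=U_M=\emptyset$, the two sums telescope:
\[
h_n-g_n \;=\; \delta\bigl(\chrf{U_{t_0}}-\chrf{U_{t_n}}\bigr)\;=\;\delta\,\chrf{U_0},
\]
so $\int h_n\,d\mu-\int g_n\,d\mu=\delta\,\mu(U_0)\le M/n\to 0$. Combined with the sandwich $\int g_n\,d\mu\le\int f\,d\mu\le\int h_n\,d\mu$, this forces $\int g_n\,d\mu\to\int f\,d\mu$ and $\int h_n\,d\mu\to\int f\,d\mu$. Hence the supremum over $\SO$-simple $g\le f$ is at least $\sup_n\int g_n\,d\mu=\int f\,d\mu$, and it is trivially at most $\int f\,d\mu$; the symmetric argument with the $h_n$ handles the infimum. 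Both equalities follow.

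I expect the only genuine subtlety---and the step I would be most careful about---is verifying $g_n\le f\le h_n$ and the telescoping identity at boundary values (when $s$ is an exact multiple of $\delta$, and at $s=M$), since these are exactly what decides that the \emph{open} superlevel sets $U_{t_i}$, rather than the closed sets $\set{a}{f(a)\ge t_i}$, are the correct choice. Conceptually there is no deep obstacle: restricting to open building blocks is harmless because one simply trades the standard band decomposition, which needs non-open level sets, for the layer-cake decomposition, which needs only the open superlevel sets that Scott-continuity supplies for free.
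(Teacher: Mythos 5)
Your proof is correct and is essentially the paper's own argument: the paper likewise sandwiches $f$ between $\SO$-simple functions built from the Scott-open superlevel sets $A_i = f^{-1}((r_i,\infty))$ at a finite grid of thresholds, and bounds the gap between the upper and lower approximants by the mesh of the grid. In fact, for a uniform partition the paper's two approximants, after its own telescoping re-indexing $\sum_i r_i\,\chrf{A_i-A_{i+1}} = \sum_i (r_{i+1}-r_i)\,\chrf{A_{i+1}}$, coincide with your $g_n$ and $h_n$, so the two proofs differ only in presentation (you build the layer-cake form directly, the paper starts from bands and converts).
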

\begin{proof}
Let $\eps>0$ and $r_N=\sup_{a\in\pH} f(a)$. Let
\begin{align*}
0 = r_0 < r_1 < \cdots < r_N
\end{align*}
such that $r_{i+1}-r_i<\eps$, $0\leq i\leq N-1$, and set
\begin{align*}
A_i = \set a{f(a) > r_i} = f^{-1}((r_i,\infty)) \in \SO,\ \ 0\leq i\leq N.
\end{align*}
Then $A_{i+1}\subs A_i$ and
\begin{align*}
A_i-A_{i+1} = \set a{r_i < f(a) \leq r_{i+1}} = f^{-1}((r_i,r_{i+1}]).
\end{align*}
Let 
\begin{align*}
\lb f &= \sum_{i=0}^{N-1} r_i\chrf{A_i-A_{i+1}} &
\ub f &= \sum_{i=0}^{N-1} r_{i+1}\chrf{A_i-A_{i+1}}.
\end{align*}
For $a\in A_i-A_{i+1}$,
\begin{align*}
\lb f(a) &= \sum_{i=0}^{N-1} r_i\chrf{A_i-A_{i+1}}(a) = r_i < f(a)\\
&\leq r_{i+1} = \sum_{i=0}^{N-1} r_{i+1}\chrf{A_i-A_{i+1}}(a) = \ub f(a),
\end{align*}
and as $a$ was arbitrary, $\lb f \leq f \leq \ub f$ pointwise. Thus
\begin{align*}
\int \lb f\,d\mu &\leq \int f\,d\mu \leq \int \ub f\,d\mu.
\end{align*}
Moreover,
\begin{align*}
\int \ub f\,d\mu - \int \lb f\,d\mu
&= \sum_{i=0}^{N-1} r_{i+1}\mu(A_i-A_{i+1})\\
&\qquad\qquad - \sum_{i=0}^{N-1} r_i\mu(A_i-A_{i+1})\\
&= \sum_{i=0}^{N-1} (r_{i+1}-r_i)\mu(A_i-A_{i+1})\\
&< \eps\cdot \sum_{i=0}^{N-1}\mu(A_i-A_{i+1}) = \eps\cdot\mu(\pH) = \eps,
\end{align*}
so the integral is approximated arbitrarily closely from above and below by the $\ub f$ and $\lb f$.
Finally, we argue that $\lb f$ and $\ub f$ are $\SO$-simple. Using the fact that $r_0=0$ and $A_N=\emptyset$ to reindex,
\begin{align*}
\lb f &= \sum_{i=0}^{N-1} r_i\chrf{A_i-A_{i+1}}
= \sum_{i=0}^{N-1} r_i\chrf{A_i} - \sum_{i=0}^{N-1} r_i\chrf{A_{i+1}}\\
&= \sum_{i=0}^{N-1} r_{i+1}\chrf{A_{i+1}} - \sum_{i=0}^{N-1} r_i\chrf{A_{i+1}}
= \sum_{i=0}^{N-1} (r_{i+1}-r_i)\chrf{A_{i+1}},\\
\ub f &= \sum_{i=0}^{N-1} r_{i+1}\chrf{A_i-A_{i+1}}
= \sum_{i=0}^{N-1} r_{i+1}\chrf{A_i} - \sum_{i=0}^{N-1} r_{i+1}\chrf{A_{i+1}}\\
&= \sum_{i=0}^{N-1} r_{i+1}\chrf{A_i} - \sum_{i=0}^{N-1} r_i\chrf{A_i}
= \sum_{i=0}^{N-1} (r_{i+1}-r_i)\chrf{A_i},
\end{align*}
and both functions are $\SO$-simple since all $A_i$ are in $\SO$.
\end{proof}

We can prove a stronger version of Theorem~\ref{thm:integration} that also works
for functions taking on infinite value.
A function $g$ is simple if it is a finite linear combination of indicator functions
of the form $g = \sum_{i=1}^{k} r_i \chrf{A_i}$, where $k \in \N$ and the $A_i$
are measurable. Let $S$ denote the set of all simple functions.
\begin{theorem}\label{thm:integration-strong}
Let $f : \pH \to [0,\infty]$ be Scott-continuous and let $\mu$ be a probability
measure. Then \[
  \int f\,d\mu = \sup_{\substack{g \in S_{\SO}\\g\leq f}} \int g\,d\mu
\]
\end{theorem}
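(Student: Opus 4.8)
The plan is to reduce to the bounded case already established in Theorem~\ref{thm:integration} by truncating $f$ at successively larger finite levels, and then to pass to the limit using the classical monotone convergence theorem for the Lebesgue integral.

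First I would define, for each $n\in\N$, the truncation $f_n \defeq \min(f,n)$, which is bounded by $n$. The crucial observation is that $f_n$ remains Scott-continuous: the map $\min(\cdot,n):\R_+\to\R_+$ is itself Scott-continuous, since it is monotone and satisfies $\min(\sup_i x_i,\,n)=\sup_i \min(x_i,n)$ for any directed set in $[0,\infty]$ (one checks the two cases $\sup_i x_i\leq n$ and $\sup_i x_i> n$ directly), and Scott-continuity is preserved under composition. Hence $f_n = \min(\cdot,n)\circ f$ is bounded and Scott-continuous, so Theorem~\ref{thm:integration} applies and gives
\[
\int f_n\,d\mu \;=\; \sup_{\substack{g\in S_\SO\\ g\leq f_n}} \int g\,d\mu.
\]

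Next, since $f_n\uparrow f$ pointwise as $n\to\infty$ (for each $a$, $\min(f(a),n)\to f(a)$ in $[0,\infty]$, including the case $f(a)=\infty$, where the values are exactly $n\to\infty$), the monotone convergence theorem yields $\int f_n\,d\mu \uparrow \int f\,d\mu$. Because each $f_n\leq f$, every $\SO$-simple $g\leq f_n$ also satisfies $g\leq f$, so
\[
\int f_n\,d\mu \;=\; \sup_{\substack{g\in S_\SO\\ g\leq f_n}} \int g\,d\mu
\;\leq\; \sup_{\substack{g\in S_\SO\\ g\leq f}} \int g\,d\mu
\;\leq\; \int f\,d\mu,
\]
the last inequality being immediate from monotonicity of the integral. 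Taking the supremum over $n$ on the far left and using $\int f\,d\mu=\sup_n\int f_n\,d\mu$ squeezes the three quantities together, forcing equality throughout and establishing the claim.

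The only point genuinely requiring care is the verification that truncation preserves Scott-continuity, as this is precisely what licenses the application of the bounded-case theorem to each $f_n$; everything else is a routine invocation of monotone convergence together with the trivial bound $\int g\leq\int f$ whenever $g\leq f$. I therefore do not expect any substantial obstacle beyond checking that $\min(\cdot,n)$ interacts correctly with directed suprema at the value $\infty$. (One could alternatively shortcut the limit step by appealing to the Scott-continuity of integration in its function argument, but I would avoid this here to sidestep any possible circularity with the proof of Theorem~\ref{thm:int-continuous}, relying instead on the classical monotone convergence theorem.)
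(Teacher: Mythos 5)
Your proof is correct, but it follows a genuinely different route from the paper's. The paper proves Theorem~\ref{thm:integration-strong} from scratch, without ever invoking the bounded case (Theorem~\ref{thm:integration}): starting from the definition of the integral as the supremum over \emph{all} simple minorants $g = \sum_{i=1}^k r_i\chrf{A_i} \leq f$ (with arbitrary measurable, not necessarily open, $A_i$), it constructs for each $\epsilon > 0$ an explicit $\SO$-simple minorant $g_\epsilon = \sum_{i=1}^k (r_i - r_{i-1})\chrf{B_{i,\epsilon}}$, where $B_{i,\epsilon} = f^{-1}((r_i-\epsilon,\infty]) \in \SO$ by Scott-continuity of $f$ and $r_0 = \epsilon$, and verifies $g_\epsilon \leq f$ and $g - g_\epsilon \leq \epsilon$ pointwise, hence $\int g\,d\mu - \int g_\epsilon\,d\mu \leq \epsilon$. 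You instead reuse Theorem~\ref{thm:integration} as a black box via the truncations $f_n = \min(f,n)$ --- whose Scott-continuity you rightly identify as the one point needing verification, and which does hold for the reason you give --- and then pass to the limit with the classical monotone convergence theorem. Your worry about circularity is well taken but resolves favorably: the paper's Theorem~\ref{thm:int-continuous}(ii) is itself just the classical monotone convergence theorem, and Theorem~\ref{thm:integration-strong} feeds only into the proof of part (i), so classical MCT is safely available to you. Comparing the two: your argument is shorter and actually puts Theorem~\ref{thm:integration} to work (in the paper it ends up subsumed rather than applied), at the cost of routing through a limit along truncations and an appeal to MCT; the paper's argument is self-contained, uses no convergence theorem at all, and yields the slightly sharper fact that \emph{every} measurable simple minorant of $f$ is $\epsilon$-approximated in integral by an $\SO$-simple minorant, i.e., it compares the two suprema directly and uniformly rather than through a limiting procedure.
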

\begin{proof}
It suffices to show that
\begin{equation}\label{eq:int-proof-1}
  \sup_{\substack{g \in S\\g\leq f}} \int g\,d\mu =
  \sup_{\substack{g \in S_{\SO}\\g\leq f}} \int g\,d\mu
\end{equation}
since the left side of this equation defines the integral of $f$. We trivially
have
\begin{equation}\label{eq:int-proof-2}
  \sup_{\substack{g \in S\\g\leq f}} \int g\,d\mu \geq
  \sup_{\substack{g \in S_{\SO}\\g\leq f}} \int g\,d\mu
\end{equation}
because $S_{\SO} \subseteq S$. To show the reverse inequality, let $g \in S$
with $g \leq f$ be arbitrary. We will show that there exists a family of
functions $g_\epsilon \in S_{\SO}$, $\epsilon>0$ with $g_\epsilon \leq f$ such that 
$\int g\, d\mu - \int g_{\epsilon}\,d\mu \leq \epsilon$.
Together with \eqref{eq:int-proof-2}, this proves \eqref{eq:int-proof-1} because
it implies that
\begin{align*}
  \sup_{\substack{g \in S\\g\leq f}} \int g\,d\mu \leq
  \sup_{\substack{g \in S\\g\leq f}} \sup_{\epsilon>0}\int g_\epsilon\,d\mu \leq
  \sup_{\substack{g \in S_{\SO}\\g\leq f}} \int g\,d\mu
\end{align*}
Let's turn to constructing the family of functions $g_\epsilon \in S_{\SO}$. Since $g$ is
simple, we may w.l.o.g.\ assume that it has the form $g = \sum_{i=1}^k r_i \chrf{A_i}$
with disjoint $A_i \in \BB$ and $r_1 < r_2 < \cdots < r_k$. Define
\begin{align*}
  r_0 &\defeq \epsilon\\
  B_{i,\epsilon} &\defeq \inv f ((r_i - \epsilon, \infty]) \in \SO\\
  \beta_{i} &\defeq r_i - r_{i-1}\\
  g_{\epsilon} &\defeq \sum_{i=1}^k \beta_i \cdot \chrf{B_{i,\epsilon}} \in S_{\SO}
\end{align*}
Then we have $g_{\epsilon} \leq f$ because for all $a \in \pH$  
\begin{align*}
 (\sum_{i=1}^k \beta_i \cdot \chrf{B_{i,\epsilon}})(a)
 &= \sum_{i=1}^k \beta_i \cdot \ind{a \in B_{i,\epsilon}} \\
 &= \sum_{i=1}^k (r_i - r_{i-1}) \cdot \ind{f(a) > r_i - \epsilon} \\
 &= \max \set{r_i}{ 1 \leq i \leq k \text{ and } f(a) > r_i - \epsilon} - r_0\\
 &< f(a)
\end{align*}
Moreover, we have that $g - g_\epsilon \leq \epsilon$ because
\begin{align*}
 (\sum_{i=1}^k \beta_i \cdot \chrf{B_{i,\epsilon}})(a)
 &= \max \set{r_i}{ 1 \leq i \leq k \text{ and } f(a) > r_i - \epsilon} - r_0\\
 &\geq \max \set{r_i}{ 1 \leq i \leq k \text{ and } f(a) \geq r_i} - \epsilon\\
 &\geq \max \set{r_i}{ 1 \leq i \leq k \text{ and } g(a) = r_i} - \epsilon\\
 &= g(a) - \epsilon
\end{align*}
Thus it follows that
\begin{align*}
  \int g\,d\mu - \int g_\epsilon\,d\mu = \int(g-g_\epsilon)d\mu
  \leq \int \epsilon\, d\mu = \epsilon
\end{align*}
\end{proof}

\begin{proof}[Proof of Theorem~\ref{thm:int-continuous}]
(i) We prove the result first for $\SO$-simple functions. If $\mu\sqleq\nu$, then
for any $\SO$-simple function $g = \sum_{A} r_A\chrf{A}$,
\begin{align*}
\int g\,d\mu &= \int \sum_{A} r_A\chrf{A}\,d\mu
= \sum_{A} r_A\mu(A)\\
&\leq \sum_{A} r_A\nu(A)
= \int \sum_{A} r_A\chrf{A}\,d\nu
= \int g\,d\nu.
\end{align*}
Thus the map \eqref{eq:int-continuous} is monotone.
If $\DD$ is a directed set of measures with respect to $\sqleq$, then
\begin{align*}
\int g\,d(\tbigsqcup\DD) &= \int\sum_{A} r_A\chrf{A}\,d(\tbigsqcup\DD)
= \sum_{A} r_A(\tbigsqcup\DD)(A)\\
&= \sup_{\mu\in\DD}\sum_{A} r_A\mu(A)
= \sup_{\mu\in\DD}\int\sum_{A} r_A\chrf{A}\,d\mu\\
&= \sup_{\mu\in\DD}\int g\,d\mu.
\end{align*}

Now consider an arbitrary Scott-continuous function $f:\pH\to[0,\infty]$. Let $S_{\SO}$ be the family of $\SO$-simple functions. By Theorem \ref{thm:integration-strong}, if $\mu\sqleq\nu$, we have
\begin{align*}
\int f\,d\mu &= \sup_{\substack{g\in S_{\SO}\\g\leq f}} \int g\,d\mu
\leq \sup_{\substack{g\in S_{\SO}\\g\leq f}} \int g\,d\nu
= \int f\,d\nu,
\end{align*}
and if $\DD$ is a directed set of measures with respect to $\sqleq$, then
\begin{align*}
\int f\,d(\tbigsqcup\DD) &= \sup_{\substack{g\in S_{\SO}\\g\leq f}} \int g\,d(\tbigsqcup\DD)
= \sup_{\substack{g\in S_{\SO}\\g\leq f}} \sup_{\mu\in\DD}\int g\,d\mu\\
&= \sup_{\mu\in\DD}\sup_{\substack{g\in S_{\SO}\\g\leq f}} \int g\,d\mu
= \sup_{\mu\in\DD}\int f\,d\mu.
\end{align*}

(ii) This just the monotone convergence theorem for Lebesgue Integration.
\end{proof}

\subsection{Continuous Operations on Measures}
\label{apx:continuousmeasures}

In this section we show that certain operations on measures are continuous. These
properties will be lifted to kernels as required.

\begin{lemma}
\label{lem:compactopen}
For any probability measure $\mu$ on an algebraic DCPO and open set $B$, the value $\mu(B)$ is approximated arbitrarily closely from below by $\mu(C)$ for compact-open sets $C$.
\end{lemma}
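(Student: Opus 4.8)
The plan is to exhibit $B$ as a directed union of compact-open sets and then invoke countable additivity of $\mu$. Since the DCPO is algebraic, the basic open sets $\up\{a\}$ for finite $a$ form a base for the Scott topology (Lemma~\ref{lem:basic}(\ref{lem:basic:finitebase})). Hence there is an index set $I$ of finite elements with $\up\{a\}\subseteq B$ for each $a\in I$ and $B=\bigcup_{a\in I}\up\{a\}$. For a \emph{finite} subset $J\subseteq I$ put $C_J \defeq \bigcup_{a\in J}\up\{a\}$; by part~(iii) of Lemma~\ref{lem:basic} each $C_J$ is compact-open, and clearly $C_J\subseteq B$. Because $C_{J_1}\cup C_{J_2}=C_{J_1\cup J_2}$, the family $\{C_J\}$ is directed under inclusion with $\bigcup_J C_J = B$. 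It therefore suffices to prove $\mu(B)=\sup_J \mu(C_J)$, since the $C_J$ are exactly (a cofinal subfamily of) the compact-open subsets of $B$.

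The inequality $\sup_J\mu(C_J)\le\mu(B)$ is immediate from monotonicity of $\mu$, so the content is the reverse inequality. The step I would carry out with care is the reduction of the directed family $\{C_J\}$ to a \emph{countable increasing chain}. This uses that the Scott topology on the spaces at hand is second-countable: for $\pH$ the base $\set{\up\{a\}}{a\in\pfin \Hist}$ is countable because $\Hist$ is countable, so in fact the index set $I$ is itself countable. (In general one would instead invoke the Lindel\"of property to extract a countable subcover of $B$ from $\set{\up\{a\}}{a\in I}$.) Enumerating the chosen basic opens and letting $C_n$ be the union of the first $n$ of them yields compact-open sets $C_1\subseteq C_2\subseteq\cdots$ with $\bigcup_n C_n = B$.

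Finally I would apply continuity of $\mu$ from below, a direct consequence of countable additivity: $\mu(B)=\mu(\bigcup_n C_n)=\lim_n\mu(C_n)=\sup_n\mu(C_n)\le\sup_J\mu(C_J)$. Together with the easy reverse inequality this gives $\mu(B)=\sup_J\mu(C_J)$, which is precisely the assertion that $\mu(B)$ is approximated arbitrarily closely from below by the values $\mu(C)$ on compact-open $C\subseteq B$.

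The main obstacle is exactly the countability reduction in the second paragraph: a priori the directed family $\{C_J\}$ (and the cover of $B$) may be uncountable, and an uncountable union of $\mu$-null sets need not be $\mu$-null, so one cannot appeal to $\sigma$-additivity over $I$ directly. The essential point is that second-countability of the Scott topology on these algebraic DCPOs — which holds for $\pH$ since the finite elements $\pfin \Hist$ are countable — lets us replace the uncountable directed family by a countable increasing chain, at which point ordinary continuity of the measure from below finishes the argument.
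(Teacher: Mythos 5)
Your proof is correct and follows essentially the same route as the paper's: exhibit $B$ as the directed union of its compact-open subsets (finite unions of basic opens $\up{\{a\}}$, using Lemma~\ref{lem:basic}) and pass to suprema of measures. If anything you are more careful than the paper, whose proof simply asserts $\mu(\tbigcup\KK(B))=\sup\set{\mu(C)}{C\in\KK(B)}$ without comment; your reduction to a countable increasing chain (via countability of $\pfin\Hist$, followed by continuity of $\mu$ from below) is exactly the justification that final step needs, since $\sigma$-additivity alone does not license taking suprema over uncountable directed unions.
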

\begin{proof}
Since the sets $\up{\{a\}}$ for finite $a$ form a base for the topology, and every compact-open set is a finite union of such sets, the set $\KK(B)$ of compact-open subsets of $B$ is a directed set whose union is $B$. Then
\begin{align*}
\mu(B) = \mu(\tbigcup\KK(B)) = \sup\set{\mu(C)}{C\in\KK(B)}.
\end{align*}
\end{proof}

\begin{lemma}
\label{lem:prod-measure-continuous}
The product operator on measures in algebraic DCPOs is Scott-continuous in each argument.
\end{lemma}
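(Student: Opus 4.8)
The plan is to reduce the statement to a double application of the Scott-continuity of integration (Theorem~\ref{thm:int-continuous}). By symmetry of the product it suffices to prove continuity in the first argument: fix a measure $\nu$ on an algebraic DCPO $E$ and consider the map $\mu\mapsto\mu\times\nu$ on measures over an algebraic DCPO $D$. Throughout I use the characterization that, for measures $\rho_1,\rho_2$ on an algebraic DCPO, $\rho_1\sqleq\rho_2$ holds if and only if $\int g\,d\rho_1\le\int g\,d\rho_2$ for every Scott-continuous $g$ into $[0,\infty]$. The forward direction follows by writing $\int g\,d\rho=\sup\set{\int h\,d\rho}{h\in S_\SO,\ h\le g}$ (Theorem~\ref{thm:integration-strong}) and noting that $\int h\,d\rho=\sum_A r_A\,\rho(A)$ is monotone in $\rho$ for $\SO$-simple $h=\sum_A r_A\chrf A$; the reverse direction follows by taking $g=\chrf W$ for Scott-open $W$, which is Scott-continuous. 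By Theorem~\ref{thm:product} and Corollary~\ref{cor:product} the product $D\times E$ is again an algebraic DCPO whose Scott topology coincides with the product topology, so this characterization applies to measures on $D\times E$, and by Theorem~\ref{thm:cpo} these form a DCPO under $\sqleq$.

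The key technical step is the following. For a Scott-continuous $g:D\times E\to[0,\infty]$, define $G(x)\defeq\int_E g(x,y)\,\nu(dy)$; I claim $G:D\to[0,\infty]$ is Scott-continuous. Indeed, since the Scott topology on $D\times E$ is the product topology (Corollary~\ref{cor:product}), for any directed $C\subs D$ the family $\set{(x,y)}{x\in C}$ is directed with supremum $(\bigsqcup C,y)$, so $g(x,y)$ increases to $g(\bigsqcup C,y)$ pointwise in $x$; continuity of integration in the integrand (Theorem~\ref{thm:int-continuous}(ii)) then gives $G(\bigsqcup C)=\sup_{x\in C}G(x)$, and monotonicity of $G$ is the same argument applied to a two-element directed set. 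Granting this, monotonicity of the product follows: if $\mu_1\sqleq\mu_2$ then for every Scott-continuous $g$, Tonelli's theorem gives $\int g\,d(\mu_i\times\nu)=\int_D G\,d\mu_i$, and since $G$ is Scott-continuous Theorem~\ref{thm:int-continuous}(i) yields $\int_D G\,d\mu_1\le\int_D G\,d\mu_2$, whence $\mu_1\times\nu\sqleq\mu_2\times\nu$.

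For preservation of suprema, let $\DD$ be a directed set of measures on $D$. Monotonicity makes $\set{\mu\times\nu}{\mu\in\DD}$ directed with $(\bigsqcup\DD)\times\nu$ as an upper bound, so its supremum exists and is $\sqleq$-below $(\bigsqcup\DD)\times\nu$. To obtain equality I test against an arbitrary Scott-continuous $g$: using Tonelli and then Theorem~\ref{thm:int-continuous}(i) (continuity of integration in the measure, applicable because $G$ is Scott-continuous), $\int g\,d((\bigsqcup\DD)\times\nu)=\int_D G\,d(\bigsqcup\DD)=\sup_{\mu\in\DD}\int_D G\,d\mu=\sup_{\mu\in\DD}\int g\,d(\mu\times\nu)$. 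On the other hand, a single application of Theorem~\ref{thm:int-continuous}(i) to the directed set $\set{\mu\times\nu}{\mu\in\DD}$ gives $\int g\,d(\bigsqcup_{\mu\in\DD}(\mu\times\nu))=\sup_{\mu\in\DD}\int g\,d(\mu\times\nu)$. Hence the two measures integrate every Scott-continuous function identically; taking $g=\chrf{B_b}$ for finite $b$ and invoking Lemma~\ref{lem:inclexcl} shows they coincide, so $(\bigsqcup\DD)\times\nu=\bigsqcup_{\mu\in\DD}(\mu\times\nu)$.

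The main obstacle is the Scott-continuity of the inner integral $G$: it is exactly here that one needs the coincidence of the Scott and product topologies on $D\times E$ (Corollary~\ref{cor:product}), without which $g$ would not be guaranteed continuous along the slices used to push directed suprema through the integral. A secondary point requiring mild care is that the product measure and Tonelli's theorem be well-defined despite the discrepancy between the Borel sets of the product and the product $\sigma$-algebra noted earlier; this is unproblematic here because the basic Scott-open sets of $D\times E$ are rectangles. Everything else is bookkeeping around Tonelli's theorem and the two parts of Theorem~\ref{thm:int-continuous}.
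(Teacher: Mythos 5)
Your proof is correct, and it takes a genuinely different route from the paper's. The paper works entirely at the level of sets: for monotonicity it approximates an arbitrary Scott-open set of the product from below by compact-open sets (Lemma~\ref{lem:compactopen}), writes each compact-open set as a finite union of rectangles $\bigcup_n B_n\times C_n$, disjointifies this union via the atoms of the Boolean algebra generated by the $C_n$, and compares the two product measures atom by atom; preservation of suprema then follows by evaluating both sides on basic rectangles $B\times C$ using $(\mu\times\nu)(B\times C)=\mu(B)\cdot\nu(C)$ and the fact that such rectangles form a basis (Theorem~\ref{thm:product}). You instead characterize $\sqleq$ by integration against Scott-continuous functions, prove Scott-continuity of the sliced integral $G(x)=\int g(x,y)\,\nu(dy)$, and push everything through Tonelli together with both halves of Theorem~\ref{thm:int-continuous}. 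Your route is more conceptual and produces a reusable tool (the integral characterization of $\sqleq$), whereas the paper's route buys exactly what you flag as your ``secondary point'': by only ever evaluating $\mu\times\nu$ on finite unions of open rectangles, it never invokes Tonelli and so sidesteps the discrepancy, noted in the paper following Jones, between the Borel sets of the product topology and the product $\sigma$-algebra. Your resolution of that point (``the basic opens are rectangles'') genuinely closes the gap only for countably-based DCPOs such as $\pH$, where every Scott-open set of the product is a \emph{countable} union of rectangles and hence product-measurable; this covers the case the paper actually needs, though not the lemma's stated generality (a generality issue the paper's own proof arguably shares). Two smaller repairs: the final identification of the two measures should not cite Lemma~\ref{lem:inclexcl}, which concerns measures on $\pH$ rather than on the product; agreement on all Scott-open sets plus antisymmetry of $\sqleq$ (Theorem~\ref{thm:cpo} applied to the product, which is algebraic by Theorem~\ref{thm:product}) is the clean way to conclude. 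Also, Corollary~\ref{cor:product} is not actually needed for continuity of the slices $g(-,y)$: that is immediate from Scott-continuity of $g$ with respect to the componentwise order, since $\set{(x,y)}{x\in C}$ is directed with supremum $(\bigsqcup C,y)$ in that order.
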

\begin{proof}
The difficult part of the argument is monotonicity. Once we have that, then for any $B,C\in \SO$, we have $(\mu\times\nu)(B\times C) = \mu(B)\cdot\nu(C)$. Thus for any directed set $D$ of measures,
\begin{align*}
\lefteqn{(\tbigsqcup D\times\nu)(B\times C)}\qquad\\
&= (\tbigsqcup D)(B)\cdot\nu(C)
= (\sup_{\mu\in D}\mu(B))\cdot\nu(C)\\
&= \sup_{\mu\in D}(\mu(B)\cdot\nu(C))
= \sup_{\mu\in D}((\mu\times\nu)(B\times C))\\
&= (\tbigsqcup_{\mu\in D}(\mu\times\nu))(B\times C).
\end{align*}
By Theorem \ref{thm:product}, the sets $B\times C$ for $B,C\in \SO$ form a basis for the Scott topology on the product space $\pH\times\pH$, thus $\tbigsqcup D\times\nu = \tbigsqcup_{\mu\in D}(\mu\times\nu)$.

To show monotonicity, we use approximability by compact-open sets (Lemma \ref{lem:compactopen}). We wish to show that if $\mu_1\sqleq\mu_2$, then $\mu_1\times\nu\sqleq\mu_2\times\nu$. By Lemma \ref{lem:compactopen}, it suffices to show that
\begin{align*}
(\mu_1\times\nu)(\bigcup_n B_n\times C_n) \leq (\mu_2\times\nu)(\bigcup_n B_n\times C_n),
\end{align*}
where the index $n$ ranges over a finite set, and $B_n$ and $C_n$ are open sets of the component spaces. Consider the collection of all atoms $A$ of the Boolean algebra generated by the $C_n$. For each such atom $A$, let
\begin{align*}
N(A) &= \set n{\text{$C_n$ occurs positively in $A$}}.
\end{align*}
Then
\begin{align*}
\bigcup_n B_n\times C_n &= \bigcup_{A} (\bigcup_{n\in N(A)} B_n)\times A.
\end{align*}
The right-hand side is a disjoint union, since the $A$ are pairwise disjoint. Then
\begin{align*}
(\mu_1\times\nu)(\bigcup_n B_n\times C_n)
&= (\mu_1\times\nu)(\bigcup_{A} (\bigcup_{n\in N(A)} B_n)\times A)\\
&= \sum_A (\mu_1\times\nu)((\bigcup_{n\in N(A)} B_n)\times A)\\
&= \sum_A \mu_1(\bigcup_{n\in N(A)} B_n)\cdot\nu(A)\\
&\leq \sum_A \mu_2(\bigcup_{n\in N(A)} B_n)\cdot\nu(A)\\
&= (\mu_2\times\nu)(\bigcup_n B_n\times C_n).
\end{align*}
\end{proof}

Let $S$ and $T$ be measurable spaces and $f:S\to T$ a measurable function. For a measure $\mu$ on $S$, the \emph{push-forward measure} $f_*(\mu)$ is the measure $\mu\circ f^{-1}$ on $T$.

\begin{lemma}
\label{lem:pushforward}
If $f:(\pH)^\kappa\to\pH$ is Scott-continuous with respect to the subset order, then the push-forward operator $f_*:\MM((\pH)^\kappa)\to\MM(\pH)$ is Scott-continuous with respect to $\sqleq$.
\end{lemma}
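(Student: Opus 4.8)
The plan is to check the two defining conditions of Scott-continuity for $f_*$---monotonicity and preservation of directed suprema---directly from the definition of $\sqleq$. The guiding observation is that the order $\sqleq$ on measures only ever tests Scott-open sets, and that $f$ interacts perfectly with these: being Scott-continuous as a map of DCPOs, $f$ is continuous for the Scott topologies, so $f^{-1}(B)$ is Scott-open in $(\pH)^\kappa$ whenever $B$ is Scott-open in $\pH$. I would also record two facts I intend to reuse: first, by Theorem~\ref{thm:product} the product $(\pH)^\kappa$ is an algebraic DCPO, so Theorem~\ref{thm:cpo} applies to both $\MM((\pH)^\kappa)$ and $\MM(\pH)$; and second, from the proof of Theorem~\ref{thm:cpo}, the supremum of a directed family $D$ of measures is computed pointwise on Scott-open sets, $(\bigsqcup D)(B)=\sup_{\mu\in D}\mu(B)$ for $B$ Scott-open.

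For monotonicity, suppose $\mu\sqleq\nu$ and let $B\subseteq\pH$ be Scott-open. Since $f^{-1}(B)$ is then Scott-open, $(f_*\mu)(B)=\mu(f^{-1}(B))\le\nu(f^{-1}(B))=(f_*\nu)(B)$, and as $B$ ranges over all Scott-open sets this gives $f_*\mu\sqleq f_*\nu$. In particular $f_*$ sends a directed set to a directed set, whose supremum exists by Theorem~\ref{thm:cpo}.

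For suprema, I would fix a directed $D$ and a Scott-open $B$ and apply the pointwise formula on both sides, using that $f^{-1}(B)$ is again Scott-open in the source:
\begin{align*}
(f_*(\tbigsqcup D))(B)
&=(\tbigsqcup D)(f^{-1}(B))
=\sup_{\mu\in D}\mu(f^{-1}(B))\\
&=\sup_{\mu\in D}(f_*\mu)(B)
=(\tbigsqcup_{\mu\in D}f_*\mu)(B).
\end{align*}
Thus $f_*(\bigsqcup D)$ and $\bigsqcup_{\mu\in D}f_*\mu$ agree on every Scott-open set; since a probability measure on $\pH$ is determined by its values on the Scott-open sets $B_b$ for finite $b$ (Lemma~\ref{lem:inclexcl}), the two measures coincide, completing the proof.

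I do not anticipate a genuine analytic obstacle here. The one delicate point---and the reason the whole argument must be phrased in terms of open test sets---is that the pointwise description of $\bigsqcup D$ is valid only on the Scott-open sets and not on arbitrary Borel sets (as the footnote to Theorem~\ref{thm:cpo} emphasizes). Scott-continuity of $f$ is exactly what keeps each preimage $f^{-1}(B)$ inside this privileged class, so the two applications of the formula are legitimate; without continuity the computation would break down.
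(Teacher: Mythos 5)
Your proof is correct and follows essentially the same route as the paper's: monotonicity via the Scott-openness of $f^{-1}(B)$, and preservation of directed suprema by evaluating both sides pointwise on Scott-open sets. The only differences are presentational---you make explicit the appeal to Lemma~\ref{lem:inclexcl} for concluding equality of the two measures and the caveat that the pointwise supremum formula holds only on Scott-open sets, both of which the paper leaves implicit.
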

\begin{proof}
Let $\mu,\nu\in\MM((\pH)^\kappa)$, $\mu\sqleq\nu$. If $B\in \SO$, then $f^{-1}(B)$ is Scott-open in $(\pH)^\kappa$, so $f_*(\mu)(B) = \mu(f^{-1}(B))\leq \nu(f^{-1}(B)) = f_*(\nu)(B)$. As $B\in \SO$ was arbitrary, $f_*(\mu) \sqleq f_*(\nu)$. Similarly, if $D$ is any $\sqleq$-directed set in $\MM((\pH)^\kappa)$, then so is $\set{f_*(\mu)}{\mu\in D}$, and
\begin{align*}
f_*(\tbigsqcup D)(B) &= (\tbigsqcup D)(f^{-1}(B)) = \sup_{\mu\in D}\mu(f^{-1}(B))\\
&= \sup_{\mu\in D}f_*(\mu)(B) = (\tbigsqcup_{\mu\in D}f_*(\mu))(B)
\end{align*}
for any $B\in \SO$, thus $f_*(\tbigsqcup D) = \tbigsqcup_{\mu\in D}f_*(\mu)$.
\end{proof}

\begin{lemma}
\label{lem:parcomp-measure-continuous}
Parallel composition of measures {\upshape(}$\pcomp${\upshape)} is Scott-continuous in each argument.
\end{lemma}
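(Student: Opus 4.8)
The plan is to recognize parallel composition of measures as a push-forward along the union map and then invoke the two continuity results already established. Let $\textsf{u}\colon\pH\times\pH\to\pH$ be the union map $\textsf{u}(a,b)=a\cup b$. Unwinding the definition $(\mu\pcomp\nu)(A)=(\mu\times\nu)(\set{(a,b)}{a\cup b\in A})$, we see at once that $\mu\pcomp\nu=\textsf{u}_*(\mu\times\nu)$, the push-forward of the product measure under $\textsf{u}$. The whole lemma will then reduce to continuity of $\textsf{u}$ plus Lemmas~\ref{lem:prod-measure-continuous} and~\ref{lem:pushforward}.

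First I would verify that $\textsf{u}$ is Scott-continuous with respect to the componentwise subset order on $\pH\times\pH$ and the subset order on $\pH$. Monotonicity is immediate, since $a_1\subs a_2$ and $b_1\subs b_2$ give $a_1\cup b_1\subs a_2\cup b_2$. Preservation of directed suprema is equally routine: suprema in $\pH$ and in the product DCPO are both computed as unions, componentwise in the product, so for any directed $D\subs\pH\times\pH$ we have $\textsf{u}(\bigsqcup D)=(\bigcup_{(a,b)\in D}a)\cup(\bigcup_{(a,b)\in D}b)=\bigcup_{(a,b)\in D}(a\cup b)=\bigsqcup_{(a,b)\in D}\textsf{u}(a,b)$.

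With this in hand the result follows by composing continuous maps. By Lemma~\ref{lem:pushforward}, since $\textsf{u}$ is Scott-continuous, the push-forward $\textsf{u}_*\colon\MM(\pH\times\pH)\to\MM(\pH)$ is Scott-continuous. By Lemma~\ref{lem:prod-measure-continuous}, for each fixed $\nu$ the map $\mu\mapsto\mu\times\nu$ is Scott-continuous, and symmetrically for the other argument. Hence $\mu\mapsto\mu\pcomp\nu=\textsf{u}_*(\mu\times\nu)$ is a composite of Scott-continuous maps and therefore Scott-continuous, and likewise $\nu\mapsto\mu\pcomp\nu$.

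The one point requiring care is that these compositions live in the product space, where, as noted by Jones, the product $\sigma$-algebra need not coincide with the Borel sets of the product topology, and the Scott topology on the product need not match the product topology. This is exactly what Corollary~\ref{cor:product} resolves for algebraic DCPOs such as $\pH$: the Scott topology on $\pH\times\pH$ coincides with the product topology, so $\textsf{u}^{-1}(B)$ is Scott-open in the product for every $B\in\SO$, making $\textsf{u}$ measurable and the push-forward well defined. I expect this bookkeeping to be the only genuinely delicate step; the substantive content is the trivial continuity of the union map, with the heavy lifting already carried out in Lemmas~\ref{lem:prod-measure-continuous} and~\ref{lem:pushforward}.
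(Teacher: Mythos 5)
Your proposal is correct and follows essentially the same route as the paper's own proof: both decompose $\mu\pcomp\nu$ as the push-forward $\tbigcup_*(\mu\times\nu)$ of the product measure under the (easily verified Scott-continuous) union map, and then compose Lemma~\ref{lem:prod-measure-continuous} with Lemma~\ref{lem:pushforward}. Your explicit appeal to Corollary~\ref{cor:product} to reconcile the Scott and product topologies on $\pH\times\pH$ is a point the paper leaves implicit, but it is the same argument.
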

\begin{proof}
By definition, $\mu\pcomp\nu = (\mu\times\nu)\cmp{\tbigcup}^{-1}$, where $\tbigcup:\pH\times\pH\to\pH$ is the set union operator. The set union operator is easily shown to be continuous with respect to the Scott topologies on $\pH\times\pH$ and the $\pH$. By Lemma \ref{lem:pushforward}, the push-forward operator with respect to union is Scott-continuous with respect to $\sqleq$. By Lemma \ref{lem:prod-measure-continuous}, the product operator is Scott-continuous in each argument with respect to $\sqleq$. The operator $\pcomp$ is the composition of these two Scott continuous operators, therefore is itself Scott-continuous.
\end{proof}

\subsection{Continuous Kernels}
\label{apx:continuouskernels}

\begin{lemma}
\label{lem:det-preserves}
The deterministic kernel associated with any Scott-continuous function $f:D\to E$ is a continuous kernel.
\end{lemma}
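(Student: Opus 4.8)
The plan is to unwind the definition of the deterministic kernel and reduce the statement to two standard facts: that the characteristic function of a Scott-open set is Scott-continuous, and that Scott-continuity is preserved under composition. Write $P_f$ for the deterministic kernel associated with $f$, so that $P_f(a,A) = \dirac{f(a)}(A) = [f(a)\in A]$ for $a\in D$ and $A$ a Borel set of $E$. Then $P_f(a,-) = \dirac{f(a)}$ is a probability measure for every $a$, and since $f$ is Scott-continuous it is in particular Borel measurable, so $a\mapsto P_f(a,A) = \chrf{f^{-1}(A)}(a)$ is measurable for every Borel $A$. Hence $P_f$ is a genuine Markov kernel, and it remains only to show that it is Scott-continuous in its first argument.

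By the definition of a continuous kernel (the characterization stated just before Theorem~\ref{thm:kernelDCPO}), this amounts to checking, for each fixed Scott-open $B\in\SO$ of $E$, that the map $a\mapsto P_f(a,B)$ is monotone and preserves directed suprema as a map from $D$ to $[0,1]$. For such $B$ we have $P_f(a,B) = [f(a)\in B] = \chrf{B}(f(a))$; that is, $P_f(-,B) = \chrf{B}\circ f$. Since $f$ is Scott-continuous by hypothesis and compositions of Scott-continuous maps are Scott-continuous, it suffices to observe that $\chrf{B}\colon E\to[0,1]$ is itself Scott-continuous whenever $B$ is Scott-open.

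This last observation is the only real content. Monotonicity of $\chrf{B}$ is exactly up-closedness of $B$: if $x\sqleq y$ and $x\in B$ then $y\in B$, so $\chrf{B}(x)\leq\chrf{B}(y)$. For preservation of suprema, let $C\subs E$ be directed. If $\bigsqcup C\notin B$, then by monotonicity no element of $C$ lies in $B$, so $\chrf{B}(\bigsqcup C) = 0 = \sup_{x\in C}\chrf{B}(x)$. If $\bigsqcup C\in B$, then Scott-openness of $B$ furnishes some $x\in C$ with $x\in B$, so $\chrf{B}(\bigsqcup C) = 1 = \sup_{x\in C}\chrf{B}(x)$. Thus $\chrf{B}$ preserves directed suprema, and composing with $f$ shows $P_f(-,B)$ is Scott-continuous. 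As $B\in\SO$ was arbitrary, $P_f$ is Scott-continuous in its first argument, so $P_f$ is a continuous kernel.

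The expected obstacle is precisely the second case above, where Scott-openness of $B$ (rather than mere up-closedness) is used to intersect the directed image with $B$ once its supremum lands in $B$; everything else is bookkeeping. A minor point worth flagging is that this open-set-by-open-set verification is genuinely equivalent to Scott-continuity of $\curry P_f\colon D\to\meas(E)$, because suprema in the DCPO of measures are computed pointwise on Scott-open sets (Theorem~\ref{thm:cpo}); working from the definition of continuous kernel directly sidesteps the need to invoke this.
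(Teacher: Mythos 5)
Your proof is correct and follows essentially the same route as the paper's: both arguments reduce to showing that for a fixed Scott-open $B$, monotonicity of $a\mapsto[f(a)\in B]$ comes from up-closedness of $B$ and monotonicity of $f$, while preservation of directed suprema comes from Scott-openness of $B$ applied to the directed image of the set under $f$. Your packaging via $P_f(-,B)=\chrf{B}\circ f$ and closure of Scott-continuity under composition is a cosmetic refactoring of the paper's inlined computation, not a different argument.
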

\begin{proof}
Recall from \cite{\pnkpaper} that deterministic kernels are those whose output measures are Dirac measures
(point masses). Any measurable function $f:D\to E$ uniquely determines
a deterministic kernel $P_f$ such that $P_f(a,-) = \dirac{f(a)}$ 
(or equivalently, $P = \unitop \circ f$)
and vice versa (this was shown in \cite{\pnkpaper} for $D=E=\pH$). We show that if in addition $f$ is Scott-continuous, then the kernel $P_f$ is continuous.

Let $f:D\to E$ be Scott-continuous. For any open $B$, if $a\sqleq b$, then $f(a)\sqleq f(b)$ since $f$ is monotone. Since $B$ is up-closed, if $f(a)\in B$, then $f(b)\in B$. Thus
\begin{align*}
P_f(a,B) = [f(a)\in B] \leq [f(b)\in B] = P_f(b,B).
\end{align*}
If $A\subs D$ is a directed set, then $f(\bigsqcup A) = \bigsqcup_{a\in A} f(a)$. Since $B$ is open,
$\bigsqcup_{a\in A} f(a)\in B$ iff there exists $a\in A$ such that $f(a)\in B$. Then
\begin{align*}
P_f(\tbigsqcup A,B) &= [f(\tbigsqcup A)\in B] = [\tbigsqcup_{a\in A} f(a)\in B]\\
&= \sup_{a\in A}[f(a)\in B] = \sup_{a\in A}P_f(a,B).
\end{align*}
\end{proof}

\begin{lemma}
\label{lem:atomic-programs-cont}
All atomic \PNK\ programs (including predicates) denote deterministic and Scott-continuous kernels.
\end{lemma}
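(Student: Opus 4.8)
The plan is to reduce the claim to Lemma~\ref{lem:det-preserves} via the pointwise characterization of atomic programs established in Lemma~\ref{lem:nk-prim-charact}. That lemma tells us that every atomic program $p$ (a predicate, $\pdup$, or a modification) satisfies $\den{p}(a) = \unit{\set{f_p(h)}{h \in a}}$ for some partial function $f_p : \Hist \pfun \Hist$. Since the right-hand side is always a point mass $\dirac{g_p(a)}$ on the single set $g_p(a) \defeq \set{f_p(h)}{h \in a}$, this immediately shows that $\den{p}$ is a deterministic kernel: indeed $\den{p} = P_{g_p}$ is exactly the deterministic kernel associated with the set-function $g_p : \pH \to \pH$.

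It then remains only to verify that $g_p$ is Scott-continuous as a map from $(\pH, \subseteq)$ to itself, for then Lemma~\ref{lem:det-preserves} yields continuity of the kernel $P_{g_p} = \den{p}$ for free. Monotonicity is clear: if $a \subseteq b$, then every element of the form $f_p(h)$ with $h \in a$ is also of the form $f_p(h)$ with $h \in b$, so $g_p(a) \subseteq g_p(b)$. Preservation of directed suprema is equally straightforward, since taking the image under a function commutes with unions: for any directed $D \subseteq \pH$ (whose supremum is $\bigcup D$), we have $h \in \bigcup D$ iff $h \in a$ for some $a \in D$, hence
\[
g_p\Big(\bigsqcup D\Big) = \set{f_p(h)}{h \in \textstyle\bigcup D} = \bigcup_{a \in D}\set{f_p(h)}{h \in a} = \bigsqcup_{a \in D} g_p(a).
\]
The partiality of $f_p$ causes no difficulty: histories outside the domain of $f_p$ are simply omitted from every image, and this omission is consistent across the union.

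I do not anticipate any real obstacle in this argument; the only point requiring a moment of care is the partiality of $f_p$, and even there the image-commutes-with-union identity holds verbatim. The substantive work has been front-loaded into Lemma~\ref{lem:nk-prim-charact} (which isolates the pointwise action $f_p$ and shows the output is a point mass) and Lemma~\ref{lem:det-preserves} (which converts Scott-continuity of a function into continuity of its associated deterministic kernel), so this lemma is a short composition of those two facts together with the trivial observation that pointwise-defined set maps preserve inclusions and directed unions.
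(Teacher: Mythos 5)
Your proposal is correct and follows essentially the same route as the paper's own proof: invoke Lemma~\ref{lem:nk-prim-charact} to get the pointwise form $\den{p}(a) = \unit{\set{f_p(\h)}{\h\in a}}$ (hence determinism), check that the induced set map is monotone and commutes with directed unions, and conclude via Lemma~\ref{lem:det-preserves}. Your extra remark about the partiality of $f_p$ being harmless is a fine clarification but does not change the argument.
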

\begin{proof}
By Lemma~\ref{lem:nk-prim-charact}, all atomic programs denote kernels of the form
$a\mapsto \unit{\set{f(\h)}{\h\in a}}$, where $f$ is a partial function $H\pfun H$.
Hence they are deterministic. Using Lemma~\ref{lem:det-preserves}, we see that they are also Scott-continuous:
\begin{itemize}
\item
If $a\subs b$, then $\set{f(\h)}{\h\in a}\subs\set{f(\h)}{\h\in b}$; and
\item
If $D\subs\pH$ is a directed set, then $\set{f(\h)}{\h\in\tbigcup D}=\bigcup_{a\in D}\set{f(\h)}{\h\in a}$.
\end{itemize}
\end{proof}

\begin{lemma}
\label{lem:int-preserves}
Let $P$ be a continuous Markov kernel and $f:\pH\to\R_+$ a Scott-continuous function. Then the map
\begin{align}
a \mapsto \int_{c\in\pH} f(c)\cdot P(a,dc)\label{eq:int-preserves}
\end{align}
is Scott-continuous.
\end{lemma}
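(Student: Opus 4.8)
The plan is to recognize the map in \eqref{eq:int-preserves} as a composition of two Scott-continuous maps, so that its continuity follows immediately from the fact that a composite of Scott-continuous functions is again Scott-continuous. Concretely, I would factor $a \mapsto \int_{c\in\pH} f(c)\cdot P(a,dc)$ as
\[
\pH \xrightarrow{\;\curry P\;} \MM(\pH) \xrightarrow{\;\nu\,\mapsto\,\int f\,d\nu\;} \R_+,
\]
where the first arrow sends $a$ to the output measure $P(a,-)$ and the second integrates the fixed integrand $f$ against a measure.

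First I would observe that the first arrow, $\curry P$, is Scott-continuous from $(\pH,\subseteq)$ to $(\MM(\pH),\sqleq)$: this is exactly the definition of $P$ being a \emph{continuous} Markov kernel, as recorded just after Lemma~\ref{lem:kernelorder}. Next, since $f:\pH\to[0,\infty]$ is Scott-continuous by hypothesis, Theorem~\ref{thm:int-continuous}(i) tells us that the second arrow, $\nu \mapsto \int f\,d\nu$, is Scott-continuous with respect to $\sqleq$ on $\MM(\pH)$ and $\leq$ on $\R_+$. The composite of two Scott-continuous maps is Scott-continuous (monotonicity and preservation of directed suprema both transfer through composition), so the map \eqref{eq:int-preserves} is Scott-continuous, as required.

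There is essentially no obstacle here once the two cited results are in hand: all the real work has already been done in establishing that $P$ is continuous in its first argument (captured by $\curry P$ being continuous) and that integration against a fixed Scott-continuous integrand is continuous in the measure argument (Theorem~\ref{thm:int-continuous}(i)). If one prefers not to invoke closure under composition as a black box, one can verify the two continuity conditions directly. Monotonicity holds because $a\subseteq b$ gives $P(a,-)\sqleq P(b,-)$, whence $\int f\,dP(a,-)\le\int f\,dP(b,-)$ by the monotonicity half of Theorem~\ref{thm:int-continuous}(i); and for a directed set $D\subseteq\pH$ one computes
\[
\int f\,dP(\tbigcup D,-) = \int f\,d\bigl(\tbigsqcup_{a\in D}P(a,-)\bigr)
= \sup_{a\in D}\int f\,dP(a,-),
\]
using continuity of $\curry P$ for the first equality and Theorem~\ref{thm:int-continuous}(i) for the second.
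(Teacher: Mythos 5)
Your proof is correct and matches the paper's own argument essentially verbatim: the paper likewise factors the map \eqref{eq:int-preserves} as $a \mapsto P(a,-)$ followed by $\mu \mapsto \int f\,d\mu$, citing the Scott-continuity of $\curry P$ (Lemma~\ref{lem:curry-continuous}, which is just the definition of $P$ being a continuous kernel) and Theorem~\ref{thm:int-continuous}(i), and then invokes closure of Scott-continuous maps under composition. Your optional direct verification of monotonicity and preservation of directed suprema is a harmless elaboration of the same argument.
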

\begin{proof}
The map \eqref{eq:int-preserves} is the composition of the maps
\begin{align*}
a &\mapsto P(a,-) & P(a,-) \mapsto \int_{c\in\pH} P(a,dc)\cdot f(c),
\end{align*}
which are Scott-continuous by Lemmas \ref{lem:curry-continuous} and \ref{thm:int-continuous}, respectively, and the composition of Scott-continuous maps is Scott-continuous.
\end{proof}

\begin{lemma}
\label{lem:prod-preserves}
Product preserves continuity of Markov kernels: If $P$ and $Q$ are continuous, then so is $P\times Q$.
\end{lemma}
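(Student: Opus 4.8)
The plan is to reduce continuity of the product kernel to continuity of the product operator on measures (Lemma~\ref{lem:prod-measure-continuous}) together with continuity of the curried forms $\curry P$ and $\curry Q$. Recall that $P\times Q$ is the kernel on $\pH\times\pH$ whose output measure at an input pair is the product measure, $(P\times Q)((a,a'),-) = P(a,-)\times Q(a',-)$. This is a probability measure because a product of probability measures is one, so it suffices to verify that $\curry(P\times Q):\pH\times\pH\to\MM(\pH\times\pH)$ is Scott-continuous; Scott-continuity in the first argument then also supplies the measurability needed for $P\times Q$ to be a kernel. Note that the target $\MM(\pH\times\pH)$ is a legitimate DCPO under $\sqleq$: by Theorem~\ref{thm:product} the product $\pH\times\pH$ is algebraic, by Corollary~\ref{cor:product} its Scott topology is the product topology, and so Theorem~\ref{thm:cpo} applies.

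First I would factor $\curry(P\times Q)$ as a composition $m\circ g$, where $g:\pH\times\pH\to\MM(\pH)\times\MM(\pH)$ sends $(a,a')$ to $(\curry P(a),\curry Q(a'))$ and $m:\MM(\pH)\times\MM(\pH)\to\MM(\pH\times\pH)$ is the product operator $(\mu,\nu)\mapsto\mu\times\nu$. Since $P$ and $Q$ are continuous kernels, $\curry P$ and $\curry Q$ are Scott-continuous; the projections out of $\pH\times\pH$ are Scott-continuous (as shown in the proof of Theorem~\ref{thm:product}), and a map into a product DCPO is continuous precisely when its two components are, so $g$ is Scott-continuous.

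Next I would argue that $m$ is jointly Scott-continuous. Lemma~\ref{lem:prod-measure-continuous} provides only continuity in each argument separately, so the crux is to upgrade this to joint continuity. This follows from the standard fact that a map between DCPOs that is Scott-continuous in each argument separately is jointly Scott-continuous: separate continuity yields monotonicity in each coordinate and hence joint monotonicity; and since suprema in the product DCPO are computed coordinatewise and the projections of a directed set are again directed, directedness lets one collapse the iterated supremum $\bigsqcup_{\mu}\bigsqcup_{\nu}(\mu\times\nu)$ to the joint supremum $\bigsqcup_{(\mu,\nu)}(\mu\times\nu)$. Composing, $\curry(P\times Q)=m\circ g$ is Scott-continuous as a composition of Scott-continuous maps, so $P\times Q$ is a continuous kernel.

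The hard part will be this passage from separate to joint continuity of the product operator, since Lemma~\ref{lem:prod-measure-continuous} is stated one argument at a time; one must check carefully that the coordinatewise computation of suprema in $\MM(\pH)\times\MM(\pH)$, combined with directedness, genuinely yields the joint supremum, and that Corollary~\ref{cor:product} lets us interpret the product measures and their directed suprema consistently in the product topology. Everything else is routine composition of Scott-continuous maps.
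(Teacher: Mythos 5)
Your proof is correct, but note that it proves a slightly different statement than the paper's, via a more abstract route. First, the mismatch: in the paper, $P\times Q$ is a kernel with source $\pH$, not $\pH\times\pH$ --- it is defined by $(P\times Q)(a,-)=P(a,-)\times Q(a,-)$, feeding the \emph{same} input $a$ to both factors, since its only purpose is to define parallel composition via $P\pcomp Q=(P\times Q)\cmp\tbigcup$. You instead prove continuity of the pair-input kernel $(P\times Q)((a,a'),-)=P(a,-)\times Q(a',-)$. Your statement does yield the paper's, because the paper's kernel is yours precomposed with the diagonal $a\mapsto(a,a)$, which is Scott-continuous; but that bridging line is missing from your write-up and should be added if your version is to serve the paper's purpose (e.g.\ in the proof that $\pcomp$ preserves continuity). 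Second, the route: the paper verifies continuity directly --- monotonicity by applying Lemma~\ref{lem:prod-measure-continuous} once in each coordinate, and preservation of suprema by expanding $P(\tbigsqcup A,-)\times Q(\tbigsqcup A,-)$ into the double supremum $\tbigsqcup_{a\in A}\tbigsqcup_{b\in A}P(a,-)\times Q(b,-)$ and collapsing it to the diagonal $\tbigsqcup_{a\in A}P(a,-)\times Q(a,-)$ using directedness of $A$. Your proof packages exactly this computation into the general principle that separate Scott-continuity implies joint Scott-continuity for maps between DCPOs (which is indeed standard, and whose proof is precisely the monotonicity-plus-double-supremum-collapse argument above), and then composes with the continuous map $(a,a')\mapsto(\curry P(a),\curry Q(a'))$. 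What your approach buys is generality and reusability: the separate-to-joint lemma and the pair-input product kernel are independently useful, and your explicit appeal to Theorem~\ref{thm:product} and Corollary~\ref{cor:product} makes the topological bookkeeping on the product space visible where the paper leaves it implicit. What the paper's approach buys is brevity: it never needs kernels whose source is the product DCPO, so the entire proof is two short displayed calculations.
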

\begin{proof}
We wish to show that if $a\subs b$, then $(P\times Q)(a,-)\sqleq(P\times Q)(b,-)$, and if $A$ is a directed subset of $\pH$, then $(P\times Q)(\bigcup A) = \sup_{a\in A} (P\times Q)(a,-)$. For the first statement, using Lemma \ref{lem:prod-measure-continuous} twice,
\begin{align*}
(P\times Q)(a,-) &= P(a,-)\times Q(a,-)
\sqleq P(b,-)\times Q(a,-)\\
&\sqleq P(b,-)\times Q(b,-)
= (P\times Q)(b,-).
\end{align*}
For the second statement, for $A$ a directed subset of $\pH$,
\begin{align*}
(P\times Q)(\tbigsqcup A,-) &= P(\tbigsqcup A,-)\times Q(\tbigsqcup A,-)\\
&= (\tbigsqcup_{a\in A}P(a,-))\times (\tbigsqcup_{b\in A}Q(b,-))\\
&= \tbigsqcup_{a\in A}\tbigsqcup_{b\in A}P(a,-)\times Q(b,-)\\
&= \tbigsqcup_{a\in A}P(a,-)\times Q(a,-)\\
&= \tbigsqcup_{a\in A}(P\times Q)(a,-).
\end{align*}
\end{proof}

\begin{lemma}
\label{lem:seqcomp-preserves}
Sequential composition preserves continuity of Markov kernels: If $P$ and $Q$ are continuous, then so is $P\cmp Q$.
\end{lemma}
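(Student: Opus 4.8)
The plan is to reduce this lemma to Lemma~\ref{lem:int-preserves}, which already does the analytic heavy lifting. By the definition of continuity for a kernel, it suffices to show that for each fixed Scott-open set $A\in\SO$, the map $a\mapsto(P\cmp Q)(a,A)$ is Scott-continuous from $(\pH,\subseteq)$ to $[0,1]$. (That $P\cmp Q$ remains a genuine Markov kernel is standard and already implicit in the excerpt: the composition of two Markov kernels defined in \eqref{eq:kernl-kernel-comp} is again a Markov kernel, so both the measurability and probability-measure conditions are inherited; the only new content is Scott-continuity in the first argument.)

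Unfolding the definition of sequential composition \eqref{eq:kernl-kernel-comp}, for fixed $A\in\SO$ we have
\[
(P\cmp Q)(a,A) = \int_{c\in\pH} Q(c,A)\cdot P(a,dc).
\]
The key observation is that the integrand, viewed as a function of the integration variable $c$, is $g_A \defeq \lambda c.\, Q(c,A)$, and this function is Scott-continuous precisely because $Q$ is a continuous kernel: continuity of $Q$ means that $c\mapsto Q(c,A)$ is monotone and preserves directed suprema for every fixed $A\in\SO$. Since $g_A:\pH\to[0,1]\subseteq\R_+$ is Scott-continuous, Lemma~\ref{lem:int-preserves} applied with $f=g_A$ and the continuous kernel $P$ yields immediately that
\[
a\mapsto\int_{c\in\pH} g_A(c)\cdot P(a,dc) = (P\cmp Q)(a,A)
\]
is Scott-continuous. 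As $A\in\SO$ was arbitrary, $P\cmp Q$ is Scott-continuous in its first argument, hence a continuous kernel.

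I expect essentially no obstacle here beyond bookkeeping, since all the real work---monotonicity of integration in the measure argument and preservation of directed suprema---has been isolated in Lemma~\ref{lem:int-preserves} (which in turn rests on Theorem~\ref{thm:int-continuous} and the continuity of $\curry$). The one point demanding care is that kernel continuity quantifies over Scott-open sets $A\in\SO$ rather than over all Borel sets, so it is exactly for these $A$ that we need the integrand $g_A$ to be Scott-continuous---and that is guaranteed by the continuity hypothesis on $Q$. Were continuity instead required for arbitrary $A\in\BB$, the integrand need not be Scott-continuous and the argument would break, so restricting to $\SO$ is what makes the one-line reduction go through.
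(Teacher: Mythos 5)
Your proof is correct and matches the paper's own argument exactly: both unfold $(P\cmp Q)(a,A)=\int_{c\in\pH} Q(c,A)\cdot P(a,dc)$ for fixed $A\in\SO$, observe that the integrand $c\mapsto Q(c,A)$ is Scott-continuous because $Q$ is a continuous kernel, and invoke Lemma~\ref{lem:int-preserves}. Your closing remark about why the restriction to $A\in\SO$ (rather than all of $\BB$) is essential is a correct and worthwhile observation, though the paper leaves it implicit.
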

\begin{proof}
We have
\begin{align*}
(P\cmp Q)(a,A) &= \int_{c\in\pH} P(a,dc)\cdot Q(c,A).
\end{align*}
Since $Q$ is a continuous kernel, it is Scott-continuous in its first argument, thus so is $P\cmp Q$ by Lemma \ref{lem:int-preserves}.
\end{proof}

\begin{lemma}
\label{lem:parcomp-preserves}
Parallel composition preserves continuity of Markov kernels: If $P$ and $Q$ are continuous, then so is $P\pcomp Q$.
\end{lemma}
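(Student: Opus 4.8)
The plan is to realize $P\pcomp Q$ as a composite of operations that have already been shown to be Scott-continuous, exactly mirroring the measure-level argument of Lemma~\ref{lem:parcomp-measure-continuous}. Recall that $\pcomp$ on kernels is the pointwise lifting of $\pcomp$ on measures, so that $(P\pcomp Q)(a,-)=P(a,-)\pcomp Q(a,-)$ for every $a\in\pH$. As observed in the proof of Lemma~\ref{lem:parcomp-measure-continuous}, parallel composition of measures factors through the product and the set-union map: writing $\tbigcup:\pH\times\pH\to\pH$ for set union and $\tbigcup_*$ for the induced push-forward, we have $\mu\pcomp\nu=\tbigcup_*(\mu\times\nu)$. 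Combining these yields the pointwise identity $(P\pcomp Q)(a,-)=\tbigcup_*\bigl((P\times Q)(a,-)\bigr)$.

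Using this identity, I would show that $\curry(P\pcomp Q):\pH\to\MM(\pH)$ factors as the composition $\tbigcup_*\circ\curry(P\times Q)$. The inner map $\curry(P\times Q):\pH\to\MM(\pH\times\pH)$ is Scott-continuous because product preserves continuity of kernels (Lemma~\ref{lem:prod-preserves}). The outer map $\tbigcup_*:\MM(\pH\times\pH)\to\MM(\pH)$ is Scott-continuous by Lemma~\ref{lem:pushforward}, since the union map $\tbigcup$ is Scott-continuous with respect to the subset order (as already noted in the proof of Lemma~\ref{lem:parcomp-measure-continuous}). A composition of Scott-continuous maps is Scott-continuous, hence $\curry(P\pcomp Q)$ is Scott-continuous, which is precisely the statement that $P\pcomp Q$ is a continuous kernel.

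I expect no serious obstacle here: the entire content is the bookkeeping needed to present $P\pcomp Q$ as $\tbigcup_*\circ\curry(P\times Q)$ and then to invoke Lemmas~\ref{lem:prod-preserves} and~\ref{lem:pushforward}. The one step deserving care is the verification of the pointwise identity $(P\pcomp Q)(a,-)=\tbigcup_*((P\times Q)(a,-))$---that is, that kernel-level parallel composition really is the pointwise parallel composition of the output measures, and that this in turn coincides with the push-forward-of-product description. Once that identification is made explicit, the continuity claim is a one-line composition argument, and no properties of $\pH$ beyond those already used in the cited lemmas are required.
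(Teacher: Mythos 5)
Your proof is correct, and it rests on the same decomposition as the paper's: parallel composition is the product followed by set union. The paper phrases this at the kernel level, writing $P\pcomp Q = (P\times Q)\cmp\tbigcup$, treating $\tbigcup$ as the deterministic kernel induced by the Scott-continuous union map (implicitly via Lemma~\ref{lem:det-preserves}) and then invoking Lemma~\ref{lem:seqcomp-preserves} to conclude that the sequential composite is continuous. You instead work pointwise on output measures, writing $\curry(P\pcomp Q)=\tbigcup_*\circ\curry(P\times Q)$ and invoking Lemma~\ref{lem:pushforward} for the outer map. These are two phrasings of the same fact---sequential composition with a deterministic kernel is exactly pointwise push-forward, i.e.\ $(R\cmp f)(a,-)=f_*(R(a,-))$---so the mathematical content is identical, and the only difference is which lemma carries the union step. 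Your route is slightly more elementary, since it bypasses the integration machinery behind Lemma~\ref{lem:seqcomp-preserves} and mirrors the measure-level proof of Lemma~\ref{lem:parcomp-measure-continuous} verbatim; the paper's route buys uniformity, reusing the same sequential-composition lemma that also handles composition with arbitrary (non-deterministic) second factors. Both arguments rely on the same asserted-but-unproved bookkeeping fact that kernel-level $\pcomp$ is the pointwise lifting of measure-level $\pcomp$, which you rightly flag as the one step deserving explicit verification.
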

\begin{proof}
Suppose $P$ and $Q$ are continuous.
By definition, $P\pcomp Q = (P\times Q)\cmp\tbigcup$. By Lemma \ref{lem:prod-preserves}, $P\times Q$ is continuous, and $\tbigcup:\pH\times\pH\to\pH$ is continuous. Thus their composition is continuous by Lemma \ref{lem:seqcomp-preserves}.
\end{proof}

\begin{lemma}
\label{lem:choice-preserves}
The probabilistic choice operator {\upshape($\opr$)} preserves continuity of kernels.
\end{lemma}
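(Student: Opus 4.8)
The plan is to reduce the statement to Scott-continuity of $P \opr Q$ in its first argument, exactly as in the preservation lemmas for sequential and parallel composition (Lemmas~\ref{lem:seqcomp-preserves} and~\ref{lem:parcomp-preserves}). Recall that on kernels the choice operator is defined pointwise by $(P \opr Q)(a, A) = r \cdot P(a,A) + (1-r)\cdot Q(a,A)$. Since a convex combination of two probability measures is again a probability measure, $(P\opr Q)(a,-)$ is a probability measure for each $a$; and since Scott-continuity in the first argument implies measurability, it suffices to verify that for every fixed $A \in \SO$ the map $a \mapsto (P \opr Q)(a,A)$ is Scott-continuous.

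First I would fix $A \in \SO$ and observe that, because $P$ and $Q$ are continuous kernels, the two maps $a \mapsto P(a,A)$ and $a \mapsto Q(a,A)$ are Scott-continuous functions $\pH \to [0,1]$. I would then note that the affine map $\phi : [0,1] \times [0,1] \to [0,1]$ given by $\phi(x,y) = r x + (1-r) y$ is itself Scott-continuous: it is monotone since $r,1-r \geq 0$, and it preserves directed suprema because scalar multiplication by a nonnegative constant and addition are Scott-continuous on $\R_+$. The target map is the composite of $\phi$ with $a \mapsto (P(a,A), Q(a,A))$, which is Scott-continuous into the product DCPO $[0,1]\times[0,1]$ precisely because each component is; hence the composite is Scott-continuous.

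Alternatively, one can verify the two continuity conditions directly. Monotonicity is immediate from monotonicity of $P$ and $Q$ together with $r,1-r \geq 0$. For preservation of suprema along a directed set $D \subseteq \pH$, one computes $(P\opr Q)(\bigsqcup D, A) = r \cdot \sup_{a\in D} P(a,A) + (1-r)\cdot\sup_{a\in D} Q(a,A)$ and then merges the two suprema into a single one over $D$.

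This last merging step is the only point requiring any care, and hence the nearest thing to an obstacle: the identity $r\cdot\sup_{a}P(a,A) + (1-r)\cdot\sup_{a}Q(a,A) = \sup_{a}\bigl(r\cdot P(a,A) + (1-r)\cdot Q(a,A)\bigr)$ holds for directed $D$ but can fail for arbitrary index sets. It follows from directedness: given $a,b \in D$ there is $c \in D$ above both, and monotonicity of $P,Q$ lets us dominate the two separate contributions simultaneously at $c$. This is equivalently just the statement that $\phi$ preserves the supremum of the directed image of $D$ in $[0,1]^2$. Everything else is routine, so the lemma follows.
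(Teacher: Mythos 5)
Your proof is correct and follows essentially the same route as the paper's: the paper also verifies the two conditions pointwise, showing monotonicity from $r,1-r\geq 0$ and then merging the directed suprema of $rP(a,-)+(1-r)Q(a,-)$ over $a$ in the directed set. The only difference is that you spell out the supremum-merging step (via directedness and monotonicity) explicitly, which the paper leaves implicit.
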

\begin{proof}
If $P$ and $Q$ are continuous, then $P\opr Q = rP + (1-r)Q$. If $a\subs b$, then
\begin{align*}
(P\opr Q)(a,-) &= rP(a,-) + (1-r)Q(a,-)\\
&\leq rP(b,-) + (1-r)Q(b,-)\\
&= (P\opr Q)(b,-).
\end{align*}
If $A\subs\pH$ is a directed set, then
\begin{align*}
(P\opr Q)(\tbigcup A,-) &= rP(\tbigcup A,-) + (1-r)Q(\tbigcup A,-)\\
&= \tbigsqcup_{a\in A} (rP(a,-) + (1-r)Q(a,-))\\
&= \tbigsqcup_{a\in A} (P\opr Q)(a,-).
\end{align*}
\end{proof}

\begin{lemma}
\label{lem:star-preserves}
The iteration operator {\upshape(*)} preserves continuity of kernels.
\end{lemma}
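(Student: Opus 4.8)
The plan is to work not from the least-fixpoint description of iteration but from the equivalent approximant characterization $P\star = \bigsqcup_{n} P^{(n)}$ given in Figure~\ref{fig:probnetkat}, where $P^{(0)} = \ptrue$ and $P^{(n+1)} = \ptrue \pcomp P \cmp P^{(n)}$. Since the continuous kernels form a DCPO that is closed under directed suprema (Theorem~\ref{thm:kernelDCPO}), it suffices to exhibit the $P^{(n)}$ as a directed family of continuous kernels; the supremum defining $P\star$ is then automatically a continuous kernel.

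First I would show, by induction on $n$, that each approximant $P^{(n)}$ is a continuous kernel. The base case $P^{(0)} = \ptrue$ is handled by Lemma~\ref{lem:atomic-programs-cont}. For the inductive step, continuity of $P$ (our hypothesis) together with continuity of $P^{(n)}$ gives continuity of $P \cmp P^{(n)}$ by Lemma~\ref{lem:seqcomp-preserves}, and hence continuity of $P^{(n+1)} = \ptrue \pcomp (P \cmp P^{(n)})$ by Lemma~\ref{lem:parcomp-preserves}. Next I would verify that the $P^{(n)}$ form a $\sqleq$-increasing, hence directed, chain: the base inequality $P^{(0)} = \ptrue \sqleq \ptrue \pcomp P = P^{(1)}$ is Lemma~\ref{lem:upperbound} applied pointwise (using $P \cmp \ptrue = P$), and the inductive step follows by applying the map $X \mapsto \ptrue \pcomp P \cmp X$, which is monotone, to $P^{(n)} \sqleq P^{(n+1)}$. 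Theorem~\ref{thm:kernelDCPO} then closes the argument: $P\star = \bigsqcup_n P^{(n)}$ is the supremum of a directed set of continuous kernels, so it is itself continuous.

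The main obstacle is not any single calculation but choosing an argument that avoids circularity. One is tempted to invoke the least-fixpoint theorem directly, concluding that $P\star = \lfp(X \mapsto \ptrue \pcomp P \cmp X)$ lies in the DCPO because that map is Scott-continuous. But Scott-continuity of the program operators in their kernel arguments is precisely Theorem~\ref{thm:continuouspnkops}, which itself rests on the present batch of preservation lemmas; using it here would be circular. The approximant route sidesteps this, requiring only the \emph{monotonicity} of sequential and parallel composition in their arguments rather than full continuity. Monotonicity of $\pcomp$ is contained in Lemma~\ref{lem:parcomp-measure-continuous}, and monotonicity of $\cmp$ in its second argument follows from monotonicity of the Lebesgue integral (part of Theorem~\ref{thm:int-continuous}), since $c \mapsto Q(c,A)$ increases as $Q$ does. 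These are the only facts that must be extracted carefully; with them in hand, the three steps above compose into a complete proof.
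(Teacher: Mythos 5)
Your proof is correct and follows essentially the same route as the paper's: induction using Lemmas~\ref{lem:seqcomp-preserves} and \ref{lem:parcomp-preserves} to show each $\pp n$ is continuous, then Theorem~\ref{thm:kernelDCPO} to conclude that the directed supremum $P\star = \bigsqcup_n \pp n$ is again a continuous kernel. Your explicit verification that the approximants form a $\sqleq$-increasing chain (via Lemma~\ref{lem:upperbound} and monotonicity of $\pcomp$ and $\cmp$) is the one step the paper's proof leaves implicit here---it appears later as Lemmas~\ref{lem:star-approx-monotone-p} and \ref{lem:star-approx-monotone-n}---so your rendering is a slightly more careful version of the same argument.
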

\begin{proof}
Suppose $P$ is continuous. It follows inductively using Lemmas \ref{lem:parcomp-preserves} and \ref{lem:seqcomp-preserves} that $\pp n$ is continuous. Since $P\star = \bigsqcup_n \pp n$ and since the supremum of a directed set of continuous kernels is continuous by Theorem \ref{thm:kernelDCPO}, $P\star$ is continuous.
\end{proof}

\begin{proof}[Proof of Theorem \ref{thm:continuouskernels}]
The result follows from Lemmas \ref{lem:det-preserves}, \ref{lem:int-preserves}, \ref{lem:prod-preserves}, \ref{lem:seqcomp-preserves}, \ref{lem:parcomp-preserves}, \ref{lem:choice-preserves}, and \ref{lem:star-preserves}.
\end{proof}

\begin{proof}[Proof of Corollary \ref{cor:continuouskernels}]
This follows from Theorem \ref{thm:continuouskernels}. All primitive programs are deterministic, thus
give continuous kernels, and continuity is preserved by all the program operators.
\end{proof}

\subsection{Continuous Operations on Kernels}
\label{apx:continuousopsonkernels}

\begin{lemma}
\label{lem:prod-continuous}
The product operation on kernels {\upshape(}$\times${\upshape)} is Scott-continuous in each argument.
\end{lemma}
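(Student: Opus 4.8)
The plan is to reduce the statement to the corresponding continuity result for the product of \emph{measures}, Lemma~\ref{lem:prod-measure-continuous}, by exploiting the fact that both the order and the directed suprema on continuous kernels are computed pointwise in the first argument. Recall from Lemma~\ref{lem:kernelorder} that $P\sqleq Q$ holds iff $P(a,-)\sqleq Q(a,-)$ in $\meas(\pH)$ for every $a\in\pH$, equivalently $\curry P\sqleq\curry Q$ in the function-space DCPO $\pH\to\meas(\pH)$. Since suprema in a function-space DCPO are computed pointwise and $\curry$ is continuous (exactly as used in the proof of Theorem~\ref{thm:kernelDCPO}), the supremum of a directed set $\DD$ of continuous kernels satisfies $(\bigsqcup\DD)(a,-)=\bigsqcup_{P\in\DD}P(a,-)$ for each $a$. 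By Lemma~\ref{lem:prod-preserves} the product $P\times Q$ of continuous kernels is again a continuous kernel, so the operation is well-typed on the DCPO of Theorem~\ref{thm:kernelDCPO}. By symmetry it suffices to treat the first argument, fixing $Q$.

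For monotonicity I would argue pointwise: if $P_1\sqleq P_2$, then $P_1(a,-)\sqleq P_2(a,-)$ for each $a\in\pH$, and since the product of measures is monotone in its first argument (Lemma~\ref{lem:prod-measure-continuous}),
\begin{align*}
(P_1\times Q)(a,-) &= P_1(a,-)\times Q(a,-)\\
&\sqleq P_2(a,-)\times Q(a,-) = (P_2\times Q)(a,-).
\end{align*}
As $a$ was arbitrary, $P_1\times Q\sqleq P_2\times Q$.

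For preservation of suprema, let $\DD$ be a directed set of continuous kernels; by the monotonicity just shown, $\set{P\times Q}{P\in\DD}$ is directed as well. Using the pointwise formula for kernel suprema together with Scott-continuity of the product of measures in its first argument (Lemma~\ref{lem:prod-measure-continuous}), I would compute, for every $a\in\pH$,
\begin{align*}
((\bigsqcup\DD)\times Q)(a,-)
&= \big(\bigsqcup_{P\in\DD}P(a,-)\big)\times Q(a,-)\\
&= \bigsqcup_{P\in\DD}\big(P(a,-)\times Q(a,-)\big)\\
&= \bigsqcup_{P\in\DD}(P\times Q)(a,-)
= \big(\bigsqcup_{P\in\DD}(P\times Q)\big)(a,-),
\end{align*}
where the final equality again invokes the pointwise computation of kernel suprema. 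Hence $(\bigsqcup\DD)\times Q=\bigsqcup_{P\in\DD}(P\times Q)$, and the argument for the second coordinate is identical.

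The only subtlety, and thus the step deserving the most care, is justifying that directed suprema of continuous kernels really are computed pointwise in the first argument; the rest is a direct transport of Lemma~\ref{lem:prod-measure-continuous} through $\curry$. But this pointwise computation is precisely what the proof of Theorem~\ref{thm:kernelDCPO} establishes, via continuity of $\curry$ and the pointwise suprema of function-space DCPOs, so no genuinely new work beyond the measure-level statement is needed. The hard analytic content (monotonicity of the product via approximation by compact-open sets) was already discharged in Lemma~\ref{lem:prod-measure-continuous}.
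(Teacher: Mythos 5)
Your proof is correct and follows essentially the same route as the paper's: both reduce the claim to Lemma~\ref{lem:prod-measure-continuous} by arguing pointwise in the first argument, with identical monotonicity and supremum computations. If anything, you are more explicit than the paper in justifying that directed suprema of kernels are computed pointwise (via Lemma~\ref{lem:kernelorder} and the proof of Theorem~\ref{thm:kernelDCPO}), a step the paper's proof leaves implicit.
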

\begin{proof}
We use Lemma \ref{lem:prod-measure-continuous}. If $P_1\sqleq P_2$, then for all $a\in\pH$,
\begin{align*}
(P_1\times Q)(a,-) &= P_1(a,-)\times Q(a,-)\\
&\sqleq P_2(a,-)\times Q(a,-) = (P_2\times Q)(a,-).
\end{align*}
Since $a$ was arbitrary, $P_1\times Q\sqleq P_2\times Q$. For a directed set $\DD$ of kernels,
\begin{align*}
(\tbigsqcup\DD\times Q)(a,-) &= (\tbigsqcup\DD)(a,-)\times Q(a,-)\\
&= \tbigsqcup_{P\in\DD}P(a,-)\times Q(a,-)\\
&= \tbigsqcup_{P\in\DD}(P(a,-)\times Q(a,-))\\
&= \tbigsqcup_{P\in\DD}(P\times Q)(a,-)\\
&= (\tbigsqcup_{P\in\DD}(P\times Q))(a,-).
\end{align*}
Since $a$ was arbitrary, $\tbigsqcup\DD\times Q = \tbigsqcup_{P\in\DD}(P\times Q)$.
\end{proof}

\begin{lemma}
\label{lem:parcomp-continuous}
Parallel composition of kernels {\upshape(}$\pcomp${\upshape)} is Scott-continuous in each argument.
\end{lemma}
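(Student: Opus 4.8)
The plan is to mirror the proof of Lemma \ref{lem:prod-continuous} for the product operation and reduce continuity of the kernel-level operation to the already-established continuity of the measure-level operation (Lemma \ref{lem:parcomp-measure-continuous}). Recall that parallel composition of kernels is the pointwise lifting of parallel composition of measures, so that $(P \pcomp Q)(a,-) = P(a,-) \pcomp Q(a,-)$ for every $a \in \pH$. Recall further (Lemma \ref{lem:kernelorder}) that the order $\sqleq$ on kernels is computed pointwise in the measure order, and (Theorem \ref{thm:kernelDCPO}) that suprema of directed sets of continuous kernels are likewise computed pointwise, i.e.\ $(\tbigsqcup\DD)(a,-) = \tbigsqcup_{P\in\DD} P(a,-)$. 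Together these let us transport both monotonicity and preservation of directed suprema from measures to kernels argumentwise.

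First I would fix a continuous kernel $Q$ and show $P \mapsto P \pcomp Q$ is monotone. If $P_1 \sqleq P_2$, then $P_1(a,-) \sqleq P_2(a,-)$ for all $a$ by Lemma \ref{lem:kernelorder}, so by the monotonicity half of Lemma \ref{lem:parcomp-measure-continuous} we get $(P_1 \pcomp Q)(a,-) = P_1(a,-)\pcomp Q(a,-) \sqleq P_2(a,-)\pcomp Q(a,-) = (P_2 \pcomp Q)(a,-)$; since $a$ is arbitrary, $P_1\pcomp Q \sqleq P_2 \pcomp Q$. For directed suprema, given a directed set $\DD$ of kernels I would compute, for each $a$, $((\tbigsqcup\DD)\pcomp Q)(a,-) = (\tbigsqcup_{P\in\DD}P(a,-))\pcomp Q(a,-) = \tbigsqcup_{P\in\DD}\bigl(P(a,-)\pcomp Q(a,-)\bigr) = \tbigsqcup_{P\in\DD}(P\pcomp Q)(a,-)$, where the middle equality is the continuity of measure-level parallel composition in its first argument (Lemma \ref{lem:parcomp-measure-continuous}). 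As $a$ is arbitrary this yields $(\tbigsqcup\DD)\pcomp Q = \tbigsqcup_{P\in\DD}(P\pcomp Q)$, so the operation is Scott-continuous in its first argument. The argument for the second argument is identical, using continuity of $\pcomp$ on measures in its second argument.

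The main obstacle is not mathematical depth but bookkeeping: all of the genuine content---in particular the delicate monotonicity of $\pcomp$ on measures, which required approximation by compact-open sets---is already discharged in Lemma \ref{lem:parcomp-measure-continuous}. What remains is to confirm that the three pointwise facts (the pointwise definition of $\pcomp$ on kernels, the pointwise order of Lemma \ref{lem:kernelorder}, and the pointwise suprema of Theorem \ref{thm:kernelDCPO}) line up so that the argumentwise transport is valid. An alternative, equally clean route is to write $P \pcomp Q = (P \times Q) \cmp U$, where $U$ is the deterministic kernel of the Scott-continuous union map, and appeal to continuity of $\times$ (Lemma \ref{lem:prod-continuous}) together with continuity of sequential composition in its first argument; I would prefer the direct pointwise route above, since it relies only on lemmas already proved at this point in the development.
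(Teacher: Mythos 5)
Your proof is correct, but it takes a genuinely different route from the paper's. The paper dispatches this lemma in one line: it decomposes $P\pcomp Q = (P\times Q)\cmp\tbigcup$ and invokes continuity of the kernel-level product (Lemma~\ref{lem:prod-continuous}) and of sequential composition (Lemma~\ref{lem:seqcomp-continuous}) in each argument --- precisely the alternative you sketch and then set aside at the end of your proposal. Your primary argument instead transports the measure-level result (Lemma~\ref{lem:parcomp-measure-continuous}) pointwise, structurally mirroring the paper's proof of Lemma~\ref{lem:prod-continuous}: monotonicity and preservation of directed suprema are checked argumentwise, using the pointwise identity $(P\pcomp Q)(a,-) = P(a,-)\pcomp Q(a,-)$ (asserted in \S\ref{sec:DCPO}), the pointwise order of Lemma~\ref{lem:kernelorder}, and pointwise directed suprema; each step is valid, since $\set{P(a,-)}{P\in\DD}$ is a directed set of measures and the middle equality in your supremum computation is exactly the measure-level continuity in the first argument. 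What the paper's route buys is brevity and reuse: once Lemmas~\ref{lem:prod-continuous} and \ref{lem:seqcomp-continuous} are available, nothing new needs to be computed. What your route buys is independence from the sequential-composition machinery (and hence from the integration arguments behind Lemma~\ref{lem:seqcomp-continuous}): it leans only on the measure-level lemma and the pointwise structure, and it makes the proofs of the product and parallel-composition lemmas uniform. One cosmetic remark: the fact $(\tbigsqcup\DD)(a,-) = \tbigsqcup_{P\in\DD}P(a,-)$ that you attribute to Theorem~\ref{thm:kernelDCPO} is what the paper attributes to Lemma~\ref{lem:kernelorder}; either citation is defensible, since the former establishes that the supremum is again a continuous kernel while the latter identifies the order (and hence suprema) as pointwise.
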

\begin{proof}
By definition, $P\pcomp Q = (P\times Q)\cmp\tbigcup$. By Lemmas \ref{lem:prod-continuous} and \ref{lem:seqcomp-continuous}, the product operation and sequential composition are continuous in both arguments, thus their composition is.
\end{proof}

\begin{lemma}
\label{lem:curry-continuous}
Let $P$ be a continuous Markov kernel. The map $\curry P$ is Scott-continuous with respect to the subset order on $\pH$ and the order $\sqleq$ on $\MM(\pH)$.
\end{lemma}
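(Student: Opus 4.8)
The plan is to unpack the definition of Scott-continuity of $\curry P$ and observe that it is almost literally the definition of $P$ being a continuous kernel, modulo the characterization of suprema in $(\MM(\pH),\sqleq)$ supplied by Theorem~\ref{thm:cpo}. Recall that $\mu\sqleq\nu$ means $\mu(B)\leq\nu(B)$ for every Scott-open $B\in\SO$, so throughout I would work only with Scott-open sets and never with arbitrary Borel sets. This is the crucial discipline, since suprema of measures in this order are computed pointwise on $\SO$ but \emph{not} pointwise on all of $\BB$ (cf.\ the footnote in the proof of Theorem~\ref{thm:cpo}).

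First I would establish monotonicity. Suppose $a\subs b$ in $\pH$. Since $P$ is a continuous kernel, it is by definition Scott-continuous in its first argument, so $P(a,B)\leq P(b,B)$ for every $B\in\SO$. By the definition of $\sqleq$ this says exactly that $\curry P(a) = P(a,-)\sqleq P(b,-) = \curry P(b)$, so $\curry P$ is monotone.

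Next I would handle preservation of directed suprema. Let $D\subs\pH$ be a directed set; its supremum in $(\pH,\subseteq)$ is $\tbigcup D$. By the monotonicity just shown, the image $\set{\curry P(a)}{a\in D}$ is a $\sqleq$-directed set of probability measures, so by Theorem~\ref{thm:cpo} its supremum $\tbigsqcup_{a\in D}\curry P(a)$ exists and is characterized on Scott-open sets by $(\tbigsqcup_{a\in D}\curry P(a))(B) = \sup_{a\in D} P(a,B)$ for $B\in\SO$. On the other hand, continuity of $P$ in its first argument gives $\curry P(\tbigcup D)(B) = P(\tbigcup D,B) = \sup_{a\in D} P(a,B)$ for every $B\in\SO$. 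Hence the two measures $\curry P(\tbigcup D)$ and $\tbigsqcup_{a\in D}\curry P(a)$ agree on all of $\SO$, and since $\sqleq$ is a partial order (antisymmetry via Lemma~\ref{lem:inclexcl}), they are equal. This is precisely the preservation of suprema required for Scott-continuity, completing the argument.

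There is essentially no hard step here: the content lies entirely in lining up the definition of kernel continuity with the definition of $\sqleq$ and the supremum formula from Theorem~\ref{thm:cpo}. The only point requiring care, and the one I would flag explicitly, is that the equality of the two candidate measures must be verified on Scott-open sets alone; reasoning pointwise on arbitrary Borel sets would be incorrect, because directed suprema of measures in this order need not converge pointwise off $\SO$.
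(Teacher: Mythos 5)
Your proof is correct and follows essentially the same route as the paper's: monotonicity falls out directly from the definition of a continuous kernel, and preservation of directed suprema is verified by comparing both measures pointwise on Scott-open sets, where suprema in $(\MM(\pH),\sqleq)$ are computed. Your explicit appeals to Theorem~\ref{thm:cpo} and to antisymmetry via Lemma~\ref{lem:inclexcl} merely make precise steps the paper leaves implicit.
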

\begin{proof}
We have $(\curry P)(a) = P(a,-)$.
Since $P$ is monotone in its first argument, if $a\subs b$ and $B\in \SO$, then $P(a,B) \leq P(b,B)$. As $B\in \SO$ was arbitrary,
\begin{align*}
(\curry P)(a) &= P(a,-) \sqleq P(b,-) = (\curry P)(b). 
\end{align*}
This shows that $\curry P$ is monotone.

Let $D\subs\pH$ be a directed set. By the monotonicity of $\curry P$, so is the set 
$\set{(\curry P)(a)}{a\in D}$. Then for any $B\in \SO$,
\begin{align*}
(\curry P)(\tbigcup D)(B) &= P(\tbigcup D,B)
= \sup_{a\in D}P(a,B)\\
&= \sup_{a\in D}(\curry P)(a)(B)\\
&= (\tbigsqcup_{a\in D}(\curry P)(a))(B),
\end{align*}
thus $(\curry P)(\tbigcup D) = \tbigsqcup_{a\in D}(\curry P)(a)$.
\end{proof}

\begin{lemma}
\label{lem:seqcomp-continuous}
Sequential composition of kernels is Scott-continuous in each argument.
\end{lemma}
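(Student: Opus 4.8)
The plan is to establish the two defining properties of Scott-continuity---monotonicity and preservation of directed suprema---separately in each argument, reducing both cases to Theorem~\ref{thm:int-continuous}. Throughout I would use Lemma~\ref{lem:kernelorder} to translate statements about the kernel order and kernel suprema into pointwise statements: $P \sqleq Q$ iff $P(a,A) \leq Q(a,A)$ for all $a \in \pH$ and all $A \in \SO$, and the supremum of a directed set $\DD$ of continuous kernels satisfies $(\bigsqcup\DD)(a,A) = \sup_{P \in \DD} P(a,A)$ for $a \in \pH$ and $A \in \SO$. Recall that sequential composition is given by
\[
(P \cmp Q)(a,A) = \int_{c \in \pH} Q(c,A) \cdot P(a,dc).
\]

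For continuity in the \emph{first} argument I would fix $Q$ and, for each $A \in \SO$, consider the map $f_A : c \mapsto Q(c,A)$. Since $Q$ is a continuous kernel, $f_A$ is Scott-continuous, so $(P \cmp Q)(a,A) = \int f_A \, d(P(a,-))$ is the integral of a fixed Scott-continuous function against the measure $P(a,-)$. Monotonicity and preservation of directed suprema in $P$ then follow from part~(i) of Theorem~\ref{thm:int-continuous} applied pointwise at each $a$: if $P_1 \sqleq P_2$ then $P_1(a,-) \sqleq P_2(a,-)$, whence $\int f_A\,d(P_1(a,-)) \leq \int f_A\,d(P_2(a,-))$; and if $\DD$ is directed then $(\bigsqcup\DD)(a,-) = \bigsqcup_{P\in\DD} P(a,-)$, so $\int f_A\,d((\bigsqcup\DD)(a,-)) = \sup_{P\in\DD}\int f_A\,d(P(a,-))$. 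As $a$ and $A \in \SO$ are arbitrary, this gives $(\bigsqcup\DD)\cmp Q = \bigsqcup_{P\in\DD}(P\cmp Q)$.

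For continuity in the \emph{second} argument I would fix $P$ and $a$, so that $P(a,-)$ is a fixed probability measure. For $A \in \SO$ write $g_Q : c \mapsto Q(c,A)$, which is Scott-continuous (hence $g_Q \in [\pH \to [0,\infty]]$), and note $(P \cmp Q)(a,A) = \int g_Q \, d(P(a,-))$. Monotonicity in $Q$ is immediate, since $Q_1 \sqleq Q_2$ gives $g_{Q_1} \leq g_{Q_2}$ pointwise and the integral is monotone. For a directed set $\DD$ of kernels, the family $\set{g_Q}{Q \in \DD}$ is directed in $[\pH \to [0,\infty]]$ with pointwise supremum $g_{\bigsqcup\DD}$, using that kernel suprema are computed pointwise on $\SO$. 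Part~(ii) of Theorem~\ref{thm:int-continuous}---continuity of integration in its function argument---then yields $\int g_{\bigsqcup\DD}\,d(P(a,-)) = \sup_{Q\in\DD}\int g_Q\,d(P(a,-))$, i.e.\ $P \cmp (\bigsqcup\DD) = \bigsqcup_{Q\in\DD}(P \cmp Q)$.

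The one point requiring care---and the main obstacle to a naive argument---is that everything must be carried out with $A$ ranging over the Scott-open sets $\SO$ rather than over all Borel sets $\BB$. This matters twice: the kernel order and kernel suprema are defined pointwise only on $\SO$ (Lemma~\ref{lem:kernelorder}), and, as noted for Theorem~\ref{thm:cpo}, a directed supremum of measures agrees with the pointwise supremum only on Scott-open sets. Restricting to $\SO$ is precisely what makes $f_A$ and $g_Q$ Scott-continuous, which is the hypothesis needed to invoke Theorem~\ref{thm:int-continuous}; the argument would break for a general $A \in \BB$.
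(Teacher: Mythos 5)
Your proof is correct and takes essentially the same route as the paper's own: both fix one argument, rewrite $(P\cmp Q)(a,A)$ as an integral of the Scott-continuous function $c\mapsto Q(c,A)$ for $A\in\SO$, use Lemma~\ref{lem:kernelorder} to pass between the kernel order and pointwise statements, and then invoke Theorem~\ref{thm:int-continuous}(i) for the first argument and Theorem~\ref{thm:int-continuous}(ii) for the second. Your closing observation about why everything must range over $\SO$ rather than $\BB$ is also exactly the subtlety the paper's development depends on.
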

\begin{proof}
To show that $\cmp$ is continuous in its first argument, we wish to show that if $P_1,P_2,Q$ are any continuous kernels with $P_1\sqleq P_2$, and if $\DD$ is any directed set of continuous kernels, then
\begin{align*}
P_1\cmp Q &\leq P_2\cmp Q
&
(\tbigsqcup\DD)\cmp Q &= \tbigsqcup_{P\in\DD} (P\cmp Q).
\end{align*}
We must show that for all $a\in\pH$ and $B\SO$,
\begin{align*}
\int_c P_1(a,dc)\cdot Q(c,B) &\leq \int_c P_2(a,dc)\cdot Q(c,B)\\
\int_c (\tbigsqcup\DD)(a,dc)\cdot Q(c,B) &= \sup_{P\in\DD} \int_c P(a,dc)\cdot Q(c,B).
\end{align*}
By Lemma \ref{lem:kernelorder}, for all $a\in\pH$, $P_1(a,-)\sqleq P_2(a,-)$ and $(\tbigsqcup\DD)(a,-) = \tbigsqcup_{P\in\DD} P(a,-)$, and $Q(-,B)$ is a Scott-continuous function by assumption. The result follows from Lemma \ref{thm:int-continuous}(i).

The argument that $\cmp$ is continuous in its second argument is similar, using Lemma \ref{thm:int-continuous}(ii). We wish to show that if $P,Q_1,Q_2$ are any continuous kernels with $Q_1\sqleq Q_2$, and if $\DD$ is any directed set of continuous kernels, then
\begin{align*}
P\cmp Q_1 &\leq P\cmp Q_2
&
P\cmp\tbigsqcup\DD &= \tbigsqcup_{Q\in\DD} (P\cmp Q).
\end{align*}
We must show that for all $a\in\pH$ and $B\in \SO$,
\begin{align*}
\int_c P(a,dc)\cdot Q_1(c,B) &\leq \int_c P(a,dc)\cdot Q_2(c,B)\\
\int_c P(a,dc)\cdot (\tbigsqcup\DD)(c,B) &= \sup_{Q\in\DD} \int_c P(a,dc)\cdot Q(c,B).
\end{align*}
By Lemma \ref{lem:kernelorder}, for all $B\in \SO$, $Q_1(-,B)\sqleq Q_2(-,B)$ and $(\tbigsqcup\DD)(-,B) = \tbigsqcup_{Q\in\DD} Q(-,B)$. The result follows from Lemma \ref{thm:int-continuous}(ii).
\end{proof}

\begin{lemma}
\label{lem:choice-continuous}
The probabilistic choice operator applied to kernels {\upshape($\opr$)} is continuous in each argument.
\end{lemma}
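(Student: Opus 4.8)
The plan is to follow the template of the preceding operator-continuity lemmas (Lemmas~\ref{lem:prod-continuous}, \ref{lem:parcomp-continuous}, and \ref{lem:seqcomp-continuous}): fixing one argument, I would verify the two conditions for Scott-continuity of the resulting unary operation on the DCPO of continuous kernels—monotonicity and preservation of directed suprema. Because $\opr$ is defined pointwise by $(P\opr Q)(a,B) = r\,P(a,B) + (1-r)\,Q(a,B)$, Lemma~\ref{lem:kernelorder} lets me reduce both conditions to elementary statements about $\leq$ and $\sup$ on the reals, evaluated on Scott-open sets $B\in\SO$. That the operation lands in the DCPO of continuous kernels at all is exactly Lemma~\ref{lem:choice-preserves}, so I may take that for granted.

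First I would fix $Q$ and $r$ and treat the first argument. Monotonicity is immediate: if $P_1\sqleq P_2$ then $r\,P_1(a,B)+(1-r)\,Q(a,B)\leq r\,P_2(a,B)+(1-r)\,Q(a,B)$ for every $a\in\pH$ and $B\in\SO$ since $r\geq0$, giving $P_1\opr Q\sqleq P_2\opr Q$. For directed suprema, I would take a directed set $\DD$ of continuous kernels and use that suprema in the kernel DCPO are formed pointwise on $\SO$ (Theorem~\ref{thm:cpo} together with Lemma~\ref{lem:kernelorder}), so that $(\tbigsqcup\DD)(a,B)=\sup_{P\in\DD}P(a,B)$. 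The crux is then the single identity
\[
r\sup_{P\in\DD}P(a,B) + (1-r)\,Q(a,B) = \sup_{P\in\DD}\bigl(r\,P(a,B)+(1-r)\,Q(a,B)\bigr),
\]
which yields $(\tbigsqcup\DD)\opr Q=\tbigsqcup_{P\in\DD}(P\opr Q)$. The second argument is handled identically, exchanging the roles of $P$ and $Q$ and replacing $r$ by $1-r$.

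I do not expect a genuine obstacle here—this is the most routine of the operator-continuity lemmas. The only points needing care are, first, that multiplication by the nonnegative scalar $r$ and addition of the constant $(1-r)\,Q(a,B)$ commute with directed suprema of reals (trivial for $r>0$, while the boundary cases $r\in\{0,1\}$ collapse $\opr$ to a projection and are immediate); and second, that every verification must be carried out on Scott-open sets $B\in\SO$, since—as already flagged in the proof of Theorem~\ref{thm:cpo}—a directed supremum of measures need not agree pointwise with its supremum on arbitrary Borel sets.
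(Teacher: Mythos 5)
Your proposal is correct and follows essentially the same route as the paper's proof: monotonicity via the pointwise inequality $rP_1(a,B)+(1-r)Q(a,B)\leq rP_2(a,B)+(1-r)Q(a,B)$ on Scott-open sets, and preservation of directed suprema via the identity $r\sup_{P\in\DD}P(a,B)+(1-r)Q(a,B)=\sup_{P\in\DD}\bigl(rP(a,B)+(1-r)Q(a,B)\bigr)$, with suprema of kernels evaluated pointwise on $\SO$. The only (harmless) differences are that you spell out the second argument and the boundary cases $r\in\{0,1\}$, which the paper leaves implicit.
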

\begin{proof}
If $P$ and $Q$ are continuous, then $P\opr Q = rP + (1-r)Q$. If $P_1\sqleq P_2$, then for any $a\in\pH$ and $B\in \SO$,
\begin{align*}
(P_1\opr Q)(a,B) &= rP_1(a,B) + (1-r)Q(a,B)\\
&\leq rP_2(a,B) + (1-r)Q(a,B)\\
&= (P_2\opr Q)(a,B),
\end{align*}
so $P_1\opr Q\sqleq P_2\opr Q$. If $\DD$ is a directed set of kernels and $B\SO$, then
\begin{align*}
(\tbigsqcup\DD\opr Q)(a,B) &= r(\tbigsqcup\DD)(a,B) + (1-r)Q(a,B)\\
&= \sup_{P\in\DD} (rP(a,B) + (1-r)Q(a,B))\\
&= \sup_{P\in\DD} (P\opr Q)(a,B).
\end{align*}
\end{proof}

\begin{lemma}
\label{lem:star-approx-monotone-p}
If $P\sqleq Q$ then $\ksn{P}{n} \sqleq \ksn{Q}{n}$.
\end{lemma}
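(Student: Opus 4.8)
The plan is to prove this by a straightforward induction on $n$, using the fact that the bounded iterate $\ksn{P}{n}$ is assembled from $P$, the constant kernel $\skp$, and the operators $\pcomp$ and $\cmp$ via a recursion scheme that is independent of which kernel is being iterated. Concretely, I would use the unfolding $\ksn{P}{0} = \skp$ and $\ksn{P}{n+1} = \skp \pcomp (P \cmp \ksn{P}{n})$ (the same recursion that defines the approximants $\pp n$ in Figure~\ref{fig:probnetkat}), noting that $\ksn{Q}{n}$ satisfies the identical recursion with $Q$ in place of $P$.

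For the base case $n=0$ both sides reduce to $\skp$, so $\ksn{P}{0} \sqleq \ksn{Q}{0}$ holds by reflexivity of $\sqleq$. For the inductive step I would assume $\ksn{P}{n} \sqleq \ksn{Q}{n}$ and combine it with the hypothesis $P \sqleq Q$. Sequential composition is monotone in each of its arguments, since it is Scott-continuous by Lemma~\ref{lem:seqcomp-continuous} and continuity entails monotonicity; chaining the two inequalities through the intermediate term $Q \cmp \ksn{P}{n}$ then gives $P \cmp \ksn{P}{n} \sqleq Q \cmp \ksn{Q}{n}$. Parallel composition is likewise monotone in each argument (Lemma~\ref{lem:parcomp-continuous}), and since the left operand $\skp$ is held fixed, prepending it preserves the inequality, yielding $\skp \pcomp (P \cmp \ksn{P}{n}) \sqleq \skp \pcomp (Q \cmp \ksn{Q}{n})$, which is exactly $\ksn{P}{n+1} \sqleq \ksn{Q}{n+1}$.

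I do not expect any genuine obstacle here: the argument is a routine monotonicity induction whose every ingredient---the monotonicity of $\cmp$ and of $\pcomp$---has already been secured as a consequence of the continuity results established earlier in this appendix. The only place warranting mild care is the simultaneous use of monotonicity of $\cmp$ in both arguments, but this is immediate once the two inequalities are chained through the intermediate term.
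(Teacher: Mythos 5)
Your proposal is correct and is essentially the paper's own proof: an induction on $n$ with a trivial base case, where the inductive step replaces $P$ by $Q$ and $\ksn{P}{n}$ by $\ksn{Q}{n}$ inside $\skp \pcomp P \cmp \ksn{P}{n}$, justified by monotonicity of sequential and parallel composition (Lemmas~\ref{lem:seqcomp-continuous} and \ref{lem:parcomp-continuous}). The only difference is presentational: you make explicit the chaining through the intermediate term $Q \cmp \ksn{P}{n}$, which the paper leaves implicit.
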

\begin{proof}
By induction on $n \in \N$. The claim is trivial for $n=0$. For $n>0$,
we assume that $\ksn{P}{n-1} \sqleq \ksn{Q}{n-1}$ and deduce
\begin{align*}
\ksn{P}{n} = \skp \pcomp P \cmp \ksn{P}{n-1}
\sqleq \skp \pcomp Q \cmp \ksn{Q}{n-1} =\ksn{Q}{n}
\end{align*}
by monotonicity of sequential and parallel composition (Lemmas~\ref{lem:seqcomp-continuous} and \ref{lem:parcomp-continuous}, respectively).
\end{proof}

\begin{lemma}
\label{lem:star-approx-monotone-n}
If $m\leq n$ then $\pp m \sqleq \pp n$.
\end{lemma}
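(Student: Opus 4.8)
The plan is to reduce the claim to the consecutive case $\pp n \sqleq \pp{n+1}$ and then chain these inequalities using transitivity of $\sqleq$. Recall from Figure~\ref{fig:probnetkat} that, read as kernels, $\pp 0 = \skp$ and $\pp{n+1} = \skp \pcomp (P \cmp \pp n)$. So I would first prove, by induction on $n$, that $\pp n \sqleq \pp{n+1}$ for every $n \geq 0$.

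For the base case $n=0$, note that $\pp 1 = \skp \pcomp (P \cmp \pp 0) = \skp \pcomp P$, using $P \cmp \skp = P$. Lemma~\ref{lem:upperbound} states $\mu \sqleq \mu \pcomp \nu$ for measures; applying it with $\mu = \skp(a,-)$ and $\nu = P(a,-)$ for each input $a$, and recalling that parallel composition of kernels is the pointwise lifting of $\pcomp$ on measures, gives $\skp(a,-) \sqleq (\skp \pcomp P)(a,-)$ for all $a$. By the equivalence in Lemma~\ref{lem:kernelorder}, this pointwise order is exactly $\skp \sqleq \skp \pcomp P$, i.e.\ $\pp 0 \sqleq \pp 1$.

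For the inductive step, assume $\pp{n-1} \sqleq \pp n$. Monotonicity of sequential composition in its second argument (Lemma~\ref{lem:seqcomp-continuous}) yields $P \cmp \pp{n-1} \sqleq P \cmp \pp n$, and then monotonicity of parallel composition in its second argument (Lemma~\ref{lem:parcomp-continuous}) yields $\skp \pcomp (P \cmp \pp{n-1}) \sqleq \skp \pcomp (P \cmp \pp n)$, which is precisely $\pp n \sqleq \pp{n+1}$. Finally, for arbitrary $m \leq n$, transitivity of $\sqleq$ gives $\pp m \sqleq \pp{m+1} \sqleq \cdots \sqleq \pp n$, completing the proof.

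The argument is essentially routine, since all the heavy lifting has already been done in establishing continuity (and hence monotonicity) of the operators. The only point requiring a little care is the base case, where the inequality $\skp \sqleq \skp \pcomp P$ is not an instance of monotonicity but rather of Lemma~\ref{lem:upperbound}, and one must pass from the measure-level statement to the kernel-level statement via the pointwise characterization in Lemma~\ref{lem:kernelorder}. Everything downstream then follows by pure monotonicity of $\cmp$ and $\pcomp$.
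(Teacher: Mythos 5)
Your proof is correct and follows essentially the same route as the paper: the same base case $\pp 0 \sqleq \pp 1$ via Lemmas~\ref{lem:upperbound} and \ref{lem:kernelorder}, the same inductive step by monotonicity of $\pcomp$ and $\cmp$ (Lemmas~\ref{lem:parcomp-continuous} and \ref{lem:seqcomp-continuous}, which is exactly what the paper's citation of Lemma~\ref{lem:star-approx-monotone-p} amounts to), and the same final appeal to transitivity. The only cosmetic difference is that you inline the monotonicity argument instead of invoking Lemma~\ref{lem:star-approx-monotone-p}.
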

\begin{proof}
We have $\pp 0 \sqleq \pp 1$ by Lemmas~\ref{lem:upperbound} and \ref{lem:kernelorder}.
Proceeding by induction using Lemma \ref{lem:star-approx-monotone-p}, we have $\pp n \sqleq \pp{n+1}$ for all $n$. The result follows from transitivity.
\end{proof}

\begin{lemma}
\label{lem:star-continuous}
The iteration operator applied to kernels {\upshape(*)} is continuous.
\end{lemma}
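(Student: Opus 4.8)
The plan is to reduce the statement to the continuity of the least-fixpoint operator, which was recorded among the domain-theoretic preliminaries in \S\ref{sec:primer}. Let $\mathcal{D}$ denote the continuous DCPO of continuous kernels (Theorem~\ref{thm:kernelDCPO}); its bottom element is the constant kernel $a\mapsto\dirac\emptyset$. For a kernel $P$ define the self-map $F_P(X)\defeq\skp\pcomp P\cmp X$ on $\mathcal{D}$. By Theorem~\ref{thm:continuouskernels} (parallel and sequential composition preserve continuity) each $F_P$ indeed lands in $\mathcal{D}$, and it is Scott-continuous in $X$ because $X\mapsto P\cmp X$ is continuous (Lemma~\ref{lem:seqcomp-continuous}) and post-composition with $\skp\pcomp-$ is continuous (Lemma~\ref{lem:parcomp-continuous}); hence $F_P\in[\mathcal{D}\to\mathcal{D}]$. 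Since $P\star=\lfp(F_P)$, I would exhibit $P\mapsto P\star$ as the composite of two continuous maps, $P\mapsto F_P$ followed by $\lfp:[\mathcal{D}\to\mathcal{D}]\to\mathcal{D}$.

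First I would show that $P\mapsto F_P$ is Scott-continuous from $\mathcal{D}$ into the function-space DCPO $[\mathcal{D}\to\mathcal{D}]$, ordered pointwise. Because monotonicity, directed suprema, and continuity in a function-space DCPO are all computed pointwise, it suffices to fix an arbitrary $X\in\mathcal{D}$ and check that $P\mapsto F_P(X)=\skp\pcomp(P\cmp X)$ is continuous. This is immediate: $P\mapsto P\cmp X$ is continuous by continuity of sequential composition in its \emph{first} argument (Lemma~\ref{lem:seqcomp-continuous}), and $\skp\pcomp-$ is continuous by Lemma~\ref{lem:parcomp-continuous}. For a directed set $\DD\subseteq\mathcal{D}$, the pointwise supremum $\bigsqcup_{P\in\DD}F_P$ is again continuous (closure of $[\mathcal{D}\to\mathcal{D}]$ under directed suprema), and the per-$X$ calculation gives $(\bigsqcup_{P\in\DD}F_P)(X)=\bigsqcup_{P\in\DD}F_P(X)=F_{\bigsqcup\DD}(X)$; thus $F_{\bigsqcup\DD}=\bigsqcup_{P\in\DD}F_P$.

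Then I would invoke the continuity of $\lfp$ from \S\ref{sec:primer}, namely $\lfp(\bigsqcup C)=\bigsqcup_{f\in C}\lfp(f)$ for directed $C\subseteq[\mathcal{D}\to\mathcal{D}]$. Composing the two continuous maps yields, for any directed $\DD$,
\[
(\tbigsqcup\DD)\star=\lfp\big(F_{\tbigsqcup\DD}\big)=\lfp\big(\tbigsqcup_{P\in\DD}F_P\big)=\tbigsqcup_{P\in\DD}\lfp(F_P)=\tbigsqcup_{P\in\DD}P\star,
\]
and monotonicity follows from that of the composite, giving Scott-continuity of $(*)$. An equivalent argument, closer in spirit to Figure~\ref{fig:program-continuity} and to the proof of Lemma~\ref{lem:star-preserves}, avoids the abstract operator: an easy induction via Lemmas~\ref{lem:seqcomp-continuous} and~\ref{lem:parcomp-continuous} shows each $P\mapsto\pp n$ is continuous, so $(\tbigsqcup\DD)^{(n)}=\tbigsqcup_{P\in\DD}\pp n$, and then one commutes the two directed suprema $\tbigsqcup_n\tbigsqcup_{P\in\DD}\pp n=\tbigsqcup_{P\in\DD}\tbigsqcup_n\pp n$ using $P\star=\tbigsqcup_n\pp n$.

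The anticipated obstacle is bookkeeping rather than conceptual: ensuring that $F_P$ genuinely lands in $\mathcal{D}$ so that $\lfp$ is taken in the correct pointed DCPO (via Lemma~\ref{lem:star-preserves} and Theorem~\ref{thm:kernelDCPO}), and, in the alternative argument, justifying the interchange of the two directed suprema by the monotonicity of the doubly-indexed family $\pp n$ in both $n$ and the directed order on $\DD$. Neither step requires any structure of $\pH$ beyond what Theorem~\ref{thm:kernelDCPO} already supplies.
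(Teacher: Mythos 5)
Your proposal is correct, and your primary argument takes a genuinely different route from the paper's. The paper proves the lemma directly from the approximants: monotonicity of $(*)$ follows from Lemma~\ref{lem:star-approx-monotone-p} together with the identification $P\star=\tbigsqcup_n\pp n$ (Theorem~\ref{thm:star}), and preservation of directed suprema follows by showing inductively, via Lemmas~\ref{lem:parcomp-continuous} and \ref{lem:seqcomp-continuous}, that each map $P\mapsto\pp n$ is continuous, and then interchanging the two directed suprema---this is exactly the ``equivalent argument'' you sketch in your final paragraph. Your main argument instead factors $(*)$ as $P\mapsto F_P$ followed by $\lfp$ and appeals to the continuity of $\lfp\in[[D\to D]\to D]$ recorded in \S\ref{sec:primer}. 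This is legitimate and non-circular: the identification $P\star=\lfp(F_P)$ rests only on Kleene iteration plus Lemmas~\ref{lem:seqcomp-continuous} and \ref{lem:parcomp-continuous} (it is the remark following Theorem~\ref{thm:continuouskernels}), not on the lemma being proved. What your route buys is modularity---monotonicity comes for free as a composite of monotone maps, and the suprema interchange is packaged inside the cited continuity of $\lfp$; what the paper's route buys is self-containedness---it never invokes the $\lfp$-continuity fact, which the paper states in the preliminaries but does not prove, and instead justifies the interchange directly by monotonicity of $\pp n$ in both $n$ and $P$. One small citation slip: to see that $F_P$ lands in the DCPO of continuous kernels you want Lemmas~\ref{lem:parcomp-preserves} and \ref{lem:seqcomp-preserves} (equivalently Theorem~\ref{thm:continuouskernels}, which you cite earlier), not Lemma~\ref{lem:star-preserves}; the latter is only needed to know a priori that $P\star$ itself is a continuous kernel, which in your setup already follows from Theorem~\ref{thm:kernelDCPO}, since $P\star$ is a directed supremum of continuous kernels.
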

\begin{proof}
It is a straightforward consequence of Lemma~\ref{lem:star-approx-monotone-p} and Theorem \ref{thm:star} that if $P\sqleq Q$, then $P\star\sqleq Q\star$. Now let $\DD$ be a directed set of kernels. It follows by induction using Lemmas \ref{lem:parcomp-continuous} and \ref{lem:seqcomp-continuous} that the operator $P\mapsto\pp n$ is continuous, thus
\begin{align*}
(\tbigsqcup\DD)\star &= \tbigsqcup_n \ksn{(\tbigsqcup\DD)}n
= \tbigsqcup_n\tbigsqcup_{P\in\DD}\ksn Pn\\
&= \tbigsqcup_{P\in\DD}\tbigsqcup_n\ksn Pn
= \tbigsqcup_{P\in\DD}P\star.
\end{align*}
\end{proof}

\begin{proof}[Proof of Theorem \ref{thm:continuouspnkops}]
The result follows from Lemmas \ref{lem:prod-continuous}, \ref{lem:parcomp-continuous}, \ref{lem:curry-continuous}, \ref{lem:seqcomp-continuous}, \ref{lem:choice-continuous}, and \ref{lem:star-continuous}. 
\end{proof}

\subsection{Iteration as Least Fixpoint}
\label{apx:relation}

In this section we show that the semantics of iteration presented in \cite{\pnkpaper}, defined in terms of an infinite process, coincides with the least fixpoint semantics presented here.

In this section, we use the notation $P\star$ refers to the semantics of \cite{\pnkpaper}.
For the iterate introduced here, we use $\bigsqcup_n \pp n$.

Recall from \cite{\pnkpaper} the approximants
\begin{align*}
\pp 0 &= \skp & \pp{m+1} &= \skp \pcomp P\cmp\pp m.
\end{align*}
It was shown in \cite{\pnkpaper} that for any $c\in\pH$, the measures $\pp m(c,-)$ converge weakly to $P\star(c,-)$; that is, for any bounded (Cantor-)continuous real-valued function $f$ on $\pH$, the expected values of $f$ with respect to the measures $\pp m(c,-)$ converge to the expected value of $f$ with respect to $P\star(c,-)$:
\begin{align*}
\lim_{m\to\infty}\int_{a\in\pH} f(a)\cdot\pp m(c,da)\ =\ \int_{a\in\pH} f(a)\cdot P\star(c,da).
\end{align*}

\begin{theorem}
\label{thm:Steffensproof}
The kernel $Q = \bigsqcup_{n \in \N} P^{(n)}$ is the unique fixpoint of
$(\lambda Q.~ \skp \pcomp P \cmp Q)$ such that $P^{(n)}(a)$ weakly converges
to $Q(a)$ (with respect to the Cantor topology) for all $a \in \pH$. 
\end{theorem}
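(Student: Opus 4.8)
The plan is to prove the statement in three stages: that $Q = \bigsqcup_n \pp n$ is a fixpoint, that the approximants $\pp n$ weakly converge to it, and that this weak-convergence property pins down the limit uniquely.

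First I would observe that by Lemma~\ref{lem:star-approx-monotone-n} the approximants form a $\sqleq$-increasing chain, so their supremum $Q$ exists in the continuous DCPO of kernels (Theorem~\ref{thm:kernelDCPO}). The map $\Phi(X) \defeq \skp \pcomp P \cmp X$ is Scott-continuous on this DCPO---as already noted after Theorem~\ref{thm:continuouskernels}, using that $\pcomp$ and $\cmp$ are continuous in each argument (Lemmas~\ref{lem:parcomp-continuous} and \ref{lem:seqcomp-continuous})---and $\pp{n+1} = \Phi(\pp n)$ by definition. Continuity therefore yields
\[
\Phi(Q) = \Phi\Big(\bigsqcup_n \pp n\Big) = \bigsqcup_n \Phi(\pp n) = \bigsqcup_{n} \pp{n+1} = Q,
\]
the last equality because $\pp 0 \sqleq \pp 1$ makes dropping the initial term harmless. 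Hence $Q$ is a fixpoint of $\Phi$ (indeed the least one, by Kleene's theorem).

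Next I would establish weak convergence, for which the crucial bridge between the order-theoretic supremum and the Cantor-topological weak limit is Theorem~\ref{thm:approx}. Fix an input $a \in \pH$. By the way suprema are computed in the kernel DCPO (Theorem~\ref{thm:kernelDCPO}), evaluation at $a$ commutes with the supremum, so $Q(a,-) = \bigsqcup_n \pp n(a,-)$ is the $\sqleq$-supremum in $\MM(\pH)$ of the $\sqleq$-directed chain $\{\pp n(a,-)\}_n$. Applying Theorem~\ref{thm:approx} to this directed set and an arbitrary Cantor-continuous $f : \pH \to [0,1]$ gives
\[
\lim_n \int f \, d\pp n(a,-) = \int f \, dQ(a,-).
\]
Since $(\pH,\CC)$ is compact, every bounded continuous test function rescales into $[0,1]$, so this is precisely weak convergence of $\pp n(a,-)$ to $Q(a,-)$ in the Cantor topology.

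Finally, for uniqueness I would invoke that $(\pH,\CC)$ is Polish, so weak limits are unique. If $Q'$ is any fixpoint to which $\pp n(a,-)$ weakly converges for every $a$, then for each continuous $f : \pH \to [0,1]$ both $\int f \, dQ(a,-)$ and $\int f \, dQ'(a,-)$ equal $\lim_n \int f \, d\pp n(a,-)$; by Lemma~\ref{lem:separation} this forces $Q(a,-) = Q'(a,-)$, and since $a$ was arbitrary, $Q' = Q$. (Note the argument uses nothing about $Q'$ being a fixpoint beyond the convergence hypothesis.) I expect the principal obstacle to be the second stage---identifying the Scott-theoretic supremum $\bigsqcup_n \pp n(a,-)$ with the Cantor weak limit, as these a priori inhabit unrelated topologies. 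Theorem~\ref{thm:approx} is exactly what reconciles them, so the essential difficulty has been localized there.
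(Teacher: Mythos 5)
Your proposal is correct, but it takes a genuinely different route from the paper's own proof of this theorem. You prove existence of the weak-convergence property first and get uniqueness for free: Theorem~\ref{thm:approx} turns the pointwise Scott supremum $Q(a,-) = \bigsqcup_n \pp n(a,-)$ into a Cantor weak limit, and then two weak limits of the same sequence must agree by Lemma~\ref{lem:separation}---an argument that, as you observe, uses neither the fixpoint property of the competitor nor the leastness of $Q$. The paper goes the other way around. It first proves uniqueness from scratch: for any fixpoint $P\star$ with the weak-convergence property, it fixes a Scott-open $V$, uses inner regularity of measures on the Polish space $(\pH,\CC)$ to exhaust $\mu = P\star(a)$ on $V$ by compact sets $C_n$, and applies Urysohn's lemma to obtain Cantor-continuous functions $f_n$ with $\chrf{C_n} \leq f_n \leq \chrf{V}$; the weak-convergence hypothesis together with pointwise convergence of $\pp m(a)(V)$ to $Q(a)(V)$ on Scott-opens (which holds by the very definition of the supremum, with no need for Theorem~\ref{thm:approx}) then yields $\mu(V) \leq Q(a)(V)$, the leastness of $Q$ gives the reverse inequality, and Theorem~\ref{thm:extension} upgrades agreement on $\SO$ to equality of measures. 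That $Q$ itself enjoys the weak-convergence property is then obtained indirectly: the iterate $P\oldstar$ of prior work is a fixpoint with that property \cite{FKMRS15a}, so by the uniqueness just proved, $Q = P\oldstar$ inherits it. The trade-offs are real: your proof is self-contained within this paper and makes the uniqueness step essentially trivial, at the cost of resting entirely on the technical Theorem~\ref{thm:approx}; the paper's proof is logically independent of Theorem~\ref{thm:approx} and uses only standard measure theory plus order theory, at the cost of importing the weak-convergence result from \cite{FKMRS15a} for the existence half and of needing leastness for the uniqueness half.
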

\begin{proof}
Let $P\star$ denote any fixpoint of $(\lambda Q.~ \skp \pcomp P \cmp Q)$ such
that the measure $\mu_n = P^{(n)}(a)$  weakly converges to the measure $\mu = P\star(a)$,
\ie such that for all (Cantor-)continuous bounded functions $f:\pH \to \R$ \[
  \lim_{n \to \infty} \int f d\mu_n =
  \int f d\mu
\] for all $a \in \pH$. Let $\nu = Q(a)$.
Fix an arbitrary Scott-open set $V$. Since $\pH$ is a Polish space
under the Cantor topology, there exists an increasing chain of compact sets\[
  C_1 \subseteq C_2 \subseteq \dots \subseteq V
  \quad \text{ such that } \quad
  \sup_{n \in \N} \mu(C_n) = \mu(V).
\]
By Urysohn's lemma (see \cite{KolmogorovFomin70,Rao87}), there exist continuous
functions $f_n:\pH\to[0,1]$ such that $f_n(x) = 1$ for $x\in C_n$
and $f(x) = 0$ for $x\in \setcompl V$.
We thus have
\begin{align*}
\mu(C_n) 
&= \int \chrf{C_n} d\mu\\
&\leq \int f_n d\mu                         &&\text{by monotonicity of $\int$}\\
&= \lim_{m \to \infty} \int f_n d\mu_m           &&\text{by weak convergence}\\
&\leq \lim_{m \to \infty} \int \chrf{V} d\mu_m   &&\text{by monotonicity of $\int$}\\
&= \lim_{m \to \infty} \mu_m(V)           \\
&= \nu(V)                                   &&\text{by pointwise convergence on $\SO$}
\end{align*}
Taking the supremum over $n$, we get that $\mu(V) \leq \nu(V)$.
Since $\nu$ is the $\sqleq$-\emph{least} fixpoint, the measures must therefore agree
on $V$, which implies that they are equal by Theorem~\ref{thm:extension}.
Thus, any fixpoint of $(\lambda Q.~ \skp \pcomp P \cmp Q)$ with the weak convergence
property must be equal to $Q$. But the fixpoint $P\star$ defined in previous work
\emph{does} enjoy the weak convergence property, and therefore so does $Q=P\star$.
\end{proof}

\begin{proof}[Proof of Lemma~\ref{lem:separation}]
Let $A$ be a Borel set. Since we are in a Polish space, $\mu(A)$ is approximated arbitrarily closely from below by $\mu(C)$ for compact sets $C\subs A$ and from above by $\mu(U)$ for open sets $U\supseteq A$. By Urysohn's lemma (see \cite{KolmogorovFomin70,Rao87}), there exists a continuous function $f:D\to[0,1]$ such that $f(a) = 1$ for all $a\in C$ and $f(a) = 0$ for all $a\not\in U$. We thus have
\begin{align*}
\mu(C) &= \int_{a\in C} f(a)\cdot\mu(da) \leq \int_{a\in D} f(a)\cdot\mu(da)\\
&= \int_{a\in U} f(a)\cdot\mu(da) \leq \mu(U),\\
\mu(C) &\leq \mu(A) \leq \mu(U),
\end{align*}
thus
\begin{align*}
\left|\mu(A) - \int_{a\in D} f(a)\cdot\mu(da)\right| &\leq \mu(U)-\mu(C),
\end{align*}
and the right-hand side can be made arbitrarily small.
\end{proof}

By Lemma \ref{lem:separation}, if $P,Q$ are two Markov kernels and 
\begin{align*}
\int_{a\in\pH} f(a)\cdot P(c,da) &= \int_{a\in\pH} f(a)\cdot Q(c,da)
\end{align*}
for all Cantor-continuous $f:\pH\to[0,1]$, then $P(c,-)=Q(c,-)$. If this holds for all $c\in\pH$, then $P=Q$.

\begin{proof}[Proof of Theorem \ref{thm:approx}]
Let $\eps>0$. Since all continuous functions on a compact space are uniformly continuous, for sufficiently large finite $b$ and for all $a\subs b$, the value of $f$ does not vary by more than $\eps$ on $\atm ab$; that is, $\sup_{c\in\atm ab} f(c) - \inf_{c\in\atm ab} f(c) < \eps$. Then for any $\mu$,
\begin{align*}
& \int_{c\in\atm ab} f(c)\cdot\mu(dc) - \int_{c\in\atm ab} \inf_{c\in\atm ab}f(c)\cdot\mu(dc)\\
&\leq \int_{c\in\atm ab} (\sup_{c\in\atm ab}f(c) - \inf_{c\in\atm ab}f(c))\cdot\mu(dc)
< \eps\cdot\mu(\atm ab).
\end{align*}
Moreover,
\begin{align*}
(\tbigsqcup A)(\atm ab) &= \sum_{a\subs c\subs b} (-1)^{\len{c-a}}({\tbigsqcup A})(B_c)\\
&= \sum_{a\subs c\subs b} (-1)^{\len{c-a}}\sup_{\mu\in A}\mu(B_c)\\
&= \lim_{\mu\in A}\sum_{a\subs c\subs b} (-1)^{\len{c-a}}\mu(B_c)
= \lim_{\mu\in A}\mu(\atm ab),
\end{align*}
so for sufficiently large $\mu\in A$, $\mu(\atm ab)$ does not differ from $(\bigsqcup A)(\atm ab)$ by more than $\eps\cdot 2^{-\len b}$. Then for any constant $r\in[0,1]$,
\begin{align*}
\lefteqn{\left|\int_{c\in\atm ab} r\cdot({\tbigsqcup A})(dc) - \int_{c\in\atm ab} r\cdot\mu(dc)\right|}\qquad\\
&= r\cdot\left|(\tbigsqcup A)(\atm ab) - \mu(\atm ab)\right|\\
&\leq \left|(\tbigsqcup A)(\atm ab) - \mu(\atm ab)\right| < \eps\cdot 2^{-\len b}.
\end{align*}
Combining these observations,
\begin{align*}
& \left|\int_{c\in\pH} f(c)\cdot({\tbigsqcup A})(dc) - \int_{c\in\pH} f(c)\cdot\mu(dc)\right|\\
&= \left|\sum_{a\subs b}\int_{c\in\atm ab} f(c)\cdot({\tbigsqcup A})(dc) - \sum_{a\subs b}\int_{c\in\atm ab} f(c)\cdot\mu(dc)\right|\\
&\leq \sum_{a\subs b}\left(\left|\int_{c\in\atm ab} f(c)\cdot({\tbigsqcup A})(dc) - \int_{c\in\atm ab} \inf_{c\in\atm ab}f(c)\cdot({\tbigsqcup A})(dc)\right|\right.\\
&\qquad + \left|\int_{c\in\atm ab} \inf_{c\in\atm ab}f(c)\cdot({\tbigsqcup A})(dc) - \int_{c\in\atm ab} \inf_{c\in\atm ab}f(c)\cdot\mu(dc)\right|\\
&\qquad + \left.\left|\int_{c\in\atm ab} \inf_{c\in\atm ab}f(c)\cdot\mu(dc) - \int_{c\in\atm ab} f(c)\cdot\mu(dc)\right|\right)\\
&\leq \sum_{a\subs b}\left(\eps\cdot({\tbigsqcup A})(\atm ab) + \eps\cdot 2^{-\len b} + \eps\cdot\mu(\atm ab)\right)\\
&= 3\eps.
\end{align*}
As $\eps>0$ was arbitrary,
\begin{align*}
\lim_{\mu\in A}\int_{c\in\pH} f(c)\cdot\mu(dc) &= \int_{c\in\pH} f(c)\cdot({\tbigsqcup A})(dc).
\end{align*}
\end{proof}

\begin{proof}[Proof of Theorem \ref{thm:star}]
Consider the continuous transformation
\begin{align*}
T_P(Q) &\defeq \skp \pcomp P\cmp Q
\end{align*}
on the DCPO of continuous Markov kernels. The continuity of $T_P$ follows from Lemmas \ref{lem:parcomp-continuous} and \ref{lem:seqcomp-continuous}.
The bottom element $\bot$ is $\drp$ in this space, and
\begin{align*}
T_P(\bot) &= \skp = \pp 0
&
T_P(\pp n) &= \skp \pcomp P\cmp \pp n = \pp{n+1},
\end{align*}
thus $T_P^{n+1}(\bot) = \pp n$, so $\bigsqcup T_P^n(\bot) = \bigsqcup_n\pp n$, and this is the least fixpoint of $T_P$. As shown in \cite{\pnkpaper}, $P\oldstar$ is also a fixpoint of $T_P$, so it remains to show that $P\oldstar = \bigsqcup_n\pp n$.

Let $c\in\pH$. As shown in \cite{\pnkpaper}, the measures $\pp n(c,-)$ converge weakly to $P\oldstar(c,-)$; that is, for any Cantor-continuous function $f:\pH\to[0,1]$, the expected values of $f$ relative to $\pp n$ converge to the expected value of $f$ relative to $P\oldstar$:
\begin{align*}
\lim_n\int f(a)\cdot\pp n(c,da) = \int f(a)\cdot P\oldstar(c,da).
\end{align*}
But by Theorem \ref{thm:approx}, we also have
\begin{align*}
\lim_n\int f(a)\cdot\pp n(c,da) = \int f(a)\cdot(\bigsqcup_n\pp n)(c,da),
\end{align*}
thus
\begin{align*}
\int f(a)\cdot P\oldstar(c,da) &= \int f(a)\cdot(\tbigsqcup_n\pp n)(c,da).
\end{align*}
As $f$ was arbitrary, we have $P\oldstar(c,-) = (\tbigsqcup_n\pp n)(c,-)$ by Lemma \ref{lem:separation},
and as $c$ was arbitrary, we have $P\oldstar = \tbigsqcup_n\pp n$.
\end{proof}

\section{Approximation and Discrete Measures}

This section contains the proofs of \S\ref{sec:approx}. We need the following auxiliary lemma to prove Theorem~\ref{thm:directed}.
\begin{lemma}\ 
\label{lem:zz}
\begin{enumerate}[{\upshape(i)}]
\item
For any Borel set $B$, $(\rest\mu b)(B) = \mu(\set c{c\cap b\in B})$.
\item
$\rest{(\rest\mu b)}d = \rest\mu{(b\cap d)}$.
\item
If $a,b\in\pfin H$ and $a\subs b$, then $\rest\mu a\sqleq\rest\mu b\sqleq\mu$.
\item
$\mu\sqleq\dirac b$ iff $\mu = \rest\mu b$.
\item
The function $\mu\mapsto\rest\mu b$ is continuous.
\end{enumerate}
\end{lemma}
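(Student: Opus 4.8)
The backbone of the argument is part (i): once it is available, it exhibits $\rest\mu b$ as a \emph{push-forward}. Writing $f_b:\pH\to\pH$ for the map $f_b(c)=c\cap b$, part (i) says precisely that $\rest\mu b=(f_b)_*\mu$, and the remaining four parts then follow with little extra work. To prove (i), the plan is to simply unwind the definition $\rest\mu b=\sum_{a\subs b}\mu(\atm ab)\dirac a$ and use that the atoms $\atm ab=\set c{c\cap b=a}$ partition $\pH$. For any Borel $B$,
\[
(\rest\mu b)(B) = \sum_{a\subs b}\mu(\atm ab)\,\dirac a(B) = \sum_{a\subs b,\ a\in B}\mu(\atm ab) = \mu(\set c{c\cap b\in B}),
\]
the last equality holding because $\set c{c\cap b\in B}$ is exactly the disjoint union of the atoms $\atm ab$ with $a\in B$.

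Part (ii) then follows from two applications of (i) together with $(c\cap b)\cap d = c\cap(b\cap d)$: for any Borel $B$, I would compute $(\rest{(\rest\mu b)}d)(B) = (\rest\mu b)(\set c{c\cap d\in B}) = \mu(\set c{c\cap b\cap d\in B}) = (\rest\mu{(b\cap d)})(B)$. Parts (iii) and (iv) exploit that every Scott-open $B$ is \emph{up-closed}. For (iii), if $a\subs b$ and $B\in\SO$, then $c\cap a\in B$ forces $c\cap b\in B$ and $c\in B$ (since $c\cap a\subs c\cap b\subs c$ and $B$ is up-closed); hence $\set c{c\cap a\in B}\subs\set c{c\cap b\in B}\subs B$, and applying (i) twice with monotonicity of $\mu$ gives $(\rest\mu a)(B)\leq(\rest\mu b)(B)\leq\mu(B)$ for all $B\in\SO$, i.e.\ $\rest\mu a\sqleq\rest\mu b\sqleq\mu$.

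For (iv), the implication $\mu=\rest\mu b\Rightarrow\mu\sqleq\dirac b$ I would obtain by first checking $\rest\mu b\sqleq\dirac b$ directly: on a Scott-open $B$ with $b\notin B$, up-closure rules out any $a\subs b$ lying in $B$, so $(\rest\mu b)(B)=0=\dirac b(B)$. For the converse $\mu\sqleq\dirac b\Rightarrow\mu=\rest\mu b$, the plan is to test only on the basic Scott-open sets $B_e$ and invoke the uniqueness Lemma~\ref{lem:inclexcl}: by (i), $(\rest\mu b)(B_e)=\mu(\set c{e\subs c\cap b})$, which equals $\mu(B_e)$ when $e\subs b$ and $0$ otherwise, while $\mu\sqleq\dirac b$ forces $\mu(B_e)\leq\dirac b(B_e)=[e\subs b]$, hence $\mu(B_e)=0$ whenever $e\not\subs b$. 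Thus $\mu$ and $\rest\mu b$ agree on all $B_e$ and so are equal. Finally, part (v) is immediate from the push-forward observation: $f_b$ is Scott-continuous (it is monotone and $(\bigcup D)\cap b=\bigcup_{c\in D}(c\cap b)$ for directed $D$), so by Lemma~\ref{lem:pushforward} the map $\mu\mapsto(f_b)_*\mu=\rest\mu b$ is Scott-continuous with respect to $\sqleq$.

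I expect part (iv) to be the only step requiring genuine care, since concluding equality of the two measures from their agreement on the basic open sets $B_e$ relies on the determinacy Lemma~\ref{lem:inclexcl}; every other part is a direct consequence of (i) and the up-closedness of Scott-open sets.
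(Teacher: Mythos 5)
Your proposal is correct, and on parts (i), (ii), and (iii) it coincides with the paper's own proof: the same unwinding of the definition against the partition of $\pH$ into the atoms $\atm ab$, the same double application of (i) for (ii), and the same use of up-closedness of Scott-open sets plus monotonicity of $\mu$ for (iii). Where you genuinely diverge is in (iv) and (v). For the converse direction of (iv), the paper writes an arbitrary Scott-open set as a union $\bigcup_{d\in F}B_d$ of basic opens, proves $(\rest\mu b)(\bigcup_{d\in F} B_d)=\mu(\bigcup_{d\in F,\,d\subs b}B_d)$, and then uses $\mu\sqleq\dirac b$ to kill the contribution of the $B_d$ with $d\not\subs b$; you instead test only on the basic sets $B_e$ and invoke the determinacy Lemma~\ref{lem:inclexcl} directly. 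Both arguments ultimately rest on that lemma (the paper needs it too, to pass from agreement on $\SO$ to equality of measures), but yours avoids the bookkeeping with arbitrary unions; the one point worth making explicit is that $\rest\mu b$ is itself a probability measure (take $B=\pH$ in (i)), since Lemma~\ref{lem:inclexcl} is stated for probability measures. For (v), the paper verifies monotonicity and preservation of directed suprema by direct computation, while you observe that (i) exhibits $\rest\mu b$ as the push-forward $(f_b)_*\mu$ along the Scott-continuous map $f_b(c)=c\cap b$ and then cite Lemma~\ref{lem:pushforward}; this is legitimate---there is no circularity, as Lemma~\ref{lem:pushforward} is proved independently of Lemma~\ref{lem:zz}---and it is arguably the more illuminating proof, since the push-forward viewpoint organizes all five parts and matches the paper's own treatment of $P\cmp b$ in Lemma~\ref{cor:guarded}. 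What the paper's hands-on computations buy is self-containedness within the approximation appendix; what your route buys is brevity and a single unifying idea.
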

\begin{proof}
(i)
\begin{align*}
(\rest\mu b)(B) &= \sum_{a\subs b} \mu(\atm ab)\dirac a(B)
= \sum_{a\subs b} \mu(\set c{c\cap b = a})[a\in B]\\
&= \sum_{\substack{a\subs b\\a\in B}} \mu(\set c{c\cap b = a})
= \mu(\bigcup_{\substack{a\subs b\\a\in B}} \set c{c\cap b = a})\\
&= \mu(\set c{c\cap b\in B}).
\end{align*}

(ii)
For any Borel set $B$,
\begin{align*}
(\rest{(\rest\mu b)}d)(B) &= (\rest\mu b)(\set c{c\cap d\in B})\\
&= \mu(\set c{c\cap b\in \set c{c\cap d\in B}})\\
&= \mu(\set c{c\cap b\cap d\in B})\\
&= (\rest\mu{(b\cap d)})(B).
\end{align*}

(iii)
If $a\subs b$, then for any up-closed Borel set $B$,
\begin{gather*}
\set c{c\cap a\in B} \subs \set c{c\cap b\in B} \subs B,\\
\mu(\set c{c\cap a\in B}) \leq \mu(\set c{c\cap b\in B}) \leq \mu(B),\\
(\rest\mu a)(B) \leq (\rest\mu b)(B) \leq \mu(B).
\end{gather*}
As this holds for all $B\in\SO$, we have $\rest\mu a \sqleq \rest\mu b \sqleq \mu$.

(iv)
First we show that $\rest\mu b\sqleq\dirac b$. For any up-closed Borel set $B$,
\begin{align*}
(\rest\mu b)(B)
&= \sum_{a\subs b} \mu(\atm ab)[a\in B]\\
&\leq \sum_{a\subs b} \mu(\atm ab)[b\in B]
= [b\in B]
= \dirac b(B).
\end{align*}
Now we show that if $\mu\sqleq\dirac b$, then $\mu=\rest\mu b$. From
\begin{align*}
d\subs b \wedge d\subs c &\Iff d\subs c\cap b
& c\in B_d &\Iff d\subs c
\end{align*}
we have
\begin{align*}
(\exists d\in F\ d\subs b \wedge c\in B_d) \Iff (\exists d\in F\ c\cap b\in B_d)
\end{align*}
\begin{align*}
c\in\bigcup_{\substack{d\in F\\d\subs b}} B_d \Iff c\cap b\in\bigcup_{d\in F} B_d
\end{align*}
\begin{align}
(\rest\mu b)(\bigcup_{d\in F} B_d) &= \mu(\set c{c\cap b\in\bigcup_{d\in F} B_d})
= \mu(\bigcup_{\substack{d\in F\\d\subs b}} B_d).\label{eq:zz}
\end{align}
Now if $\mu\sqleq\dirac b$, then
\begin{align*}
\mu(\bigcup_{\substack{d\in F\\d\not\subs b}} B_d) \leq \dirac b(\bigcup_{\substack{d\in F\\d\not\subs b}} B_d) = [b\in \bigcup_{\substack{d\in F\\d\not\subs b}} B_d] = 0,
\end{align*}
so
\begin{align*}
\mu(\bigcup_{d\in F} B_d) &\leq \mu(\bigcup_{\substack{d\in F\\d\subs b}} B_d) + \mu(\bigcup_{\substack{d\in F\\d\not\subs b}} B_d)
= \mu(\bigcup_{\substack{d\in F\\d\subs b}} B_d).
\end{align*}
Combining this with \eqref{eq:zz}, we have that $\mu$ and $\rest\mu b$ agree on all $B\in\SO$, therefore they agree everywhere.

(v)
If $\mu\sqleq\nu$, then for all $B\in\SO$,
\begin{align*}
(\rest\mu b)(B) &= \mu(\set c{c\cap b\in B})\\
&\leq \nu(\set c{c\cap b\in B}) = (\rest\nu b)(B).
\end{align*}
Also, for any directed set $D$ of measures and $B\in\SO$,
\begin{align*}
(\rest{(\tbigsqcup D)}b)(B) &= (\tbigsqcup D)(\set c{c\cap b\in B})\\
&= \sup_{\mu\in D}\mu(\set c{c\cap b\in B})
= \sup_{\mu\in D}(\rest\mu b)(B)\\
&= (\tbigsqcup_{\mu\in D}(\rest\mu b))(B),
\end{align*}
therefore $\rest{(\tbigsqcup D)}b=\tbigsqcup_{\mu\in D}(\rest\mu b)$.
\end{proof}

\begin{proof}[Proof of Theorem~\ref{thm:directed}]
The set $\set{\rest\mu b}{b\in\pfin H}$ is a directed set below $\mu$ by Lemma \ref{lem:zz}(iii), and for any up-closed Borel set $B$,
\begin{align*}
(\bigsqcup_{b\in\pfin H} \rest\mu b)(B) &= \sup_{b\in\pfin H}\mu(\set c{c\cap b\in B})\\
&= \mu(\bigcup_{b\in\pfin H}\set c{c\cap b\in B})
= \mu(B).
\end{align*}
An approximating set for $\mu$ is the set
\begin{align*}
L = \set{\sum_{a\subs b}r_a\delta_a}{b\in\pfin H,\ r_a < \mu(\atm ab)\text{ for all } a\neq\emptyset}.
\end{align*}
If $L$ is empty, then $\mu(\atm\emptyset b)=1$ for all finite $b$, in which case $\mu=\dirac\emptyset$ and there is nothing to prove. Otherwise, $L$ is a nonempty directed set whose supremum is $\mu$.

Now we show that $\nu\ll\mu$ for any $\nu\in L$. Suppose $D$ is a directed set and $\mu\sqleq\bigsqcup D$. By Lemma \ref{lem:zz}(iii) and (v),
\begin{align*}
\rest\mu b\sqleq \rest{(\bigsqcup D)}b = \bigsqcup_{\rho\in D}\rest\rho b.
\end{align*}
Moreover, for any $B\in\SO$, $B\neq B_\emptyset$, and $\sum_{a\subs b}r_a\delta_a\in L$,
\begin{align*}
(\rest\nu b)(B) &= \sum_{a\in B}\nu(\atm ab)[a\in B]\\
&< \sum_{a\in B}\mu(\atm ab)[a\in B] = (\rest\mu b)(B). 
\end{align*}
Then $\nu(B_\emptyset)=\rho(B_\emptyset)=1$ for all $\rho\in D$, and for any $B\in\SO$, $B\neq B_\emptyset$,
\begin{align}
(\rest\nu b)(B) < (\rest\mu b)(B) \leq \sup_{\rho\in D}(\rest\rho b)(B)\label{eq:directed}
\end{align}
so there exists $\rho\in D$ such that $(\rest\nu b)(B) \leq (\rest\rho b)(B)$. But since $B$ can intersect $\pH$ in only finitely many ways and $D$ is directed, a single $\rho\in D$ can be found such that \eqref{eq:directed} holds uniformly for all $B\in\SO$, $B\neq B_\emptyset$.
Then $\rest\nu b \sqleq \rho \in D$.
\end{proof}

\begin{proof}[Proof of Corollary \ref{cor:guarded}]
Let $f:\pH\to\pH$ map $a$ to $a\cap b$. This is a continuous function that gives rise to a deterministic kernel. Then for any $B\in\SO$,
\begin{align*}
(P\cmp b)(a,B)
&= P(a,f^{-1}(B))
= P(a,\set c{c\cap b\in B})\\
&= (\rest{P(a,-)}b)(B). 
\end{align*}
\end{proof}

\fi

\end{document}